\documentclass[sigconf]{acmart}
\AtBeginDocument{%
  \providecommand\BibTeX{{%
    \normalfont B\kern-0.5em{\scshape i\kern-0.25em b}\kern-0.8em\TeX}}}

\usepackage{amsmath,amsfonts}
\usepackage{algorithmic}
\usepackage{graphicx}
\usepackage{textcomp}
\usepackage{cleveref}
\usepackage{amsthm}
\usepackage{mdframed}
\usepackage{graphicx}
\usepackage[colorinlistoftodos]{todonotes}
\usepackage{verbatim} 
\usepackage{fetamont}
\usetikzlibrary{patterns}
\usepackage{caption, subcaption}
\usepackage{xspace}
\usepackage[nolist]{acronym}
\usepackage{ifthen}
\usepackage{braket}
\usepackage[nolist]{acronym}
\usepackage{tikz}
\usepackage{pgfplots}\pgfplotsset{compat=1.17}
\usepackage{nicefrac}

\usepackage{comment}

\newtheorem{thm}{Theorem}
\newtheorem{lem}{Lemma}
\newtheorem{cor}[thm]{Corollary}

\newtheorem{claim}{Claim}

\bibliographystyle{plainurl}
\usetikzlibrary{shapes.geometric, arrows.meta, graphs, matrix}
\tikzset{%
	>={Latex[width=2mm,length=2mm]},
	base/.style = {rectangle, rounded corners, draw=black,
		minimum width=4cm, minimum height=1cm,
		text centered, font=\sffamily},
	activityStarts/.style = {base, fill=blue!30},
	startstop/.style = {base, fill=red!30},
	activityRuns/.style = {base, fill=subbi!30},
	process/.style = {base, minimum width=2.5cm, fill=orange!15,
		font=\ttfamily},
}
\tikzset{toprule/.style={%
		execute at end cell={%
			\draw [line cap=rect,#1] (\tikzmatrixname-\the\pgfmatrixcurrentrow-\the\pgfmatrixcurrentcolumn.north west) -- (\tikzmatrixname-\the\pgfmatrixcurrentrow-\the\pgfmatrixcurrentcolumn.north east);%
		}
	},
	bottomrule/.style={%
		execute at end cell={%
			\draw [line cap=rect,#1] (\tikzmatrixname-\the\pgfmatrixcurrentrow-\the\pgfmatrixcurrentcolumn.south west) -- (\tikzmatrixname-\the\pgfmatrixcurrentrow-\the\pgfmatrixcurrentcolumn.south east);%
		}
	}
}
\tikzstyle{startstop} = [rectangle, rounded corners, minimum width=3cm, minimum height=1cm,text centered, draw=black, fill=red!30]
\tikzstyle{io} = [trapezium, trapezium left angle=70, trapezium right angle=110, minimum width=3cm, minimum height=1cm, text centered, draw=black, fill=blue!30]
\tikzstyle{process} = [rectangle, minimum width=3cm, minimum height=1cm, text centered, draw=black, fill=orange!30]
\tikzstyle{decision} = [diamond, minimum width=3cm, minimum height=1cm, text centered, draw=black, fill=green!30]
\tikzstyle{arrow} = [thick,->,>=stealth]

\definecolor{subbi}{rgb}{0.53, 0.81, 1}
\definecolor{color2}{rgb}{0.95, 0.53, 1 }
\definecolor{colora3}{rgb}{1, 0.72, 0.53 }
\definecolor{stripColor}{rgb}{0.57, 1, 0.53}
\definecolor{hugeItemColor}{rgb}{1, 0.53, 0.57 }
\definecolor{background}{rgb}{0.6,0.6,0.6}
%
%
\pagenumbering{arabic}

\newcommand{\eps}{\varepsilon}

\newcommand{\opt}{\mathrm{OPT}}

\newcommand{\height}{h}
\newcommand{\width}{w}

\newcommand{\spo}{\lambda}

\newcommand{\items}{\mathcal{I}}

\newcommand*{\ari}{\textcolor{red}}
\renewcommand{\ari}[1]{#1}


\newcommand{\Oh}{\mathcal{O}}

\newboolean{BlackAndWhite}
\setboolean{BlackAndWhite}{true}
\definecolor{li}{HTML}{00677C}
\definecolor{wi}{HTML}{8E217D}
\definecolor{ti}{HTML}{F29400}
\definecolor{si}{HTML}{E43117}
\definecolor{vi}{HTML}{39842E}
\definecolor{hi}{HTML}{9B0a7d}
\definecolor{tbBgOdd}{rgb}{0.82,0.82,0.82}

\newcommand{\drawVerticalItem}[5][$ $]{%
	\ifthenelse{\boolean{BlackAndWhite}}{%
		\draw[fill = white!85!black, fill opacity = 0.7] (#2,#3) rectangle node[midway, opacity = 1]{#1} (#4,#5)}{%
		\draw[fill = white!50!vi, fill opacity = 0.7] (#2,#3) rectangle node[midway, opacity = 1]{#1} (#4,#5)}}
\newcommand{\drawVerticalItemRotate}[5][$ $]{%
	\ifthenelse{\boolean{BlackAndWhite}}{%
		\draw[fill = white!85!black, fill opacity = 0.7] (#2,#3) rectangle node[midway, opacity = 1, rotate=90]{#1} (#4,#5)}{%
		\draw[fill = white!50!vi, fill opacity = 0.7] (#2,#3) rectangle node[midway, opacity = 1,rotate=90]{#1} (#4,#5)}}		
\newcommand{\drawTallItem}[5][$ $]{%
	\ifthenelse{\boolean{BlackAndWhite}}{%
		\draw[fill = white!65!black, fill opacity = 0.7] (#2,#3) rectangle node[midway, opacity = 1]{#1} (#4,#5)}{%
		\draw[fill = white!50!ti, fill opacity = 0.7] (#2,#3) rectangle node[midway, opacity = 1]{#1} (#4,#5)}}
\newcommand{\drawLargeItem}[5][$ $]{	\ifthenelse{\boolean{BlackAndWhite}}{%
		\draw[fill = white!55!black, fill opacity = 0.7] (#2,#3) rectangle node[midway, opacity = 1]{#1} (#4,#5)}{%
		\draw[fill = white!50!li, fill opacity = 0.7] (#2,#3) rectangle node[midway, opacity = 1]{#1} (#4,#5)}}
\newcommand{\drawSmallItem}[5][$ $]{	\ifthenelse{\boolean{BlackAndWhite}}{%
		\draw[fill = white!95!black, fill opacity = 0.7] (#2,#3) rectangle node[midway, opacity = 1]{#1} (#4,#5)}{%
		\draw[fill = white!50!si, fill opacity = 0.7] (#2,#3) rectangle node[midway, opacity = 1]{#1} (#4,#5)}}
\newcommand{\drawHorizontalItem}[5][$ $]{	\ifthenelse{\boolean{BlackAndWhite}}{%
		\draw[fill = white!75!black, fill opacity = 0.7] (#2,#3) rectangle node[midway, opacity = 1]{#1} (#4,#5)}{%
		\draw[fill = white!50!hi, fill opacity = 0.7] (#2,#3) rectangle node[midway, opacity = 1]{#1} (#4,#5)}}
\newcommand{\drawJobNoBorder}[5][$ $]{	\ifthenelse{\boolean{BlackAndWhite}}{%
    	\draw[white!90!black, fill, fill opacity = 0.7] (#2,#3) rectangle node[midway, opacity = 1]{#1} (#4,#5)}{%
		\draw[white!50!vi, fill, fill opacity = 0.7] (#2,#3) rectangle node[midway, opacity = 1]{#1} (#4,#5)}}

\newcommand{\drawItem}[5][$ $]{\draw[fill = tbBgOdd, fill opacity = 0.6] (#2,#3) rectangle node[midway, opacity = 1]{#1} (#4,#5)}

\newcommand{\jobs}{\mathcal{J}}

\newcommand{\natNumbers}{\mathbb{N}}
\newcommand{\stripWidth}{W}
\newcommand{\stripHeight}{H}
\newcommand{\instance}{I}
\newcommand{\instanceTransform}{I^*}
\newcommand{\SP}{\textsc{Strip Packing}}
\newcommand{\SSP}{\textsc{Demand Strip Packing}}
\newcommand{\PTS}{\textsc{Parallel Task Scheduling}}
\newcommand{\startPoint}{\lambda}

\newcommand{\itemsLarge}{\mathcal{L}}
\newcommand{\itemsTall}{\mathcal{T}}
\newcommand{\itemsVert}{\mathcal{V}}
\newcommand{\itemsMedVert}{\mathcal{M}_v}
\newcommand{\itemsHor}{\mathcal{H}}
\newcommand{\itemsSmall}{\mathcal{S}}
\newcommand{\itemsMed}{\mathcal{M}}
\newcommand{\bigO}{\mathcal{O}}
\newcommand{\boxesLarge}{\mathcal{B}_L}
\newcommand{\boxesHor}{\mathcal{B}_H}
\newcommand{\boxesTallVert}{\mathcal{B}_{T\cup V}}
\newcommand{\boxesTall}{\mathcal{B}_T}
\newcommand{\boxesVert}{\mathcal{B}_V}
\newcommand{\boxesPseudo}{\mathcal{B}_P}
\newcommand{\boxesSmallVert}{\mathcal{B}_V^S}
\newcommand{\boxesHorSmall}{\mathcal{B}_H^S}
\newcommand{\boxesSmall}{\mathcal{B}_S}
\newcommand{\boxHalf}{\Tilde{B}}
\newcommand{\boxFourth}{\Breve{B}}
\newcommand{\countTall}{S_T}
\newcommand{\countPseudo}{S_P}
\newcommand{\countPseudoTall}{S_{P\cup T}}
\newcommand{\heightVertical}{H_\mathcal{V}}
\newcommand{\countStrips}{N_S}
\newcommand{\boxesVertPlace}{\mathcal{B}_{\mu\stripWidth}}
\newcommand{\countExtraBoxes}{N_F}
\newcommand{\countVertLines}{N_L}
\newcommand{\widthStripsTall}{W_T}
\newcommand{\widthStripsBoxes}{W_H}
\newcommand{\widthStripsRem}{W_R}
\newcommand{\countBoxesTallVert}{N_B}
\newcommand{\widthsHor}{\mathcal{W}_H}
\newcommand{\tallItemsHalf}{T_{\nicefrac{1}{2}\stripHeight}}

\crefname{lemma}{Lemma}{Lemmas}
\crefname{lem}{Lemma}{Lemmas}
\crefname{thm}{Theorem}{Theorems}
\crefname{figure}{Figure}{Figures}
\crefname{claim}{Claim}{Claims}
\crefname{table}{Table}{Tables}

\def\BibTeX{{\rm B\kern-.05em{\sc i\kern-.025em b}\kern-.08em
    T\kern-.1667em\lower.7ex\hbox{E}\kern-.125emX}}
    
\begin{document}

\title{Hardness and Tight Approximations of Demand Strip Packing}

\author{Klaus Jansen}
\email{kj@informatik.uni-kiel.de}
\affiliation{%
  \institution{Kiel University}
  \city{Kiel}
  \country{Germany}
}

\author{Malin Rau}
\affiliation{%
  \institution{Hamburg University}
  \city{Hamburg}
  \country{Germany}}
\email{malin.rau@uni-hamburg.de}

\author{Malte Tutas}
\affiliation{%
  \institution{Kiel University}
  \city{Kiel}
  \country{Germany}
}
\email{mtu@informatik.uni-kiel.de}

\renewcommand{\shortauthors}{Anonymous}

\begin{abstract}
  We settle the pseudo-polynomial complexity of the \SSP{} (DSP) problem: 
Given a strip of fixed width and a set of items with widths and heights, the items must be placed inside the strip with the objective of minimizing the peak height.  
This problem has gained significant scientific interest due to its relevance in smart grids~[Deppert~et~al.\ APPROX'21, G\'alvez~et~al.\ APPROX'21]. 
Smart Grids are a modern form of electrical grid that provide opportunities for optimization. 
They are forecast to impact the future of energy provision significantly. 
Algorithms running in pseudo-polynomial time lend themselves to these applications as considered time intervals, such as days, are small. 
Moreover, such algorithms can provide superior approximation guarantees over those running in polynomial time. 
Consequently, they evoke scientific interest in related problems~[Jansen and Rau ESA'19].

We prove that \SSP{} is strongly NP-hard for approximation ratios below~$5/4$. 
Through this proof, we provide novel insights into the relation of packing and scheduling problems. 
Using these insights, we show a series of frameworks that solve both \SSP{} and \PTS{} optimally when increasing the strip's width or number of machines. 
Such alterations to problems are known as resource augmentation.
For both problems, the use of resource augmentation is a new idea. 
Applications are found when penalty costs are prohibitively large. 
Finally, we provide a pseudo-polynomial time approximation algorithm for DSP with an approximation ratio of $(5/4+\eps)$, which is nearly optimal assuming $P\neq NP$. 
The construction of this algorithm provides several insights into the structure of DSP solutions and uses novel techniques to restructure optimal solutions.
\end{abstract}


\begin{CCSXML}
<ccs2012>
   <concept>
       <concept_id>10003752.10003809.10003636.10003810</concept_id>
       <concept_desc>Theory of computation~Packing and covering problems</concept_desc>
       <concept_significance>300</concept_significance>
       </concept>
 </ccs2012>
\end{CCSXML}

\ccsdesc[300]{Theory of computation~Packing and covering problems}

\keywords{strip packing, demand strip packing, approximation algorithms, hardness, resource augmentation}


\maketitle

\section{Introduction}


Increased adoption of technologies such as electric cars as well as global economic growth, lead to a growing demand for electric power. 
To better handle this demand, new technologies are developed. 
An example of such a  technology is the so-called Smart Grid~\cite{KarbasiounSLK18,SIANO2014461,elecApp2}. 
Traditional power systems have to incorporate infrastructure to support the peak load on the entirety of the system~\cite{Torriti2016}. 
However, this peak demand is seldom reached, meaning that, at non-peak times, there is some capacity for additional power demands in the power grid. 
Smart~Grids use automated digital communication between power consumers and electricity providers to utilize the available capacity more efficiently by, for example, shifting a user's power-demanding tasks onto off-peak hours. 
This directly results in a decreased peak load on the entire system. 
The benefits of this are plentiful. 
For the consumer, modern devices can automatically shift their power usage to off-peak hours to avoid surcharges through congestion pricing~\cite{DILEEP20202589}. 
On the other hand, for electricity suppliers, the utilization of Smart~Grids and the accompanying reduction in peak load reduces the required infrastructure to supply the ever-increasing electricity demand across the globe. 
It has been estimated that implementation of these Smart~Grids across the United States alone leads to savings between $46$ and $117$~billion US-Dollars over 20 years~\cite{kannberg2004gridwisetm}. 

To effectively generate these benefits, efficient algorithms to balance the load on the electrical system are required. 
We can model the power consumption of individual applications through a rectangle. 
The width of this rectangle then represents the duration for which the application consumes power, and its height represents the amount of power it consumes. 
Modeling individual demands in this way lets us inspect larger time intervals over which tasks may recur. 
We can model these time intervals as a single strip with a set width, where the width represents the length of the interval. The time intervals in which tasks repeat are usually rather short, i.e.\ days. Furthermore, when thinking of appliances such as washing machines, their operating time can be quantified in minutes.
Finding a packing of all rectangles into this strip with a minimum height then represents a minimization of peak load on the electrical grid. 
However, the load on the system is the sum of all demands at a certain point in time, which we can not model through these simple rectangles yet. To compensate we introduce the concept of slicing, which allows us to place vertical cuts inside any generated item and place these cut parts at any height as long as they are placed contiguously in their width. 
This modeling motivates the problem studied in this paper.
\begin{figure*}[t]
\centering
	\begin{subfigure}{0.35\textwidth}
		\centering
		\resizebox{0.98\textwidth}{!}{
	\begin{tikzpicture}
\pgfmathsetmacro{\w}{0.5}
\pgfmathsetmacro{\h}{0.5}
\pgfmathsetmacro{\boxh}{6*\h}
\draw (0,0) -- (9*\w,0);
\draw[-stealth] (0,0) -- (0,\boxh);
\draw[-stealth] (9*\w,0) -- (9*\w,\boxh);
\draw (-0.05*9*\w,4*\h) node [anchor=east] {$4$}-- (0,4*\h);
\draw (-0.05*9*\w,5*\h) node [anchor=east] {$5$}-- (0,5*\h);
\draw [dotted] (0,5*\h) -- (9*\w,5*\h);
\drawLargeItem[$a$]{0}{0}{3*\w}{3*\h};
\drawHorizontalItem[$b$]{3*\w}{0}{8*\w}{1*\h};
\drawHorizontalItem[$c$]{0}{3*\h}{7*\w}{4*\h};
\drawLargeItem[$d$]{3*\w}{1*\h}{6*\w}{3*\h};
\drawSmallItem[$e$]{6*\w}{2*\h}{7*\w}{3*\h};
\drawSmallItem[$f$]{6*\w}{1*\h}{9*\w}{2*\h};
\drawVerticalItem[$g$]{7*\w}{2*\h}{8*\w}{4*\h};
\drawVerticalItem[$h$]{8*\w}{2*\h}{9*\w}{5*\h};
\end{tikzpicture}
		}	
	\end{subfigure}
		\begin{subfigure}{0.35\textwidth}
		\centering
		\resizebox{0.98\textwidth}{!}{
	\begin{tikzpicture}
\pgfmathsetmacro{\w}{0.5}
\pgfmathsetmacro{\h}{0.5}
\pgfmathsetmacro{\boxh}{6*\h}
\draw (0,0) -- (9*\w,0);
\draw[-stealth] (0,0) -- (0,\boxh);
\draw[-stealth] (9*\w,0) -- (9*\w,\boxh);
\draw (-0.05*9*\w,4*\h) node [anchor=east] {$4$}-- (0,4*\h);
\draw (-0.05*9*\w,5*\h) node [anchor=east] {$5$}-- (0,5*\h);
\draw [dotted] (0,5*\h) -- (9*\w,5*\h);
\drawLargeItem[$a$]{0}{0}{3*\w}{3*\h};
\drawHorizontalItem[$b$]{3*\w}{0}{8*\w}{1*\h};
\drawHorizontalItem[$c$]{0}{3*\h}{7*\w}{4*\h};
\drawLargeItem[$d$]{3*\w}{1*\h}{6*\w}{3*\h};
\drawSmallItem[$e$]{6*\w}{2*\h}{7*\w}{3*\h};
\drawSmallItem[$f$]{6*\w}{1*\h}{8*\w}{2*\h};
\drawSmallItem[$f$]{8*\w}{0}{9*\w}{1*\h};
\drawVerticalItem[$g$]{7*\w}{2*\h}{8*\w}{4*\h};
\drawVerticalItem[$h$]{8*\w}{1*\h}{9*\w}{4*\h};
\end{tikzpicture}
		}	
	\end{subfigure}
\caption{An illustration of the difference slicing can make to the packing height. 
On the left, you can see the optimal solution for the classical strip packing problem, whereas the right displays the optimal packing in the demand variant. 
As you can see, the height of the packing decreased through slicing. 
The sliced item contains its name to easier identify it. 
We can see the difference in optimal heights of $\nicefrac{5}{4}$ here.}
\label{fig:DifferenceSP/SSP}
\end{figure*}
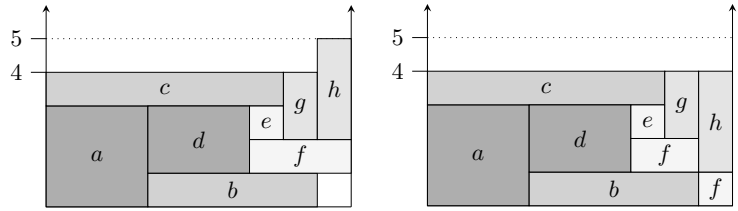
\paragraph*{Demand Strip Packing} 
We are given a set~$\items$ of $n$~items. 
Each item $i\in \items$ has a width $\width(i)$ and a height $\height(i)$. 
Further, we are given a strip~$S$ of width $\stripWidth$. 
The goal is to pack all items into $S$ such that no items overlap each other. 
Such a packing $\Lambda$ is defined by a placement function $\lambda: \items \rightarrow \natNumbers$ that assigns a starting point $\lambda(i)$ to each item. 
A starting point $\lambda(i)$ is only feasible if $0\leq \lambda(i)\leq \lambda(i)+\width(i)\leq \stripWidth$ holds, i.e., the items do not overlap the edges of the strip. 
All items must be placed contiguously, i.e., no part of the item is allowed to be placed after $\lambda(i)+\width(i)$. 
The objective is to minimize the peak height \ari{$\stripHeight:= \max_{k\in \{0,\ldots, \stripWidth\}}(\sum_{i\in \items: \lambda(i)+\width(i)=k} \height(i)),$}
i.e., the largest sum of heights for all items placed at a given point $k$.
Slicing an item refers to placing a vertical cut at any point inside the item. Either part of the item is then allowed to be placed at any height, see \cref{fig:DifferenceSP/SSP} for an illustration. 
This property allows for lower peak heights in the packing when compared to integral packings.

We study the pseudo-polynomial time setting. Pseudo-polynomiality refers to the fact that the algorithm is polynomial in the unary representation of the input. Applying this to DSP means that we are allowed to iterate over the width of the strip $\stripWidth$, which is why this setting lends itself especially well to the discussed and widespread application of smaller time intervals.


\paragraph*{Related Work}
\textsc{Demand Strip Packing} is an adaptation of the classical \SP{}(SP) problem.
SP was first studied in 1980, by Baker et al. They provide a 3-approximation~\cite{BakerCR80}. 
Study into this problem continued with Coffman et al.~\cite{CoffmanGJT80} providing two shelf-based algorithms. These algorithms achieve a ratio of $3$ and $2.7$, respectively. This was improved by Sleator~\cite{Sleator80}, who gives a 2.5 approximation. Research stalled until 1994, where Schiermeyer~\cite{Schiermeyer94} presented a 2 approximation. 
This ratio later got matched by Steinberg~\cite{Steinberg97}, whose algorithm is of particular interest because it only relies on the size of items to ensure its ratio. This makes the algorithm applicable to DSP as well. It took until twelve years later for further improvement to be found, when Harren and van\ Stee~\cite{HarrenS09} presented a $1.936$-approximation. The current best known result is a $(5/3+\eps)$-approximation, given by Harren et al.\ in 2014~\cite{HarrenJPS14}. 
Bladek et al.\ show that there exists a gap of $\nicefrac{5}{4}$ between the heights of optimal SP and DSP solutions~\cite{BladekDGS15}. This is commonly referred to as an integrality gap. 
As such, algorithms for the classical \SP{} problem produce solutions that are worse by a factor of up to $\nicefrac{5}{4}$ for the demand case.
Through a simple reduction from the \textsc{Partition}-problem, a hardness of $3/2$ can be shown for SP.

However, in pseudo-polynomial time, improved approximation algorithms and lower hardness results are achievable. 
Study into this setting began in 2010 by Jansen and Th\"ole~\cite{JansenT10}. 
They provide a $(3/2+\eps)$-approximation. This was later improved by Nadiradze and Wiese~\cite{NadiradzeW16}, who present a $(7/5+\eps)$-approximation. Research into this setting remained steady and the next improvement was made by G\'alvez et al.\ in the following year~\cite{GalvezGIK16}. 
They present a $(4/3+\eps)$-approximation. 
This ratio was matched by Jansen and Rau in the following year, who provide an improvement in running time for their algorithm~\cite{JansenR17}. 
The best approximation ratio known is given by Jansen and Rau~\cite{StripPacking54} as well, who provide a $(5/4+\eps)$-approximation. 
These improved approximations are accompanied by a lower theoretical hardness of $5/4$, shown by Henning et al.\ in 2018~\cite{HenningJRS18}. 

Variations of strip packing also became fields of thorough research. \SSP{} was first studied in 2013 by Tang et al., who present a $7$-approximation~\cite{TangS13}. In the following year, Yaw et al.\ provide an improved approximation of $4$ for cases where all items have the same width~\cite{yaw2014peak}. They also show that DSP is $NP$-hard to approximate for ratios below $3/2$~\cite{yaw2014peak}. In that same year, Ranjan et al.\ present a $3$-approximation~\cite{ranjan2014offline}, which they improve to a $2.7$-approximation in the following year~\cite{ranjan2015offline}. In 2021 two groups independently achieved the best-known approximation ratio of $(5/3+\eps)$ using different techniques~\cite{DeppertJ0RT21,GalvezGJK2021}. 
As a result, both hardness and the best approximation algorithm known match results from the classical \SP{} problem.

\begin{figure*}[t]
    	\centering
    	\begin{subfigure}[t]{0.45\textwidth}
     \centering

      	\begin{tikzpicture}
                \pgfmathsetmacro{\w}{0.65}
                \pgfmathsetmacro{\h}{0.4}
    			\draw (0,0) -- (8*\w,0);
    			\draw [dashed] (4*\w,4.5*\h) -- (4*\w,0*\h);
                \draw (4*\w,0*\h) -- (4*\w,-0.5*\h);
    			\draw [dashed] (3*\w,4.5*\h) -- (3*\w,0*\h);
                \draw (3*\w,0*\h) -- (3*\w,-0.5*\h);
    			\draw [dashed] (5*\w,4.5*\h) -- (5*\w,0*\h);
                \draw (5*\w,0*\h) -- (5*\w,-0.5*\h);
    			\node[below] at (4*\w,-0.5*\h) {\small $t$};
    			\node[below] at (3*\w,-0.5*\h) {\small $t-\eps$};
    			\node[below] at (5*\w,-0.5*\h) {\small $t+\eps$};
    			\drawLargeItem[\small A]{1*\w}{0}{4*\w}{2*\h};
    			\drawVerticalItem[\small B]{0}{2*\h}{7*\w}{3*\h};
    			\drawHorizontalItem[\small C]{1*\w}{3*\h}{4*\w}{4*\h};
    			\drawSmallItem[\small D]{4*\w}{3*\h}{6*\w}{4*\h};
    			\drawSmallItem[\small D]{4*\w}{1*\h}{6*\w}{2*\h};
    			\drawHorizontalItem[\small E]{4*\w}{0}{8*\w}{1*\h};
    			\end{tikzpicture}
    	\end{subfigure}
    	\begin{subfigure}[t]{0.45\textwidth}
     \centering
    			\begin{tikzpicture}
                \pgfmathsetmacro{\w}{0.65}
                \pgfmathsetmacro{\h}{0.4}
    			\draw (0,0) -- (8*\w,0);
    			\draw [dashed] (4*\w,4.5*\h) -- (4*\w,0*\h);
                \draw (4*\w,0*\h) -- (4*\w,-0.5*\h);
    			\draw [dashed] (5*\w,4.5*\h) -- (5*\w,0*\h);
                \draw  (5*\w,0*\h) -- (5*\w,-0.5*\h);
    			\node[below] at (4*\w,-0.5*\h) {\small $t$};
    			\node[below] at (5*\w,-0.5*\h) {\small $t+\eps$};
    			\drawLargeItem[\small A]{1*\w}{0}{4*\w}{2*\h};
    			\drawVerticalItem[\small B]{0}{2*\h}{4*\w}{3*\h};
    			\drawHorizontalItem[\small C]{1*\w}{3*\h}{4*\w}{4*\h};
    			\drawSmallItem[\small D]{4*\w}{2*\h}{6*\w}{4*\h};
    			\drawVerticalItem[\small B]{4*\w}{1*\h}{7*\w}{2*\h};
    			\drawHorizontalItem[\small E]{4*\w}{0*\h}{8*\w}{1*\h};
    			\end{tikzpicture}

    	\end{subfigure}
    	\caption{An illustration of the transformation from a $PTS$ instance (left) to a $DSP$ instance (right). 
     The line $t$ represents the current state of the transformation algorithm. 
     No item is sliced by a vertical line in the $PTS$ instance. 
     Similarly, no item is sliced by a horizontal line in the $DSP$ instance. 
     Note that the height did not change.}
    	\label{fig:fiveTransform}
    \end{figure*}
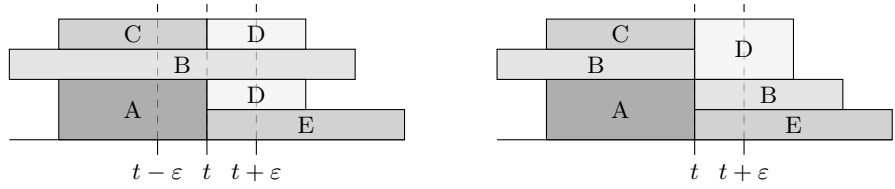
\paragraph*{Our Contribution} 
We prove the inapproximability of DSP, even in pseudo-polynomial time, below an approximation ratio of $(5/4)$ through a novel reduction. 
Through this reduction, we discover a previously unknown connection between DSP and \textsc{Parallel Task Scheduling} (PTS). 
PTS is a scheduling problem where jobs require a certain number of machines, and the aim is to minimize the makespan of the generated schedule. 
See \cref{sec:Inapp} for a complete, formal introduction of the problem.
Using this connection, we are able to incorporate some existing algorithms to generate optimal solutions for both DSP and PTS when admitting some resource augmentation. 
Finally, we present an approximation algorithm almost matching the inapproximability result we show.
We provide techniques to
Finally, we add techniques to the best-known pseudo-polynomial approximation algorithm for the classical SP problem to make it applicable to DSP. 
Techniques used in that algorithm do not translate to DSP directly. 
Structurally, sliced solutions are more difficult to analyze. We provide techniques to partition optimal packings such that they can be structured. 
We provide a subroutine to place certain items integrally, enabling the structuring. 
Through it, we can assign items to be packed integrally, even in this demand setting. 
These new techniques enable us to adapt the existing SP algorithm. 
This generates an algorithm with an approximation ratio almost matching the inapproximability result we show. 
Below is a brief summary of the presented results.
\begin{itemize}
    \item Proof that DSP is strongly $NP$-hard to approximate below a ratio of $(5/4)$. 
    \item A novel transformation algorithm between DSP and PTS.
    \item A framework for an algorithm producing an optimal solution for DSP when augmenting the width of the strip by a factor of $(3/2+\eps)$ in polynomial time.
    \item Frameworks for algorithms producing optimal solutions for PTS when augmenting the number of used machines by either $(5/3+\eps)$ or $(5/4+\eps)$ in polynomial and pseudo-polynomial time respectively.
    \item A  $(5/4+\eps)$-approximation algorithm for the DSP problem.
\end{itemize}

\paragraph*{Structure of the document}
This paper is divided into two parts. 
We begin by approaching the DSP problem in pseudo-polynomial time by providing a lower bound, proving the inapproximability below ratios of $(5/4)$ even in pseudo-polynomial time. 
The  techniques we develop yield further interesting results for both DSP and PTS when admitting some resource augmentation. 
These results are also presented in this part.

Afterward, we provide an algorithmic upper bound. 
We achieve an approximation ratio of~$(5/4+\eps)$. 
In this section, we provide an overview of the algorithm's steps and the required analysis.

\section{Inapproximability of DSP for ratios below \texorpdfstring{$5/4$}{5/4}}
\label{sec:Inapp}
We show that the pseudo-polynomial strip packing problem is $NP$-hard to approximate for ratios better than $\nicefrac{5}{4}\cdot \opt$. 
We achieve this result by providing a reduction onto the \textsc{Parallel Task Scheduling} problem, where we are given a set of $m$ machines and a set $\jobs$ of $n$ jobs. 
Each job $j \in \jobs$ has a processing time $p(j) \in \natNumbers$ and a number of required machines $q(j)\in \natNumbers$ assigned. 
We aim to place all jobs inside a schedule with a minimum makespan. 
A schedule $\Sigma$ is a combination of two functions $\sigma, \rho$. 
The first function $\sigma : \jobs\longrightarrow \natNumbers$ assigns each job to a starting point in the schedule. 
The second function $\rho : \jobs \longrightarrow \{M\vert M \subseteq \{1,\ldots, m\}\}$ maps each job to a set of machines it is processed on. 
We can express the makespan $T:= \max_{i\in \jobs} \sigma(i)+p(i)$ as the latest finishing point of any job in~$\jobs$. 

It is known that \textsc{Parallel Task Scheduling} is strongly $NP$-hard to solve for all values $m\geq 4$ \cite{DuL89a,HenningJRS18} while there exist pseudo-polynomial time algorithms to solve it for all values $m\leq 3$, see \cite{DuL89a}. 
As a consequence, providing a reduction from DSP onto PTS shows that there is no pseudo-polynomial time algorithm to approximate DSP with a ratio better than $\nicefrac{5}{4}$. This yields the following theorem.

\begin{thm}
\textsc{Demand Strip Packing} is strongly $NP$-hard to approximate with a ratio lower than $\nicefrac{5}{4}$. 
\label{thm:Hardness}
\end{thm}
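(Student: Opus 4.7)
The natural approach is to give an objective-preserving reduction from \textsc{Parallel Task Scheduling} (PTS) with a fixed small number of machines to DSP, and then invoke the strong NP-hardness of PTS at the $5/4$ threshold. The hardness ingredient I would use is the 3-Partition-based construction of Henning et al.~\cite{HenningJRS18}: for a suitable small constant $m \geq 4$, it produces PTS instances whose optimal makespan is either at most $T$ (in the yes-case of the underlying 3-Partition instance) or strictly greater than $(5/4)T$ (in the no-case), with $T$ polynomially bounded in the input size. Because 3-Partition is strongly NP-hard, this makes PTS strongly NP-hard to approximate below a ratio of $5/4$.

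Given such a PTS instance $(\jobs, m)$, the DSP instance I would construct uses a strip of width $\stripWidth = m$ and turns every job $j$ into an item of width $p(j)$ and height $q(j)$. The inequality $\opt_{DSP} \leq \opt_{PTS}$ is immediate: the time--machine diagram of any PTS schedule is already a valid (integral, hence sliced) DSP packing of the same height.

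The interesting direction, and the main obstacle, is to show $\opt_{PTS} \leq \opt_{DSP}$ on the instances produced by this reduction. In general, vertical slicing makes DSP strictly easier than PTS, so the inequality cannot be expected to hold without exploiting the structure of the reduction. My plan is to take an arbitrary DSP packing of height $H$ and run the left-to-right sweep hinted at in Figure~\ref{fig:fiveTransform}: between consecutive vertical event lines (item endpoints and slicing cuts) the packing is a static column of piece heights summing to at most $H$, and these columns can be permuted along the sweep so that the pieces of every sliced item are realigned into a single contiguous rectangle of height $h(i)$. Each local permutation preserves the height of the column it acts on, so the peak height never increases; the resulting packing is rigid and thus encodes a PTS schedule of makespan $H$. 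Verifying that such realignments can always be carried out is the technical heart of the argument, and the only place where the specific arithmetic structure of the PTS instances produced by the hardness reduction really matters.

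Combining the two directions yields $\opt_{DSP}(I_{DSP}) = \opt_{PTS}(I_{PTS})$ on the constructed family. Consequently, any pseudo-polynomial algorithm approximating DSP below a ratio of $5/4$ would decide the underlying 3-Partition instance in pseudo-polynomial (hence polynomial) time, contradicting its strong NP-hardness and yielding the claimed inapproximability of DSP.
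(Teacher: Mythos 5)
The high-level skeleton (reduce from 3-Partition via PTS, convert back and forth between a DSP packing and a PTS schedule) is correct and matches the paper, but the core step of your argument — converting a DSP packing of height $H$ into a PTS schedule of makespan $H$ — is based on a misconception that would make the proof fail. You propose to ``permute columns so that the pieces of every sliced item are realigned into a single contiguous rectangle of height $h(i)$,'' i.e.\ to turn the sliced packing into a rigid one, and you concede this would only work by exploiting special arithmetic of the Henning et al.\ instances. But such a realignment into contiguous rectangles is the same as converting a DSP packing to a classical \SP{} packing, which is exactly what cannot be done in general: the paper itself cites the Bladek et al.\ family showing a $\nicefrac{5}{4}$ gap between DSP and SP optima. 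The observation you are missing is that \PTS{} does \emph{not} require a job to occupy a contiguous block of machines — $\rho(j)$ may be an arbitrary subset. Once you notice this, no realignment into rectangles is needed at all: sweep left to right, and whenever a new job $j$ starts at time $t$, the total demand at $t$ is at most the packing height $m$, so at least $q(j)$ machines are currently idle, and $j$ can be assigned to any $q(j)$ of them. This greedy assignment works for every instance, not just those coming from the hardness reduction, and it is the heart of the paper's proof.

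There is also a confusion in your reduction parameters and in how you invoke the hardness. You write that the strip has width $\stripWidth = m$ while items have width $p(j)$; these are incompatible (processing times live on the time axis, not the machine axis). The paper maps time to the strip width ($\stripWidth=T$) and machines to the height dimension, so that the objective being minimized in DSP corresponds to the \emph{number of machines}. The $\nicefrac{5}{4}$ threshold then comes from the integer gap $4$ versus $\geq 5$ machines (at fixed makespan $T$) in the 3-Partition reduction of Henning et al., and crucially relies on $m=4$ being a small constant. Your framing instead places a $\nicefrac{5}{4}$ multiplicative gap on the optimal \emph{makespan} of the constructed PTS instances; that is not what the cited construction provides — a 3-Partition reduction of that type yields only an additive gap on the makespan, which vanishes as $T$ grows. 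Without the machine-count gap, the approximation-hardness conclusion does not follow.
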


\begin{proof}
The transformation of an instance $\instance$ of PTS to an instance $\instanceTransform$ of DSP is defined as follows.
We begin by creating an item $i_j$ for every job $j$ in $I$. 
The width $\width(i_j)$ is equal to the processing time $p(j)$ of job $j$, i.e.\ $\width(i_j)=p(j)$. \ari{Similarly, we have $\height(i_j)=q(j)$.}
As we create exactly one item per job, the value $n$ does not change. 
Finally, we map the processing time $T$ onto the width of the strip $\stripWidth$ and the number of machines $m$ onto the desired height of the strip $\stripHeight$.
\ari{Using this transformation, we show a feasible packing for $\instanceTransform$ with height $\stripHeight$ and width $\stripWidth$ exists if and only if a feasible schedule for $\instance$ with makespan $T$ and $m$ machines exists.}

    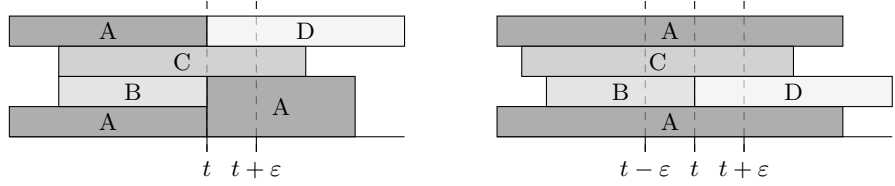
\begin{figure*}[t]
	 \centering
    \begin{subfigure}[t]{0.45\textwidth}
    \centering
			\begin{tikzpicture}
                \pgfmathsetmacro{\w}{0.65}
                \pgfmathsetmacro{\h}{0.4}
				\draw (0,0) -- (8*\w,0);
				\draw [dashed] (4*\w,4.5*\h) -- (4*\w,-0.5*\h);
                 \draw  (4*\w,0*\h) -- (4*\w,-0.5*\h);
				\draw [dashed] (5*\w,4.5*\h) -- (5*\w,-0.5*\h);
                 \draw  (5*\w,0*\h) -- (5*\w,-0.5*\h);
				\node[below] at (4*\w,-0.5*\h) {\small $t$};
				\node[below] at (5*\w,-0.5*\h) {\small $t+\eps$};
				\drawLargeItem[\small A]{0}{0}{4*\w}{1*\h};
				\drawLargeItem[\small A]{0}{3*\h}{4*\w}{4*\h};
				\drawLargeItem[\small A]{4*\w}{0}{7*\w}{2*\h};
				\drawVerticalItem[\small B]{1*\w}{1*\h}{4*\w}{2*\h};
				\drawHorizontalItem[\small C]{1*\w}{2*\h}{6*\w}{3*\h};
				\drawSmallItem[\small D]{4*\w}{3*\h}{8*\w}{4*\h};
			\end{tikzpicture}

	   \end{subfigure}
        \begin{subfigure}[t]{0.45\textwidth}
        \centering
			\begin{tikzpicture}
                \pgfmathsetmacro{\w}{0.65}
                \pgfmathsetmacro{\h}{0.4}
				\draw (0,0) -- (8*\w,0);
				\draw [dashed] (4*\w,4.5*\h) -- (4*\w,-0.5*\h);
                \draw  (4*\w,0*\h) -- (4*\w,-0.5*\h);
				\draw [dashed] (3*\w,4.5*\h) -- (3*\w,-0.5*\h);
                 \draw (3*\w,0*\h) -- (3*\w,-0.5*\h);
				\draw [dashed] (5*\w,4.5*\h) -- (5*\w,-0.5*\h);
                 \draw  (5*\w,0*\h) -- (5*\w,-0.5*\h);
				\node[below] at (4*\w,-0.5*\h) {\small $t$};
				\node[below] at (3*\w,-0.5*\h) {\small $t-\eps$};
				\node[below] at (5*\w,-0.5*\h) {\small $t+\eps$};
				\drawLargeItem[\small A]{0}{0}{7*\w}{1*\h};
				\drawLargeItem[\small A]{0}{3*\h}{7*\w}{4*\h};
				\drawVerticalItem[\small B]{1*\w}{1*\h}{4*\w}{2*\h};
				\drawHorizontalItem[\small C]{0.5*\w}{2*\h}{6*\w}{3*\h};
				\drawSmallItem[\small D]{4*\w}{1*\h}{8*\w}{2*\h};
			\end{tikzpicture}
	\end{subfigure}
	\caption{An illustration of the transformation from a $DSP$ instance on the left to a $PTS$ instance. 
    The line at $t$ represents the current state of the transformation algorithm. 
    The job $A$ is sliced after this point in the packing because $D$ is placed such that it occupies the top machine. 
    After the swap, all jobs are wholly scheduled on the machines they start on. 
    The height does not change.}
	\label{fig:SSPtoPTS}
\end{figure*}
Let $\instance$ be a yes-instance of PTS with $m$ machines and makespan $T$, and $\Sigma$ be the generated feasible schedule. 
Create a strip of width $T$. 
Place all transformed items $i_j$ according to their assignment $\rho(j)$, i.e.\ an item representing a job that has $\rho(j)=m_0,m_1,m_4$ gets placed at height $0,1$ and $4$. 
The items start at $\lambda(i_j)=\sigma(j)$, the starting point of the corresponding job in the schedule. 
As the height $\height(i_j)$ and width $\width(i_j)$ of all items is identical to the dimensions $q(j),(p(j)$ of their respective jobs, this packing must fit into the strip and have height at most $\stripHeight=m$. 
However, note that this packing is not a feasible solution for the DSP problem yet, as items may contain horizontal gaps. 
To remove these gaps and  create a feasible packing we use the following procedure. 

Traverse the packing from left to right until you find the first point $t$ in the packing that is infeasible, i.e.\ where an item $i_j$ contains a horizontal gap. 
Inspect the point at $t+\eps$. 
Draw a vertical line and sort all the items that are intersected line in ascending order of their heights. 
Place these items according to this order, beginning at the bottom. 
After this swap $i_j$ no longer contains a horizontal gap. 
We can iteratively repeat this procedure, traversing the schedule until the end. 
After this, the solution $\Sigma$ is transformed into a feasible solution $\Lambda$ for DSP. 
The height of the packing did not change at any point in the procedure, so we have $\stripHeight=m$. 
Similarly, the total width of the strip was not exceeded, so we have $\stripWidth=T$. 
For an illustration, see \cref{fig:fiveTransform}.

    Next, we show that a yes-instance $\instanceTransform$ of DSP can be transformed into a feasible schedule $\Sigma'$. 
    Let $\Lambda'$ be the generated packing for $\instanceTransform$ with height $\stripHeight$ and width $\stripWidth$. 
    The start point of any item $i$ in $\Lambda$ is given by $\lambda(i)$. 
    Generate a schedule $\Sigma'$ containing $\stripHeight$ machines. Then, place all jobs $j_i$ given by transformed items $i$ at their positions in $\Lambda'$, according to $\lambda(i)$. 
    Since the number of required machines $q(j_i)=\height(i)$, we still require at most $m$ machines for the packing. 
    Similarly, the makespan of the schedule is at most $\stripWidth$, as we have $p(j_i)=\width(i)$ for the processing times.
    However, note that this packing is not a feasible solution for PTS yet, as there may be vertical slices present in some jobs. 
    We utilize the following procedure to remove these slices and create a feasible schedule $\Sigma'$.

    Traverse the schedule from the beginning, i.e.\ left, to the end. At the beginning of the schedule, sort all jobs according to their required machines $q(j_i)$, then assign them in ascending order to machines. 
    This does not increase the number of required machines but removes any slices for all these jobs, as they are wholly scheduled on their assigned machines. 
    Traverse the schedule further until we reach a job $k$ that is infeasibly scheduled, i.e.\ placed on an occupied machine or sliced in some way. 
    This can only occur at the start of some job~$j'$. Call this point in the schedule $t$. 
    Draw a vertical line at $t-\eps$ and one at $t+\eps$. 
    Note the order of jobs at $t-\eps$. 
    These are scheduled feasibly. 
    Since we had a packing of height $\stripHeight=m$, we know that the total number of processors required at both lines is at most $m$. 
    Therefore, there must be gaps inside the schedule at the vertical line $t+\eps$. 
    The total number of empty processors is at least $q(k)$, as all items placed at this line in the packing sum up to a height of at most $m$. 
    Thus, we can feasibly assign $k$ to some $q(k)$ processors. 
    For an illustration, see \cref{fig:SSPtoPTS}.
    We repeat this for every job at $t$ and traverse the schedule further. 
    \ari{As all jobs to the left of $t$ are scheduled feasibly, we must generate a feasible schedule $\Sigma'$.}

    Both procedures we describe here only act once for every beginning item/job. 
    As such, their running time is polynomial.
    We have shown that any solvable instance $\instance$ of PTS can be transformed into a feasible packing $\Lambda$ and vice versa. 
    There exists a transformation from the strongly $NP$-complete problem \textsc{3-Partition} onto PTS, given in~\cite{HenningJRS18}. 
    They show that a schedule using $4$~machines with makespan~$\stripWidth$ can not exist unless you can solve the underlying \textsc{3-Partition} instance~$\instance_{3P}$. 
    Applying our transformation result, therefore, yields the following: 
    If there was an algorithm that solves DSP in pseudo-polynomial time with an approximation ratio less than $\nicefrac{5}{4}$, we can apply the transformation to the instance $\instance_{3P}$ to generate a feasible demand strip packing instance. 
    Then, the algorithm produces a packing of height $<\nicefrac{5}{4}\cdot \opt =5$. 
    After transforming the instance back into the PTS instance, we would generate a schedule that utilizes $4$ machines and has makespan $\stripWidth$. 
    This contradicts the results in~\cite{HenningJRS18}, proving \cref{thm:Hardness}.
    \end{proof}
   \begin{lem}
    The transformation algorithms from PTS to SSP and vice versa have a running time of $\mathcal{O}(n\cdot n\log(n))$ and $\Oh(n^2)$ respectively.
    \label{lem:runTimeTrans}
\end{lem}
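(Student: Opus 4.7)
The plan is to treat each transformation as a left-to-right sweep and bound the total cost by (i) the number of ``event points'' at which the sweep performs nontrivial work, and (ii) the cost per event point. The key observation is that the sweep moves strictly forward, so each event point is processed at most once. First I would count event points for both procedures: events can only occur where some item starts (or, in a preparatory pass, where some item ends), and there are at most $n$ items, so the outer loop contributes $\mathcal{O}(n)$ iterations in either direction.

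Next I would analyze the cost per event. For the PTS-to-DSP direction, the step at point $t$ picks up all items crossing the vertical line at $t+\varepsilon$, sorts them by height, and restacks them from the bottom up. At most $n$ items can cross the line simultaneously, so the sort dominates at $\mathcal{O}(n\log n)$ per event. Multiplying by the $\mathcal{O}(n)$ event bound gives $\mathcal{O}(n\cdot n\log n)$. For the DSP-to-PTS direction, the initial event at position $0$ costs a single $\mathcal{O}(n\log n)$ sort by required machines, and each later event only needs to locate the empty machines at $t+\varepsilon$ and assign the newly starting job(s) to them; since the assignment at $t-\varepsilon$ is already feasible and stored, a linear scan over the $\mathcal{O}(n)$ machine slots that meet the line suffices to identify the free ones. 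This gives $\mathcal{O}(n)$ per event after the first, for an overall $\mathcal{O}(n^2)$ bound.

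The main obstacle will be making precise the forward-only property: that the fix applied at an event point $t$ never reintroduces an infeasibility to the left of $t$, so the sweep truly visits each event point only once. For PTS-to-DSP this should follow because the rearrangement only permutes items that intersect $t+\varepsilon$, leaving items that have ended strictly before $t$ untouched. For DSP-to-PTS it follows from the invariant, already exploited in the proof of \cref{thm:Hardness}, that once the machine assignment at $t-\varepsilon$ is feasible it is preserved after placing any new job on currently empty machines, so earlier assignments are never overwritten. Once this invariant is locked in, the event counting and the per-event analysis combine directly to yield the two claimed running-time bounds.
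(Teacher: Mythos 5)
Your argument is essentially the paper's: bound the number of event points by $\mathcal{O}(n)$ in both directions, charge $\mathcal{O}(n\log n)$ per event for the PTS-to-DSP sweep (a sort of the items crossing the line), and charge a one-time $\mathcal{O}(n\log n)$ sort plus $\mathcal{O}(n)$ per subsequent event for the DSP-to-PTS sweep. Your additional discussion of the forward-only invariant is a reasonable precaution but matches what the paper implicitly relies on from the proof of the main hardness theorem.
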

\begin{proof}
  The procedure to transform a schedule $\Sigma$ into a feasible packing $\Lambda$ is computable in $\mathcal{O}(n\cdot n\log(n))$. This is because a horizontal gap can only exist once for every job, at its starting point, because jobs are wholly scheduled on the same machines. Therefore, there can only be at most $n$ points $t$ at which we have to employ the procedure. Due to the sorting, we utilize, a single reordering takes $\Oh(n\log(n))$ time. This yields a total transformation time of $\mathcal{O}(n\cdot n\log(n))$.
    
    Similarly, the procedure to transform a packing $\Lambda'$ into a feasible schedule $\Sigma'$ can be computed in $\Oh(n^2)$. Any application of the described procedure only occurs once at the beginning of the schedule and then at most once more for every starting point $\startPoint(j)$ of a job $j$. Thus, we have $\Oh(n)$ many applications. While the initial application requires some sorting of jobs, the remaining applications simply need to check all other jobs $j'$ present at that point in time for their assigned machines. The sorting is feasible in $\Oh(n\log(n))$, checking the values $\rho(j')$ is feasible in linear time $\Oh(n)$. As the sorting is only required once, the total running time of this procedure is $\Oh(n^2)$.  
\end{proof}

  \subsection{Applying our transformation using resource augmentation}  
    An important consequence of this computable transformation is that these problems can now be treated as the dual of each other. 
    This directly yields results for both the DSP and PTS problems when allowing some resource augmentation. 
    In the general case, a $(\nicefrac{3}{2}+\eps)$ approximation algorithm is known for PTS~\cite{JansenT10}. 
    \ari{Implementing the transformations given above and embedding them in a dual approximation framework yields the following. }
    \begin{cor}
    There is a polynomial time algorithm that yields a packing with optimal height for 
    \SSP{} 
    when we are allowed to augment the width of the strip by a factor of $(\nicefrac{3}{2}+\eps)$, i.e.\ consider the strip with width $(\nicefrac{3}{2}+\eps)\stripWidth$
    \label{cor:SSPAlg}
    \end{cor}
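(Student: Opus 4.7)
The plan is to apply a dual approximation framework to the transformation constructed in the proof of \cref{thm:Hardness}, combined with the known polynomial-time $(\nicefrac{3}{2}+\eps)$-approximation of Jansen and Th\"ole for \PTS{}. Let $\stripHeight^{*}$ denote the optimum DSP peak height in the original strip of width $\stripWidth$. By \cref{thm:Hardness}, such a packing corresponds exactly to a PTS schedule on $m=\stripHeight^{*}$ machines with makespan at most $\stripWidth$, so identifying $\stripHeight^{*}$ for \SSP{} reduces to finding the smallest number of machines for which the corresponding PTS instance is schedulable within $\stripWidth$.

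Concretely, I would binary search over integer heights $\stripHeight$ in the range $[\max_{i\in\items}\height(i),\,\sum_{i\in\items}\height(i)]$; the number of iterations is logarithmic in this range, hence polynomial in the binary encoding length of the input. For each candidate $\stripHeight$, I build the PTS instance with $m=\stripHeight$ machines via the polynomial-time transformation of \cref{lem:runTimeTrans}, run the Jansen--Th\"ole algorithm, and check whether the returned makespan $T'$ satisfies $T'\leq(\nicefrac{3}{2}+\eps)\stripWidth$. If so, accept the guess and decrease $\stripHeight$; otherwise reject and increase it.

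Correctness of this feasibility oracle for the \SSP{} instance follows because whenever $\stripHeight\geq\stripHeight^{*}$, the transformed PTS instance admits a schedule of makespan at most $\stripWidth$, and so the $(\nicefrac{3}{2}+\eps)$-approximation guarantees $T'\leq(\nicefrac{3}{2}+\eps)\stripWidth$ and the guess is accepted. Hence the smallest accepted guess $\stripHeight'$ satisfies $\stripHeight'\leq\stripHeight^{*}$. A final call to the PTS algorithm at $\stripHeight'$ followed by the reverse PTS-to-DSP procedure of \cref{thm:Hardness} (polynomial time by \cref{lem:runTimeTrans}) produces a feasible \SSP{} packing of the $n$ items of height $\stripHeight'\leq\stripHeight^{*}$ that fits into a strip of width at most $(\nicefrac{3}{2}+\eps)\stripWidth$, which is exactly the augmented-width guarantee claimed for \SSP{}.

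The main obstacle is justifying two points: monotonicity of the feasibility oracle in $\stripHeight$, so that binary search actually converges to the minimum accepted guess, and polynomial overall running time even when $\stripHeight$ can be large. Monotonicity follows because any PTS schedule on $m$ machines with makespan $\stripWidth$ extends trivially to $m+1$ machines by leaving the additional machine idle, so once a value of $\stripHeight$ is accepted every larger value is accepted as well. The running time bound follows from \cref{lem:runTimeTrans} for the two transformations and the polynomial complexity of the Jansen--Th\"ole subroutine in the encoding length of the PTS input, with only a logarithmic multiplicative overhead from the binary search over heights.
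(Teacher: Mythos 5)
Your proposal is correct and follows essentially the same route as the paper: a dual-approximation binary search over the height, transforming each guess into a PTS instance, running the Jansen--Th\"ole $(\nicefrac{3}{2}+\eps)$-approximation, accepting iff the returned makespan is at most $(\nicefrac{3}{2}+\eps)\stripWidth$, and transforming the final schedule back into a packing of width at most $(\nicefrac{3}{2}+\eps)\stripWidth$ and height at most the optimum. The only cosmetic difference is the choice of search bounds ($\max$ height and total height instead of the paper's area lower bound and Steinberg upper bound), which does not affect correctness or polynomiality.
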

    \begin{proof}
        Let $\instance$ be an instance of DSP containing $n$ items and a strip of width $\stripWidth$. 
        Transform $\items$ into a set of jobs $\jobs$ as described above. 
        Guess the number of machines $m$ required via a binary search. 
        Generate the upper and lower bound via the Steinberg algorithm for $\instance$, which guarantees a $2$-approximation.
        
        As the lower bound, use $\sum(\width(i)\cdot \height(i))/\stripWidth$, i.e.\ the total area of items distributed evenly across the strip. As the upper bound use the height of the packing given by the Steinberg algorithm for $\instance$. This is at most a $2$-approximation.

        For every guess $\stripHeight$ of the binary search, use the algorithm given in~\cite{JansenT10} 
        to compute a solution for the transformed PTS instance. Check whether the generated makespan $T$ is at most $T\leq (\nicefrac{3}{2}+\eps)\cdot \stripWidth$. 
        If this holds, decrease the guess $\stripHeight$ and repeat the procedure. 
        If we have $T>(\nicefrac{3}{2}+\eps)\cdot \stripWidth$, we know that there is no feasible packing with height $\stripHeight$ and width $\stripWidth$. 
        We know that there is an optimal packing of height $\stripHeight$ and width $\stripWidth$ if and only if there is a schedule of makespan $T=\stripWidth$ that uses $m=\stripHeight$ machines. 
        Further, we know that the algorithm in~\cite{JansenT10} yields a schedule of makespan $(\nicefrac{3}{2}+\eps)\cdot \opt=(\nicefrac{3}{2}+\eps)\cdot \stripWidth$. 
        Thus, if the algorithm does not find a solution with makespan less than $(\nicefrac{3}{2}+\eps)\cdot \stripWidth$, we know that there can not be a packing with width $\stripWidth$ and height $\stripHeight$.
        \ari{In this case, increase the guess $\stripHeight$.}

        After completing the binary search, we save the solution with the lowest number of required machines $m=\stripHeight$. 
        Transforming this solution yields a feasible packing with height $\stripHeight$ and width at most $(\nicefrac{3}{2}+\eps)\cdot \stripWidth$.
        As all algorithm steps run in polynomial time, the whole algorithm also runs in polynomial time.
    \end{proof}
    Similarly, we can generate schedules with optimal makespan $T$ for the PTS problem when we are allowed to augment the number of machines used. 
    If the set of jobs $\jobs$ is arbitrary, we obtain an optimal makespan by augmenting $m$ by a factor of $(\nicefrac{5}{3}+\eps)$. 
    To achieve this, we utilize the results given in~\cite{DeppertJ0RT21,GalvezGJK2021}, as they both provide a $(\nicefrac{5}{3}+\eps)$ for DSP. 
    If the jobs are further parameterized in their processing time, we can apply the pseudo-polynomial time algorithm presented in this paper to reduce the required augmentation to $(\nicefrac{5}{4}+\eps)$. This yields the following.
    
    \begin{cor}
        There is a polynomial time algorithm that yields a packing with optimum makespan for \PTS{} when we are allowed to augment the number of used machines $m$ by a factor of $(\nicefrac{5}{3}+\eps)$.
    \end{cor}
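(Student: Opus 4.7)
My plan is to dualize the strategy of \cref{cor:SSPAlg}, exchanging the roles of DSP and PTS. Given a PTS instance $\instance$ with $m$ machines and $n$ jobs, I would first translate $\instance$ into a DSP instance $\instanceTransform$ via the construction in the proof of \cref{thm:Hardness}: every job $j$ becomes an item with width $p(j)$ and height $q(j)$. By the equivalence established there, a PTS schedule on $m$ machines with makespan $T$ exists iff a DSP packing of width $T$ and height $m$ exists. To determine the optimum makespan $T^\ast$ I binary search over candidate values of $T$: for each guess I set the strip width of $\instanceTransform$ to $T$, invoke the $(\nicefrac{5}{3}+\eps)$-approximation for DSP of~\cite{DeppertJ0RT21,GalvezGJK2021}, and accept when the returned height $\stripHeight'$ satisfies $\stripHeight'\le(\nicefrac{5}{3}+\eps)m$; otherwise I raise the lower end of the search.

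For the initial search interval, I would take a trivial lower bound such as $\max\{\max_j p(j),\ \sum_j p(j)q(j)/m\}$ and an upper bound obtained from any constant-factor PTS approximation, e.g.\ \cite{JansenT10}. Correctness rests on two opposite implications of the approximation guarantee. If $T\ge T^\ast$, then $\instanceTransform$ admits a packing of height at most $m$, so the $(\nicefrac{5}{3}+\eps)$-approximation returns one of height at most $(\nicefrac{5}{3}+\eps)m$ and the guess is accepted. Conversely, a rejected guess $T$ with $\stripHeight'>(\nicefrac{5}{3}+\eps)m$ certifies, via the approximation ratio, that the optimum DSP height for width $T$ strictly exceeds $m$, ruling out any PTS schedule on $m$ machines with makespan $T$. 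Together these show that the smallest accepted candidate equals $T^\ast$ itself. I then apply the DSP-to-PTS procedure from the proof of \cref{thm:Hardness} to the corresponding DSP packing to recover a schedule of makespan $T^\ast$ that uses at most $(\nicefrac{5}{3}+\eps)m$ machines.

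Polynomiality is immediate: the binary search runs a polynomial number of iterations over the interval above, each step calling the polynomial-time DSP approximation and the $\Oh(n^2)$ transformation from \cref{lem:runTimeTrans}. The main point that needs care is the direction of the approximation implication in the rejection step, because this is the only place where the $(\nicefrac{5}{3}+\eps)$ slack in the augmentation ratio is consumed; it is the hinge that lets us identify $T^\ast$ exactly without accepting infeasible guesses. This in turn is why we cannot hope to reproduce the tighter $(\nicefrac{5}{4}+\eps)$ augmentation of the pseudo-polynomial variant simply by plugging in a different off-the-shelf approximation: it would require a DSP approximation of matching ratio, which is precisely what the algorithm developed later in the paper supplies.
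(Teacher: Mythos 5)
Your approach matches the paper's dualization of \cref{cor:SSPAlg}: binary search on the makespan $T$, transform to a DSP instance of width $T$, run the polynomial-time $(\nicefrac{5}{3}+\eps)$-approximation of~\cite{DeppertJ0RT21,GalvezGJK2021}, and accept when the returned height is at most $(\nicefrac{5}{3}+\eps)m$. One small caution: the two implications you state are really one implication and its contrapositive, and they show only that every $T\geq T^\ast$ is accepted, not that the smallest accepted $T$ equals $T^\ast$ --- the approximation may also accept some $T<T^\ast$ whenever the DSP optimum for that width lies strictly between $m$ and $(\nicefrac{5}{3}+\eps)m$; fortunately this is harmless, since the search still returns some accepted $T\leq T^\ast$, and the recovered schedule then has makespan at most $T^\ast$ on at most $(\nicefrac{5}{3}+\eps)m$ machines, which is exactly what the corollary asserts.
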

    \begin{proof}
        The proof closely resembles that of \cref{cor:SSPAlg}. 
        The core difference is that we implement the dual approximation framework by guessing the makespan $T=\stripWidth$. We iterate the binary search depending on whether we find a packing with height at most $(\nicefrac{5}{3}+\eps)\cdot m$. Again, all subroutines used run in polynomial time, so the entire algorithm does as well.
    \end{proof}
    \begin{cor}
        There is a polynomial time algorithm that yields a packing with optimum makespan for \PTS{} where the processing time is parameterized, when we are allowed to augment the number of used machines $m$ by a factor of $(\nicefrac{5}{4}+\eps)$.
        \label{cor:AugSSP}
    \end{cor}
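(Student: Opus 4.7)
The plan is to mirror the proof of \cref{cor:SSPAlg} but swap the roles of the two problems and use the stronger $(\nicefrac{5}{4}+\eps)$-approximation for DSP provided in the algorithmic part of this paper in place of the $(\nicefrac{3}{2}+\eps)$-result of~\cite{JansenT10}. Given a PTS instance $\instance$ with $m$ machines and processing times encoded in unary, I would first use \cref{thm:Hardness} (its transformation direction from PTS to DSP) to obtain an equivalent DSP instance $\instanceTransform$: each job $j$ becomes an item $i_j$ with $\width(i_j)=\proc(j)$ and $\height(i_j)=q(j)$, and a guess $T$ of the optimum makespan becomes the strip width $\stripWidth=T$.

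Next, I would embed a dual approximation scheme: binary search on $T$ between a trivial lower bound (e.g.\ the maximum of $\max_{j}\proc(j)$ and $\lceil\sum_{j}\proc(j)q(j)/m\rceil$) and an upper bound obtained from any constant-factor PTS algorithm. For each guess $T$, run the $(\nicefrac{5}{4}+\eps)$-approximation for DSP on $\instanceTransform$ with strip width $T$. If the algorithm returns a packing of height at most $(\nicefrac{5}{4}+\eps)\cdot m$, decrease $T$; otherwise we can certify that no schedule with makespan $T$ on $m$ machines exists, since such a schedule would transform by \cref{thm:Hardness} into a DSP packing of height exactly $m$ and width $T$, and then the DSP approximation would be forced to return a solution of height at most $(\nicefrac{5}{4}+\eps)\cdot m$. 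Accordingly, in this case, increase the guess $T$.

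When the search terminates, take the smallest feasible $T^*$; by the above argument, $T^* \leq \opt$. Apply the reverse transformation of \cref{thm:Hardness} to the DSP packing of height at most $(\nicefrac{5}{4}+\eps)\cdot m$ and width $T^*$; this produces a feasible PTS schedule of makespan $T^*\leq\opt$ that uses at most $(\nicefrac{5}{4}+\eps)\cdot m$ machines, as required. Correctness of this reverse step, as well as the running time bound of \cref{lem:runTimeTrans}, handle the back-and-forth conversion without altering the makespan or machine count.

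The main obstacle — and the reason why this corollary requires the \emph{parameterized} version of PTS, in contrast to the previous one — is the running time. The DSP approximation from the algorithmic part of this paper is only pseudo-polynomial, so its runtime depends polynomially on $\stripWidth=T$, which in turn depends on the processing times $\proc(j)$. Thus the overall procedure is polynomial in the input size only when the processing times are given in unary (i.e.\ parameterized). Everything else, including the binary search over $T$ and the $\Oh(n^2)$ reverse transformation from \cref{lem:runTimeTrans}, adds only polynomial overhead, so the entire scheme runs in the claimed time.
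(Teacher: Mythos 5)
Your proposal is correct and follows essentially the same dual-approximation approach as the paper: binary search on the makespan $T$, transform to a DSP instance of strip width $T$, check whether the DSP algorithm produces a packing of height at most $(\nicefrac{5}{4}+\eps)m$, and transform back. In fact, your write-up is more careful than the paper's at two points worth noting: the paper's proof text contains a typo (it writes $(\nicefrac{5}{3}+\eps)\cdot m$ instead of $(\nicefrac{5}{4}+\eps)\cdot m$), and its claim that ``all subroutines used run in polynomial time'' glosses over the fact that the DSP algorithm is only pseudo-polynomial---your explicit observation that the parameterization of processing times (unary encoding) is precisely what makes the overall scheme polynomial is the correct and important point that the corollary's statement hinges on.
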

    \begin{proof}
        The proof resembles that of \cref{cor:SSPAlg}. The dual approximation framework is implemented by guessing the makespan $T=\stripWidth$. We iterate the binary search depending on whether we find a packing with height at most $(\nicefrac{5}{3}+\eps)\cdot m$. All subroutines used run in polynomial time, so the entire algorithm does as well.
    \end{proof}
    
  
    With the hardness and transformation algorithms shown, let us now focus our attention on constructing the algorithm that achieves the approximation ratio of $(\nicefrac{5}{4}+\eps)$.

       \section{A \texorpdfstring{$(5/4 +\eps)$}{(5/4 + epsilon)} approximation for DSP in pseudo-polynomial time}
    We provide an algorithm that proves the following theorem.
    \begin{thm}
    There is an algorithm to solve the DSP problem with approximation ratio  $(\nicefrac{5}{4}+\eps)\opt$ in 
    $\Oh(n\log(n))\cdot \stripWidth^{\Oh_\eps(1)}$ time, where $\Oh_\eps$ neglects all factors depending on $1/\eps$.
    \label{thm:BigTheorem}
    \end{thm}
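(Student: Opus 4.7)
The plan is to adapt the $(5/4+\eps)$-approximation of Jansen and Rau for classical \SP{} to the demand setting, using the transformation developed in \cref{sec:Inapp} together with new restructuring techniques tailored to slicing. I would wrap everything in a dual approximation framework: binary-search a target height $\stripHeight$ between the area lower bound $\sum_{i\in\items}\width(i)\height(i)/\stripWidth$ and a Steinberg-type upper bound, and for each guess decide whether a packing of height at most $(5/4+\eps)\stripHeight$ exists.

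For a fixed guess, I would first classify items by $\eps$-dependent thresholds on width and height into large, horizontal, vertical, tall, small, and medium groups. A standard linear-grouping/shifting argument, adapted to the demand setting, would discard or repack the medium-height items into an extra slab of height $\Oh(\eps)\stripHeight$; their removal allows the widths and heights of the remaining items to be rounded so that they lie in a polynomially bounded set of values parameterized by $\stripWidth$ (this is where the pseudo-polynomial running time $\stripWidth^{\Oh_\eps(1)}$ enters).

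Next I would prove the core structural lemma: any feasible sliced packing of height $\stripHeight$ can be transformed into one of height $(5/4+\eps)\stripHeight$ whose support is a polynomially bounded collection of axis-aligned containers of a few types. The main obstacle is exactly here, because a sliced item may straddle many containers and the strip-shifting arguments used for integral \SP{} do not carry over. To overcome this I plan to partition an optimal packing into horizontal bands and argue, band by band, that items wider than an $\eps$-threshold can be reassigned so as to be packed \emph{integrally} inside a constant number of sub-containers, while the truly thin items are left free to be sliced; the $5/4$-factor enters from the worst-case overlap between a large item and a tall item in a single band, mirroring the gap of \cref{fig:DifferenceSP/SSP}. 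The transformation from \cref{sec:Inapp}, combined with the pseudo-polynomial PTS algorithm for $m\le 3$ of Du and Leung used in \cref{cor:AugSSP}, gives the integral-placement subroutine needed inside each container.

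Finally, I would enumerate over all polynomially many candidate container configurations; for each, assign the rounded large, tall, horizontal, and vertical items via a configuration LP (rounded using the small number of distinct rounded types) and place the small items greedily by NFDH/Steinberg into the remaining free area, exploiting slicing to absorb the tiny fractional leftover of the LP. Returning the best packing found over the binary search yields the claimed $(5/4+\eps)$-approximation; the dominant cost is enumeration of configurations plus an $\Oh(n\log n)$ preprocessing/sorting step, giving the stated running time $\Oh(n\log n)\cdot\stripWidth^{\Oh_\eps(1)}$. The hardest part throughout is verifying that the structural $(5/4+\eps)$-restructuring survives the presence of sliced items; once the integral-placement subroutine is in hand, the rest is a careful adaptation of the \SP{} pipeline.
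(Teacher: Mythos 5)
Your high-level pipeline matches the paper's: binary search over a guessed height, classify items by $\eps$-thresholds, round away a negligible medium class, structurally decompose the packing into $\Oh_\eps(1)$ containers, assign the remaining item classes via configuration LPs, and absorb small items with NFDH/Steinberg. Where your proposal diverges from the paper --- and where I see a genuine gap --- is the central structural step, which is also the step you yourself flag as the hardest.

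You propose to partition the packing into horizontal \emph{bands} and to obtain the integral-placement guarantee inside each container by transforming the local DSP instance into a PTS instance and invoking the Du--Leung pseudo-polynomial exact algorithm for $m\le 3$. This does not quite do the job. First, Du--Leung returns \emph{some} optimal 3-machine schedule, but what the structural argument actually needs is a schedule with a very specific shape: each tall item must be assigned entirely to the top, middle, or bottom of the box, with middle-assigned items always occupying the middle machine, so that the subsequent reordering into $\Oh_\eps(1)$ sub-boxes (and the later placement of vertical pseudo-items) goes through. Nothing in Du--Leung's output guarantees this canonical form, and the paper instead \emph{proves} such an assignment exists by an explicit combinatorial swapping argument (its Lemma on the 3-machine view of boxes of height $> 3/4\,\stripHeight'$), rather than calling an exact PTS solver. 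Second, the 3-machine reduction is only valid for one of the three height regimes of boxes ($\height(B)>3/4\,\stripHeight'$); boxes of height in $(1/4\,\stripHeight',1/2\,\stripHeight']$ and $(1/2\,\stripHeight',3/4\,\stripHeight']$ need separate, different reordering procedures which your band sketch doesn't distinguish.

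A further imprecision: you attribute the $5/4$ factor to the worst-case overlap of a large and a tall item in a single band. In the paper the additive $\nicefrac{1}{4}\stripHeight$ loss comes from \emph{extending} every tall/vertical box of height $>3/4\,\stripHeight'$ by $\nicefrac{1}{4}\stripHeight$ so that the shifted-and-fused pseudo-items fit; large items play no role in that loss, and the rest of the $\eps$-budget is consumed by the rounding of \cref{lem:Rounding}, the horizontal-starting-point reduction, the grid alignment of box borders, and the extra slabs for medium and small items. Finally, the decomposition the paper actually uses is not band-based; it first assigns a personal box to each large and medium-vertical item, then builds horizontal-item boxes of fixed height $\eps\delta\opt$ by a greedy sweep, and only then lets the \emph{profile} of these boxes carve out the tall/vertical boxes (\cref{lem:boxPartition}). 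You would need to work out an analogue of that profile argument for your bands, or otherwise make precise why a band decomposition yields only $\Oh_\eps(1)$ containers while respecting sliced items. Until the structural lemma is stated and proved at that level of concreteness, the proposal is a correct road-map but not yet a proof.
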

    To prove this theorem, we present an algorithm that fulfills the desired properties. Before we can begin summarizing this algorithm however, we must briefly introduce some concepts essential to the functionality of the algorithm. 
    
        \emph{Boxes.} When talking about boxes, we refer to some rectangular, axis-aligned objects with a width and a height. All boxes we describe are also able to be sliced as long as they are placed contiguously, i.e.\ their end point is at the sum of their starting point and their width. When illustrating proofs, we mostly ignore this sliceability of boxes in an effort to increase clarity. When discussing boxes, we \emph{do not} imply that they are not sliced.

    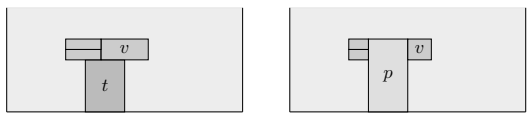
\begin{figure}
    \centering
    \resizebox{0.5\textwidth}{!}{
        \begin{tikzpicture}
            \pgfmathsetmacro{\w}{4.5}
            \pgfmathsetmacro{\h}{2}
            \drawJobNoBorder{0}{0}{\w}{\h};
            \draw (0,0)--(\w,0);
            \draw (0,0)--(0,\h);
            \draw (\w,0)--(\w,\h);
            \drawTallItem[$t$]{\w/3}{0}{\w/2}{\h/2};
            \drawHorizontalItem{\w/4}{\h/2}{0.4*\w}{0.6*\h};
            \drawHorizontalItem[$v$]{0.4*\w}{\h/2}{0.6*\w}{0.7*\h};
            \drawHorizontalItem{\w/4}{0.6*\h}{0.4*\w}{0.7*\h};
            \begin{scope}[xshift=1.2*\w cm]
            \drawJobNoBorder{0}{0}{\w}{\h};
            \draw (0,0)--(\w,0);
            \draw (0,0)--(0,\h);
            \draw (\w,0)--(\w,\h);
            \drawVerticalItem[$p$]{\w/3}{0}{\w/2}{0.7*\h};
            \drawHorizontalItem{\w/4}{\h/2}{\w/3}{0.6*\h};
            \drawHorizontalItem[$v$]{0.5*\w}{\h/2}{0.6*\w}{0.7*\h};
            \drawHorizontalItem{\w/4}{0.6*\h}{\w/3}{0.7*\h};
            \end{scope}
        \end{tikzpicture}
        }
    \caption{A basic illustration of the pseudo item generation. To the left we see the original items, in particular the defining tall item $t$. To the right we see the newly generated pseudo item $p$, with increased height but the same width as the original item constructing it. The remaining vertical items might get separated horizontally by later routines.}
    \label{fig:pseudoItemGen}
\end{figure}

    \emph{Pseudo Items} Pseudo items are a set of combined items that are placed adjacent to each other inside some packing. We combine these items and call this combination a new pseudo item. These items are especially relevant when we discuss the separation of vertical and tall boxes in the following. For a simple illustration of this process, see \cref{fig:pseudoItemGen}. Note that the construction of pseudo items may introduce horizontal separation in some \emph{vertical} items. Such items are later placed integrally into gaps left open by our generated packing.

    First, we give an overview of its functionality and steps. 
    Then, we continue by proving the correctness of each step. 
    Due to space concerns, we omit some technical analyses and instead focus on providing a sketch of the desired algorithm.
    
    We begin by rounding and categorizing the instance. 
    Then, we show that any optimal solution can be transformed into a structured solution, producing only a small loss in \ari{packing height}
    Finally, we show that we can fill this structure with items to generate a feasible packing. 
    The algorithm can be described as such:
    \begin{enumerate}
        \item[1] Define $\eps'=\Oh(\eps)$, formulate lower and upper bounds on the optimal packing height $\stripHeight, 2\stripHeight$. Scale the item heights in the instance such that they are in $\{1,\ldots,n/\eps'\}$
        \item[2] Guess values $\stripHeight'\in [\stripHeight,\ldots, 2\stripHeight]$ via binary search and test each value for a feasible packing in the following manner:
        \begin{enumerate}
            \item[3] Round and scale all items in the instance according to $\stripHeight'$ and $\eps'$. 
            Categorize items into item-types depending on their dimensions. 
            \item[4] Discard small and medium items. Guess the partition of the optimal packing into boxes. For each guessed partition, attempt the following:
            \begin{enumerate}
                \item[5] Place items in item-types inside their respective boxes. 
                If this placement is impossible, we know the guessed partition must be wrong and discard it. 
                If no partition permits a feasible placement, the guessed optimal value $\stripHeight'$ must be wrong.
            \end{enumerate}
            \item[6] Place the discarded medium items atop the generated packing.
        \end{enumerate}
        \item[7] Return the packing generated for the smallest value $\stripHeight'$.
    \end{enumerate}
    \paragraph*{Step 1}
    Set $\eps'=\min\{\nicefrac{1}{4},(1/\lceil10/\eps\rceil)\}$ for computational reasons. 
    The values $\stripHeight,2\stripHeight$ can be generated using Steinbergs algorithm~\cite{Steinberg97}, which computes a solution with height at most $2\cdot\opt$. Thus, we apply Steinbergs algorithm to the instance and take the resulting packing height as $2\stripHeight$, and halve it to obtain $\stripHeight.$
    \paragraph*{Step 2} We use a framework pioneered in~\cite{HochbaumS87}. Instead of computing the optimal height directly, we guess the optimal height inside the bounds given in step 1 and attempt to compute a packing of this height for every guess. As long as this is feasible, we decrease the guessed height, increasing it otherwise.
    \begin{figure}[t]
     \centering
  	\resizebox{0.3\textwidth}{!}{
  	\begin{tikzpicture}
  	\pgfmathsetmacro{\w}{5}
  	\pgfmathsetmacro{\h}{\w}
  	\pgfmathsetmacro{\e}{4*\w/10}
  	\pgfmathsetmacro{\d}{3*\w/10}
  	\pgfmathsetmacro{\m}{1.5*\w/10}
  	\pgfmathsetmacro{\f}{6*\w/10}
  	\draw [dashed] (\d,0) node [anchor=north]{$\delta\stripWidth$}-- (\d,\h);
  	\draw [dashed] (\m,0) node [anchor=north]{$\mu\stripWidth$}-- (\m,\h);
  	\draw [dashed] (0,\d) node [anchor=east]{$\delta \stripHeight'$}-- (\w,\d);
  	\draw [dashed] (0,\e) node [anchor=east]{$\eps \stripHeight'$}-- (\w,\e);
  	\draw [dashed] (0,\m) node [anchor=east]{$\mu \stripHeight'$}-- (\w,\m);
  	\draw [dashed]  (0,\f) node [anchor=east]{$(\nicefrac{1}{4}+\eps)\stripHeight'$} -- (\w,\f);
  	\drawSmallItem[$\itemsSmall$]{0}{0}{\m}{\m};
  	\drawVerticalItem[$\itemsVert$]{0}{\d}{\m}{\f};
  	\drawTallItem[$\itemsTall$]{0}{\f}{\d}{\h};
  	\drawLargeItem[$\itemsLarge$]{\d}{\d}{\w}{\h};
  	\drawHorizontalItem[$\itemsHor$]{\d}{0}{\w}{\m};
  	\draw (0,\d) -- (0,0) -- (\w,0) -- (\w,\d);
  	\draw (\m,\e) -- (\d,\e);
  	\node at (\m*0.5+\d*0.5,\e*0.5+\f*0.5) {$\itemsMedVert$};
  	\node at (\m*0.5+\d*0.5,\m*0.5+\d*0.5) {$\itemsMed$};
  	\end{tikzpicture}
  }
\caption{An illustration of the partitioning of items. Items are partitioned by their width and height according to this figure.}
\label{fig:structPart}
\end{figure}
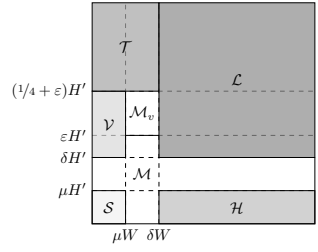
    \paragraph*{Step 3}
    From this point on, we assume that the guessed value $\stripHeight'$ given by step 2 is correct, i.e.\ $\stripHeight'=\opt$.
    Showing that all steps successfully compute proves the correctness of the algorithm.
    We round the items according to their sizes. 
    We select some values $\mu, \delta$ such that medium items have a small total area and discard them later. 
    Small items are discarded as well. 
    We categorize items as either large~$(\itemsLarge)$, tall~$(\itemsTall)$, vertical~$(\itemsVert)$, medium-vertical~$(\itemsMedVert)$, medium~$(\itemsMed)$, horizontal~$(\itemsHor)$ or small~$(\itemsSmall)$ according to some values $\mu, \delta$. See \cref{fig:structPart} for an intuitive illustration. 
    We partition items $i$ depending on their width $\width(i)$ and height $\height(i)$ as follows:
      \begin{itemize}
        \item $\itemsLarge:=\{i\in \instance\vert \height(i)> \delta \stripHeight', \width(i)\geq \delta W\}$ as the set of large items,
        \item $\itemsTall:= \{i\in \instance\vert \height(i)\geq (\nicefrac{1}{4}+\eps) \stripHeight', \width(i)< \delta W\}$ as the set of tall items,
        \item $\itemsVert:= \{i\in \instance\vert \delta \stripHeight' \leq \height(i)<(\nicefrac{1}{4}+\eps)\stripHeight', \width(i)\leq \mu W\}$ as the set of vertical items,
        \item $\itemsMedVert:= \{i\in \instance\vert \eps \stripHeight' \leq \height(i)< (\nicefrac{1}{4}+\eps) \stripHeight', \width(i)\leq \mu W\}$ as the set of vertical medium items,
        \item $\itemsHor:= \{i\in \instance\vert \height(i)\leq \mu \stripHeight', \width(i)\geq \delta W\}$ as the horizontal items,
        \item $\itemsSmall:= \{i\in \instance\vert \height(i)\leq \mu \stripHeight', \width(i)\leq \mu W\}$ as the set of small items and
        \item $\itemsMed:= \{i\in \instance\vert \height(i)< \eps \stripHeight', \mu W < \width(i)\leq \delta W\}\cup \{i\in \instance\vert \mu \stripHeight' <\height(i)\leq \delta \stripHeight'\}$ as the set of medium sized items.
    \end{itemize}
  
    We use techniques developed for the classical SP problem~\cite{StripPacking54} to generate values for $\mu, \delta$ accordingly to group the items. As a result, the total area of medium and medium-vertical items is small. 
     Next, we introduce the values for $\mu,\eps$ used in our rounding scheme and categorization of items.
    \begin{lem} 
        Let $f: \mathbb{R}\rightarrow \mathbb{R}$ be any function such that $1/f(\eps)$ is integral. Consider the sequence $\sigma_0=f(\eps), \sigma_{i+1}= \sigma_i^2f(\eps)$. There is a value $i\in \{0,\ldots,(2/f\eps))-1\}$ such that the total area of the items in $\itemsMed\cup \itemsMedVert$ is at most $f(\eps) \cdot W\cdot \opt$, if we set $\delta:=\sigma_i$ and $\mu:= \sigma_{i+1}.$
        \label{lem:RoundingValues}
    \end{lem}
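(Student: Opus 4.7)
The plan is to execute the classical averaging/bands argument that is standard in this line of work (cf.\ the analogous lemmas used for the $(5/4+\eps)$-approximation for \SP{}). For each index $i \in \{0,\ldots,2/f(\eps)-1\}$, let $M_i$ denote the total area of $\itemsMed \cup \itemsMedVert$ when we instantiate $\delta := \sigma_i$ and $\mu := \sigma_{i+1}$. The goal is to sum $M_i$ over all admissible $i$, control the sum by a constant multiple of the total item area, and then conclude by pigeonhole that some $M_{i^*}$ is at most the stated bound.

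First I would verify that the sequence $(\sigma_i)$ is strictly decreasing and that the induced size-bands are pairwise disjoint. Because $\sigma_{i+1}/\sigma_i = \sigma_i \cdot f(\eps) \leq f(\eps)^2 < 1$, the sequence is strictly decreasing, and hence the half-open intervals $(\sigma_{i+1},\sigma_i]$ (which I shall call the $i$-th width band and, scaled by $\stripHeight'$, the $i$-th height band) are pairwise disjoint as $i$ ranges. The set $\itemsMed$ picks up exactly those items whose normalized width $\width(i)/W$ lies in the $i$-th width band, or whose normalized height $\height(i)/\stripHeight'$ lies in the $i$-th height band; the set $\itemsMedVert$ contributes items whose width satisfies $\mu W < \width(i) \leq \delta W$ in the analogous sense.

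Second I would carry out the double counting. Each item has a single width and a single height, so each of these two quantities falls into at most one band across all choices of $i$. Consequently, each item is counted in $M_i$ for at most a constant number $C$ (here $C \leq 2$, one charge for its width and one for its height) of indices $i$. Combining this with the trivial area bound $\sum_{j\in\items} \width(j)\height(j) \leq W\cdot\opt$, which holds because the items admit a packing of height $\opt$ in the strip of width $W$, yields
\[
\sum_{i=0}^{2/f(\eps)-1} M_i \;\leq\; C \cdot W \cdot \opt.
\]
Since the sum has $2/f(\eps)$ terms, pigeonhole produces an index $i^*$ with $M_{i^*} \leq (C/2)\cdot f(\eps) \cdot W \cdot \opt$. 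Taking $C=2$ gives exactly the claimed bound; if the precise counting gives a slightly larger $C$, one absorbs it by replacing $f$ by $f/C$ at the outset (which only changes constant factors in the subsequent steps of the algorithm).

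The main obstacle I anticipate is not the averaging itself, which is standard, but the careful bookkeeping needed to show that each item is charged to a bounded number of the $M_i$. In particular, the set $\itemsMedVert$ mixes a fixed height threshold (involving $\eps$) with the $i$-dependent width threshold $\mu W = \sigma_{i+1} W$, so one has to verify that this does not cause a single item to be charged for many different $i$ simultaneously; the disjointness of the width bands $(\sigma_{i+1}W,\sigma_i W]$ is precisely the ingredient that prevents this, since an item's width can lie in at most one such band. Once that bookkeeping is in place, the pigeonhole step is immediate and the lemma follows.
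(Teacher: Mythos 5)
Your proposal is correct and follows essentially the same pigeonhole/double-counting argument as the paper's own proof: build $2/f(\eps)$ candidate sets, observe each item lands in at most two of them (once for its width band, once for its height band), bound the total charged area by $2\cdot W\cdot\opt$, and pick the cheapest index. Your added care about the $\itemsMedVert$ width constraint is well placed, and is resolved exactly as you suggest, by reading that constraint as the width band $\mu W < \width(i) \le \delta W$ (as in the figure and Observation~2), not the literal $\width(i)\le\mu W$ appearing in the displayed definition, which is a typo.
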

    \begin{proof}
        This is a standard rounding technique that can be proven using the pigeonhole principle. The sequence $\sigma$ and the corresponding choices of $\delta$ and $\mu$ builds a sequence of $2/f(\eps)$ sets $\itemsMed_{\sigma_i} \cup \itemsMedVert{_{\sigma_i}}$.
        Each item $i \in \instance$ can occur in at most two of these sets, either because of its width or its height. Since the total area of all items is at most $W\cdot \opt $ one of these sets must have an area which is at most $f(\eps)\cdot W\cdot \opt.$        
    \end{proof}
    For this application, it is sufficient to choose $f(\eps)=\eps^{13}/k$ for a constant $k\in \mathbb{N}$ which has to fulfill certain properties we address later on. Since $1/\eps\in \mathbb{N}$, we have that $1/f(\eps)\in \mathbb{N}.$ Let $\delta$ and $\mu$ be values defined as in \cref{lem:RoundingValues}.
    Note that $\sigma_i=f(\eps)^{2^{i+1}-1}$ and, therefore $\delta\geq \sigma_{f(\eps)-1}\geq (\eps^{13}/k)^\ell\in \eps^{\mathcal{O}(\ell)}$, where $\ell:=2^{2k/\eps^{13}}$,i.e.\ $\delta \geq \eps^{2^{\mathcal{O}(1/\eps^{13}}}$. 
    In the following steps, we need $\delta$ to be of the form $\eps^x$ for some $x\in \mathbb{N}$. Therefore, define $\delta':=\eps^x$, such that $x\in \mathbb{N}$ and $\delta'\leq \delta \leq \delta'/\eps$. Note that $\mu:= \delta^2\eps^{13}/k\leq(\delta'/\eps)^2\eps^{13}/k= \delta'^2\eps^{11}/k$ and $\mu=\delta^2\eps^{13}/k \geq \delta^2\eps^{13}/k\geq \delta'^2\eps^{13}/k.$ 
    In the following, we rename $\delta:=\delta'$ for all the steps.
    By this choice, it still holds that the set of medium sized items has a total area of at most $(\eps^{13}/k)\cdot W \cdot \opt$ because, by reducing $\delta$ and not changing $\mu$, we only remove items from this set.\newline\indent
    \textit{Observation 2.} Since each item in $\itemsMedVert$ has a height of at least $\eps\opt$ and width of at least $\mu W\geq \delta^2\eps^{13}/k W$, i.e.\ an area of at least $(\delta^2\eps^{14}/k)\cdot W \cdot \opt,$ it holds that $$ \vert \itemsMedVert\vert \leq (\eps^{13}/k)\cdot W\cdot \opt /((\delta^2\eps^{14}/k)\cdot W \cdot \opt) = 1/\delta^2\eps. $$
    After we have found the corresponding values for $\delta$ and $\mu$ and partitioning the set of items in the instance accordingly, we round the height of all items with height at least $\delta\opt$.
   
    We then round the instance such that all items with significant height have some canonical height. 
    \begin{lem} 
    Let $\delta=\eps^k$ for some value $k\in \mathbb{N}.$ At loss of a factor $(1+2\eps)$ in the approximation ratio, we can ensure that each item $i$ with height $\eps^{\ell-1}\opt\geq \height(i)\geq \eps^{\ell}\opt$ for some $\ell \in \mathbb{N}\leq k$ has height $k_i\eps^{\ell+1}\opt$ for a value $k_i\in \{1/\eps,\ldots, 1/\eps^2-1\}.$ Furthermore, the items upper and lower borders can be placed at multiples of $\eps^{\ell+1}\opt$.  \label{lem:Rounding} \end{lem}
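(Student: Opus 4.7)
The plan is to take any optimal packing of height $\opt$ and produce, via a global vertical stretch followed by per-item height rounding and lower-border snapping, a packing that satisfies the stated structural constraints with total height at most $(1+2\eps)\opt$. The core idea is that stretching vertically by $(1+2\eps)$ creates in each item's slot exactly enough slack to absorb both the height rounding (which can cost up to $\eps^{\ell+1}\opt$) and the border alignment (which also costs up to $\eps^{\ell+1}\opt$).

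Concretely, the first step is to fix, for each item $i$ of class $\ell$ (with $\height(i)\in[\eps^\ell\opt,\eps^{\ell-1}\opt)$, assigning boundary items to the smaller class), the target rounded height $h'_i:=k_i\cdot\eps^{\ell+1}\opt$ with $k_i:=\lceil\height(i)/(\eps^{\ell+1}\opt)\rceil$. The range $k_i\in\{1/\eps,\ldots,1/\eps^2-1\}$ follows from the class boundaries and the fact that $1/\eps$ and $1/\eps^2$ are integers, since by assumption $\delta=\eps^k$ with $k\in\mathbb{N}$ and $1/\eps\in\mathbb{N}$. The second step is to scale the packing vertically by $(1+2\eps)$, replacing each item's slot $[y_i,y_i+\height(i)]$ by $[(1+2\eps)y_i,(1+2\eps)(y_i+\height(i))]$. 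The third step is, processing items from the bottom of the strip upward, to place each item with its lower border at the smallest multiple of $\eps^{\ell+1}\opt$ that is at least $(1+2\eps)y_i$; this shifts the lower border by at most $\eps^{\ell+1}\opt$. Because $h'_i$ is itself a multiple of $\eps^{\ell+1}\opt$, the upper border then automatically sits on the required grid, establishing the alignment claim.

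The main obstacle is to verify that no two rounded items overlap, since the $(1+2\eps)$ factor is engineered precisely to make this work out. For item $i$ of class $\ell$, the new upper border lies at most at $(1+2\eps)y_i+\eps^{\ell+1}\opt+h'_i\leq(1+2\eps)y_i+2\eps^{\ell+1}\opt+\height(i)$. For any item $j$ originally stacked above $i$ in the same column (so $y_j\geq y_i+\height(i)$), the new lower border is at least $(1+2\eps)y_j\geq(1+2\eps)(y_i+\height(i))$. Non-overlap therefore reduces to the inequality $2\eps^{\ell+1}\opt\leq 2\eps\cdot\height(i)$, which is immediate from $\height(i)\geq\eps^\ell\opt$. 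Horizontal positions and any slicing structure are untouched by the construction, so the argument is unaffected by items being sliced; the resulting packing has height at most $(1+2\eps)\opt$, yielding the claimed loss factor.
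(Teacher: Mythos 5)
Your proof is correct and follows essentially the same approach as the paper: stretch the packing vertically by $(1+2\eps)$, round each item's height up to the next multiple of $\eps^{\ell+1}\opt$, and snap item borders to the grid, using the $2\eps\height(i)\geq 2\eps^{\ell+1}\opt$ slack created by the stretch to absorb both the rounding and the alignment. The only difference is in the non-overlap argument: the paper shrinks each item symmetrically inside its stretched slot and snaps both borders inward so the item stays contained in that slot, whereas you snap only the lower border upward, let the rounded item protrude slightly above the stretched slot, and verify non-overlap directly through a pairwise calculation --- both are sound.
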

    \begin{proof}
        Let there be a given packing inside the strip with height $\opt$. We stretch it vertically by a factor of $1+2\eps$. As a result, any point $(x,y)$ in the original packing now corresponds to a point $(x,(1+2\eps)y)$ in this stretched packing. Let $i\in \itemsLarge\cup\itemsTall\cup\itemsVert$ be an item with $\eps^{\ell-1}\cdot \opt \geq \height_i\geq \eps^\ell\cdot\opt$ and let $y_T$ and $y_B$ be the $y$-coordinates of its top and bottom edges in the original strip. As $i$ can be sliced, both $y_T$ and $y_B$ are sets that can contain up to $\width(i)$ many values. As items cannot be sliced horizontally, there must be the same amount of values~$k$ in the sets $y_T$ and $y_B$. Define the stretched $y$-coordinates as $\Bar{y}_{T_k}=(1+2\eps)y_{T_k}$ for every entry $k\in y_T$. Define $\Bar{y}_B$ analogously. As a consequence, we have $$\Bar{y}_{T_k}-\Bar{y}_{B_k}=(1+2\eps)(y_{T_k}-y_{B_k}=(1+2\eps)\height_i.$$ Now, we change the $y$-coordinates of $i$ in the new strip to $$y_{B_k}':=\Bar{y}_{B_k}+\eps \height_i$$ and $$y_{T_k}':=\Bar{y}_{T_k}-\eps \height_i.$$ 
        For every value of $k$ we get $$y_{T_k}'-y_{B_k}'=\Bar{y}_{T_k}-\Bar{y}_{B_k}-2\eps\height_i=(1+2\eps)\height_i-2\eps\height_i=\height_i.$$ We have $\height_i\geq \eps^\ell\opt$, which implies $$y_{T_k}'-y_{B_k}'=\Bar{y}_{T_k}-\Bar{y}_{B_k}=\eps\height_i\geq \eps^{\ell+1}\cdot\opt.$$ This ensures that there is an integer $j_{T_k}$ such that $j_{T_k}\cdot\eps^{\ell+1}\cdot\opt\in [y_{T_k}',\Bar{y}_{T_k}]$ for the interval $[y_{T_k}',\Bar{y}_{T_k}]$. Analogously, there exists a set of integers $j_{B_k}$ so that $j_{T_k}\cdot\eps^{\ell+1}\cdot\opt\in [\Bar{y}_{B_k},y_{B_k}']$. We change the top $y-$coordinates of item $i$ to $y_{T_k}''=j_{T_k}\cdot\eps^{\ell+1}\cdot\opt$ and similarly the bottom coordinates to $y_{B_k}''=j_{B_k}\cdot\eps^{\ell+1}\cdot\opt$. 
        It may occur that $y_{T_k}''-y_{B_k}''>\height_i':=\lceil \height_i/(\eps^{\ell+1}\opt)\rceil\cdot \eps^{\ell+1}\cdot\opt$. In this case we increase $y_{B_k}''$ by $(y_{T_k}''-y_{B_k}'')-\height_i'$, such that $\height_i'= y_{T_k}''-y_{B_k}''.$ Any part of the item $i$ cannot intersect another item part because it is placed inside the stretched version of that item part. Thus, when we change the height $\height_i$ of each item $i\in \itemsLarge\cup\itemsTall\cup\itemsVert$ to $\height_i'=\lceil \height_i/(\eps^{\ell+1}\opt)\rceil\cdot \eps^{\ell+1}\cdot\opt,$ where $\ell\in \mathbb{N}$ is chosen such that $\eps^{\ell-1}\opt\geq\height_i\geq\eps^\ell\opt$. Note that  $\lceil \height_i/(\eps^{\ell+1}\opt)\rceil\in \{1/\eps,\ldots,1/\eps^2\}$ since $\eps^{\ell-1}\opt\geq\height_i\geq\eps^\ell\opt$. Since the value $h'$ does not exceed the stretched item height, we increased the optimal solution by a factor of $1+2\eps$ at most.
    \end{proof}   
    
    This eases the placement of these items. 
    Further, all of these items can be placed on some imaginary grid-lines inside an optimal packing of height $\stripHeight'$. 
    This is significant when we argue the structure of an optimal packing. 

    \paragraph*{Step 4} 
    We show that we can partition an optimal packing into a set of boxes in this step. 
    Each box only contains a certain subset of item types. 
    For this to be feasible, we first must reduce the different starting points of horizontal jobs to $\Oh(1/\eps\delta)$ with only a loss of $\bigO(\eps \stripHeight')$ in the packing height. 
    This is done using the technique given in~\cite{DeppertJ0RT21} for DSP. 
        \begin{lem}\cite{DeppertJ0RT21}
        At a loss of at most $\bigO(\eps H)$, we can reduce the number of used starting points of horizontal items to $\bigO(1/\eps\delta).$
        \label{lem:HorStartPoint}
    \end{lem}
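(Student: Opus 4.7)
The plan is to adapt the shifting argument of Deppert et al.~\cite{DeppertJ0RT21} to our rounded instance. The structural facts I want to exploit are that every horizontal item has width at least $\delta \stripWidth$ and height at most $\mu \stripHeight' \leq \eps \stripHeight'$, and that horizontal items are allowed to be sliced. Together these properties give enough slack to snap starting points onto a coarse grid while paying only an additive $\Oh(\eps \stripHeight')$ in height.

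First, I would lay down a uniform grid of $1/(\eps\delta)+1$ candidate positions $x_k := k\cdot \eps\delta \stripWidth$ for $k = 0, 1, \ldots, 1/(\eps\delta)$. Then, for every horizontal item $i$, I would round its starting position $\lambda(i)$ down to the nearest candidate $x_k \leq \lambda(i)$, so that the new starting points all lie in a set of size $\Oh(1/(\eps\delta))$, as required. Since the per-item shift is at most $\eps\delta \stripWidth$, which is at most an $\eps$-fraction of $\width(i)$, only an $\eps$-fraction of the width of each horizontal item can protrude into its right neighbour in the original packing. Summed over all horizontal items, whose total area is at most $\stripWidth\cdot \stripHeight'$, the total area of the protruding slivers is therefore at most $\eps \cdot \stripWidth \cdot \stripHeight'$.

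Finally, I would reserve an additional band of height $\Oh(\eps \stripHeight')$ on top of the packing and repack the overflow there. Exploiting sliceability, the overflow is a collection of rectangles of total area $\Oh(\eps \stripWidth \stripHeight')$ and individual height at most $\eps \stripHeight'$, so a simple shelf argument, or a single application of Steinberg's algorithm~\cite{Steinberg97}, places them in the reserved band. The main obstacle I expect is the book-keeping for potential cascading displacements: a shifted item could push its right neighbour, which could then push further, inflating the overflow. I would circumvent this by performing all shifts simultaneously and charging each sliver only to the single rightward neighbour it originally overlapped, so that each unit of area is billed to the overflow at most once. This keeps the accounting additive and yields the claimed $\Oh(\eps \stripHeight')$ bound on the height loss.
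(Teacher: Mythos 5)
There is a genuine gap at the repacking step, and it comes from conflating an area bound with a height bound. In DSP, slicing an item only lets you reassign the heights of its vertical slices; the $x$-interval $[\lambda(i),\lambda(i)+\width(i))$ is fixed once the starting point is chosen. So the slivers you cut off are anchored to specific $x$-coordinates and cannot be gathered and repacked by a shelf argument or Steinberg's algorithm, both of which relocate pieces horizontally. What controls the height loss is therefore the overflow at each individual $x$-coordinate, not the total overflow area, and with a uniform grid this per-coordinate overflow can be as large as $H$, not $\eps H$. Concretely, fix a grid point $x$. Every horizontal item whose original start lies in $(x, x+\eps\delta\stripWidth)$ is rounded down to $x$ and becomes a new overlapper at $x$. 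All such items cover $x+\eps\delta\stripWidth$ in the original packing (their width is at least $\delta\stripWidth>\eps\delta\stripWidth$), so their total height can be as large as $H$, while $x$ itself may already be fully occupied by other items that end before $x+\eps\delta\stripWidth$. The rounded packing then has height close to $2H$ at $x$. Your simultaneous-shift and single-charging device does not help; the real obstacle is not cascading but concentration of overflow at one $x$-coordinate.

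A correct approach, and what I believe underlies the cited lemma from~\cite{DeppertJ0RT21}, replaces the uniform grid by linear grouping adapted to the height mass. Sort the horizontal items by starting position. Since every horizontal item has width at least $\delta\stripWidth$ and the packing fits in area $\stripWidth\cdot H$, the total height mass satisfies $\sum_i \height(i)\leq H/\delta$. Partition the sorted sequence into $\Oh(1/\eps\delta)$ consecutive groups, each of total item height $\Oh(\eps H)$ (possible because a single horizontal item has height at most $\mu H<\eps H$), and move every item in a group to the minimum starting position occurring in that group. For any fixed $x$, the only new overlappers at $x$ are items from the unique group whose range of original starting positions properly contains $x$, so the height increase at every $x$ is $\Oh(\eps H)$, while the number of distinct starting positions drops to $\Oh(1/\eps\delta)$. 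The crucial difference from your construction is that the group boundaries are placed where the height mass actually sits, so the items that collide at a single $x$-coordinate necessarily carry only $\Oh(\eps H)$ of height; a uniform grid gives no such guarantee.
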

    
    Next, we guess these points to partition the packing.
    We have to take special care when discussing the optimal structure of an DSP-packing, as we can not argue over the vertical placement of items in the optimal solution. 
    Therefore, the following proof requires some novel insights into the utilization of slicing.
   
        \begin{lem}
        We can partition the rounded instance into at most $\bigO_{\eps}(1)$ boxes. The set of boxes can be partitioned into sets $\boxesLarge,\boxesHor$ and $\boxesTallVert$ such that 
        \begin{itemize}
            \item boxes in $\boxesLarge$ contain only one item that is either large or medium vertical, i.e.\ it is in $\itemsLarge$ or $\itemsMedVert$. The total number of such boxes is $|\itemsLarge|+|\itemsMedVert|\leq \Oh_\eps(1),$
            \item $\boxesHor$ consists of $\Oh_\eps(1)$ boxes
            of height $\eps \delta \opt$, each of them containing at least some item in $\itemsHor$ but only items in $\itemsHor\cup \itemsSmall$,
            \item $\boxesTallVert$ consists of $\Oh_\eps(1),$ boxes, each of them containing items in $\itemsTall\cup \itemsVert\cup \itemsSmall$.
        \end{itemize} 
        \label{lem:boxPartition}
    \end{lem}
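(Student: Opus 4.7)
The plan is to carve structured boxes from an optimal rounded packing of height $\stripHeight'$ in three stages, working from largest items down to tall and vertical ones.

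First, I place each item in $\itemsLarge\cup\itemsMedVert$ into its own singleton box and collect these as $\boxesLarge$. Since any large item occupies width at least $\delta\stripWidth$ and height at least $\delta\stripHeight'$ one has $|\itemsLarge|\leq 1/\delta^2$, while Observation~2 yields $|\itemsMedVert|\leq 1/(\delta^2\eps)$; hence $|\boxesLarge|\in\Oh_\eps(1)$. The edges of these boxes introduce only $\Oh_\eps(1)$ vertical and horizontal cut lines in the strip.

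Next, I apply \cref{lem:HorStartPoint} to reduce horizontal items to $\Oh(1/\eps\delta)$ distinct start (and by a symmetric argument, end) positions at a loss of $\Oh(\eps\stripHeight')$ in height. Together with the vertical edges of $\boxesLarge$ this induces $\Oh_\eps(1)$ vertical slabs inside which every horizontal item fully spans its slab. Within each slab I stack the horizontal items in increasing $y$-order and merge consecutive items greedily into maximal horizontal layers of height at most $\eps\delta\opt$; each layer becomes a box in $\boxesHor$. Because each horizontal item has height at most $\mu\opt$, which by the choice of $\mu,\delta$ in \cref{lem:RoundingValues} is much smaller than $\eps\delta\opt$, each layer absorbs many items, so $|\boxesHor|\in\Oh_\eps(1)$. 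Small items within these slabs simply inherit the same boxes.

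Finally, for $\boxesTallVert$, I observe that the complement of the boxes built so far is a union of $\Oh_\eps(1)$ axis-aligned rectangles, because its boundary is formed by the $\Oh_\eps(1)$ cut lines already present. I would declare each such rectangle a box in $\boxesTallVert$; small items slot into the remaining free space. The main obstacle lives here: slicing permits a single vertical item to have distinct horizontal pieces lying in several complementary rectangles, which would prevent us from assigning it to a single box. The tool for handling this is the pseudo-item construction: for each tall item I group the adjacent vertical slices into one pseudo item whose total height is a canonical multiple of $\eps^{\ell+1}\opt$ (guaranteed by \cref{lem:Rounding}), and then I shift the boundaries of the complementary rectangles so that they align with pseudo-item edges. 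Since $|\itemsTall|\leq 1/\delta$ and every rounded height lies on the $\Oh_\eps(1)$-sized grid produced by \cref{lem:Rounding}, these shifts add no further cut lines, cost at most $\Oh(\eps\stripHeight')$ in height, and yield the required $\Oh_\eps(1)$ boxes of $\boxesTallVert$ whose item content is confined to $\itemsTall\cup\itemsVert\cup\itemsSmall$.
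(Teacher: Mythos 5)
Your treatment of $\boxesLarge$ matches the paper, and your $\boxesHor$ construction via slabs and $y$-layers is a legitimate alternative to the paper's widest-first greedy (the paper starts a box at each horizontal start point, uses the widest item there to set the box width, and fills by decreasing width, yielding roughly $(1+2\eps)/(\eps\delta^2)$ boxes; your slab/layer count is a $\delta$-factor worse but still $\Oh_\eps(1)$, so this part is fine). The problem is the third part.

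The paper's key move for $\boxesTallVert$ is a \emph{vertical rearrangement}: every column's slices of $\boxesLarge\cup\boxesHor$ are slid to the top of the strip (a legal DSP move — vertical slices can be placed at any height), so that what remains for $\itemsTall\cup\itemsVert\cup\itemsSmall$ is a bottom-anchored profile. Because every $\boxesHor$ and $\boxesLarge$ box height is a multiple of $\eps\delta\opt$, the profile is a step function taking $\Oh_\eps(1)$ values and changing at $\Oh_\eps(1)$ points, and cutting vertically wherever the profile changes gives $\Oh_\eps(1)$ rectangles whose internal cuts are all vertical. This also delivers, essentially for free, that every box border sits at a multiple of $\eps\delta\opt$ — a fact the later box-reordering lemmas depend on. By contrast, you leave the $\boxesLarge\cup\boxesHor$ boxes in place and decompose the raw complement into rectangles. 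That decomposition generically introduces \emph{horizontal} seams stacked one above another. The reordering lemmas applied to $\boxesTallVert$ later need each tall item (which cannot be cut horizontally in DSP) to live inside a single box, and a tall item straddling such a horizontal seam cannot be handled; your "obstacle" paragraph frets about vertical slices spreading across side-by-side rectangles, which is harmless, and misses the actual problem.

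Your proposed fix also does not hold up concretely. Pseudo items, as the paper defines them, are fused stacks of vertical items built during the in-box reordering (Lemmas~\ref{lem:ReorderTallVertBox14}--\ref{lem:ReorderTallVertBox34}); they are not available or meaningful at the partitioning stage, and "shifting rectangle boundaries to align with pseudo-item edges" is not a well-defined operation here. Moreover, the quantitative anchor of that step, $|\itemsTall|\le 1/\delta$, is simply false: tall items satisfy $\width(i)<\delta W$ with no lower bound, so an instance can have arbitrarily many of them. (Contrast with $\itemsLarge$ and $\itemsMedVert$, which \emph{do} have area lower bounds and hence bounded cardinality.) So the count you use to argue "these shifts add no further cut lines" is unsupported. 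The missing idea is the profile trick, not a pseudo-item patch.
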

    \begin{figure}[t]
    \begin{center}
     \centering
     		
		\begin{subfigure}{0.5\textwidth}
			\begin{tikzpicture}
			\pgfmathsetmacro{\w}{6}
			\pgfmathsetmacro{\h}{3}
			\colorlet{myfillcolor}{white!65!black}
			\colorlet{myBGcolor}{white!90!black}
			\colorlet{largeProfile}{white!55!black}
			\colorlet{horProfile}{white!75!black}
			\colorlet{vertProfile}{white!85!black}
			\coordinate (A) at (0,\h);
			\coordinate (B) at (\w, \h);
			\coordinate (C) at (0, 0.6*\h);
			\coordinate (D) at (0.1*\w, 0.6*\h);
			\coordinate (E) at (0.1*\w, 0.45*\h);
			\coordinate (F) at (0.3*\w, 0.45*\h);
			\coordinate (G) at (0.3*\w, 0.55*\h);
			\coordinate (H) at (0.45*\w, 0.55*\h);
			\coordinate (J) at (0.45*\w, 0.15*\h);
			\coordinate (K) at (0.7*\w, 0.15*\h);
			\coordinate (L) at (0.7*\w, 0.65*\h);
			\coordinate (M) at (0.9*\w, 0.65*\h);
			\coordinate (N) at (0.9*\w, 0.4*\h);
			\coordinate (O) at (\w, 0.4*\h);
		
			\coordinate (P) at (0,0);
			\coordinate (Q) at (\w,0);
			\coordinate (zero) at (0,0);
			\coordinate (width) at (\w,0);
			\coordinate (LA) at (0,0.2*\h);
			\coordinate (LB) at (0.15*\w,0.2*\h);
			\coordinate (LC) at (0.15*\w,0.15*\h);
			\coordinate (LD) at (0.3*\w,0.15*\h);
			\coordinate (LE) at (0.3*\w,0.3*\h);
			\coordinate (LF) at (0.45*\w,0.3*\h);
			\coordinate (LG) at (0.45*\w,0*\h);
			\coordinate (LH) at (0.7*\w,0*\h);
			\coordinate (LI) at (0.7*\w,0.42*\h);
			\coordinate (LJ) at (0.9*\w,0.42*\h);
			\coordinate (LK) at (0.9*\w,0.18*\h);
			\coordinate (LL) at (1*\w,0.18*\h);
			
			\draw (0,0) -- (\w,0) --(\w,\h) -- (0,\h) -- (0,0);
		
			
		
		      \draw [opacity=0.7,fill=largeProfile] (zero) -- (LA) -- (LB) -- (LC) -- (LD) -- (LE) -- (LF) -- (LG) --
		    (LH) -- (LI) -- (LJ) -- (LK) -- (LL) -- (width) --cycle;
		
		      \draw [opacity=0.7,fill=horProfile]	(C) -- (D) -- (E) -- 
		      (F) -- (G) -- (H) -- (J) -- (K) -- (L) -- (M) --(N) -- (O) -- 
		      (LL) -- (LK) -- (LJ) -- (LI) -- (LH) -- (LG) -- (LF) --
		      (LE) -- (LD) -- (LC) -- (LB) -- (LA) -- cycle;
            \draw [opacity=0.7,pattern=north west lines]	(C) -- (D) -- (E) -- 
		      (F) -- (G) -- (H) -- (J) -- (K) -- (L) -- (M) --(N) -- (O) -- 
		      (LL) -- (LK) -- (LJ) -- (LI) -- (LH) -- (LG) -- (LF) --
		      (LE) -- (LD) -- (LC) -- (LB) -- (LA) -- cycle;
		      \draw [opacity=0.7, fill=myfillcolor] (A)  -- (C) -- (D) -- (E) -- 
		      (F) -- (G) -- (H) -- (J) -- (K) -- (L) -- (M) --(N) -- (O) -- (B) -- cycle;
		
			\draw(0,\h) rectangle node [midway, opacity=1] {$B_1$} (0.1*\w,0.6*\h);
			\draw(0.1*\w,\h) rectangle node [midway, opacity=1] {$B_2$} (0.3*\w,0.45*\h);
			\draw(0.3*\w,\h) rectangle node [midway, opacity=1] {$B_3$} (0.45*\w,0.55*\h);
			\draw(0.45*\w,\h) rectangle node [midway, opacity=1] {$B_4$} (0.7*\w,0.15*\h);
			\draw(0.7*\w,\h) rectangle node [midway, opacity=1] {$B_5$} (0.9*\w,0.65*\h);
			\draw(0.9*\w,\h) rectangle node [midway, opacity=1] {$B_6$} (1*\w,0.4*\h);

            \draw (0.15*\w,0.2*\h) rectangle node [midway] {$L_1$} (0,0);
            \draw (0.15*\w,0.15*\h) rectangle node [midway] {$L_2$} (0.3*\w,0);
            \draw (0.3*\w,0.3*\h) rectangle node [midway] {$L_3$} (0.45*\w,0);
            \draw (0.7*\w,0.42*\h) rectangle node [midway] {$L_4$} (0.9*\w,0.2*\h);
            \draw (0.7*\w,0.2*\h) rectangle node [midway] {$L_5$} (0.9*\w,0);
            \draw (0.9*\w,0.18*\h) rectangle node [midway] {$L_6$} (\w,0);
			\end{tikzpicture}
		\end{subfigure}
  \end{center}
	\caption{The partitioned optimal packing generated by \cref{lem:boxPartition}. Named boxes $B_i$ contain only tall and vertical items. The hatched area contains only boxes for horizontal items. Finally, the lowest areas contain only boxes for large or medium vertical items. These boxes are named according to the item placed inside. Boxes for horizontal items may be sliced.}
	\label{fig:preProBoxes}
	\end{figure}
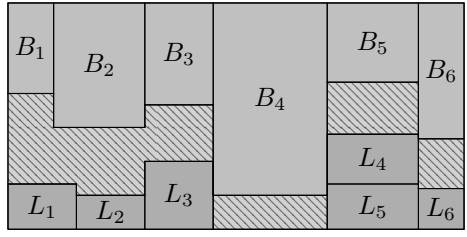
    
    \begin{proof}
        Consider the rounded optimal packing. In it, each item has a defined starting point on the horizontal axis. As such, we can consider any singular item to be packed fully integrally and simply shift the slicing of other items around such that they retain their original starting positions. We utilize this technique to define the boxes items fit in fully integrally, while still allowing them to be sliced again in order to be identical to the optimal packing again. We begin this procedure with the items in $\itemsLarge\cup \itemsMedVert$. Each of these items $i$ gets a personal box with width $w_i$, and height $h_i$, placed at exactly its starting position $\spo_i$. Since the starting point remains the same, the packing remains optimal.
        \newline\indent
        Next, we define the boxes $\boxesHor$. We begin at the left side of the packing. Traverse the packing towards the right until you find the starting point of a horizontal item $i$. Then, check all items in $\itemsHor$ with the same starting point $\spo(i)$ and select the item $j$ with the largest width. The box $B$ we now define starts at $\spo(i)=\spo(j)$ and ends at $\spo(i)+\width(j)$. Its height is $\eps \delta$. We iteratively add items to this box in the following manner: first, find all items $\ell$ with $\spo(i)\leq\spo(\ell) \leq \spo(\ell)+\width(\ell)\leq \spo(i)+\width(j)$, i.e.\ all items that can be fully contained inside the newly formed box. Next, order these items in descending order regarding their width. Then, take the item with the largest width and add it into the box. Repeatedly add such items until the next item we added would break the height of the box, i.e.\ there is a coordinate $x$ with $\sum_{i\in B, \spo(i)\leq x \leq \spo(i)+\width(i)}\height(i)>\eps\delta.$ Continue checking whether we can add items to this box until there are no possible items left. As the last step, take the widest possible item $\ell$ to add that exceeds the top border of the box. Take further items that do not overlap with $\ell$ or with each other, until no such items exist anymore, and add them to the box as well. Finally, we simply repeat this procedure until we have reached the right border of the packing. Clearly, these boxes contain only items in $\itemsHor$. At every point in the box, at most one item overlaps the top border. The items in $\itemsHor$ have a width of at least $\delta W$ and the boxes have a height of $\eps \delta\opt$. Thus, we can stack at most $(1+2\eps)/(\eps\delta)$ such boxes atop one another, and can have at most $1/\delta -1$ boxes beside each other. Therefore, the total amount of boxes in $\boxesHor$ is at most $(1+2\eps)/(\eps\delta)\cdot (1\delta -1)\leq (1+2\eps)/(\eps\delta^2)-2$. Because we reduced the amount of starting points of horizontal items inside the packing, we generate at most this amount of boxes.\newline\indent
        Finally, consider the boxes $\boxesTallVert$. For these, imagine the boxes $\boxesHor$ and $\boxesLarge$ forming a profile. That is, they are all sliced towards the top at their original positions. Since both of these sets of boxes have a height that is a multiple of $\eps\delta,$ this profile must be at multiples of $\eps \delta$ as well. Traverse this profile from left to right, until we reach the first point $x$ at which the height of the profile changes, i.e.\ $\sum_{i, \spo(i)\leq x \leq \spo(i)+w(i)}h(i)\neq \sum_{j, \spo(j)\leq x-1 \leq \spo(j)+w(j)}h(j)$. Draw a vertical line at this coordinate $x$. This defines the right border of the first box, and the left border of the second box. Continue this procedure, iteratively drawing vertical lines that define box borders, until we arrive at the right border of the strip. These borders define the boxes in $\boxesTallVert$. Clearly, they contain only items in $\itemsTall\cup \itemsVert\cup \itemsSmall$. We have at most $(1+2\eps)/(\eps\delta^2)-2$ boxes in $\boxesLarge \cup \boxesHor$. Since we drew at most two lines for each box in $\boxesHor$ and $\boxesLarge$ and each such line touches at most two boxes in $\boxesTallVert$, we have at most $2(1+2\eps)/(\eps\delta^2)$ boxes in $\boxesTallVert$. For an illustration of this procedure, see \cref{fig:preProBoxes}.\newline\indent
        All types of boxes we generated in this step have their upper and lower border at a multiple of $\eps\delta\opt$. This is because, for large, tall, vertical and medium vertical items, our rounding in \cref{lem:Rounding} ensures that the items contained in those boxes start at multiples of such a value, as $\delta=\eps^x$ for a value $x\in \mathbb{N}$. Additionally, while the horizontal items might not have their borders at such values, recall how we generated their boxes. We have shifted the packing such that they were on top of the remaining two sets of boxes, $\boxesTallVert$ and $\boxesLarge$. Since these have their upper and lower borders at multiples of $\eps\delta\opt$, the lowest set of horizontal boxes in $\boxesHor$ must have the same property. As each of these boxes has a height of $\eps\delta\opt$, all boxes further above must also have their upper and lower borders at those values.
    \end{proof}
    After this partition, we discard the small items. 
    We show later that we can place them inside some gaps without increasing the packing height. 
    The medium items get discarded and placed atop the packing using Steinbergs' algorithm, as they have a small total area due to \cref{lem:Rounding}. 
    
    The boxes $\boxesLarge$ and $\boxesHor$ only contain either a single item or items of one type. 
    As such, we can feasibly fill them in \emph{step 5.} 
    However, the boxes $\boxesTallVert$ do not share this property yet. To make them fillable, we must therefore partition them further.

    We partition each box $B\in \boxesTallVert$ separately. 
    We utilize different routines depending on the height of $B$. 
    The core complexity here is derived from the number of tall items that can be stacked atop one another. 
    As such, the three types of boxes we inspect are ones with height $\height(B)\leq \nicefrac{1}{2}\stripHeight'$, those with height $\height(B)\in (\nicefrac{1}{2}\stripHeight',\nicefrac{3}{4}\stripHeight']$, and boxes with height $\height(B)\in (\nicefrac{3}{4}\stripHeight',\stripHeight']$. 
    The boxes are categorized as such because there can only be one tall item at any given point in the smallest boxes, two tall items atop another in the second type, and finally, up to three tall items atop another in the largest type of box.

    An important new challenge derives for these boxes compared to the classical SP problem. 
    The core idea of the restructuring arguments for each box is the \emph{assignment} of tall items to the bottom, top, or middle of the box. 
    However, finding such an assignment is much harder when admitting sliced items. 
    Several items $i_1,i_2,i_3$ that are placed atop another can be sliced in a way where each of these items is the bottom-most at one point in the optimal packing.
   \ari{The most complex assignment is required for boxes with height $\height(B)\in (\nicefrac{3}{4}\stripHeight',\stripHeight']$.}
    
  Using this assignment subroutine, it becomes feasible to structure the three types of boxes. 
  In this structure, each type of box contains only at most $\Oh_\eps(1)$ sub-boxes. 
          \begin{figure*}[t]
        	\begin{subfigure}[t]{0.3\textwidth}
        			\centering
        				\resizebox{0.98\textwidth}{!}{
        		\begin{tikzpicture}
        		\pgfmathsetmacro{\w}{7.5}
        		\pgfmathsetmacro{\h}{5}
        		
        		\draw  (0,0) --(\w,0) -- (\w,\h) -- (0,\h) -- cycle;
        		
        		\foreach \x/\y/\xx/\yy/\z in {
        			0.1 / 0.2 /0.25 / 0.9/, 
        			0.3 / 0.1 /0.4 / 0.73/,
        			0.42 / 0 /0.54 / 0.65/,	
        			0.6 / 0.25 /0.72 / 1/, 
        			0.8 / 0.15 /0.96 / 0.93/,	 
        			0.98 / 0.3 /1.05 / 1/$o$				
        		}
        		{
        			\drawTallItem[\z]{\x*\w}{\y*\h}{\xx*\w}{\yy*\h};
        		}
        		\foreach \x/\y/\xx/\yy/\z in {
        			0.0 / 0.0 /0.1 / 1/, 
        			0.1 / 0.0 /0.25 / 0.2/,
        			0.25 / 0 /0.3 / 1/,	
        			0.3 / 0.0 /0.4 / 0.1/, 
        			0.4 / 0.0 /0.42 / 1/,
        			0.54 / 0.0 /0.6 / 1/,
        			0.6 / 0.0 /0.72 / 0.25/,
        			0.72 / 0.0 /0.8 / 1/,	
        			0.8 / 0.0 /0.96 / 0.15/,
        			0.96 / 0.0 /0.98 / 1/,
        			0.98 / 0.0 /1 / 0.3/			
        		}
        		{
        			\drawVerticalItem[\z]{\x*\w}{\y*\h}{\xx*\w}{\yy*\h};
        		}
        		\foreach \x/\y/\xx/\yy/\z in {
        			0.1 / 0.9 /0.25 / 1/, 
        			0.3 / 0.73 /0.4 / 1/,
        			0.42 / 0.65 /0.54 / 1/,	
        			0.8 / 0.93 /0.96 / 1/	 
        		}
        		{
        			\drawVerticalItem[\z]{\x*\w}{\y*\h}{\xx*\w}{\yy*\h};
        		}
        		
        		\end{tikzpicture}
        	}
        	\end{subfigure}
        	\begin{subfigure} {0.3\textwidth}
        		\centering
        			\resizebox{0.98\textwidth}{!}{
        				\begin{tikzpicture}
        				\pgfmathsetmacro{\w}{7.5}
        				\pgfmathsetmacro{\h}{5}
        				
        				\draw  (0,0) --(\w,0) -- (\w,\h) -- (0,\h) -- cycle;
        				
        				\foreach \x/\y/\xx/\yy/\z in {
        					0.1 / 0.0 /0.25 / 0.7/, 
        					0.3 / 0.0 /0.4 / 0.63/,
        					0.42 / 0 /0.54 / 0.65/,	
        					0.6 / 0.0 /0.72 / 0.75/, 
        					0.8 / 0.0 /0.96 / 0.78/,	 
        					0.98 / 0.3 /1.05 / 1/	$o$			
        				}
        				{
        					\drawTallItem[\z]{\x*\w}{\y*\h}{\xx*\w}{\yy*\h};
        				}
        				\foreach \x/\y/\xx/\yy/\z in {
        					0.0 / 0.0 /0.1 / 1/, 
        					0.25 / 0 /0.3 / 1/,	
        					0.4 / 0.0 /0.42 / 1/,
        					0.54 / 0.0 /0.6 / 1/,
        					0.72 / 0.0 /0.8 / 1/,	
        					0.96 / 0.0 /0.98 / 1/,
							0.98 / 0.0 /1 / 0.3/		
        				}
        				{
        					\drawVerticalItem[\z]{\x*\w}{\y*\h}{\xx*\w}{\yy*\h};
        				}
        				\foreach \x/\y/\xx/\yy/\z in {
        					0.1 / 0.7 /0.25 / 1/, 
        					0.3 / 0.63 /0.4 / 1/,
        					0.42 / 0.65 /0.54 / 1/,	
        					0.6 / 0.75 /0.72 / 1/, 
        					0.8 / 0.78 /0.96 / 1/	 
        				}
        				{
        					\drawVerticalItem[\z]{\x*\w}{\y*\h}{\xx*\w}{\yy*\h};
        				}
        				
        				\end{tikzpicture}
        			}
        	\end{subfigure}
   		     \begin{subfigure} {0.3\textwidth}
        	\resizebox{0.98\textwidth}{!}{
        		\begin{tikzpicture}
        		\pgfmathsetmacro{\w}{7.5}
        		\pgfmathsetmacro{\h}{5}
        		
        		\draw  (0,0) --(\w,0) -- (\w,\h) -- (0,\h) -- cycle;
        		
        		\foreach \x/\y/\xx/\yy/\z in {
        			0.28 / 0.0 /0.43 / 0.7/, 
        			0.55 / 0.0 /0.65 / 0.63/,
        			0.43 / 0 /0.55 / 0.65/,	
        			0.16 / 0.0 /0.28 / 0.75/, 
        			0.0 / 0.0 /0.16 / 0.78/,	 
        			0.98 / 0.3 /1.05 / 1/	$o$		
        		}
        		{
        			\drawTallItem[\z]{\x*\w}{\y*\h}{\xx*\w}{\yy*\h};
        		}
        		\foreach \x/\y/\xx/\yy/\z in {
        			0.65 / 0.0 /0.98 / 1/, 
        			0.98 / 0.0 /1 / 0.3/			
        		}
        		{
        			\drawVerticalItem[\z]{\x*\w}{\y*\h}{\xx*\w}{\yy*\h};
        		}
        		\foreach \x/\y/\xx/\yy/\z in {
        			0.28 / 0.7 /0.43 / 1/, 
        			0.55 / 0.63 /0.65 / 1/,
        			0.43 / 0.65 /0.55 / 1/,	
        			0.16 / 0.75 /0.28 / 1/, 
        			0.0 / 0.78 /0.16 / 1/	 
        		}
        		{
        			\drawVerticalItem[\z]{\x*\w}{\y*\h}{\xx*\w}{\yy*\h};
        		}
        		
        		\end{tikzpicture}
        	}
             \end{subfigure}
         \caption{Illustration of a packing of tall items for a box $\boxFourth$ with height less than $\nicefrac{1}{2}\stripHeight'$. Tall items are colored orange and vertical items green. To the left we see the box in the optimal packing, with the slices for pseudo items already generated. The second image represents the box with all tall items sliced to the bottom and their pseudo items fused on top of them. The final image represents the resulting box, sorted by height of tall items in movable slices. Note that the overlapping item $o$ or its slice is never moved, and thus requires one additional box.}
         \label{fig:SolveBoxFourthProof}
        \end{figure*}
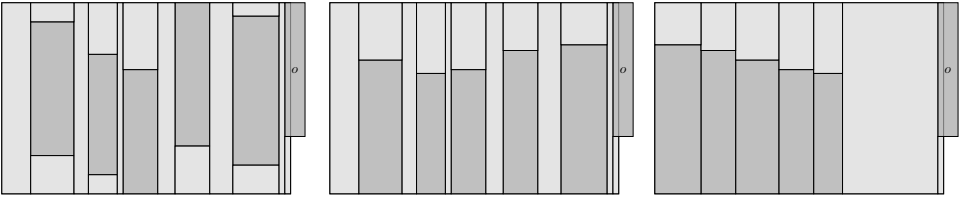
\paragraph*{Boxes $\boxFourth$ with $\height(\boxFourth)\in (1/4\cdot H', 1/2\cdot H']$}
          We solve these boxes by assigning each tall item to the bottom of the box. 
          Then, we slice the box at the borders of tall items. 
          We sort these slices in descending order of the height of tall items inside them and place them as such. 
          \begin{lem}
        Let $\boxFourth\in \boxesTallVert$ be a box with height $\height(\boxFourth)\in (\nicefrac{1}{4}\stripHeight',\nicefrac{1}{2}\stripHeight']$. There is a rearrangement of items in $\boxFourth$ such that there are at most $\mathcal{O}(1/\eps)$ different boxes for tall items, and at most $\mathcal{O}(1/\eps)$ different boxes for vertical items.
        \label{lem:ReorderTallVertBox14}
    \end{lem}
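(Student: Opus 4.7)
The key structural observation is that $\height(\boxFourth) \leq \nicefrac{1}{2}\stripHeight'$ combined with $\height(t) > \nicefrac{1}{4}\stripHeight'$ for every tall item $t$ forces at most one tall item to lie in any vertical cross-section of $\boxFourth$. The plan is to exploit this to put the tall items into a sorted staircase, from which the required boxes can be read off. First, I would form pseudo items as in \cref{fig:pseudoItemGen}: for every tall item $t$ inside $\boxFourth$, glue onto $t$ the vertical items stacked in its column, obtaining a composite of width $\width(t)$ whose bottom is $t$ and whose top is a stack of vertical items. At most one vertical item $o$ may straddle the border between a tall-item column and the purely vertical region surrounding it; I would reserve $o$ for a single extra box.

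Next I would slice $\boxFourth$ along every tall-item column boundary and push each tall-item column down so its tall item rests on the floor of $\boxFourth$, carrying the attached pseudo stack with it; the pure vertical columns (that never contained a tall item) are swept to one side of $\boxFourth$. Then I would sort the pushed tall-item columns in descending order of $\height(t)$. By \cref{lem:Rounding}, $\height(t)$ takes only $\Oh(1/\eps)$ canonical values, so the sorted profile of tall items forms a staircase with only $\Oh(1/\eps)$ distinct step heights; one box per step collects all tall items of that canonical height, giving $\Oh(1/\eps)$ boxes for tall items. The region directly above each step is a rectangle containing only vertical items from the pseudo stacks, yielding another $\Oh(1/\eps)$ boxes for vertical items, and the swept-aside pure-vertical columns can be grouped into $\Oh(1)$ additional boxes.

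The main obstacle, as flagged in the figure caption, is the overlapping item $o$: vertical items are forbidden from horizontal slicing, so $o$ cannot be moved together with either side of the partition and must be left in place. Keeping $o$ and its ambient slice stationary and charging it a single extra box adds only an additive constant, which is absorbed into the $\Oh(1/\eps)$ count. A secondary technical point is that the sort must be consistent with the canonical height grid from \cref{lem:Rounding}; this is automatic because the tall-item heights already live on that grid, and sorting by $\height(t)$ alone lets the pseudo stack above each column ride along cleanly, so the region above each staircase step remains grid-aligned and can itself be cut into vertical-item boxes.
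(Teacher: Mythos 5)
Your core reordering matches the paper's: slice $\boxFourth$ at tall-item column boundaries, push every tall item to the floor, sort the movable slices in descending order of tall-item height, and read off $\Oh(1/\eps)$ boxes from the resulting staircase because tall heights have been rounded to multiples of $\eps\opt$. That part is the same argument.

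Where you diverge is the handling of $o$, and there you have misread the figure. In \cref{fig:SolveBoxFourthProof}, $o$ is a \emph{tall} item overlapping the right border of $\boxFourth$. The paper declares such items immovable (``we define any items that intersect the left or right box border as immovable''), and there can be up to two of them, one per side, each getting its own box on top of the $1/\eps$ boxes for the sorted slices. You instead read $o$ as a vertical item straddling a column boundary and claim there is at most one such item; that count is wrong. Every column boundary you draw can cut one vertical item, so the number of separated vertical items is bounded only by the number of tall items, not by a constant, and keeping each affected column stationary would destroy the staircase sort. The paper does not try to prevent this separation in \cref{lem:ReorderTallVertBox14}. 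It explicitly allows vertical items to be separated horizontally at this stage (see the caption of \cref{fig:pseudoItemGen} and the remark in \cref{lem:ReorderTallVertBox12}) and defers the integral replacement to \cref{lem:PlacementVertSep}, which uses a configuration LP to repack the separated vertical items into the boxes $\boxesPseudo$ plus $\Oh_{\eps}(1)$ extra boxes. Since the lemma only asks for a count of boxes and not for contiguity, your bound still stands, but the in-situ patch for separation should be dropped in favour of the paper's immovable-border-item count and deferred repacking.
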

    \begin{proof}
    	This proof is an adaptation of a similar procedure from \cite{JansenR17}.
        The height of $\boxFourth$ clearly implies that there can never be two tall items placed above one another, since they do not fit into the box otherwise. Therefore, through slicing, we can assume that all tall items touch the bottom of the box with their bottom border. If this were not the case, we simply slice these tall items downward and the vertical items touching the ground upwards. Similarly, we can stack all vertical items into pseudo items that touch the ground as well.\newline\indent
        Before we reorder the inside of the box, we define any items that intersect the left or right box border as immovable. 
        Next, we construct slices around the tall items. For that, we draw a vertical line at the left and right border of a tall item. Each area between two such lines is considered a movable slice. We sort these movable slices in descending order of the tall items inside them. For an illustration of this procedure, see \cref{fig:SolveBoxFourthProof}. Through this reordering, we get at most $1/\eps$ boxes for movable tall items, since the heights of tall items were rounded to multiples of $\eps\opt$. Since one tall item can overlap each box border, we might need two further boxes for these items. Therefore, the total number of boxes for tall items is in $\bigO(1/\eps)$.\newline\indent
        On top of each box for tall items, we introduce at most one box for vertical items. Finally, there might be movable slices that do not contain any tall item. These are placed adjacent to each other at the right side of $\boxFourth$ after our reordering. For these, we construct one box with height $\height(\boxFourth)$. In total, that leaves us with $1/\eps$ boxes above the movable tall items, two boxes above the tall items that may intersect the borders of $\boxFourth$ and one final box of height $\height(\boxFourth)$. Thus, the total number of boxes for vertical items is in $\bigO(1/\eps)$.

    \end{proof}

\paragraph*{Boxes $\boxHalf$ with $\height(\boxHalf)\in (1/2\cdot H', 3/4\cdot H']$}
            \begin{lem}
        Let $\boxHalf\in \boxesTallVert$ be a box with height $\height(\boxHalf)\in (\nicefrac{1}{2}\stripHeight',\nicefrac{3}{4}\stripHeight']$. We find a rearrangement of items in $\boxHalf$ such that we generate at most $\mathcal{O}_\eps(1)$ sub-boxes containing only either tall or vertical items, where each sub-box $\boxHalf_T$ for tall items only contains items of the same height and the vertical items can be packed fractionally into boxes for vertical items.
        \label{lem:ReorderTallVertBox12}
    \end{lem}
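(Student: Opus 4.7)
The plan is to extend the strategy of \cref{lem:ReorderTallVertBox14} to the setting where, inside $\boxHalf$, up to two tall items may be stacked vertically. Since $\height(\boxHalf)\le \nicefrac{3}{4}\stripHeight'$ while every item in $\itemsTall$ has height strictly above $\nicefrac{1}{4}\stripHeight'$, at any horizontal coordinate at most two tall items can coexist. My first step is to use slicing manipulations analogous to those in the proof of \cref{lem:ReorderTallVertBox14} to arrange, for each tall item, all of its mass so that it is anchored either at the top or at the bottom edge of $\boxHalf$, never floating in the middle. Vertical items that are then displaced are re-stacked into pseudo items touching the opposite edge, as done before.

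The crux of the proof, and what I expect to be the main obstacle, is producing a single ``top versus bottom'' label for each \emph{whole} tall item rather than for each of its sliced pieces independently. To this end I would form the intersection graph of tall items, where each item is represented by the horizontal interval of its support. Every clique in this graph corresponds to a set of items with a common horizontal coordinate, and at most two tall items can share such a coordinate, so the clique number is at most two. Hence the interval graph is bipartite, and any proper 2-coloring yields the desired global top/bottom assignment. The at most two tall items crossing the left or right border of $\boxHalf$ are declared immovable and absorbed into a constant number of extra sub-boxes, so they do not interfere with the coloring.

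Once this labeling is fixed, I would cut $\boxHalf$ along the vertical lines bounding each tall item. Every resulting vertical strip then contains at most one tall item at the bottom and at most one at the top, with heights drawn from the $\Oh(1/\eps)$-sized set given by \cref{lem:Rounding}. Grouping strips by the unordered pair of heights they host, and splitting each group horizontally into a bottom and a top sub-box, yields $\Oh(1/\eps^2)=\Oh_\eps(1)$ sub-boxes, each containing only tall items of one common height, as required by the statement. Strips that host no tall item are collected into one additional sub-box of full height $\height(\boxHalf)$.

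Finally, the leftover gaps above, below, or between the rearranged tall items form $\Oh_\eps(1)$ rectangular regions, each of which becomes a box for vertical items. Into these, the vertical items originally contained in each strip can be packed fractionally, completing the construction and establishing the claimed $\Oh_\eps(1)$ total number of sub-boxes.
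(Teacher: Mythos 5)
Your 2-coloring idea for the top/bottom assignment is a clean alternative to the paper's explicit construction and it is in fact correct: the interval graph of tall items has clique number at most two, interval graphs are perfect, so a proper 2-coloring exists, and because any two horizontally-overlapping tall items must have fit together in the original packing (heights summing to at most $\height(\boxHalf)$), any such 2-coloring yields a feasible top/bottom placement. The paper instead reads the assignment directly off the optimal packing via the three horizontal guide lines, but your route arrives at the same place.

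The gap is in the step that follows. After cutting $\boxHalf$ into vertical strips at tall-item boundaries and ``grouping strips by the unordered pair of heights they host,'' you would have to permute the strips so that each group is contiguous, and nothing in your argument prevents this permutation from slicing a tall item into noncontiguous pieces: a single bottom item of height $h$ can span two strips whose pairs are, say, $(h,h')$ and $(h,h'')$, and if these two pairs are not placed adjacently by the grouping, the item is cut. The paper avoids this by not grouping at all but \emph{sorting}: bottom items in ascending, top items in descending order of (rounded) height, which is the Nadiradze--Wiese argument. After that sort, items of equal height at the bottom (resp. top) are automatically adjacent, so one sub-box per rounded height suffices, and no tall item is ever split. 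You need to either invoke that sorting step explicitly or argue that your grouping can be realized without cutting items. The second, smaller gap is your treatment of immovable border items. Declaring them ``absorbed into a constant number of extra sub-boxes, so they do not interfere with the coloring'' is too quick: they do not just pre-color two vertices, they also \emph{pin} regions of $\boxHalf$ so that a global sort cannot move items past them. The paper deals with this through the iterative construction around $i_\ell$ and $i_r$, peeling off one outermost tallest item per round and showing the number of distinct heights decreases each time, which bounds the number of iterations and hence the total number of sub-boxes by $\Oh_\eps(1)$. Your sketch needs an analogue of that iteration to make the bound go through when items overlap the box border.
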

    \begin{figure*}[h]
    	\centering
    	\begin{subfigure}[t]{0.45\textwidth}
    		\centering
    		\resizebox{0.9\textwidth}{!}{
    			\begin{tikzpicture}
    			\pgfmathsetmacro{\w}{15}
    			\pgfmathsetmacro{\h}{8}
    			
    			\draw  (0,0) --(\w,0) -- (\w,\h) -- (0,\h) -- cycle;
    			\draw [dashed, thick] (0.3*\w,0) -- (0.3*\w,\h);
    			\draw [dashed, thick] (0.7*\w,0) -- (0.7*\w,\h);
    			
    			\foreach \x/\y/\xx/\yy/\z in {
    				0.3 / 0 /0.37 / 0.75/{\large $i_\ell$},
    				0.37 / 0 /0.4 / 0.65/,
    				0.43 / 0 /0.48 / 0.7/,
    				0.52 / 0 /0.55 / 0.53/,
    				0.55 / 0 /0.58 / 0.6/	
    			}
    			{
    				\drawTallItem[\z]{\x*\w}{\y*\h}{\xx*\w}{\yy*\h};
    			}
    			\foreach \x/\y/\xx/\yy/\z in {
    				0.4 / 0 /0.43 / 0.25/,
    				0.48 / 0 /0.5 / 0.3/,
    				0.5 / 0 /0.52 / 0.4/,
    				0.58 / 0 /0.60 / 0.1/,
    				0.60 / 0 /0.75 / 0.2/$t'$	
    			}
    			{
    				\drawVerticalItem[\z]{\x*\w}{\y*\h}{\xx*\w}{\yy*\h};
    			}
    			\foreach \x/\y/\xx/\yy/\z in {
    				0.4 / 0.4 /0.43 / 1/,
    				0.48 / 0.45 /0.52 / 1/,
    				0.58 / 0.3 /0.63 / 1/,
    				0.63 / 0.22 /0.7 /1/$i_r$		
    			}
    			{
    				\drawTallItem[\z]{\x*\w}{\y*\h}{\xx*\w}{\yy*\h};
    			}
    			\foreach \x/\y/\xx/\yy/\z in {
    				0.24 / 0.8 /0.4 / 1/$b'$,
    				0.43 / 0.9/0.45 / 1/,
    				0.45 / 0.75 /0.48 / 1/,
    				0.52 / 0.6 /0.54 / 1/,
    				0.54 / 0.7 /0.58 /1/
    			}
    			{
    				\drawVerticalItem[\z]{\x*\w}{\y*\h}{\xx*\w}{\yy*\h};
    			}

    			\end{tikzpicture}
    		}
    	\end{subfigure}
    	\begin{subfigure} [t]{0.45\textwidth}
    		\centering
    		\resizebox{0.9\textwidth}{!}{
    			\begin{tikzpicture}
    			\pgfmathsetmacro{\w}{15}
    			\pgfmathsetmacro{\h}{8}
    			
    			\draw  (0,0) --(\w,0) -- (\w,\h) -- (0,\h) -- cycle;
    			\draw [dashed, thick] (0.3*\w,0) -- (0.3*\w,\h);
    			\draw [dashed, thick] (0.7*\w,0) -- (0.7*\w,\h);
    			
    			\foreach \x/\y/\xx/\yy/\z in {
    				0.3 / 0 /0.37 / 0.75/{\large $i_\ell$},
    				0.37 / 0 /0.42 / 0.7/,
    				0.42 / 0 /0.45 / 0.65/,
    				0.45 / 0 /0.48 / 0.6/,
    				0.48 / 0 /0.51 / 0.53/			
    			}
    			{
    				\drawTallItem[\z]{\x*\w}{\y*\h}{\xx*\w}{\yy*\h};
    			}
    			\foreach \x/\y/\xx/\yy/\z in {
    				0.51 / 0 /0.53 / 0.4/,
    				0.53 / 0 /0.55 / 0.3/,
    				0.55 / 0 /0.58 / 0.25/,	
    				0.58 / 0 /0.60 / 0.1/,
    				0.60 / 0 /0.75 / 0.2/$t'$	
    			}
    			{
    				\drawVerticalItem[\z]{\x*\w}{\y*\h}{\xx*\w}{\yy*\h};
    			}
    			\foreach \x/\y/\xx/\yy/\z in {
    				0.51 / 0.45 /0.55 / 1/,
    				0.55 / 0.4 /0.58 / 1/,		
    				0.58 / 0.3 /0.63 / 1/,
    				0.63 / 0.22 /0.7 /1/$i_r$		
    			}
    			{
    				\drawTallItem[\z]{\x*\w}{\y*\h}{\xx*\w}{\yy*\h};
    			}
    			\foreach \x/\y/\xx/\yy/\z in {
    				0.24 / 0.8 /0.4 / 1/$b'$,
    				0.49 / 0.6 /0.51 / 1/,
    				0.45 / 0.7 /0.49 /1/,
    				0.42 / 0.75 /0.45 / 1/,
    				0.4 / 0.9/0.42 / 1/				
    			}
    			{
    				\drawVerticalItem[\z]{\x*\w}{\y*\h}{\xx*\w}{\yy*\h};
    			}		
    			\end{tikzpicture}
    		}
    	\end{subfigure}
    \caption{An illustration of the repacking for a box $\boxHalf$ with height in~$(\nicefrac{1}{2}\stripHeight',\nicefrac{3}{4}\stripHeight']$. Illustrated here is the first step where we find the tallest items $i_\ell,i_r$ and generate a sub-box between these two. To the left we see the original sub-box and to the right we see the sorted result. All items either touch the top or bottom of the packing after our assignment, as is required.}
    \label{fig:SolveBoxHalfBoxpt1}
    \end{figure*}
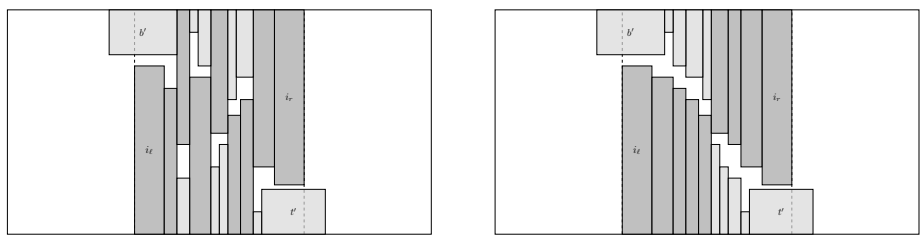   

    \begin{proof}
    Before we begin with the proof itself we have to handle the assignment of tall items to either the top or bottom of the packing. Thankfully, as there can be at most two tall items placed on the same vertical line, this is straightforward. We draw two horizontal lines, one at $\nicefrac{1}{4}\stripHeight'$ and the other at $\height(\boxHalf)-\nicefrac{1}{4}\stripHeight'$. If a tall item $t$ overlaps both of these lines, we know that no other tall item can be placed at the same x-coordinate as $t$. Thus, we shift all such items to the bottom of the box, shifting vertical items up accordingly. If any tall item only intersects the bottom line, we shift it down to the bottom of the box. Finally, if any tall item only intersects the top line, we shift it upwards such that it touches the top of the box. Clearly, no overlaps can be generated from this procedure and the packing stays feasible as we did not shift the x-coordinates of any item.
    
    We proceed by generating our pseudo-items for this box. We draw a vertical line at the beginning and end of each tall item that is placed at the bottom of the box. We generate pseudo-items by fusing the items between these lines. If any tall items are placed at the top of the box we select the one with the lowest bottom between each of these lines and let this item mark the top of the pseudo item, if enough items are present. 
    Thereby we have at most one pseudo item at each vertical line, which is placed either at the top of the box if there is a tall item below this pseudo item intersecting the line at $\nicefrac{1}{4}\stripHeight'$ or at the bottom of the box otherwise.
    
    We separate this proof into two cases. In the trivial case, no tall item overlaps the left or right border of $\boxHalf$. In this case, we know from \cite{NadiradzeW16} that we can simply sort the items touching the top in descending order of heights and items touching the bottom of the box in ascending order of height. In the case where we have tall and pseudo items of the same height, we place them such that the tall items are left of the pseudo items. Through this reordering, no items can overlap and we generate at most one box for any tall or pseudo item at the top and the same amount for items at the bottom. Because of the minimum width of tall items, we can have at most $\Oh_\eps(1)$ many boxes. It can happen that we separate some vertical items horizontally in this procedure while reordering the pseudo items. We show how to pack these items integrally again after handling all box sizes of boxes in $\boxesTallVert$.\newline\indent
    
    Let us now consider the case in which at least one tall item intersects the box border. Let $\height_b$ be the height of the tallest item touching the bottom of the box and $b_\ell$ and $b_r$ be the left- and rightmost items with height $\height_b$. Define $\height_t, t_\ell, t_r$ in the same manner. Further, let $i_\ell$ be the leftmost item in $\{b_\ell,t_\ell\}$ and $i_r$ be the rightmost in $\{b_r,t_r\}$. Let us assume, w.l.o.g. that $i_\ell=b_\ell$ and $i_r=t_r$.\newline\indent
     
    We draw a vertical line on the left border of $b_\ell$. The item cut with this line is defined as a new immovable item $b'$. We do the same on the right side of $t_r$ and name the cut item $t'$. Now, we sort all movable items between the two lines. Items touching the top are sorted in ascending order regarding their heights, while items at the bottom are sorted in descending order.\newline\indent
    This reordering ensures that no two items overlap. There is no item that overlaps $b'$ as the tallest item that was reordered at the bottom has height $h_b$. Because $b'$ was placed above $b_\ell$, this cannot lead to any overlap. The same holds for $t'$.\newline\indent
    The other points at which overlaps can occur is between these two items $b'$ and $t'$. Assume that there is a point $p=(x,y)$ at which two items $i_b$ and $i_t$ overlap. Since all items at the top to the left of $i_t$ and all items at the bottom to the right of $i_b$ have a height of at least $\height(i_t)$ and $\height(i_b)$ respectively, the total width of the strip between $b'$ and $t'$ is covered by items that must overlap if they are placed above each other in any configuration. Therefore, the strip must have had an overlap before this reordering, leading to a contradiction. Finally, this means there cannot be a new overlap created by this reordering. \newline\indent
    
    As the next step, we inspect the items placed at the top with height~$\height(b')$. We remove these items, shift the items with height smaller than $\height(b')$ to the right as far as possible, and place the removed items in this area. Clearly, the width of the packing is the same and does not lead to overlap. Furthermore, there cannot be any vertical overlap, as the item $i_r$ below $b'$ is the tallest and fits below the items of height~$\height(b')$, so all other items must do so as well. Similarly, the items with height smaller than~$\height(b')$ can obviously not lead to overlap if we shift them above smaller items. We do the same thing for items of height~$\height(t')$ on the other side. We do this to only require one box of height~$\height(b')$ and $\height(t')$. For an illustration of this step, see \cref{fig:SolveBoxHalfBoxpt1}.\newline\indent
    We have again generated $\Oh_\eps(1)$ boxes up to this point. This is because we generate at most one box for either size on the top and the bottom of the strip each. Importantly, the total number of different heights touching the top or bottom left of $i_\ell$ is reduced by one. This is due to the strip containing all items of height+$\height(i_\ell)$. The same holds for the right side of $i_r$. 
    
    The procedure described above can be repeated iteratively to fully reorder the box $\boxHalf$. To do that, we simply repeat the procedures both to the left of $i_\ell$ and the right of $i_r$ with the box borders of $\boxHalf$ as the furthest outer part. Again, by finding the outermost tallest items inside these areas, we reduce the number of heights in the remaining areas by at least one. For an illustration of this continuation, see \cref{fig:SolveBoxHalfBoxpt2}. We repeat this procedure until one of the following occurs:
    \begin{enumerate}
        \item The height of the tallest item touching the top and the height of the tallest item touching the bottom summed up is at most $\height(\boxHalf)$.
        \item The item $i_r$ becomes the immovable item overlapping the left border. To the left of $i_\ell$, the procedure ends when $i_\ell$ becomes the item overlapping the right border.
    \end{enumerate}
                   \begin{figure*}[h]
    \centering
        \begin{subfigure} [t]{0.45\textwidth}
    		\centering
    		\resizebox{0.9\textwidth}{!}{
    			\begin{tikzpicture}
    			\pgfmathsetmacro{\w}{15}
    			\pgfmathsetmacro{\h}{8}
    			
    			\draw  (0,0) --(\w,0) -- (\w,\h) -- (0,\h) -- cycle;
    			\draw [dashed, thick] (0.2*\w,0) -- (0.2*\w,\h);
    			\draw [dashed, thick] (0.5*\w,0) -- (0.5*\w,\h);
    			
    			\foreach \x/\y/\xx/\yy/\z in {
    				0.5 / 0 /0.57 / 0.75/{\large $i_r$},
    				0.46 / 0 /0.5 / 0.65/,
    				0.42 / 0 /0.46 / 0.73/,
    				0.35 / 0 /0.37 / 0.6/,
    				0.32 / 0 /0.35 / 0.63/			
    			}
    			{
    				\drawTallItem[\z]{\x*\w}{\y*\h}{\xx*\w}{\yy*\h};
    			}
    			\foreach \x/\y/\xx/\yy/\z in {
    				0.40 / 0 /0.42 / 0.35/,
    				0.37 / 0 /0.40 / 0.25/,
    				0.27 / 0 /0.32 / 0.18/,	
    				0.18 / 0 /0.27 / 0.23/$t'$	
    			}
    			{
    				\drawVerticalItem[\z]{\x*\w}{\y*\h}{\xx*\w}{\yy*\h};
    			}
    			\foreach \x/\y/\xx/\yy/\z in {
    				0.2 / 0.25 /0.3 / 1/$i_\ell$,
    				0.38 / 0.4 /0.41 / 1/				
    			}
    			{
    				\drawTallItem[\z]{\x*\w}{\y*\h}{\xx*\w}{\yy*\h};
    			}
    			\foreach \x/\y/\xx/\yy/\z in {
    				0.44 / 0.8 /0.6 / 1/$b'$,
    				0.41 / 0.75 /0.44 / 1/,
    				0.33 / 0.65 /0.38 /1/,
    				0.3 / 0.7 /0.33 / 1/				
    			}
    			{
    				\drawVerticalItem[\z]{\x*\w}{\y*\h}{\xx*\w}{\yy*\h};
    			}		
    			\end{tikzpicture}
    		}
    	\end{subfigure}
        	\begin{subfigure} [t]{0.45\textwidth}
    	\centering
    	\resizebox{0.9\textwidth}{!}{
    		\begin{tikzpicture}
    		\pgfmathsetmacro{\w}{15}
    		\pgfmathsetmacro{\h}{8}
    		
    		\draw  (0,0) --(\w,0) -- (\w,\h) -- (0,\h) -- cycle;
    		\draw [dashed, thick] (0.2*\w,0) -- (0.2*\w,\h);
    		\draw [dashed, thick] (0.5*\w,0) -- (0.5*\w,\h);
    		
    		\foreach \x/\y/\xx/\yy/\z in {
    			0.5 / 0 /0.57 / 0.75/{\large $i_r$},
    			0.46 / 0 /0.5 / 0.73/,
    			0.42 / 0 /0.46 / 0.65/,	
    			0.39 / 0 /0.42 / 0.63/,
    			0.37 / 0 /0.39 / 0.6/					
    		}
    		{
    			\drawTallItem[\z]{\x*\w}{\y*\h}{\xx*\w}{\yy*\h};
    		}
    		\foreach \x/\y/\xx/\yy/\z in {
    			0.35 / 0 /0.37 / 0.35/,
    			0.32 / 0 /0.35 / 0.25/,
    			0.27 / 0 /0.32 / 0.18/,	
    			0.18 / 0 /0.27 / 0.23/$t'$	
    		}
    		{
    			\drawVerticalItem[\z]{\x*\w}{\y*\h}{\xx*\w}{\yy*\h};
    		}
    		\foreach \x/\y/\xx/\yy/\z in {
    			0.2 / 0.25 /0.3 / 1/$i_\ell$,
    			0.3 / 0.4 /0.33 / 1/				
    		}
    		{
    			\drawTallItem[\z]{\x*\w}{\y*\h}{\xx*\w}{\yy*\h};
    		}
    		\foreach \x/\y/\xx/\yy/\z in {
    			0.44 / 0.8 /0.6 / 1/$b'$,
    			0.33 / 0.65 /0.38 /1/,
    			0.38 / 0.7 /0.41 / 1/,
    			0.41 / 0.75 /0.44 / 1/						
    		}
    		{
    			\drawVerticalItem[\z]{\x*\w}{\y*\h}{\xx*\w}{\yy*\h};
    		}		
    		\end{tikzpicture}
    	}
    \end{subfigure}
\caption{The continuation of the algorithm to solve a box $\boxHalf$. Here, $i_r$ is the same item as $i_\ell$ in the first step of the algorithm, i.e.\ we are inspecting the sub-box to the left of our starting point. Again, we find another tall item to generate this sub-box, and solve it in the same manner we solved the sub-box in the first step. This procedure is repeatable until the entire box $\boxHalf$ is packed in a structured manner.}
\label{fig:SolveBoxHalfBoxpt2}
    \end{figure*}
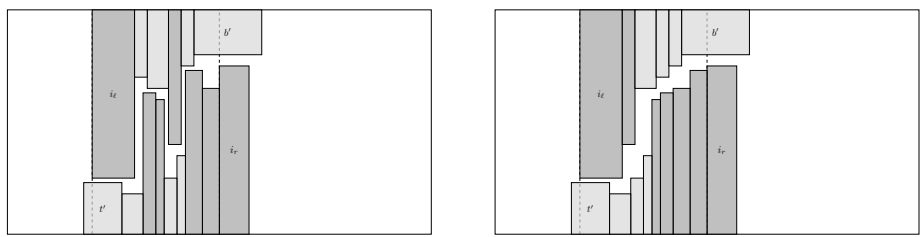
   
    If the first condition occurs, we can simply sort the items on the top and bottom in ascending order of heights, never generating any overlap. This fully solves the box $\boxHalf$, again generating at most $\Oh_\eps(1)$ boxes for tall and pseudo items.\newline\indent
    If the second condition occurs, we simply arrived at the last iteration of the reordering procedure. Afterwards, the vertical line is placed exactly at the box border, thereby finishing this procedure. Again, we generate at most  $\Oh_\eps(1)$ boxes in this final iteration. \newline\indent
    Due to the first break condition, we repeat the procedure at most until the tallest item in the packing has a height of $\height(\boxHalf/2)$. Since we reduce the number of different heights inside the box by at least one in each iteration of this procedure, we can have at most $\Oh_\eps(1)$ steps until both the tallest item at the top and bottom have at most this height. Finally, since we double this procedure on the left and right side of the initial sub-box, we create at most $\Oh_\eps(1)$ boxes for tall items, and $\Oh_\eps(1)$ boxes for pseudo items in total.
    \end{proof}
    We later fill the boxes for pseudo items with vertical items that were separated horizontally in any procedure.

\paragraph*{Boxes $B$ with $\height(B)\in (3/4\stripHeight',\stripHeight']$}
This is the most complex type of box. To solve it, we first utilize \cref{lem:Assignment} to infer some structure. 

    \begin{lem}
        Given a box $B\in \boxesTallVert$ with height $\height(B)>\nicefrac{3}{4}H$ only containing tall and vertical items, we can find an assignment for tall items such that they are fully assigned to either the bottom, top or middle of the box.
        \label{lem:Assignment}
    \end{lem}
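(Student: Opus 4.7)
The plan is to combine a combinatorial coloring argument with the re-slicing techniques already developed for the smaller box sizes. First, I would establish the stacking bound: each tall item has height at least $(\nicefrac{1}{4}+\eps)\stripHeight'$ while $\height(B)\leq \stripHeight'$, so at most three tall items can overlap at any single $x$-coordinate of $B$. This immediately yields the three candidate vertical levels we want to assign to: bottom, middle, and top.

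Second, I would model the assignment problem as a proper 3-coloring of the conflict graph $G$ on $\itemsTall\cap B$, where two tall items are adjacent if and only if their horizontal $x$-ranges overlap. Since each tall item occupies a single horizontal interval, $G$ is an interval graph with clique number at most three; because interval graphs are perfect, $G$ admits a proper 3-coloring with colors $\{\mathrm{bot},\mathrm{mid},\mathrm{top}\}$. Such a coloring can be constructed explicitly by a left-to-right sweep that assigns each newly starting tall item the unique label not currently used by any of the (at most two) other overlapping active items.

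Third, I would interpret this coloring as the desired geometric assignment and verify feasibility. At every $x$ in $B$ the tall items present carry distinct labels, so the bot-labeled item is placed with its base at $y=0$, the top-labeled item with its top at $y=\height(B)$, and a mid-labeled item (if present) occupies the residual gap between them; the gap is large enough because in the original packing the three items at that column already fit vertically inside $B$. The remaining vertical items are slid into the leftover space using the same slicing procedures as in the proofs of \cref{lem:ReorderTallVertBox14,lem:ReorderTallVertBox12}. The main obstacle I expect is the handling of tall items whose slicing in the optimal packing already distributes mass across multiple vertical levels: to obtain a fully-assigned configuration, the off-band slices of such an item must be swapped with adjacent vertical items, or with slices of same-colored tall items on neighboring columns, in order to consolidate the item into the single band dictated by its color. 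Since the DSP model permits arbitrary vertical relocation of slices as long as horizontal start positions are preserved, these swaps are legal, and carrying them out produces the fully-assigned configuration claimed by the lemma.
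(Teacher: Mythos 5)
Your stacking bound and the interval-graph observation are fine: at most three tall items can overlap any $x$-coordinate, so the conflict graph $G$ is an interval graph with $\omega(G)\leq 3$ and hence admits a proper $3$-coloring. The gap is that a proper $3$-coloring is \emph{not} the same as a geometrically realizable bottom/middle/top assignment, and your proposal does not bridge that gap.

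A proper coloring only ensures that at each column no two tall items share a label; it says nothing about whether a \textsf{mid}-labeled item can be placed at a single $y$-position that clears the \textsf{bot} and \textsf{top} items across its entire $x$-range. Consider, with $\height(B)=\stripHeight'=1$, three tall items: $a$ on $[0,5)$ of height $0.7$, $b$ on $[5,10]$ of height $0.7$, and $c$ on $[0,10]$ of height $0.25$. The original packing is feasible ($0.7+0.25\leq 1$ at every column), and the conflict graph is the path $a\sim c\sim b$. The coloring $a=\textsf{bot}$, $c=\textsf{mid}$, $b=\textsf{top}$ is proper, but on $[0,5)$ the \textsf{mid} band for $c$ lies above $a$, forcing $y_c\geq 0.7$, while on $[5,10]$ it lies below $b$, forcing $y_c+0.25\leq \height(B)+\nicefrac{1}{4}\stripHeight'-\height(b)=0.55$, i.e.\ $y_c\leq 0.3$ \emph{even after} the $\nicefrac{1}{4}\stripHeight'$ extension. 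These windows are disjoint, so this coloring is infeasible; the coloring $a=b=\textsf{top}$, $c=\textsf{bot}$ does work. Hence the lemma really asserts that \emph{some} feasible assignment exists, and your left-to-right sweep neither is fully specified (when fewer than two items are active there is no ``unique label not currently used'') nor is shown to avoid bad colorings. Your final paragraph acknowledges the problem and proposes to ``swap off-band slices'' to consolidate items, but ``legal'' slices and ``overlap-free'' slices are different things: showing those swaps can always be carried out without creating overlap \emph{is} the lemma, and you have not proved it.

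The paper's proof does the missing work. It first sorts the tall items within each unit-width column by height (tallest at the bottom), which guarantees that after sorting every tall item crosses at least one of the three horizontal lines $\nicefrac{1}{4}\stripHeight'$, $\nicefrac{1}{2}\height(B)$, $\height(B)-\nicefrac{1}{4}\stripHeight'$. It then encodes the column-by-column situation as a schedule on three machines (one per line), where an item may occupy one, two, or all three machines depending on how many lines it crosses; a single item may initially sit on different machine sets in different columns, which is exactly the slicing phenomenon you flag. The substance of the proof is a left-to-right normalization of this schedule — fixing height-$2$ items in place, marking the parallel height-$1$ items immovable, and performing constrained swaps when a newly marked immovable item would conflict with an older one — establishing that every item can be given a fixed machine set across its whole interval, with the middle machine present whenever the item spans two. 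That multi-machine viewpoint is strictly richer than a one-label-per-item coloring (a height-$0.6\height(B)$ item is not ``in one band''), and the $\nicefrac{1}{4}\stripHeight'$ extension, which your proposal omits, is what absorbs the residual discrepancies. In short, the ``three levels'' intuition is right, but your coloring restates the consistency problem rather than solving it.
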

    \begin{proof}
        Begin by partitioning the box into slices of width $1$. 
        For each such slice, we sort the tall items inside according to their height and order them tallest at the bottom ascending. 
        The vertical items inside these slices get placed at the top of the last tall item. 
        Afterward, we draw three horizontal lines throughout the box. 
        One at $\nicefrac{1}{4}\stripHeight'$, the second at $\nicefrac{1}{2}\height(B)$ and the final one at $\height(B)-\nicefrac{1}{4}\stripHeight'.$ 
        After sorting the tall items, this selection ensures that each tall item is intersected by at least one of these lines. 

        We now transform the box into a scheduling problem on three machines. 
        We discard vertical items for this transformation. Each machine represents one of the horizontal lines. 
        Item-parts get assigned to a machine. 
        One part gets assigned for every vertical slice of the box. 
        They are assigned to the machine representing the horizontal line that intersects the item in a given vertical slice. 
        The height of such a transformed item is the number of machines it is initially scheduled on. 
        We show that we can transform the schedule generated by this transformation into one where \ari{each item is assigned to the correct number of machines, items assigned to at least two machines are always assigned to the middle machine and items are fully assigned to the machines they start on.}
        
        
        After this transformation, we obtain a schedule. Let us first discuss the trivial cases. 
        If there are three items, each occupying one machine at any point in time, we can reorder them in any manner we want and fulfill the criteria. Similarly, an item requiring all three machines always fulfills the criteria above. 
        Thus, the only point at which we must carefully reorder the items is when an item occupies two machines. 
        As these items must always occupy the middle machine, we can place their remaining part either at the top or bottom. 
        Furthermore, there can only be another item with a height of $1$ scheduled in parallel to this item.\newline\indent
        Traverse the schedule from left to right. 
        Let $d$ be the first item of height $2$. 
        Schedule $d$ entirely on the machines it starts on, w.l.o.g.\ the bottom two machines. 
        All items of height $1$ scheduled in parallel get fully assigned to the top machine and are called immovable. 
        Note that, through this procedure, we can only ever have two immovable items present in any vertical slice. 
        This is because we only call items immovable when they intersect an item with a height of $2$. 
        Because the maximum height is $3$, only one such item can exist to the right and left of any vertical slice respectively. 

        We repeat the procedure described above. 
        If there is ever a point where we assign items such that two immovable items would intersect, we swap the order of the item that begins later. 
        Let $e$ be an item of height $2$ that is marked immovable in an earlier step of the procedure. 
        It intersects an item $d$ of height $2$. 
        Let $d$ be placed at the bottom and $e$ at the top. 
        Further, let $g$ be an item of height $1$ that overlaps a different item $h$ with height $2$. 
        We want to mark $g$ as immovable and assign it to the top machine, but $e$ and $g$ overlap at some point in the schedule. 
        Because we can only have two immovable items at any point in the schedule, we must be able to swap the order of $g$ and $h$ such that $g$ is assigned to the bottom machine and $h$ to the top two machines. 
        Now, neither item can intersect an immovable item.

        We repeat this procedure until we reach the end of the schedule. 
        At this point, the schedule has the three desired properties. 
        We then assign each item according to the machines it is scheduled upon. 
        See \cref{fig:AssingmentPt1} for an illustration.

        If an item is scheduled on the bottom machine, it is assigned to the bottom of the box. 
        If an item is assigned to the top machine but not the bottom, it is assigned to the top of the box.
        Finally, the remaining items are assigned to the middle of the box.

        After extending the height of the box by $\nicefrac{1}{4}\stripHeight'$, we can place all items according to their assignment without needing to slice them.
    \end{proof}
        After assigning items to their respective height-lines, we are ready to repack the entire box. To complete this repacking, we utilize an involved routine that separates the box $B$ into up to eighteen different subareas. We solve each of these areas with a unique procedure and obtain a fully structured box as the result. 
           Finally, the proof for the most complex boxes. Note that this proof is very elaborate, readers discretion is advised.
            	\begin{figure*}[t]
		\begin{subfigure}{0.3\textwidth}
			\centering
			\resizebox{0.98\textwidth}{!}{
				\begin{tikzpicture}
				\pgfmathsetmacro{\w}{6}
				\pgfmathsetmacro{\h}{4}
				\draw (0,0) -- (\w,0) --(\w,\h) --(0,\h) -- cycle;
				\drawItem{0}{0}{\w}{\h};
				\draw[dashed] (-0.05*\w,0.25*\h) node[anchor=east] {$\nicefrac{1}{4}\stripHeight'$}-- (\w,0.25*\h);
				\draw[dashed] (-0.05*\w,0.5*\h) node[anchor=east] {$\nicefrac{1}{2}\height(B)$}-- (\w,0.5*\h);
				\draw [dashed](-0.05*\w,0.75*\h) node[anchor=east] {$\height(B)-\nicefrac{1}{4}\stripHeight'$}-- (\w,0.75*\h);
				\drawTallItem[$x$]{0.0*\w}{0.0*\h}{0.1*\w}{0.47*\h};
				\drawTallItem[$c$]{0.0*\w}{0.47*\h}{0.1*\w}{0.73*\h};
				\drawTallItem[$c$]{0.1*\w}{0.75*\h}{0.2*\w}{1*\h};	
				\drawTallItem[$a$]{0.0*\w}{0.73*\h}{0.1*\w}{0.99*\h};
				
				\drawTallItem[$b$]{0.1*\w}{0.49*\h}{0.2*\w}{0.75*\h};
				\drawTallItem[$b$]{0.2*\w}{0.6*\h}{0.3*\w}{0.86*\h};				
				\drawTallItem[$y$]{0.1*\w}{0.0*\h}{0.2*\w}{0.49*\h};
				\drawTallItem[$d$]{0.2*\w}{0*\h}{0.3*\w}{0.6*\h};
				\drawTallItem[$d$]{0.3*\w}{0*\h}{0.4*\w}{0.6*\h};
				\drawTallItem[$e$]{0.3*\w}{0.6*\h}{0.4*\w}{0.87*\h};
				\drawTallItem[$e$]{0.4*\w}{0.61*\h}{0.5*\w}{0.88*\h};
				\drawTallItem[$e$]{0.5*\w}{0.61*\h}{0.6*\w}{0.88*\h};
				\drawTallItem[$f$]{0.4*\w}{0.33*\h}{0.5*\w}{0.61*\h};
				\drawTallItem[$f$]{0.5*\w}{0.33*\h}{0.6*\w}{0.61*\h};
				\drawTallItem[$g$]{0.4*\w}{0.0*\h}{0.5*\w}{0.33*\h};
				\drawTallItem[$g$]{0.5*\w}{0.0*\h}{0.6*\w}{0.33*\h};
				\drawTallItem[$g$]{0.6*\w}{0.53*\h}{0.7*\w}{0.86*\h};
				\drawTallItem[$g$]{0.7*\w}{0.53*\h}{0.8*\w}{0.86*\h};
				\drawTallItem[$h$]{0.6*\w}{0.00*\h}{0.7*\w}{0.53*\h};
				\drawTallItem[$h$]{0.7*\w}{0.00*\h}{0.8*\w}{0.53*\h};
				\drawTallItem[$z$]{0.8*\w}{0.0*\h}{0.9*\w}{0.85*\h};
				\node at (0.95*\w,0.3*\h) {$\ldots$};
				\node at (0.95*\w,0.6*\h) {$\ldots$};
				\end{tikzpicture}
			}
		\end{subfigure}
		\begin{subfigure}{0.3\textwidth}
			\centering
			\resizebox{0.98\textwidth}{!}{
				\begin{tikzpicture}
				\pgfmathsetmacro{\w}{6}
				\pgfmathsetmacro{\h}{4}
				\draw (0,0) -- (\w,0) --(\w,\h) --(0,\h) -- cycle;
				\draw (-0.05*\w,2*\h/3) node [anchor=south] {$M_3$} -- (\w,2*\h/3);
				\draw (-0.05*\w,\h/3) node [anchor=south] {$M_2$} -- (\w,\h/3);
				\draw (-0.05*\w,0) node [anchor=south] {$M_1$} -- (\w,0);
				
				\drawTallItem[$x$]{0.01*\w}{0.02*\h}{0.09*\w}{0.31*\h};
				\drawTallItem[$y$]{0.11*\w}{0.02*\h}{0.19*\w}{0.31*\h};
				\drawTallItem[$d$]{0.21*\w}{0.02*\h}{0.29*\w}{0.31*\h};
				\drawTallItem[$d$]{0.31*\w}{0.02*\h}{0.39*\w}{0.31*\h};
				\drawTallItem[$g$]{0.41*\w}{0.02*\h}{0.49*\w}{0.31*\h};
				\drawTallItem[$g$]{0.51*\w}{0.02*\h}{0.59*\w}{0.31*\h};
				\drawTallItem[$h$]{0.61*\w}{0.02*\h}{0.69*\w}{0.31*\h};
				\drawTallItem[$h$]{0.71*\w}{0.02*\h}{0.79*\w}{0.31*\h};
				\drawTallItem[$z$]{0.81*\w}{0.02*\h}{0.89*\w}{0.31*\h};
				\drawTallItem[$c$]{0.01*\w}{0.35*\h}{0.09*\w}{0.64*\h};
				\drawTallItem[$b$]{0.11*\w}{0.35*\h}{0.19*\w}{0.64*\h};
				\drawTallItem[$d$]{0.21*\w}{0.35*\h}{0.29*\w}{0.64*\h};
				\drawTallItem[$d$]{0.31*\w}{0.35*\h}{0.39*\w}{0.64*\h};
				\drawTallItem[$f$]{0.41*\w}{0.35*\h}{0.49*\w}{0.64*\h};
				\drawTallItem[$f$]{0.51*\w}{0.35*\h}{0.59*\w}{0.64*\h};
				\drawTallItem[$h$]{0.61*\w}{0.35*\h}{0.69*\w}{0.64*\h};
				\drawTallItem[$h$]{0.71*\w}{0.35*\h}{0.79*\w}{0.64*\h};
				\drawTallItem[$z$]{0.81*\w}{0.35*\h}{0.89*\w}{0.64*\h};
				\drawTallItem[$a$]{0.01*\w}{0.686*\h}{0.09*\w}{0.976*\h};
				\drawTallItem[$c$]{0.11*\w}{0.686*\h}{0.19*\w}{0.976*\h};
				\drawTallItem[$b$]{0.21*\w}{0.686*\h}{0.29*\w}{0.976*\h};
				\drawTallItem[$e$]{0.31*\w}{0.686*\h}{0.39*\w}{0.976*\h};
				\drawTallItem[$e$]{0.41*\w}{0.686*\h}{0.49*\w}{0.976*\h};
				\drawTallItem[$e$]{0.51*\w}{0.686*\h}{0.59*\w}{0.976*\h};
				\drawTallItem[$g$]{0.61*\w}{0.686*\h}{0.69*\w}{0.976*\h};
				\drawTallItem[$g$]{0.71*\w}{0.686*\h}{0.79*\w}{0.976*\h};
				\drawTallItem[$z$]{0.81*\w}{0.686*\h}{0.89*\w}{0.976*\h};
				\node at (0.95*\w,0.8*\h) {$\ldots$};
				\node at (0.95*\w,0.5*\h) {$\ldots$};
				\node at (0.95*\w,0.2*\h) {$\ldots$};
				\end{tikzpicture}
			}
		\end{subfigure}
  \begin{subfigure}{0.3\textwidth}
        		\centering
        		\resizebox{0.98\textwidth}{!}{
        			\begin{tikzpicture}
        			\pgfmathsetmacro{\w}{6}
        			\pgfmathsetmacro{\h}{4}
        			\draw (0,0) -- (\w,0) --(\w,\h) --(0,\h) -- cycle;
        			\draw (-0.05*\w,2*\h/3) node [anchor=south] {$M_3$} -- (\w,2*\h/3);
        			\draw (-0.05*\w,\h/3) node [anchor=south] {$M_2$} -- (\w,\h/3);
        			\draw (-0.05*\w,0) node [anchor=south] {$M_1$} -- (\w,0);
        			
        			\drawTallItem[$x$]{0.01*\w}{0.02*\h}{0.09*\w}{0.31*\h};
        			\drawTallItem[$y$]{0.11*\w}{0.02*\h}{0.19*\w}{0.31*\h};
        			\drawTallItem[$d$]{0.21*\w}{0.02*\h}{0.29*\w}{0.31*\h};
        			\drawTallItem[$d$]{0.31*\w}{0.02*\h}{0.39*\w}{0.31*\h};
        			\drawTallItem[$\mathbf{G}$]{0.41*\w}{0.02*\h}{0.49*\w}{0.31*\h};
        			\drawTallItem[$\mathbf{G}$]{0.51*\w}{0.02*\h}{0.59*\w}{0.31*\h};
        			\drawTallItem[$\mathbf{G}$]{0.61*\w}{0.02*\h}{0.69*\w}{0.31*\h};
        			\drawTallItem[$\mathbf{G}$]{0.71*\w}{0.02*\h}{0.79*\w}{0.31*\h};
        			\drawTallItem[$z$]{0.81*\w}{0.02*\h}{0.89*\w}{0.31*\h};
        			\drawTallItem[$c$]{0.01*\w}{0.35*\h}{0.09*\w}{0.64*\h};
        			\drawTallItem[$c$]{0.11*\w}{0.35*\h}{0.19*\w}{0.64*\h};
        			\drawTallItem[$d$]{0.21*\w}{0.35*\h}{0.29*\w}{0.64*\h};
        			\drawTallItem[$d$]{0.31*\w}{0.35*\h}{0.39*\w}{0.64*\h};
        			\drawTallItem[$f$]{0.41*\w}{0.35*\h}{0.49*\w}{0.64*\h};
        			\drawTallItem[$f$]{0.51*\w}{0.35*\h}{0.59*\w}{0.64*\h};
        			\drawTallItem[$h$]{0.61*\w}{0.35*\h}{0.69*\w}{0.64*\h};
        			\drawTallItem[$h$]{0.71*\w}{0.35*\h}{0.79*\w}{0.64*\h};
        			\drawTallItem[$z$]{0.81*\w}{0.35*\h}{0.89*\w}{0.64*\h};
        			\drawTallItem[$a$]{0.01*\w}{0.686*\h}{0.09*\w}{0.976*\h};
        			\drawTallItem[$\mathbf{B}$]{0.11*\w}{0.686*\h}{0.19*\w}{0.976*\h};
        			\drawTallItem[$\mathbf{B}$]{0.21*\w}{0.686*\h}{0.29*\w}{0.976*\h};
        			\drawTallItem[$\mathbf{E}$]{0.31*\w}{0.686*\h}{0.39*\w}{0.976*\h};
        			\drawTallItem[$\mathbf{E}$]{0.41*\w}{0.686*\h}{0.49*\w}{0.976*\h};
        			\drawTallItem[$\mathbf{E}$]{0.51*\w}{0.686*\h}{0.59*\w}{0.976*\h};
        			\drawTallItem[$h$]{0.61*\w}{0.686*\h}{0.69*\w}{0.976*\h};
        			\drawTallItem[$h$]{0.71*\w}{0.686*\h}{0.79*\w}{0.976*\h};
        			\drawTallItem[$z$]{0.81*\w}{0.686*\h}{0.89*\w}{0.976*\h};
        			\node at (0.95*\w,0.8*\h) {$\ldots$};
        			\node at (0.95*\w,0.5*\h) {$\ldots$};
        			\node at (0.95*\w,0.2*\h) {$\ldots$};
        			\end{tikzpicture}
        		}
        	\end{subfigure}
	 \caption{An illustration of the transformation from a packed box of tall items to the desired scheduling problem from \cref{lem:Assignment}. 
  We see the sorted original packing on the left, the initially generated schedule in the middle, and the final schedule to the right. 
  Items with bold and capitalized names are those we mark as immovable during the transformation of the schedule.}
	 \label{fig:AssingmentPt1}
	\end{figure*}
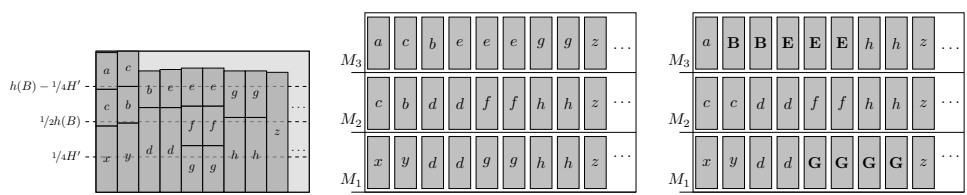  
    \begin{lem}
        Let $B\in \boxesTallVert$ be a box with height $h(B)>\nicefrac{3}{4}H$ such that at most two tall items overlap the left or right box border, while at most three such items overlap any box border. By adding $\nicefrac{1}{4}H$ to $B$'s height, we can rearrange the items in $B$ such that we generate at most $\mathcal{O}(N^2)$ boxes for tall and  $\mathcal{O}(N^2)$ boxes for vertical items without moving the immovable items. The vertical items are sliced and may be separated horizontally while tall items are not separated.
        \label{lem:ReorderTallVertBox34}
    \end{lem}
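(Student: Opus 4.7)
The plan is to follow the blueprint of \cref{lem:ReorderTallVertBox12} but lifted from a two-layer (top/bottom) geometry to a three-layer (top/middle/bottom) one, using the height slack of $\nicefrac{1}{4}\stripHeight'$ and the assignment produced by \cref{lem:Assignment} to decouple the layers. First, I apply \cref{lem:Assignment} to the tall items of $B$ to obtain, for every tall item, a label in $\{\mathrm{bot},\mathrm{mid},\mathrm{top}\}$ such that an item labelled $\mathrm{bot}$ can be placed with its bottom touching the bottom edge of $B$, one labelled $\mathrm{top}$ with its top touching the top edge, and one labelled $\mathrm{mid}$ symmetrically around $\nicefrac{1}{2}\height(B)$; the extension of $B$ by $\nicefrac{1}{4}\stripHeight'$ is precisely what is needed to absorb the overlap of the top- and bottom-most items with the middle items, because a tall item has height at most $\stripHeight'$ and a middle item has height at most $(\nicefrac{1}{4}+\eps)\stripHeight'$. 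Vertical items are then fused into pseudo items as in the earlier lemmas, so that in every vertical slice we see at most one tall item per layer plus pseudo items filling the remaining height.

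Next, I isolate the immovable items at the left and right box borders and the outermost tall items per layer. For each of the three layers, let $i_\ell^{\mathrm{bot}},i_r^{\mathrm{bot}},i_\ell^{\mathrm{mid}},i_r^{\mathrm{mid}},i_\ell^{\mathrm{top}},i_r^{\mathrm{top}}$ be the leftmost and rightmost tall items of maximal height in that layer. The six vertical lines through the left edge of the left-most and right edge of the right-most of these items cut $B$ into a grid of at most $7\times 3=21$ cells; dropping the cells that the hypothesis of the lemma excludes (no more than two border-crossing tall items at the left/right borders and at most three tall items across any interior boundary) leaves $\mathcal{O}(1)$ subareas, matching the ``up to eighteen'' figure mentioned above. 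Inside the outer strips adjacent to the borders the placement is forced, because the border-crossing immovable items and the extremal items $i_\ell^\star,i_r^\star$ fix the tall-item content of those cells up to sorting by height.

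For every interior subarea I apply the same iterative sorting scheme as in the proof of \cref{lem:ReorderTallVertBox12}: find the tallest bottom- and top-labelled items currently inside the area, bracket them by vertical lines, sort the movable slices between those lines by height (descending at the bottom, ascending at the top), handle middle-labelled items analogously by symmetry around the middle line, and then recurse on the two flanks. Each iteration removes one distinct height from further consideration in that flank, so there are $\mathcal{O}(N)$ iterations per subarea, where $N$ denotes the number of distinct rounded heights of tall items; since \cref{lem:Rounding} bounds $N$ by a function of $\eps$, the total number of tall-item sub-boxes produced is $\mathcal{O}(N^2)$ and each receives a single pseudo-item sub-box above or below it, giving $\mathcal{O}(N^2)$ vertical sub-boxes too. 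Horizontal separations introduced on vertical items during the fusion of pseudo items are deferred to the later integral-placement step, exactly as in the previous two box sizes.

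The main obstacle is the verification that the three layers do not collide after sorting. Unlike the two-layer case in \cref{lem:ReorderTallVertBox12}, here a vertical slice can simultaneously host a bottom-, a middle-, and a top-labelled tall item, so the standard ``total column width'' argument for non-overlap must be applied per layer using \cref{lem:Assignment}'s guarantee that a middle-labelled item never shares its slice with another tall item of height exceeding $\nicefrac{1}{4}\stripHeight'$; combined with the added $\nicefrac{1}{4}\stripHeight'$ of vertical slack, this keeps the sorting argument valid. The second delicate point is that the extremal items $i_\ell^\star, i_r^\star$ must not be moved, which is enforced by choosing them as the anchors of the subdivision rather than as movable participants of the sorting, exactly as in the treatment of $b_\ell,b_r,t_\ell,t_r$ in \cref{lem:ReorderTallVertBox12}.
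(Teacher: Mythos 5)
Your high-level plan matches the paper's: apply \cref{lem:Assignment} to sort tall items into a three-layer bottom/middle/top assignment, add $\nicefrac{1}{4}\stripHeight'$ of height to absorb overlap, fuse pseudo items, subdivide the box, and sort locally within each subarea. However, the decomposition you propose and the non-overlap argument both diverge from the paper's in ways that leave real gaps.

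The paper does not use a fixed $7\times 3$ grid anchored at six extremal items per layer. Instead it builds an adaptive decomposition: it first \emph{shifts} all items so that after Step~1.5 every tall or pseudo item touches one of only three specific horizontal lines ($0$, $\height(B)-\nicefrac{1}{4}H$, and $\height(B)+\nicefrac{1}{4}H$), thereby effectively collapsing the middle layer into the top or bottom structure rather than keeping it as a genuinely separate band. The subareas $B_{\ell,1}$ through $B_{\ell,9}$ (and their mirror images, plus $B_5$, $B_8$, and $B_{10}$) are then defined relative to particular items ($i_\ell$, $i_r$, $\ell_b$, $\ell_m$, etc.), and several of them are L-shaped or otherwise non-rectangular. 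Your fixed grid of vertical lines through extremal items per layer may cut through the middle of other tall items and does not by itself guarantee that the resulting cells have the structure needed for the sorting argument in each cell — the paper goes to considerable length (the inlined claims about $B_{\ell,3}$/$B_{\ell,9}$) precisely because the local sorting requires the adaptive area boundaries.

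The more serious gap is your treatment of the non-overlap argument. You write that ``the standard total column width argument for non-overlap must be applied per layer'' and that the slack plus the middle-item guarantee ``keeps the sorting argument valid,'' but that is asserted, not shown. With three independently sorted layers the total-column-width argument does not obviously go through: after sorting, a bottom-labelled item in one slice and a middle-labelled item in another can end up at the same horizontal span and overlap, and your description does not prevent this. The paper avoids the three-layer collision problem by first reducing to two effective layers (after the shift, middle items have one border on the same horizontal line as a top or bottom item), and even then it needs two dedicated claims (for $B_{\ell,3}$ and $B_{\ell,9}$) with non-trivial width-counting arguments to rule out overlaps created by the subsequent horizontal shift of the $B_{\ell,4}$/$B_5$ items. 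Your proposal also omits the special handling of items with height larger than $\nicefrac{1}{2}\height(B)$ touching the bottom (which define the $B_5$ area) and of items with height exactly $\height(B)$ (the $B_{10}$ area); without those cases the decomposition is incomplete. Finally, the $\mathcal{O}(N^2)$ bound does not follow from ``$\mathcal{O}(N)$ iterations per subarea'' times $\mathcal{O}(N)$ heights; in the paper it comes from applying \cref{lem:ReorderTallVertBox12} inside areas $B_{\ell,2}$ and $B_{\ell,6}$, which contributes the quadratic term, and tracking that contribution requires knowing the paper's area structure.

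To close the gap you would need to (a) replace the fixed grid with an adaptive decomposition that guarantees each cell has a two-layer structure after the initial shift, (b) prove the non-overlap claims for the subareas where items from top and bottom can meet after horizontal sorting, and (c) explicitly account for the boxes generated by the recursive application of \cref{lem:ReorderTallVertBox12} in the counting.
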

    \begin{proof}
    Disclaimer: All figures displayed in this proof are adapted from~\cite{StripPacking54} with explicit permission of the authors.
    	This procedure is an evolution of similar ideas in \cite{StripPacking54}.
        We present a reordering strategy for items in these boxes. Let $h(B)$ be the height of $B$. To keep notation simple, we assume that each box has its lower border at $0$, which can be achieved through slicing.\newline\indent
        \subparagraph*{Step 1: Shifting the items} Let us first consider the tall items overlapping the box border. We slice these items down to the bottom of the box, such that they are placed integrally. Let the one that protrudes further inwards, i.e.\ the leftmost on the right or the rightmost on the left, be placed below the other item if there are two on a side. Next, we inspect the three horizontal lines inside the box: one at~$\nicefrac{1}{4}H$, the second at~$\nicefrac{1}{2}\height(B)$, and the final one at~$\height(B)-\nicefrac{1}{4}H$. Since each tall item has a height of at least $\nicefrac{1}{4}H$, any optimal packing must have these items intersect at least one of these lines. We assign each tall item to at least one of these lines using \cref{lem:Assignment}.\newline\indent
        After this assignment, we shift each tall item assigned to the line at $\nicefrac{1}{4}H$ downward to the bottom of the box, as well as parts of all vertical items intersected by this line. Vertical items where only parts are sliced downwards here might be separated horizontally later. Afterwards, we repeat the same procedure for tall items assigned to the horizontal line at $h(B)-\nicefrac{1}{4}H$ and vertical items intersected by that line, only placing them such that their upper border touches the top of the box. For each tall item $i$ that is assigned to both of these horizontal lines, i.e.\ items with height greater than $h(B)-\nicefrac{1}{4}H$, we create a pseudo item with height $\height(B)$ and width $\width(i)$.\newline\indent
        Next, we extend the box upwards by $\nicefrac{1}{4}H$, and shift all items intersected by the line $h(B)-\nicefrac{1}{4}H$ up by $\nicefrac{1}{4}H$. All pseudo items between the lines $h(B)-\nicefrac{1}{4}H$ and $1/2h(B)$ are shifted up so their lower border touches $h(B)-\nicefrac{1}{4}H$. Afterwards, shift all tall items assigned only to the line at $1/2h(B)$ so that their top touches $\height(B)-\nicefrac{1}{4}H$. Do the same for all pseudo items that were not shifted yet and that are intersected by that line. Through both of these shifts, we simply moved slices inside the optimal packing upwards or downwards. As such, no overlaps can exist at this point.\newline\indent
         \subparagraph*{Step 1.5: Fusing pseudo items} As an intermediary step, we need to fuse some pseudo items together. As a result of this, we want to gain the property that all tall and pseudo items have at least one border that touches the horizontal lines at $0$, $\height(B)-\nicefrac{1}{4}H$ or $\height(B)+\nicefrac{1}{4}$. For now, there can still be pseudo items located between $\nicefrac{1}{4}H$ and $\nicefrac{1}{2}\height(B)$. However, these items can only exist if there is a tall item touching the top of the packing, and a different tall item touching the bottom of the packing at this width, as they would have been sliced downwards otherwise. Consider two consecutive vertical lines we had drawn to generate the pseudo items. If a  tall item overlaps the vertical strip between these lines, its right and left borders lie either on the strips borders or outside of the strip, due to the definition of pseudo items. \newline\indent
         \textit{Case 1:} In the first case, there are three tall items $t_1,t_2$ and $t_3$ from bottom to top that overlap the strip. Since the first two shifting steps have already occurred, $t_1$ must have its lower border at $0$, $t_2$ its upper border at $\height(B)-\nicefrac{1}{4}H$ and $t_3$ its upper border at $\height(B)+\nicefrac{1}{4}H$. Therefore, there can be at most two pseudo items. One between $t_1$ and $t_2$, and one between $t_2$ and $t_3$. We stack these pseudo items together such that their lower border begins at $\height(B)-\nicefrac{1}{4}H$. This is possible without overlapping $t_3$ due to the total height of the pseudo items. Taken together, both pseudo items have a height of at most $\height(B)-\height(t_1)-\height(t_2)-\height(t_3)$. The total area not occupied by tall items is $\height(B)-\height(t_1)-\height(t_2)-\height(t_3)+\nicefrac{1}{4}H$ as we have extended the box. The distance between $t_1$ and $t_2$ is at most $\nicefrac{1}{4}H$ since they are both tall items and $t_1$ has its lower border at $0$, while $t_2$ has its upper border at $\height(B)-\nicefrac{1}{4}H$. Thus, the distance between $t_2$ and $t_3$ is at least $\height(B)-\height(t_1)-\height(t_2)-\height(t_3)$. This is because we added $\nicefrac{1}{4}H$ to the height of the box, but the area between $t_1$ and $t_2$ has a lesser height. Therefore, we can fuse and place the pseudo items as desired.\newline\indent
         \textit{Case 2:} Next we consider the case where there is only one tall item~$t_1$ touching the bottom of the box, and one tall item with height at least $\nicefrac{1}{2}\height(B)$ touching the top of the box. Since there are two tall items placed above each other, $t_2$ must have a height of at most $\nicefrac{3}{4}\height(B)$. Furthermore, there is at most one pseudo item, and it is placed between $\nicefrac{1}{4}H$ and $\nicefrac{1}{2}\height(B)$. We shift this item up until its bottom border touches $\nicefrac{1}{2}\height(B)$. This does not construct any overlap because the distance between the top of $t_1$ and $\nicefrac{1}{2}\height(B)$ is less than $\nicefrac{1}{4}H$, which is the distance we extended the box by.\newline\indent
         After this, we further fuse each tall item $t$ with height greater than $\nicefrac{1}{2}\height(B)$ to all pseudo items placed below the item such that their bottom borders touch $\nicefrac{1}{2}\height(B)$. We do this by generating a new pseudo item with height $\nicefrac{1}{2}\height(B)+\nicefrac{1}{4}H$ and width $\width(t)$. Thus, all such pseudo items have their top border at $\height(B)\nicefrac{1}{4}H$ and their lower border at $\nicefrac{1}{2}\height(B)$.\newline\indent
          \textit{Case 3:} In the final case, we have two tall items~$t_1$ and $t_2$ and two pseudo items. One of the tall items~$t_1$ or $t_2$ touches the top or bottom of the box, while the other ends at $\height(B)-\nicefrac{1}{4}H$. Therefore, the distance between the tall items has to be smaller than $\nicefrac{1}{4}H$. Additionally, one of the pseudo items touches the top or bottom of the box, while the other is positioned between $t_1$ and $t_2$. Since the distance between $t_1$ and $t_2$ is less than $\nicefrac{1}{4}H$ one of the distances between the box border and the lower border of $t_1$ or the top border of $t_2$ is at least $\height(B)-\height(t_1)-\height(t_2)$. Therefore, we can fuse both pseudo items together. We place this fused pseudo item depending on where the original pseudo item touched the box border. If it was at the top, the fused pseudo item gets placed such that its top touches the top of the box. Otherwise, place the fused pseudo item such that its bottom border touches the bottom of the box.\newline\indent
         With the fusing of pseudo items out of the way, our box now has the desired property that all tall or pseudo items touch one of the horizontal lines at $0$, $\height(B)-\nicefrac{1}{4}H$ or $\height(B)+\nicefrac{1}{4}$. Furthermore, we can assume that each item $t$ touching the bottom of the box with $\height(t)>\nicefrac{1}{2}\height(B)$ has height $\height(B)-\nicefrac{1}{4}H$: There can be at most two items above $t$, one tall and one pseudo item. The pseudo item now has its lower border at $\height(B)-\nicefrac{1}{4}H$. Therefore, we can extend $t$ to the horizontal line $\height(B)-\nicefrac{1}{4}H$.\newline\indent
         		\begin{figure*}[t]
         	\centering
         	\begin{subfigure}[t]{0.4\textwidth}
         		\centering
         		\begin{tikzpicture}
         		\pgfmathsetmacro{\w}{2.7}
         		\pgfmathsetmacro{\h}{4.8}
         		\pgfmathsetmacro{\hprime}{4.5}
         		
         		\drawVerticalItem{0*\w}{0}{1*\w}{\hprime};
         		
         		\foreach \x/\y/\xx/\yy in {
         			-0.03 /0.02/0.025/0.33,
         			-0.025/0.35/0.05/0.625,
         			1.075 /0.05/0.95/0.34,
         			1.1   /0.36/0.975/0.62,
         			0.175 /0.02 /0.325/0.33,
         			0.35  /0.0  /0.45 /0.29,
         			0.45  /0.14 /0.5  /0.67,
         			0.5   /0.03 /0.575/0.32,
         			0.6   /0.01 /0.7  /0.36,
         			0.7   /0.19 /0.75 /0.71,
         			0.75  /0.0  /0.85 /0.3,
         			0.875 /0.01 /0.925/0.3,
         			0.3   /0.36 /0.4  /0.65,
         			0.9   /0.44 /0.975/0.78,
         			0.05  /0.32 /0.125/0.65,
         			0.075 /0.29 /0.15 /0.01,
         			0.15  /0.33 /0.175/0.63,
         			0.225 /0.34 /0.3  /0.62,
         			0.55  /0.34 /0.6  /0.69,
         			0.0   /0.7  /0.025/0.99,
         			0.025 /0.65 /0.075/0.96,
         			0.075 /0.66 /0.125/0.95,
         			0.125 /0.69 /0.2  /1,
         			0.2   /0.345/0.225/0.975,
         			0.225 /0.65 /0.25 /0.99,
         			0.25  /0.63 /0.275/0.97,
         			0.275 /0.67 /0.325/1,
         			0.325 /0.66 /0.35 /1,
         			0.35  /0.68 /0.375/1,
         			0.375 /0.67 /0.425/1,
         			0.475 /0.69 /0.575/1,
         			0.425 /0.33 /0.45 /0.95,
         			0.6   /0.39 /0.625/0.99,
         			0.825 /0.31 /0.9  /0.94
         		}
         		{
         			\drawTallItem{\x*\w}{\y*\hprime}{\xx*\w}{\yy*\hprime};
         		}
         		
         		\foreach \x/\y/\xx/\yy in {
         			0    /0   /0.025/0.35,
         			0.950/0   /1    /0.34,
         			0.975/0.34/1    /0.62
         		}{
         			\draw[pattern = north west lines] (\x*\w,\y*\hprime) rectangle (\xx*\w,\yy*\hprime);
         		}
         		
         		\foreach \y/\z in {
         			0.5*\hprime 		/ $\nicefrac{1}{2}\height(B)$,
         			0.25*\h     		/ $\nicefrac{1}{4}H$,
         			\hprime - 0.25*\h	/ $\height(B) - \nicefrac{1}{4}H$,
         			\hprime				/ $\height(B)$
         		}{
         			\draw[dotted] (-0.1*\w,\y)node[left]{\z} --(1.1*\w,\y);
         		}
         		
         		\end{tikzpicture}
         		\caption{The introduction of unmovable pseudo items (hatched area)}
         		\label{fig:sub:IntroducingPseudoItems}
         	\end{subfigure}
         	\hfill
         	\centering
         	\begin{subfigure}[t]{0.24\textwidth}
         		\centering
         		\begin{tikzpicture}
         		
         		\pgfmathsetmacro{\w}{2.7}
         		\pgfmathsetmacro{\h}{4.8}
         		\pgfmathsetmacro{\hprime}{4.5}
         		\draw (0*\w,0) rectangle (1*\w,\hprime);

         		\foreach \x/\y/\xx/\yy in {
         			-0.03 /0.02/0.025/0.33,
         			-0.025/0.35/0.05/0.625,
         			1.075 /0.05/0.95/0.34,
         			1.1   /0.36/0.975/0.62,
         			0.175 /0.0 /0.325/0.31,
         			0.35  /0.0 /0.45 /0.29,
         			0.45  /0.0 /0.5  /0.53,
         			0.5   /0.0 /0.575/0.29,
         			0.6   /0.0 /0.7  /0.35,
         			0.7   /0.0 /0.75 /0.52,
         			0.75  /0.0 /0.85 /0.3,
         			0.875 /0.0 /0.925/0.29,
         			0.075 /0.0 /0.15 /0.28,
         			0.05  /0.32 /0.125/0.65,
         			0.15  /0.33 /0.175/0.63,
         			0.225 /0.34 /0.3  /0.62,
         			0.3   /0.36 /0.4  /0.65,
         			0.55  /0.34 /0.6  /0.69,
         			0.0   /0.71 /0.025/1,
         			0.025 /0.69 /0.075/1,
         			0.075 /0.71 /0.125/1,
         			0.125 /0.69 /0.2  /1,
         			0.2   /0.37 /0.225/1,
         			0.225 /0.66 /0.25 /1,
         			0.25  /0.66 /0.275/1,
         			0.275 /0.67 /0.325/1,
         			0.325 /0.66 /0.35 /1,
         			0.35  /0.68 /0.375/1,
         			0.375 /0.67 /0.425/1,
         			0.475 /0.69 /0.575/1,
         			0.425 /0.38 /0.45 /1,
         			0.6   /0.40 /0.625/1,
         			0.825 /0.37 /0.9  /1,
         			0.9   /0.66 /0.975/1
         		}
         		{
         			\drawTallItem{\x*\w}{\y*\hprime}{\xx*\w}{\yy*\hprime};
         		}
         		
         		\foreach \x/\y/\xx/\yy in {
         			0    /0   /0.025/0.35,
         			0.950/0   /1    /0.34,
         			0.975/0.34/1    /0.62
         		}{
         			\draw[pattern = north west lines] (\x*\w,\y*\hprime) rectangle (\xx*\w,\yy*\hprime);
         		}

         		\foreach \x/\y/\xx/\yy in {
         			0.0  /0.625/0.025/0.71,
         			0.025/0.625/0.05 /0.69,
         			0.025/0.0  /0.05 /0.35,
         			0.05 /0.65 /0.075/0.69,
         			0.05 /0.0  /0.075/0.32,
         			0.075/0.65 /0.125/0.71,
         			0.075/0.28 /0.125/0.32,
         			0.125/0.28 /0.15 /0.69,
         			0.15 /0.63 /0.175/0.69,
         			0.15 /0.0  /0.175/0.33,
         			0.175/0.31 /0.2  /0.69,
         			0.2  /0.31 /0.225/0.37,
         			0.225/0.31 /0.25 /0.34,
         			0.225/0.62 /0.25 /0.66,
         			0.25 /0.31 /0.275/0.34,
         			0.25 /0.62 /0.275/0.66,
         			0.275/0.31 /0.3  /0.34,
         			0.275/0.62 /0.3  /0.67,
         			0.3  /0.31 /0.325/0.36,
         			0.3  /0.65 /0.325/0.67,
         			0.325/0.0  /0.35 /0.36,
         			0.325/0.65 /0.35 /0.66,
         			0.35 /0.29 /0.375/0.36,
         			0.35 /0.65 /0.375/0.68,
         			0.375/0.29 /0.4  /0.36,
         			0.375/0.65 /0.4  /0.67,
         			0.4  /0.29 /0.425/0.67,
         			0.425/0.29 /0.45 /0.38,
         			0.45 /0.53 /0.475/1,
         			0.475/0.53 /0.5  /0.69,
         			0.5  /0.29 /0.55 /0.69,
         			0.55 /0.29 /0.575/0.34,
         			0.575/0.0  /0.6  /0.34,
         			0.575/0.69 /0.6  /1,
         			0.6  /0.35 /0.625/0.4,
         			0.625/0.35 /0.7  /1,
         			0.7  /0.52 /0.75 /1,
         			0.75 /0.30 /0.825/1,
         			0.825/0.30 /0.85 /0.37,
         			0.85 /0.0  /0.875/0.37,
         			0.875/0.29 /0.9  /0.37,
         			0.9  /0.29 /0.925/0.66,
         			0.925/0.0  /0.95 /0.66,
         			0.95 /0.34 /0.975/0.66,
         			0.975/0.62 /1    /1
         		}
         		{
         			\drawVerticalItem{\x*\w}{\y*\hprime}{\xx*\w}{\yy*\hprime};
         		}

         		\foreach \y/\z in {
         			0.5*\hprime 		/ $\nicefrac{1}{2}\height(B)$,
         			0.25*\h     		/ $\nicefrac{1}{4}H$,
         			\hprime - 0.25*\h	/ $\nicefrac{1}{2}\height(B)$,
         			\hprime				/ $\height(B)$
         		}{
         			\draw[dotted] (-0.1*\w,\y) -- (1.1*\w,\y);
         		}
         		
         		
         		\end{tikzpicture}
         		\caption{The first shift and the introduction of pseudo items}
         		\label{fig:sub:TheFirstShift}
         	\end{subfigure}
         	\hfill
         	\begin{subfigure}[t]{0.24\textwidth}
         		\centering
         		\begin{tikzpicture}
         		
         		\pgfmathsetmacro{\w}{2.7}
         		\pgfmathsetmacro{\h}{4.8}
         		\pgfmathsetmacro{\hprime}{4.5}
         		
         		\draw (0*\w,0) rectangle (1*\w,\hprime +\h/4);
         		
         		\foreach \x/\y/\xx/\yy in {
         			-0.03 /0.00/0.025/0.31,
         			-0.025/0.35/0.05/0.625,
         			1.075 /0.00/0.95/0.29,
         			1.1   /0.29/0.975/0.55
         		}
         		{
         			\drawTallItem{\x*\w}{\y*\hprime}{\xx*\w}{\yy*\hprime};
         		}
         		
         		\foreach \x/\y/\xx/\yy in {
         			0    /0   /0.025/0.35,
         			0.950/0   /1    /0.29,
         			0.975/0.29/1    /0.55
         		}{
         			\draw[pattern = north west lines] (\x*\w,\y*\hprime) rectangle (\xx*\w,\yy*\hprime);
         		}
         		
         		\foreach \x/\y/\xx/\yy in {
         			0.175 /0.0 /0.325/0.31,
         			0.35  /0.0 /0.45 /0.29,
         			0.45  /0.0 /0.5  /0.53,
         			0.5   /0.0 /0.575/0.29,
         			0.6   /0.0 /0.7  /0.35,
         			0.7   /0.0 /0.75 /0.52,
         			0.75  /0.0 /0.85 /0.3,
         			0.875 /0.0 /0.925/0.29,
         			0.075 /0.0 /0.15 /0.28
         		}
         		{
         			\drawTallItem{\x*\w}{\y*\hprime}{\xx*\w}{\yy*\hprime};
         		}
         		
         		\foreach \x/\y/\xx/\yy in {
         			0.05 /0.67/0.125/1,
         			0.15 /0.70/0.175/1,
         			0.225/0.72/0.3  /1,
         			0.3  /0.71/0.4  /1,
         			0.55 /0.65/0.6  /1
         		}
         		{
         			\drawTallItem{\x*\w}{\y*\hprime-\h/4}{\xx*\w}{\yy*\hprime-\h/4};
         		}
         		
         		\foreach \x/\y/\xx/\yy in {
         			0.0   /0.71 /0.025/1,
         			0.025 /0.69 /0.075/1,
         			0.075 /0.71 /0.125/1,
         			0.125 /0.69 /0.2  /1,
         			0.2   /0.37 /0.225/1,
         			0.225 /0.66 /0.25 /1,
         			0.25  /0.66 /0.275/1,
         			0.275 /0.67 /0.325/1,
         			0.325 /0.66 /0.35 /1,
         			0.35  /0.68 /0.375/1,
         			0.375 /0.67 /0.425/1,
         			0.475 /0.69 /0.575/1,
         			0.425 /0.38 /0.45 /1,
         			0.6   /0.40 /0.625/1,
         			0.825 /0.37 /0.9  /1,
         			0.9   /0.66 /0.975/1
         		}
         		{
         			\drawTallItem{\x*\w}{\y*\hprime+\h/4}{\xx*\w}{\yy*\hprime+\h/4};
         		}

         		\foreach \x/\y/\xx/\yy in {
         			0.45 /0.53/0.475/1,
         			0.575/0.69/0.6  /1,
         			0.625/0.35/0.7  /1,
         			0.7  /0.52 /0.75 /1,
         			0.75 /0.30/0.825/1,
         			0.975/0.62/1    /1
         		}
         		{
         			\drawVerticalItem{\x*\w}{\y*\hprime+\h/4}{\xx*\w}{\yy*\hprime+\h/4};
         		}
         		
         		\foreach \x/\y/\xx/\yy in {
         			0.0  /1/0.025/1.085,
         			0.025/1/0.05 /1.065,
         			0.05 /1/0.075/1.04 ,
         			0.075/1/0.125/1.06 ,
         			0.15 /1/0.175/1.06 ,
         			0.225/1/0.25 /1.04 ,
         			0.25 /1/0.275/1.04 ,
         			0.275/1/0.3  /1.05 ,
         			0.3  /1/0.325/1.02 ,
         			0.325/1/0.35 /1.01 ,
         			0.35 /1/0.375/1.03 ,
         			0.375/1/0.4  /1.02 ,
         			0.475/1/0.5  /1.16 
         		}
         		{
         			\drawVerticalItem{\x*\w}{\y*\hprime-\h/4}{\xx*\w}{\yy*\hprime-\h/4};
         		}
         		
         		\foreach \x/\y/\xx/\yy in {
         			0.125/0.59/0.15 /1,
         			0.175/0.62/0.2  /1,
         			0.4  /0.62/0.425/1,
         			0.5  /0.60/0.55 /1,
         			0.9  /0.63/0.925/1,
         			0.95 /0.68/0.975/1
         		}
         		{
         			\drawVerticalItem{\x*\w}{\y*\hprime-\h/4}{\xx*\w}{\yy*\hprime-\h/4};
         		}
         		
         		\foreach \x/\y/\xx/\yy in {
         			0.075/0.28/0.125/0.32,
         			0.2  /0.31/0.225/0.37,
         			0.225/0.31/0.25 /0.34,
         			0.25 /0.31/0.275/0.34,
         			0.275/0.31/0.3  /0.34,
         			0.3  /0.31/0.325/0.36,
         			0.35 /0.29/0.375/0.36,
         			0.375/0.29/0.4  /0.36,
         			0.425/0.29/0.45 /0.38,
         			0.55 /0.29/0.575/0.34,
         			0.6  /0.35/0.625/0.4 ,
         			0.825/0.30/0.85 /0.37,
         			0.875/0.29/0.9  /0.37
         		}
         		{
         			\drawVerticalItem{\x*\w}{\y*\hprime}{\xx*\w}{\yy*\hprime};
         		}
         		
         		\foreach \x/\y/\xx/\yy in {
         			0.025/0.0/0.05 /0.35,
         			0.05 /0.0/0.075/0.32,
         			0.15 /0.0/0.175/0.33,
         			0.325/0.0/0.35 /0.36,
         			0.575/0.0/0.6  /0.34,
         			0.85 /0.0/0.875/0.37,
         			0.925/0.0/0.95 /0.66
         		}
         		{
         			\drawVerticalItem{\x*\w}{\y*\hprime}{\xx*\w}{\yy*\hprime};
         		}
         		
         		\foreach \y/\z in {
         			0.5*\hprime 		/ $\nicefrac{1}{2}\height(B)$,
         			0.25*\h     		/ $\nicefrac{1}{4}H$,
         			\hprime - 0.25*\h	/ $\nicefrac{1}{2}\height(B)$,
         			\hprime				/ $\height(B)$
         		}{
         			\draw[dotted] (-0.1*\w,\y) --(1.1*\w,\y);
         		}
         		
         		
         		\end{tikzpicture}
         		\caption{The second shift}
         		\label{fig:sub:TheSecondShift}
         	\end{subfigure}
         	\caption{An overview of the shifting steps. First we define immovable pseudo items; second we shift all the tall items assigned to the  horizontal line $\nicefrac{1}{4}H$ and $\height(B)- \nicefrac{1}{4}H$ to the bottom and top and introduce pseudo items; last we extend the box by $\nicefrac{1}{4}H$ such that all tall items touch one of three horizontal lines, and shift the immovable items to the bottom of the packing. These illustrations are taken, and minorly adjusted, from \cite{StripPacking54}.}
         	\label{fig:schiftGeneral1}
         \end{figure*}
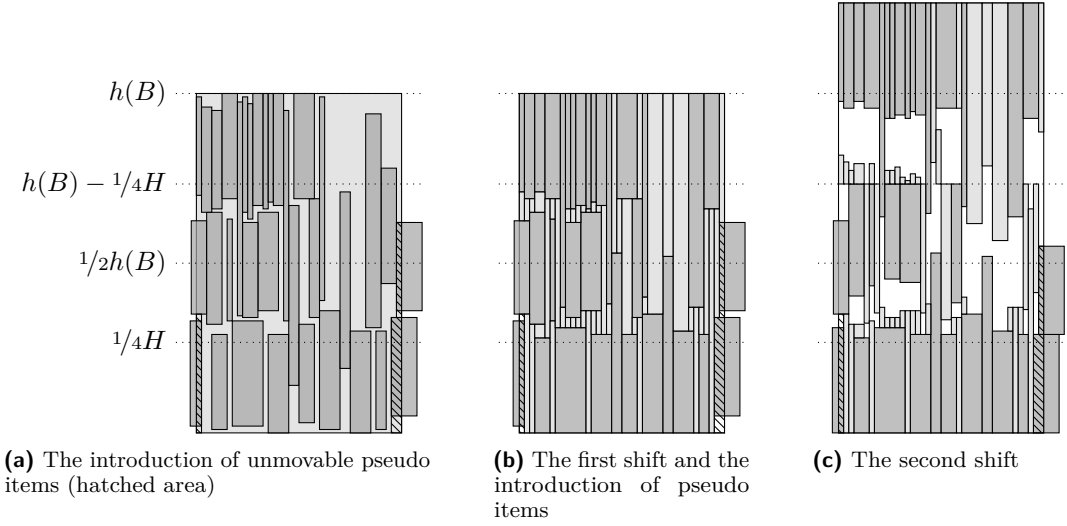
         After this shift, each movable item has one border at one of the horizontal lines $0$, $\height(B)-\nicefrac{1}{4}H$ or $\height(B)+\nicefrac{1}{4}$. Additionally, only items with height at least $\nicefrac{1}{2}\height(B)+\nicefrac{1}{4}H$ are crossing the line at $\height(B)-\nicefrac{1}{4}H$.\newline\indent
         \subparagraph*{Step 2: Reordering.} For now, assume there is no item with height $\height(B)$ in $B$. We handle the case where there is at least one of those later. In the following we reorder items step by step, by considering a constant number of smaller subareas of the box. We number these subareas from one to nine. We generate these subareas symmetrically on either side of the box, and call the $i$-th subarea $B_{r,i}$ or $B_{\ell,i}$ on the right or left respectively. We only describe the steps for the left side in the following. The right side gets handled analogously.\newline\indent
          \textit{Area $B_{\ell,1}$:} Consider the leftmost (pseudo) item $i_\ell$ with height $\nicefrac{1}{2}\height(B)$ in $B$ and let $i_r$ be the rightmost of these items. Left of $i_\ell$ inside $B$, there is no item intersecting the horizontal line $\height(B)-\nicefrac{1}{4}H$ since only (pseudo) items with height $\nicefrac{1}{2}\height(B)+\nicefrac{1}{4}H$ touching $\height(B)+\nicefrac{1}{4}H$ overlap this horizontal line. See \cref{fig:sub:AreaB1} for an illustration of this area. Therefore, each item left of $i_\ell$ above $\height(B)-\nicefrac{1}{4}H$ either touches $\height(B)-\nicefrac{1}{4}H$ with its lower border or $\height(B)+\nicefrac{1}{4}H$ with its upper border. Since there is no item intersecting the left box border at this height, we can sort the items left of $i_\ell$ touching $\height(B)+\nicefrac{1}{4}H$ in descending order and the items touching $\height(B)-\nicefrac{1}{4}H$ in ascending order of heights, without constructing any overlap. We may, however, have separated some vertical items such that they are no longer placed contiguously. These items get handled later on. We do the same for the area right of $i_r$ and call these areas~$B_{\ell,1}$ and $B_{r,1}$.\newline\indent
          \textit{Area $B_{\ell,2}$:} We draw a vertical line at the left border of $i_{\ell}$ to the bottom of the box. If this line cuts a tall item $\ell_b$ at the bottom, it defines a new unmovable item. Let us consider the area between the line and the left box border below $\height(B)-\nicefrac{1}{4}H$. We call this area $B_{\ell,2}$. In $B_{\ell,2}$, each (partial) item either touches the horizontal line at $0$ or $\height(B)-\nicefrac{1}{4}H$ and on each side, there are at most two tall unmovable items. We extend the unmovable item intersecting $\height(B)/2$ to the top, such that it touches the horizontal line at $\height(B)-\nicefrac{1}{4}H$ and reorder this box according to \cref{lem:ReorderTallVertBox12}. We do the same for the right side of $i_r$. The area is illustrated in \cref{fig:sub:AreaB2}. \newline\indent
         \textit{Cases for $i_\ell$ and $i_r$:} If $i_\ell$ and $i_r$ do not exist, there are no items overlapping the horizontal line $\height(B)-\nicefrac{1}{4}H$ and we can partition the box in two areas $B_1$ and $B_2$. We reorder $B_1$ as described for $B_{\ell,1}$ and $B_2$ as described for $B_{\ell,2}$. In the case that $i_\ell$ is the same item as $i_r$, we introduce the areas $B_{\ell,1},$ $B_{\ell,2}$, $B_{r,1}$ and  $B_{r,2}$ as described and order the items completely below $i_\ell$ such that items with the same height are positioned next to each other.\newline\indent
         \begin{figure*}[t]
         	\centering
         	\begin{subfigure}[t]{0.37\textwidth}
         		\centering
         		\begin{tikzpicture}
         		\pgfmathsetmacro{\w}{2.7}
         		\pgfmathsetmacro{\h}{4.8}
         		\pgfmathsetmacro{\hprime}{4.5}

         		\draw (0*\w,0) rectangle (1*\w,\hprime +\h/4);
         		
         		\foreach \x/\y/\xx/\yy in {
         			-0.03 /0.00/0.025/0.31,
         			-0.025/0.35/0.05/0.625,
         			1.075 /0.00/0.95/0.29,
         			1.1   /0.29/0.975/0.55
         		}
         		{
         			\drawTallItem{\x*\w}{\y*\hprime}{\xx*\w}{\yy*\hprime};
         		}
         		
         		\foreach \x/\y/\xx/\yy in {
         			0    /0   /0.025/0.35,
         			0.950/0   /1    /0.29,
         			0.975/0.29/1    /0.55
         		}{
         			\draw[pattern = north west lines] (\x*\w,\y*\hprime) rectangle (\xx*\w,\yy*\hprime);
         		}
         		
         		\foreach \x/\y/\xx/\yy in {
         			0.175 /0.0 /0.325/0.31,
         			0.35  /0.0 /0.45 /0.29,
         			0.45  /0.0 /0.5  /0.53,
         			0.5   /0.0 /0.575/0.29,
         			0.6   /0.0 /0.7  /0.35,
         			0.7   /0.0 /0.75 /0.52,
         			0.75  /0.0 /0.85 /0.3,
         			0.875 /0.0 /0.925/0.29,
         			0.075 /0.0 /0.15 /0.28
         		}
         		{
         			\drawTallItem{\x*\w}{\y*\hprime}{\xx*\w}{\yy*\hprime};
         		}
         		
         		\foreach \x/\y/\xx/\yy in {
         			0.05 /0.67/0.125/1,
         			0.15 /0.70/0.175/1,
         			0.225/0.72/0.3  /1,
         			0.3  /0.71/0.4  /1,
         			0.55 /0.65/0.6  /1
         		}
         		{
         			\drawTallItem{\x*\w}{\y*\hprime-\h/4}{\xx*\w}{\yy*\hprime-\h/4};
         		}
         		
         		\foreach \x/\y/\xx/\yy in {
         			0.0   /0.71 /0.025/1,
         			0.025 /0.69 /0.075/1,
         			0.075 /0.71 /0.125/1,
         			0.125 /0.69 /0.2  /1,
         			0.2   /0.37 /0.225/1,
         			0.225 /0.66 /0.25 /1,
         			0.25  /0.66 /0.275/1,
         			0.275 /0.67 /0.325/1,
         			0.325 /0.66 /0.35 /1,
         			0.35  /0.68 /0.375/1,
         			0.375 /0.67 /0.425/1,
         			0.475 /0.69 /0.575/1,
         			0.425 /0.38 /0.45 /1,
         			0.6   /0.40 /0.625/1,
         			0.825 /0.37 /0.9  /1,
         			0.9   /0.66 /0.975/1
         		}
         		{
         			\drawTallItem{\x*\w}{\y*\hprime+\h/4}{\xx*\w}{\yy*\hprime+\h/4};
         		}
         		
         		\foreach \x/\y/\xx/\yy in {
         			0.45 /0.53/0.475/1,
         			0.575/0.69/0.6  /1,
         			0.625/0.35/0.7  /1,
         			0.7  /0.52/0.75 /1,
         			0.75 /0.30/0.825/1,
         			0.975/0.62/1    /1
         		}
         		{
         			\drawVerticalItem{\x*\w}{\y*\hprime+\h/4}{\xx*\w}{\yy*\hprime+\h/4};
         		}
         		
         		\foreach \x/\y/\xx/\yy in {
         			0.0  /1/0.025/1.085,
         			0.025/1/0.05 /1.065,
         			0.075/1/0.125/1.06 ,
         			0.075/1.06/0.125/1.10,
         			0.225/1/0.25 /1.04 ,
         			0.225/1.04/0.25 /1.07,
         			0.25 /1/0.275/1.04 ,
         			0.25 /1.04/0.275/1.07,
         			0.275/1/0.3  /1.05 ,
         			0.275/1.05/0.3  /1.08,
         			0.3  /1/0.325/1.02 ,
         			0.3  /1.02/0.325/1.07,
         			0.35 /1/0.375/1.03 ,
         			0.35 /1.03/0.375/1.10,
         			0.375/1/0.4  /1.02 ,
         			0.375/1.02/0.4  /1.09,
         			0.475/1/0.5  /1.16 ,
         			0.55 /1/0.575/1.05
         		}
         		{
         			\drawVerticalItem{\x*\w}{\y*\hprime-\h/4}{\xx*\w}{\yy*\hprime-\h/4};
         		}
         		
         		\foreach \x/\y/\xx/\yy in {
         			0.125/0.59/0.15 /1,
         			0.175/0.62/0.2  /1,
         			0.4  /0.62/0.425/1,
         			0.5  /0.60/0.55 /1,
         			0.9  /0.63/0.925/1,
         			0.95 /0.68/0.975/1
         		}
         		{
         			\drawVerticalItem{\x*\w}{\y*\hprime-\h/4}{\xx*\w}{\yy*\hprime-\h/4};
         		}
         		
         		\foreach \x/\y/\xx/\yy in {
         			0.2  /1/0.225/1.06,
         			0.425/1/0.45 /1.09,
         			0.6  /1/0.625/1.05 ,
         			0.825/1/0.85 /1.07,
         			0.875/1/0.9  /1.08
         		}
         		{
         			\drawVerticalItem{\x*\w}{\y*\hprime-\hprime/2}{\xx*\w}{\yy*\hprime-\hprime/2};
         		}
         		
         		\foreach \x/\y/\xx/\yy in {
         			0.025/0.0 /0.05 /0.35,
         			0.05 /0.0 /0.075/0.32,
         			0.05 /0.32/0.075/0.36,
         			0.15 /0.0 /0.175/0.33,
         			0.15 /0.33/0.175/0.39,
         			0.325/0.0 /0.35 /0.36,
         			0.325/0.36/0.35 /0.37,
         			0.575/0.0 /0.6  /0.34,
         			0.85 /0.0 /0.875/0.37,
         			0.925/0.0 /0.95 /0.66
         		}
         		{
         			\drawVerticalItem{\x*\w}{\y*\hprime}{\xx*\w}{\yy*\hprime};
         		}
         		
         		\foreach \x/\y/\xx/\yy in {
         			0.2  /0.225,
         			0.425/0.45 ,
         			0.6  /0.625,
         			0.825/0.9
         		}{
         			\draw[pattern = north east lines,pattern color = gray] (\x*\w,0.5*\hprime) rectangle (\xx*\w,1*\hprime+\h/4);
         		}
         		
         		\foreach \y/\z in {
         			0.5*\hprime 		/ $\nicefrac{1}{2}\height(B)$,
         			0.25*\h     		/ $\nicefrac{1}{4}H$,
         			\hprime - 0.25*\h	/ $\height(B) - \nicefrac{1}{4}H$,
         			\hprime				/ $\height(B)$,
         			\hprime + 0.25*\h	/ $\height(B) + \nicefrac{1}{4}H$
         		}{
         			\draw[dotted] (-0.1*\w,\y)node[left]{\z} -- (1.1*\w,\y);
         		}
         		
         		
         		\end{tikzpicture}
         		\caption{Fusing the pseudo items (light gray hatched area)}
         		\label{fig:sub:FusingPseudoItems}
         	\end{subfigure}
         	\hfill
         	\begin{subfigure}[t]{0.3\textwidth}
         		\centering
         		\begin{tikzpicture}
         		
         		\pgfmathsetmacro{\w}{2.7}
         		\pgfmathsetmacro{\h}{4.8}
         		\pgfmathsetmacro{\hprime}{4.5}
         		\draw (0*\w,0) rectangle (1*\w,\hprime +\h/4);

         		\foreach \x/\y/\xx/\yy in {
         			-0.03 /0.00/0.025/0.31,
         			-0.025/0.35/0.05/0.625,
         			1.075 /0.00/0.95/0.29,
         			1.1   /0.29/0.975/0.55
         		}
         		{
         			\drawTallItem{\x*\w}{\y*\hprime}{\xx*\w}{\yy*\hprime};
         		}
         		
         		\foreach \x/\y/\xx/\yy in {
         			0    /0   /0.025/0.35,
         			0.950/0   /1    /0.29,
         			0.975/0.29/1    /0.55
         		}{
         			\draw[pattern = north west lines] (\x*\w,\y*\hprime) rectangle (\xx*\w,\yy*\hprime);
         		}
         		
         		\foreach \x/\y/\xx/\yy in {
         			0.175 /0.0 /0.325/0.31,
         			0.35  /0.0 /0.45 /0.29,
         			0.45  /0.0 /0.5  /0.53,
         			0.5   /0.0 /0.575/0.29,
         			0.6   /0.0 /0.7  /0.35,
         			0.7   /0.0 /0.75 /0.52,
         			0.75  /0.0 /0.85 /0.3,
         			0.875 /0.0 /0.925/0.29,
         			0.075 /0.0 /0.15 /0.28
         		}
         		{
         			\drawTallItem{\x*\w}{\y*\hprime}{\xx*\w}{\yy*\hprime};
         		}
         		
         		\foreach \x/\y/\xx/\yy in {
         			0.05 /0.67/0.125/1,
         			0.15 /0.70/0.175/1,
         			0.225/0.72/0.3  /1,
         			0.3  /0.71/0.4  /1,
         			0.55 /0.65/0.6  /1
         		}
         		{
         			\drawTallItem{\x*\w}{\y*\hprime-\h/4}{\xx*\w}{\yy*\hprime-\h/4};
         		}
         		
         		\foreach \x/\y/\xx/\yy in {
         			0.0   /0.71 /0.025/1,
         			0.025 /0.69 /0.075/1,
         			0.075 /0.71 /0.125/1,
         			0.125 /0.69 /0.2  /1,
         			0.2   /0.37 /0.225/1,
         			0.225 /0.66 /0.25 /1,
         			0.25  /0.66 /0.275/1,
         			0.275 /0.67 /0.325/1,
         			0.325 /0.66 /0.35 /1,
         			0.35  /0.68 /0.375/1,
         			0.375 /0.67 /0.425/1,
         			0.475 /0.69 /0.575/1,
         			0.425 /0.38 /0.45 /1,
         			0.6   /0.40 /0.625/1,
         			0.825 /0.37 /0.9  /1,
         			0.9   /0.66 /0.975/1
         		}
         		{
         			\drawTallItem{\x*\w}{\y*\hprime+\h/4}{\xx*\w}{\yy*\hprime+\h/4};
         		}
         		
         		\foreach \x/\y/\xx/\yy in {
         			0.45 /0.53/0.475/1,
         			0.575/0.69/0.6  /1,
         			0.625/0.35/0.7  /1,
         			0.7  /0.52/0.75 /1,
         			0.75 /0.30/0.825/1,
         			0.975/0.62/1    /1
         		}
         		{
         			\drawVerticalItem{\x*\w}{\y*\hprime+\h/4}{\xx*\w}{\yy*\hprime+\h/4};
         		}
         		
         		\foreach \x/\y/\xx/\yy in {
         			0.0  /1/0.025/1.085,
         			0.025/1/0.05 /1.065,
         			0.075/1/0.125/1.10,
         			0.225/1/0.25 /1.07,
         			0.25 /1/0.275/1.07,
         			0.275/1/0.3  /1.08,
         			0.3  /1/0.325/1.07,
         			0.35 /1/0.375/1.10,
         			0.375/1/0.4  /1.09,
         			0.475/1/0.5  /1.16,
         			0.55 /1/0.575/1.05
         		}
         		{
         			\drawVerticalItem{\x*\w}{\y*\hprime-\h/4}{\xx*\w}{\yy*\hprime-\h/4};
         		}
         		
         		\foreach \x/\y/\xx/\yy in {
         			0.125/0.59/0.15 /1,
         			0.175/0.62/0.2  /1,
         			0.4  /0.62/0.425/1,
         			0.5  /0.60/0.55 /1,
         			0.9  /0.63/0.925/1,
         			0.95 /0.68/0.975/1
         		}
         		{
         			\drawVerticalItem{\x*\w}{\y*\hprime-\h/4}{\xx*\w}{\yy*\hprime-\h/4};
         		}
         		
         		\foreach \x/\y/\xx/\yy in {
         			0.2  /1/0.225/1.06,
         			0.425/1/0.45 /1.09,
         			0.6  /1/0.625/1.05 ,
         			0.825/1/0.85 /1.07,
         			0.875/1/0.9  /1.08
         		}
         		{
         			\drawVerticalItem{\x*\w}{\y*\hprime-\hprime/2}{\xx*\w}{\yy*\hprime-\hprime/2};
         		}
         		
         		\foreach \x/\y/\xx/\yy in {
         			0.025/0.0 /0.05 /0.35,
         			0.05 /0.0 /0.075/0.36,
         			0.15 /0.0 /0.175/0.39,
         			0.325/0.0 /0.35 /0.37,
         			0.575/0.0 /0.6  /0.34,
         			0.85 /0.0 /0.875/0.37,
         			0.925/0.0 /0.95 /0.66
         		}
         		{
         			\drawVerticalItem{\x*\w}{\y*\hprime}{\xx*\w}{\yy*\hprime};
         		}
         		
         		\foreach \x/\y/\xx/\yy in {
         			0.2  /0.225,
         			0.425/0.45 ,
         			0.6  /0.625,
         			0.825/0.9
         		}{
         			\draw[pattern = north east lines, pattern color = gray] (\x*\w,0.5*\hprime) rectangle (\xx*\w,1*\hprime+\h/4);
         		}
         		
         		\foreach \y/\z in {
         			0.5*\hprime 		/ $\nicefrac{1}{2}\height(B)$,
         			0.25*\h     		/ $\nicefrac{1}{4}H$,
         			\hprime - 0.25*\h	/ $\height(B) - \nicefrac{1}{4}H$,
         			\hprime				/ $\height(B)$,
         			\hprime + 0.25*\h	/ $\height(B) + \nicefrac{1}{4}H$
         		}{
         			\draw[dotted] (-0.1*\w,\y) -- (1.1*\w,\y) ;
         		}
         		
         		\draw[very thick, <-] (0.2075*\w,\hprime + \h/4) -- (0.2075*\w,\hprime + 1.1* \h/4) node[above]{$i_\ell$};
         		\draw[very thick, <-] (0.865*\w,\hprime + \h/4) -- (0.865*\w,\hprime + 1.1* \h/4) node[above]{$i_r$};
         		
         		\draw[red] (0.2*\w,0) --(0.2 *\w,\hprime +\h/4);
         		\draw[red] (0.9*\w,0) --(0.9 *\w,\hprime +\h/4);
         		
         		\draw[red, very thick, pattern = dots, pattern color = gray] (0*\w,\hprime-\h/4) rectangle (0.2 *\w,\hprime +\h/4);
         		\draw[red] (0.00 *\w,\hprime) -- (-0.1 *\w,\hprime +\h/8) node[left,black]{$B_{\ell,1}$};
         		
         		\draw[red, very thick, pattern = dots, pattern color = gray] (0.9*\w,\hprime-\h/4) rectangle (1 *\w,\hprime +\h/4);
         		\draw[red] (1.00 *\w,\hprime) --(1.15 *\w,\hprime +\h/8) node[above,black]{$B_{r,1}$};
         		

         		\end{tikzpicture}
         		\caption{The area $B_{\ell,1}$ and $B_{r,1}$}
         		\label{fig:sub:AreaB1}
         	\end{subfigure}
         	\hfill
         	\begin{subfigure}[t]{0.3\textwidth}
         		\centering
         		\begin{tikzpicture}
         		\pgfmathsetmacro{\w}{2.7}
         		\pgfmathsetmacro{\h}{4.8}
         		\pgfmathsetmacro{\hprime}{4.5}
         		\draw (0*\w,0) rectangle (1*\w,\hprime +\h/4);
         		
         		\draw (0*\w,0) rectangle (1*\w,\hprime +\h/4);

         		\foreach \x/\y/\xx/\yy in {
         			-0.03 /0.00/0.025/0.31,
         			-0.025/0.35/0.05/0.625,
         			1.075 /0.00/0.95/0.29,
         			1.1   /0.29/0.975/0.55
         		}
         		{
         			\drawTallItem{\x*\w}{\y*\hprime}{\xx*\w}{\yy*\hprime};
         		}
         		
         		\foreach \x/\y/\xx/\yy in {
         			0    /0   /0.025/0.35,
         			0.950/0   /1    /0.29,
         			0.975/0.29/1    /0.55
         		}{
         			\draw[pattern = north west lines] (\x*\w,\y*\hprime) rectangle (\xx*\w,\yy*\hprime);
         		}
         		
         		\foreach \x/\y/\xx/\yy in {
         			0.175 /0.0 /0.325/0.31,
         			0.35  /0.0 /0.45 /0.29,
         			0.45  /0.0 /0.5  /0.53,
         			0.5   /0.0 /0.575/0.29,
         			0.6   /0.0 /0.7  /0.35,
         			0.7   /0.0 /0.75 /0.52,
         			0.75  /0.0 /0.85 /0.3,
         			0.875 /0.0 /0.925/0.29,
         			0.075 /0.0 /0.15 /0.28
         		}
         		{
         			\drawTallItem{\x*\w}{\y*\hprime}{\xx*\w}{\yy*\hprime};
         		}
         		
         		\foreach \x/\y/\xx/\yy in {
         			0.05 /0.67/0.125/1,
         			0.15 /0.70/0.175/1,
         			0.225/0.72/0.3  /1,
         			0.3  /0.71/0.4  /1,
         			0.55 /0.65/0.6  /1
         		}
         		{
         			\drawTallItem{\x*\w}{\y*\hprime-\h/4}{\xx*\w}{\yy*\hprime-\h/4};
         		}
         		
         		\foreach \x/\y/\xx/\yy in {
         			0.0   /0.71 /0.025/1,
         			0.025 /0.71 /0.075/1,
         			0.075 /0.69 /0.125/1,
         			0.125 /0.69 /0.2  /1,
         			0.2   /0.37 /0.225/1,
         			0.225 /0.66 /0.25 /1,
         			0.25  /0.66 /0.275/1,
         			0.275 /0.67 /0.325/1,
         			0.325 /0.66 /0.35 /1,
         			0.35  /0.68 /0.375/1,
         			0.375 /0.67 /0.425/1,
         			0.475 /0.69 /0.575/1,
         			0.425 /0.38 /0.45 /1,
         			0.6   /0.40 /0.625/1,
         			0.825 /0.37 /0.9  /1,
         			0.9   /0.66 /0.975/1
         		}
         		{
         			\drawTallItem{\x*\w}{\y*\hprime+\h/4}{\xx*\w}{\yy*\hprime+\h/4};
         		}
         		
         		\foreach \x/\y/\xx/\yy in {
         			0.45 /0.53/0.475/1,
         			0.575/0.69/0.6  /1,
         			0.625/0.35/0.7  /1,
         			0.7  /0.52/0.75 /1,
         			0.75 /0.30/0.825/1,
         			0.975/0.62/1    /1
         		}
         		{
         			\drawVerticalItem{\x*\w}{\y*\hprime+\h/4}{\xx*\w}{\yy*\hprime+\h/4};
         		}
         		
         		\foreach \x/\y/\xx/\yy in {
         			0.000/1/0.05/1.10,
         			0.05 /1/0.075/1.085,
         			0.075/1/0.1 /1.065,
         			0.225/1/0.25 /1.07,
         			0.25 /1/0.275/1.07,
         			0.275/1/0.3  /1.08,
         			0.3  /1/0.325/1.07,
         			0.35 /1/0.375/1.10,
         			0.375/1/0.4  /1.09,
         			0.475/1/0.5  /1.16,
         			0.55 /1/0.575/1.05
         		}
         		{
         			\drawVerticalItem{\x*\w}{\y*\hprime-\h/4}{\xx*\w}{\yy*\hprime-\h/4};
         		}
         		
         		\foreach \x/\y/\xx/\yy in {
         			0.125/0.59/0.15 /1,
         			0.175/0.62/0.2  /1,
         			0.4  /0.62/0.425/1,
         			0.5  /0.60/0.55 /1,
         			0.9  /0.63/0.925/1,
         			0.95 /0.68/0.975/1
         		}
         		{
         			\drawVerticalItem{\x*\w}{\y*\hprime-\h/4}{\xx*\w}{\yy*\hprime-\h/4};
         		}
         		
         		\foreach \x/\y/\xx/\yy in {
         			0.2  /1/0.225/1.06,
         			0.425/1/0.45 /1.09,
         			0.6  /1/0.625/1.05 ,
         			0.825/1/0.85 /1.07,
         			0.875/1/0.9  /1.08
         		}
         		{
         			\drawVerticalItem{\x*\w}{\y*\hprime-\hprime/2}{\xx*\w}{\yy*\hprime-\hprime/2};
         		}
         		
         		\foreach \x/\y/\xx/\yy in {
         			0.025/0.0 /0.05 /0.35,
         			0.05 /0.0 /0.075/0.36,
         			0.15 /0.0 /0.175/0.39,
         			0.325/0.0 /0.35 /0.37,
         			0.575/0.0 /0.6  /0.34,
         			0.85 /0.0 /0.875/0.37,
         			0.925/0.0 /0.95 /0.66
         		}
         		{
         			\drawVerticalItem{\x*\w}{\y*\hprime}{\xx*\w}{\yy*\hprime};
         		}
         		
         		\foreach \x/\y/\xx/\yy in {
         			0.2  /0.225,
         			0.425/0.45 ,
         			0.6  /0.625,
         			0.825/0.9
         		}{
         			\draw[pattern = north east lines, pattern color = gray] (\x*\w,0.5*\hprime) rectangle (\xx*\w,1*\hprime+\h/4);
         		}
         		
         		\foreach \y/\z in {
         			0.5*\hprime 		/ $\nicefrac{1}{2}\height(B)$,
         			0.25*\h     		/ $\nicefrac{1}{4}H$,
         			\hprime - 0.25*\h	/ $\height(B) - \nicefrac{1}{4}H$,
         			\hprime				/ $\height(B)$,
         			\hprime + 0.25*\h	/ $\height(B) + \nicefrac{1}{4}H$
         		}{
         			\draw[dotted] (-0.1*\w,\y) -- (1.1*\w,\y) ;
         		}
         		
         		\draw[very thick] (0*\w,\hprime-\h/4) rectangle (0.2 *\w,\hprime +\h/4);
         		\draw[fill = white, opacity = 0.5] (0*\w,\hprime-\h/4) rectangle (0.2 *\w,\hprime +\h/4);
         		
         		\draw[very thick] (0.9*\w,\hprime-\h/4) rectangle (1 *\w,\hprime +\h/4);
         		\draw[fill = white, opacity = 0.5] (0.9*\w,\hprime-\h/4) rectangle (1 *\w,\hprime +\h/4);
         		
         		\draw[red, very thick,pattern = dots,pattern color = gray] (0*\w,\hprime-\h/4) rectangle node[midway,opacity =1](A){} (0.2 *\w,0);
         		\draw[thick] (A) --(-0.2*\w,3*\hprime/6) node[above]{$B_{\ell,2}$};
         		\draw[red, very thick,pattern = dots,pattern color = gray] (0.9*\w,\hprime-\h/4) rectangle (1 *\w,0);
         		\draw[thick] (1*\w,\h/4) --(1.2 *\w,3*\hprime/8) node[above]{$B_{r,2}$};
         		
         		\end{tikzpicture}
         		\caption{The area $B_{\ell,2}$ and $B_{r,2}$}
         		\label{fig:sub:AreaB2}
         	\end{subfigure}			
         	\caption{We fuse pseudo items as discussed in Step 1.5. Then we begin defining areas as in \cref{lem:ReorderTallVertBox34}. In the middle, we define the areas $B_{\ell,1}$ and $B_{r,1}$, to the right the areas $B_{\ell,2},B_{r,2}$. Each area is packed in a structured manner in the image following its introduction, i.e. $B_{\ell,1}$ is structured in the right image.}
         	\label{fig:GenralReordering1}
         \end{figure*}
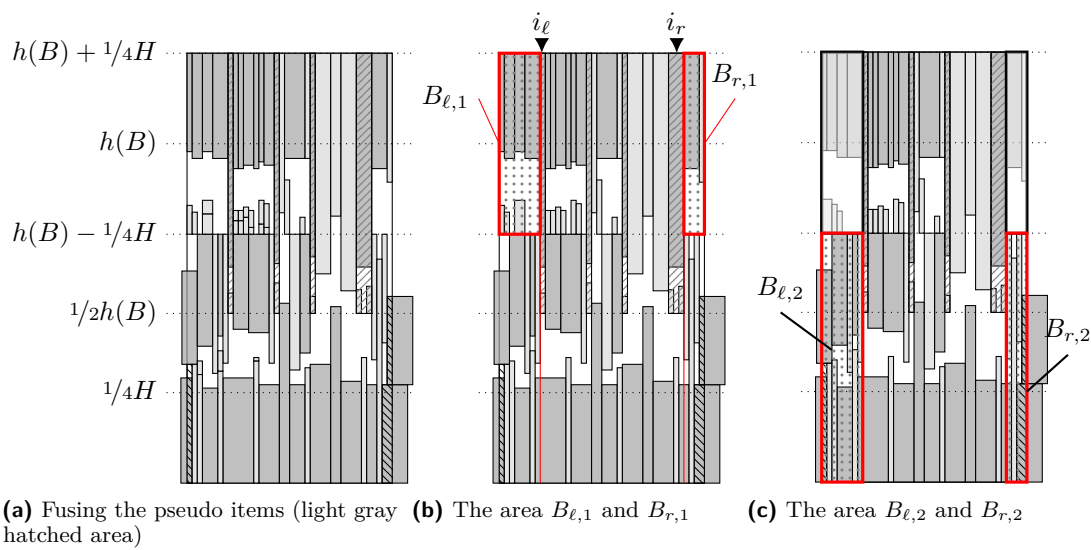
          \textit{Area $B_{\ell,3}$:} Next, we look at the area between the left border of $i_\ell$ and the right border of $i_r$. For an example of this area, see \cref{fig:sub:AreaB3}. Let $r(i)$ be the right border of an item $i$. If $r(\ell_b)$ is to the right of $r(i_\ell)$, we draw a vertical line at $r(\ell_b)$, called $L_1$. If $L_1$ intersects a tall item with upper border at $\height(B)-\nicefrac{1}{4}H$, we call this item $\ell_m$. Left of $L_1$ and right of $i_\ell$, we shift up each item touching $\height(B)-\nicefrac{1}{4}H$ with its top, including $\ell_m$, such that its lower border touches $\nicefrac{1}{2}\height(B)$ and shift down each pseudo item touching $\height(B)-\nicefrac{1}{4}H$ with its lower border such that it touches $\nicefrac{1}{2}\height(B)$ with its upper border. All pseudo items right of $L_1$ above $\ell_m$ are shifted, such that they touch the top of $\ell_m$ with their bottom. No pseudo item is intersected by the line $L_1$.
         \begin{claim}
             After this shift no item overlaps another item.
         \end{claim}
         \begin{proof}
             Consider an item $i$ that was shifted up such that its lower border is at $\nicefrac{1}{2}\height(B)$. Note that the distance between the upper border of $\ell_b$ and $\nicefrac{1}{2}\height(B)$ is less than $\nicefrac{1}{4}H$ because the upper border of $\ell_b$ is above $\nicefrac{1}{4}H$ by definition. Hence, there has to be some free space left between the upper border of $i$ and the lower border of each item above since we added $\nicefrac{1}{4}H$ to the packing height.\newline\indent
             Now consider an item $i'$ that was shifted down such that its top border is at $\nicefrac{1}{2}\height(B)$. Above this item, there has to be a tall item $i''$ that has its lower border at $\nicefrac{1}{2}\height(B)$, which has a height greater than $\nicefrac{1}{4}H$. The item $i'''$ above $i''$, an item with its top border at $\height(B)+\nicefrac{1}{4}H$, has a height larger than $\nicefrac{1}{4}H$ as well, since all items with their top borders at $\height(B)+\nicefrac{1}{4}H$ have this property. Therefore, the vertical distance between $i''$ and $i'''$ is smaller than $\nicefrac{1}{4}H$. Since we have added $\nicefrac{1}{4}H$ to the packing height, the vertical distance between the bottom of $i'$ and the top of $\ell_b$ has to be larger than zero. This, in turn, means that no item parts can overlap each other vertically.
         \end{proof}
                  	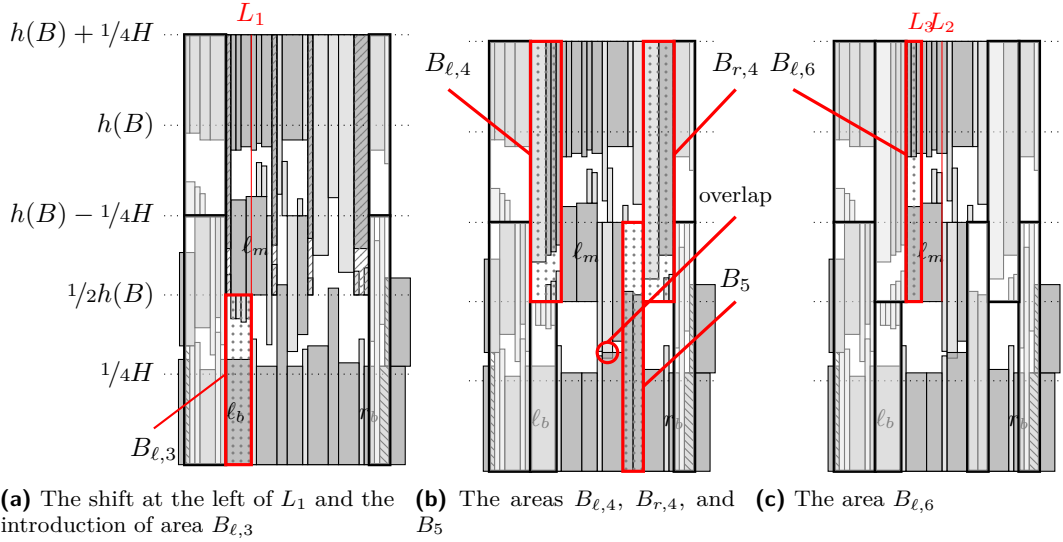
\begin{figure*}[t]
         	\centering
         	\begin{subfigure}[t]{0.37\textwidth}
         		\centering
         		\begin{tikzpicture}
         		\pgfmathsetmacro{\w}{2.7}
         		\pgfmathsetmacro{\h}{4.8}
         		\pgfmathsetmacro{\hprime}{4.5}
         		
         		\draw (0*\w,0) rectangle (1*\w,\hprime +\h/4);
         		
         		\foreach \x/\y/\xx/\yy in {
         			-0.03 /0.00/0.025/0.31,
         			-0.025/0.35/0.05/0.625,
         			1.075 /0.00/0.95/0.29,
         			1.1   /0.29/0.975/0.55
         		}
         		{
         			\drawTallItem{\x*\w}{\y*\hprime}{\xx*\w}{\yy*\hprime};
         		}
         		
         		\foreach \x/\y/\xx/\yy in {
         			0    /0   /0.025/0.35,
         			0.950/0   /1    /0.29,
         			0.975/0.29/1    /0.55
         		}{
         			\draw[pattern = north west lines] (\x*\w,\y*\hprime) rectangle (\xx*\w,\yy*\hprime);
         		}
         		
         		\foreach \x/\y/\xx/\yy in {
         			0.05 /0.67/0.125/1,
         			0.15 /0.70/0.175/1,
         			0.55 /0.65/0.6  /1
         		}
         		{
         			\drawTallItem{\x*\w}{\y*\hprime-\h/4}{\xx*\w}{\yy*\hprime-\h/4};
         		}
         		
         		\foreach \x/\y/\xx/\yy/\z in {
         			0.225/1/0.3  /1.28/,
         			0.3  /1/0.4  /1.29/$\ell_m$
         		}
         		{
         			\drawTallItem[\small \z]{\x*\w}{\y*\hprime-0.5*\hprime}{\xx*\w}{\yy*\hprime- 0.5*\hprime};
         		}
         		\foreach \x/\y/\xx/\yy in {
         			0.0   /0.71 /0.025/1,
         			0.025 /0.71 /0.075/1,
         			0.075 /0.69 /0.125/1,
         			0.125 /0.69 /0.2  /1,
         			0.2   /0.37 /0.225/1,
         			0.225 /0.66 /0.25 /1,
         			0.25  /0.66 /0.275/1,
         			0.275 /0.67 /0.325/1,
         			0.325 /0.66 /0.35 /1,
         			0.35  /0.68 /0.375/1,
         			0.375 /0.67 /0.425/1,
         			0.475 /0.69 /0.575/1,
         			0.425 /0.38 /0.45 /1,
         			0.6   /0.40 /0.625/1,
         			0.825 /0.37 /0.9  /1,
         			0.9   /0.66 /0.975/1
         		}
         		{
         			\drawTallItem{\x*\w}{\y*\hprime+\h/4}{\xx*\w}{\yy*\hprime+\h/4};
         		}
         		
         		\foreach \x/\y/\xx/\yy in {
         			0.45 /0.53/0.475/1,
         			0.575/0.69/0.6  /1,
         			0.625/0.35/0.7  /1,
         			0.7  /0.52/0.75 /1,
         			0.75 /0.30/0.825/1,
         			0.975/0.62/1    /1
         		}
         		{
         			\drawVerticalItem{\x*\w}{\y*\hprime+\h/4}{\xx*\w}{\yy*\hprime+\h/4};
         		}
         		
         		\foreach \x/\y/\xx/\yy in {
         			0.000/1/0.05/1.10,
         			0.05 /1/0.075/1.085,
         			0.075/1/0.1 /1.065,
         			0.475/1/0.5  /1.16,
         			0.55 /1/0.575/1.05
         		}
         		{
         			\drawVerticalItem{\x*\w}{\y*\hprime-\h/4}{\xx*\w}{\yy*\hprime-\h/4};
         		}
         		
         		\foreach \x/\y/\xx/\yy in {
         			0.225/0.93/0.25 /1,
         			0.25 /0.93/0.275/1,
         			0.275/0.92/0.3  /1,
         			0.3  /0.93/0.325/1,
         			0.35 /1.29/0.375/1.39,
         			0.375/1.29/0.4  /1.38
         		}
         		{
         			\drawVerticalItem{\x*\w}{\y*\hprime-0.5*\hprime}{\xx*\w}{\yy*\hprime-0.5*\hprime};
         		}

         		\foreach \x/\y/\xx/\yy in {
         			0.125/0.59/0.15 /1,
         			0.175/0.62/0.2  /1,
         			0.4  /0.62/0.425/1,
         			0.5  /0.60/0.55 /1,
         			0.9  /0.63/0.925/1,
         			0.95 /0.68/0.975/1
         		}
         		{
         			\drawVerticalItem{\x*\w}{\y*\hprime-\h/4}{\xx*\w}{\yy*\hprime-\h/4};
         		}
         		
         		\foreach \x/\y/\xx/\yy in {
         			0.2  /1/0.225/1.06,
         			0.425/1/0.45 /1.09,
         			0.6  /1/0.625/1.05,
         			0.825/1/0.85 /1.07,
         			0.875/1/0.9  /1.08
         		}
         		{
         			\drawVerticalItem{\x*\w}{\y*\hprime-\hprime/2}{\xx*\w}{\yy*\hprime-\hprime/2};
         		}
         		
         		\foreach \x/\y/\xx/\yy in {
         			0.025/0.0 /0.05 /0.35,
         			0.05 /0.0 /0.075/0.36,
         			0.15 /0.0 /0.175/0.39,
         			0.325/0.0 /0.35 /0.37,
         			0.575/0.0 /0.6  /0.34,
         			0.85 /0.0 /0.875/0.37,
         			0.925/0.0 /0.95 /0.66
         		}
         		{
         			\drawVerticalItem{\x*\w}{\y*\hprime}{\xx*\w}{\yy*\hprime};
         		}
         		
         		\foreach \x/\y/\xx/\yy in {
         			0.2  /0.225,
         			0.425/0.45 ,
         			0.6  /0.625,
         			0.825/0.9
         		}{
         			\draw[pattern = north east lines, pattern color = gray] (\x*\w,0.5*\hprime) rectangle (\xx*\w,1*\hprime+\h/4);
         		}
         		
         		\foreach \x/\y/\xx/\yy/\z in {
         			0.075 /0.0 /0.15 /0.28/,
         			0.175 /0.0 /0.325/0.31/$\ell_b$,
         			0.35  /0.0 /0.45 /0.29/,
         			0.45  /0.0 /0.5  /0.53/,
         			0.5   /0.0 /0.575/0.29/,
         			0.6   /0.0 /0.7  /0.35/,
         			0.7   /0.0 /0.75 /0.52/,
         			0.75  /0.0 /0.85 /0.30/,
         			0.875 /0.0 /0.925/0.29/$r_b$
         		}
         		{
         			\drawTallItem[\small \z]{\x*\w}{\y*\hprime}{\xx*\w}{\yy*\hprime};
         		}

         		\foreach \y/\z in {
         			0.5*\hprime 		/ $\nicefrac{1}{2}\height(B)$,
         			0.25*\h     		/ $\nicefrac{1}{4}H$,
         			\hprime - 0.25*\h	/ $\height(B) - \nicefrac{1}{4}H$,
         			\hprime				/ $\height(B)$,
         			\hprime + 0.25*\h	/ $\height(B) + \nicefrac{1}{4}H$
         		}{
         			\draw[dotted] (-0.1*\w,\y) node[left]{\z}-- (1.1*\w,\y) ;
         		}
         		
         		\foreach \x/\y/\xx/\yy in {
         			0.0*\w/\hprime-0.25*\h/0.2*\w/\hprime + 0.25*\h,
         			0.9*\w/\hprime-0.25*\h/1.0*\w/\hprime +0.25*\h,
         			0.0*\w/\hprime-0.25*\h/0.2*\w/0,
         			0.9*\w/\hprime-0.25*\h/1.0*\w/0
         		}{
         			\draw[very thick] (\x,\y) rectangle (\xx,\yy);
         			\draw[fill = white, opacity = 0.5] (\x,\y) rectangle (\xx,\yy);
         			
         		}
         		
         		\draw[red] (0.325*\w,0.00*\hprime) rectangle (0.325*\w,\hprime +\h/4) node[above]{$L_1$};
         		\draw[very thick,red,pattern = dots,pattern color = gray] (0.2*\w,\hprime/2) rectangle (0.325 *\w,0);
         		\draw[thick,red] (0.2*\w,\h/4) --(-0.15 *\w,0.1*\h) node[below,black]{$B_{\ell,3}$};
         		

         		\end{tikzpicture}
         		\caption{The shift at the left of $L_1$ and the introduction of area $B_{\ell,3}$}
         		\label{fig:sub:AreaB3}
         	\end{subfigure}
         	\hfill
         	\begin{subfigure}[t]{0.3\textwidth}
         		\centering
         		\begin{tikzpicture}
         		\pgfmathsetmacro{\w}{2.7}
         		\pgfmathsetmacro{\h}{4.8}
         		\pgfmathsetmacro{\hprime}{4.5}
         		
         		\draw (0*\w,0) rectangle (1*\w,\hprime +\h/4);
         		
         		\foreach \x/\y/\xx/\yy in {
         			-0.03 /0.00/0.025/0.31,
         			-0.025/0.35/0.05/0.625,
         			1.075 /0.00/0.95/0.29,
         			1.1   /0.29/0.975/0.55
         		}
         		{
         			\drawTallItem{\x*\w}{\y*\hprime}{\xx*\w}{\yy*\hprime};
         		}
         		
         		\foreach \x/\y/\xx/\yy in {
         			0    /0   /0.025/0.35,
         			0.950/0   /1    /0.29,
         			0.975/0.29/1    /0.55
         		}{
         			\draw[pattern = north west lines] (\x*\w,\y*\hprime) rectangle (\xx*\w,\yy*\hprime);
         		}
         		
         		\foreach \x/\y/\xx/\yy in {
         			0.05 /0.67/0.125/1,
         			0.15 /0.70/0.175/1,
         			0.6 /0.65/0.65  /1
         		}
         		{
         			\drawTallItem{\x*\w}{\y*\hprime-\h/4}{\xx*\w}{\yy*\hprime-\h/4};
         		}
         		
         		\foreach \x/\y/\xx/\yy/\z in {
         			0.35 /1/0.425  /1.28/,
         			0.425/1/0.525  /1.29/$\ell_m$
         		}
         		{
         			\drawTallItem[\small \z]{\x*\w}{\y*\hprime-0.5*\hprime}{\xx*\w}{\yy*\hprime-0.5* \hprime};
         		}
         		\foreach \x/\y/\xx/\yy in {
         			0.0   /0.71 /0.025/1,
         			0.025 /0.71 /0.075/1,
         			0.075 /0.69 /0.125/1,
         			0.125 /0.69 /0.2  /1,
         			0.275 /0.37 /0.3  /1,
         			0.3   /0.38 /0.325/1,
         			0.325 /0.40 /0.35 /1,
         			0.35  /0.66 /0.375/1,
         			0.375 /0.66 /0.4  /1,
         			0.4   /0.67 /0.45 /1,
         			0.45  /0.66 /0.475/1,
         			0.475 /0.68 /0.5  /1,
         			0.5   /0.67 /0.55 /1,
         			0.575 /0.69 /0.675/1,
         			0.825 /0.37 /0.9  /1,
         			0.9   /0.66 /0.975/1
         		}
         		{
         			\drawTallItem{\x*\w}{\y*\hprime+\h/4}{\xx*\w}{\yy*\hprime+\h/4};
         		}
         		
         		\foreach \x/\y/\xx/\yy in {
         			0.2  /0.35/0.275/1,
         			0.55 /0.53/0.575/1,
         			0.675/0.69/0.7  /1,
         			0.7  /0.52/0.75 /1,
         			0.75 /0.30/0.825/1,
         			0.975/0.62/1    /1
         		}
         		{
         			\drawVerticalItem{\x*\w}{\y*\hprime+\h/4}{\xx*\w}{\yy*\hprime+\h/4};
         		}
         		
         		\foreach \x/\y/\xx/\yy in {
         			0.000/1/0.05/1.10,
         			0.05 /1/0.075/1.085,
         			0.075/1/0.1 /1.065,
         			0.575/1/0.6  /1.16,
         			0.65 /1/0.675/1.05
         		}
         		{
         			\drawVerticalItem{\x*\w}{\y*\hprime-\h/4}{\xx*\w}{\yy*\hprime-\h/4};
         		}
         		
         		\foreach \x/\y/\xx/\yy in {
         			0.225/0.93/0.25 /1,
         			0.25 /0.93/0.275/1,
         			0.275/0.92/0.3  /1,
         			0.3  /0.93/0.325/1,
         			0.475 /1.29/0.5/1.39,
         			0.5/1.29/0.525  /1.38
         		}
         		{
         			\drawVerticalItem{\x*\w}{\y*\hprime-0.5*\hprime}{\xx*\w}{\yy*\hprime-0.5*\hprime};
         		}

         		\foreach \x/\y/\xx/\yy in {
         			0.125/0.59/0.15 /1,
         			0.175/0.62/0.2  /1,
         			0.525/0.62/0.55 /1,
         			0.55 /0.60/0.6 /1,
         			0.9  /0.63/0.925/1,
         			0.95 /0.68/0.975/1
         		}
         		{
         			\drawVerticalItem{\x*\w}{\y*\hprime-\h/4}{\xx*\w}{\yy*\hprime-\h/4};
         		}
         		
         		\foreach \x/\y/\xx/\yy in {
         			0.275/1/0.300/1.05,
         			0.300/1/0.325/1.06,
         			0.325/1/0.350/1.09,
         			0.85 /1/0.875/1.07,
         			0.875/1/0.9  /1.08
         		}
         		{
         			\drawVerticalItem{\x*\w}{\y*\hprime-\hprime/2}{\xx*\w}{\yy*\hprime-\hprime/2};
         		}
         		
         		\foreach \x/\y/\xx/\yy in {
         			0.025/0.0 /0.05 /0.35,
         			0.05 /0.0 /0.075/0.36,
         			0.15 /0.0 /0.175/0.39,
         			0.325/0.0 /0.35 /0.37,
         			0.525/0.0 /0.55 /0.34,
         			0.85 /0.0 /0.875/0.37,
         			0.925/0.0 /0.95 /0.66
         		}
         		{
         			\drawVerticalItem{\x*\w}{\y*\hprime}{\xx*\w}{\yy*\hprime};
         		}

         		\foreach \x/\y/\xx/\yy/\z in {
         			0.075 /0.0 /0.15 /0.28/,
         			0.175 /0.0 /0.325/0.31/$\ell_b$,
         			0.35  /0.0 /0.45 /0.29/,
         			0.45  /0.0 /0.525/0.29/,
         			0.55  /0.0 /0.65  /0.35/,
         			0.65  /0.0 /0.7  /0.53/,
         			0.7   /0.0 /0.75 /0.52/,
         			0.75  /0.0 /0.85 /0.30/,
         			0.875 /0.0 /0.925/0.29/$r_b$
         		}
         		{
         			\drawTallItem[\small \z]{\x*\w}{\y*\hprime}{\xx*\w}{\yy*\hprime};
         		}

         		\foreach \y/\z in {
         			0.5*\hprime 		/ $\nicefrac{1}{2}\height(B)$,
         			0.25*\h     		/ $\nicefrac{1}{4}H$,
         			\hprime - 0.25*\h	/ $\height(B) - \nicefrac{1}{4}H$,
         			\hprime				/ $\height(B)$,
         			\hprime + 0.25*\h	/ $\height(B) + \nicefrac{1}{4}H$
         		}{
         			\draw[dotted] (-0.1*\w,\y) -- (1.1*\w,\y) ;
         		}
         		
         		\foreach \x/\y/\xx/\yy in {
         			0.0*\w/\hprime-0.25*\h/0.2*\w/\hprime + 0.25*\h, 
         			0.9*\w/\hprime-0.25*\h/1.0*\w/\hprime +0.25*\h, 
         			0.0*\w/\hprime-0.25*\h/0.2*\w/0, 
         			0.9*\w/\hprime-0.25*\h/1.0*\w/0, 
         			0.2*\w/0/0.325*\w/0.5*\hprime 
         		}{
         			\draw[very thick] (\x,\y) rectangle (\xx,\yy);
         			\draw[fill = white, opacity = 0.5] (\x,\y) rectangle (\xx,\yy);
         			
         		}

         		\draw[very thick, red, pattern = dots, pattern color = gray] (0.2*\w,\hprime/2) rectangle (0.35 *\w,\hprime+\h/4);
         		\draw[very thick,red] (0.2*\w,7*\h/8) -- (-0.2*\w,9*\hprime/8) node[above, black]{$B_{\ell,4}$} ;
         		
         		\draw[very thick, red, pattern = dots, pattern color = gray] (0.75*\w,\hprime/2) rectangle (0.9 *\w,\hprime+\h/4);
         		\draw[very thick, red] (0.9*\w,7*\h/8) -- (1.2*\w,9*\hprime/8) node[above,black]{$B_{r,4}$} ;
         		
         		\draw[very thick, red, pattern = dots, pattern color = gray] (0.65*\w,0.0*\hprime) rectangle node[midway](C){} (0.75 *\w,\hprime -\h/4);
         		\draw[very thick,red] (0.75*\w,0.25*\hprime) -- (1.2*\w, 0.5*\hprime) node[above,black]{$B_{5}$};
         		
         		\draw[very thick, red] (0.575*\w,0.35*\hprime) circle (0.05*\w);
         		\draw[very thick,red] (0.575*\w,0.35*\hprime+0.05*\w) -- (1.2*\w, 0.75*\hprime) node[above,black]{\footnotesize overlap};
         		
         		\end{tikzpicture}
         		\caption{The areas $B_{\ell,4}$, $B_{r,4}$, and $B_{5}$}
         		\label{fig:sub:AreaB4andB5}
         	\end{subfigure}
         	\hfill
         	\begin{subfigure}[t]{0.29\textwidth}
         		\centering
         		\begin{tikzpicture}
         		\pgfmathsetmacro{\w}{2.7}
         		\pgfmathsetmacro{\h}{4.8}
         		\pgfmathsetmacro{\hprime}{4.5}
         		\draw (0*\w,0) rectangle (1*\w,\hprime +\h/4);
         		
         		\foreach \x/\y/\xx/\yy in {
         			-0.03 /0.00/0.025/0.31,
         			-0.025/0.35/0.05/0.625,
         			1.075 /0.00/0.95/0.29,
         			1.1   /0.29/0.975/0.55
         		}
         		{
         			\drawTallItem{\x*\w}{\y*\hprime}{\xx*\w}{\yy*\hprime};
         		}
         		
         		\foreach \x/\y/\xx/\yy in {
         			0    /0   /0.025/0.35,
         			0.950/0   /1    /0.29,
         			0.975/0.29/1    /0.55
         		}{
         			\draw[pattern = north west lines] (\x*\w,\y*\hprime) rectangle (\xx*\w,\yy*\hprime);
         		}
         		
         		\foreach \x/\y/\xx/\yy in {
         			0.05 /0.67/0.125/1,
         			0.15 /0.70/0.175/1,
         			0.6 /0.65/0.65  /1
         		}
         		{
         			\drawTallItem{\x*\w}{\y*\hprime-\h/4}{\xx*\w}{\yy*\hprime-\h/4};
         		}
         		
         		\foreach \x/\y/\xx/\yy/\z in {
         			0.35 /1/0.425  /1.28/,
         			0.425/1/0.525  /1.29/$\ell_m$
         		}
         		{
         			\drawTallItem[\small \z]{\x*\w}{\y*\hprime-0.5*\hprime}{\xx*\w}{\yy*\hprime-0.5* \hprime};
         		}
         		\foreach \x/\y/\xx/\yy in {
         			0.0   /0.71 /0.025/1,
         			0.025 /0.71 /0.075/1,
         			0.075 /0.69 /0.125/1,
         			0.125 /0.69 /0.2  /1,
         			0.275 /0.37 /0.3  /1,
         			0.3   /0.38 /0.325/1,
         			0.325 /0.40 /0.35 /1,
         			0.35  /0.66 /0.375/1,
         			0.375 /0.66 /0.4  /1,
         			0.4   /0.67 /0.45 /1,
         			0.45  /0.66 /0.475/1,
         			0.475 /0.68 /0.5  /1,
         			0.5   /0.67 /0.55 /1,
         			0.575 /0.69 /0.675/1,
         			0.825 /0.37 /0.9  /1,
         			0.9   /0.66 /0.975/1
         		}
         		{
         			\drawTallItem{\x*\w}{\y*\hprime+\h/4}{\xx*\w}{\yy*\hprime+\h/4};
         		}
         		
         		\foreach \x/\y/\xx/\yy in {
         			0.2  /0.35/0.275/1,
         			0.55 /0.53/0.575/1,
         			0.675/0.69/0.7  /1,
         			0.7  /0.52/0.75 /1,
         			0.75 /0.30/0.825/1,
         			0.975/0.62/1    /1
         		}
         		{
         			\drawVerticalItem{\x*\w}{\y*\hprime+\h/4}{\xx*\w}{\yy*\hprime+\h/4};
         		}
         		
         		\foreach \x/\y/\xx/\yy in {
         			0.000/1/0.05/1.10,
         			0.05 /1/0.075/1.085,
         			0.075/1/0.1 /1.065,
         			0.575/1/0.6  /1.16,
         			0.65 /1/0.675/1.05
         		}
         		{
         			\drawVerticalItem{\x*\w}{\y*\hprime-\h/4}{\xx*\w}{\yy*\hprime-\h/4};
         		}
         		
         		\foreach \x/\y/\xx/\yy in {
         			0.225/0.93/0.25 /1,
         			0.25 /0.93/0.275/1,
         			0.275/0.92/0.3  /1,
         			0.3  /0.93/0.325/1,
         			0.475 /1.29/0.5/1.39,
         			0.5/1.29/0.525  /1.38
         		}
         		{
         			\drawVerticalItem{\x*\w}{\y*\hprime-0.5*\hprime}{\xx*\w}{\yy*\hprime-0.5*\hprime};
         		}

         		\foreach \x/\y/\xx/\yy in {
         			0.125/0.59/0.15 /1,
         			0.175/0.62/0.2  /1,
         			0.525/0.62/0.55 /1,
         			0.55 /0.60/0.6 /1,
         			0.9  /0.63/0.925/1,
         			0.95 /0.68/0.975/1
         		}
         		{
         			\drawVerticalItem{\x*\w}{\y*\hprime-\h/4}{\xx*\w}{\yy*\hprime-\h/4};
         		}
         		
         		\foreach \x/\y/\xx/\yy in {
         			0.275/1/0.300/1.05,
         			0.300/1/0.325/1.06,
         			0.325/1/0.350/1.09,
         			0.85 /1/0.875/1.07,
         			0.875/1/0.9  /1.08
         		}
         		{
         			\drawVerticalItem{\x*\w}{\y*\hprime-\hprime/2}{\xx*\w}{\yy*\hprime-\hprime/2};
         		}
         		
         		\foreach \x/\y/\xx/\yy in {
         			0.025/0.0 /0.05 /0.35,
         			0.05 /0.0 /0.075/0.36,
         			0.15 /0.0 /0.175/0.39,
         			0.325/0.0 /0.35 /0.37,
         			0.525/0.0 /0.55  /0.34,
         			0.85 /0.0 /0.875/0.37,
         			0.925/0.0 /0.95 /0.66
         		}
         		{
         			\drawVerticalItem{\x*\w}{\y*\hprime}{\xx*\w}{\yy*\hprime};
         		}

         		\foreach \x/\y/\xx/\yy/\z in {
         			0.075 /0.0 /0.15 /0.28/,
         			0.175 /0.0 /0.325/0.31/$\ell_b$,
         			0.35  /0.0 /0.45 /0.29/,
         			0.45  /0.0 /0.525/0.29/,
         			0.55  /0.0 /0.65 /0.35/,
         			0.65  /0.0 /0.7  /0.53/,
         			0.7   /0.0 /0.75 /0.52/,
         			0.75  /0.0 /0.85 /0.30/,
         			0.875 /0.0 /0.925/0.29/$r_b$
         		}
         		{
         			\drawTallItem[\small \z]{\x*\w}{\y*\hprime}{\xx*\w}{\yy*\hprime};
         		}

         		\foreach \y/\z in {
         			0.5*\hprime 		/ $\nicefrac{1}{2}\height(B)$,
         			0.25*\h     		/ $\nicefrac{1}{4}H$,
         			\hprime - 0.25*\h	/ $\height(B) - \nicefrac{1}{4}H$,
         			\hprime				/ $\height(B)$,
         			\hprime + 0.25*\h	/ $\height(B) + \nicefrac{1}{4}H$
         		}{
         			\draw[dotted] (-0.1*\w,\y) -- (1.1*\w,\y) ;
         		}
         		
         		\foreach \x/\y/\xx/\yy in {
         			0.0*\w/\hprime-0.25*\h/0.2*\w/\hprime + 0.25*\h,  
         			0.9*\w/\hprime-0.25*\h/1.0*\w/\hprime +0.25*\h,   
         			0.0*\w/\hprime-0.25*\h/0.2*\w/0,                  
         			0.9*\w/\hprime-0.25*\h/1.0*\w/0,                  
         			0.2*\w/0/0.325*\w/0.5*\hprime,                    
         			0.2*\w /0.5*\hprime/0.35*\w/\hprime + 0.25*\h,    
         			0.75*\w/0.5*\hprime/0.9*\w/\hprime + 0.25*\h,     
         			0.65*\w/0.0*\hprime/0.75 *\w/\hprime -0.25*\h      
         		}{
         			\draw[very thick] (\x,\y) rectangle (\xx,\yy);
         			\draw[fill = white, opacity = 0.5] (\x,\y) rectangle (\xx,\yy);
         			
         		}

         		\draw[red] (0.425*\w,\hprime/2) -- (0.425*\w,\hprime +\h/4) node[above]{\small $L_3$};
         		\draw[red] (0.525*\w,\hprime/2) -- (0.525*\w,\hprime +\h/4) node[above]{\small $L_2$};
         		
         		\draw[very thick, red, pattern = dots,  pattern color = gray] (0.35*\w,\hprime/2) rectangle (0.425*\w,\hprime +\h/4);
         		
         		\draw[very thick,red] (0.35*\w,7*\h/8) -- (-0.2*\w,9*\hprime/8) node[above, black]{$B_{\ell,6}$} ;
         		
         		\end{tikzpicture}
         		\caption{The area $B_{\ell,6}$}
         		\label{fig:sub:AreaB6}
         	\end{subfigure}				
         	\caption{Illustration of the areas $B_{\ell,3}$ through $B_{\ell,6}$. When reordering the items to create areas $B_{\ell,4}$, $B_{r,4}$, and $B_{5}$, some items may overlap. We resolve this further on, when considering areas $B_{\ell,9}$ and $B_{r,9}$.}
         	\label{fig:GenralReordering2}
         \end{figure*}
         Let $\instance_{\ell,\nicefrac{1}{2}\height(B)}$ be the set of shifted items now touching $\nicefrac{1}{2}\height(B)$ with their lower border. All the items in $\instance_{\ell,\nicefrac{1}{2}\height(B)}$ have a height of at most $\nicefrac{1}{2}\height(B)$ because their upper border was at $\height(B)-\nicefrac{1}{4}H$ before. The area left of $L_1$ and right of the left border of $\ell$ below $\nicefrac{1}{2}\height(B)$ is called $B_{\ell,3}$. This area contains pseudo items touching $\nicefrac{1}{2}\height(B)$ and a part of $\ell_b$ at the bottom. We sort the pseudo items above $\ell_b$ touching $\nicefrac{1}{2}\height(B)$ in descending order of their heights.\newline\indent
         On the other hand, if $r(\ell_b)$ is left of $r(i_\ell)$, we introduce the line $L_1$ but do not shift any items. On the right of $i_r$, we introduce the same line, just named $R_1$, and the area $B_{r,3}$ analogously.\newline\indent
          \textit{Simple cases:} It is possible that $\ell_b$ and $r_b$ are the same item. Similarly, one of the lines $L_1$ or $R_1$ might intersect with $i_\ell$ or $i_r$ respectively. Finally, $L_1$ and $R_1$ might be at the exact same spot. For any of these cases to occur, there cannot be an item with height larger than $\nicefrac{1}{2}\height(B)$ touching the bottom of the box between $L_1$ and $R_1$. If $\ell_b=r_b,$ then this item cannot be taller than $\nicefrac{1}{2}\height(B)$ because $i_\ell$ and $i_r$ are the left and rightmost items with height larger than $\nicefrac{1}{2}\height(B)$. Since they are placed above $\ell_b=r_b$, this item cannot be larger than $\nicefrac{1}{2}\height(B)$ as the total would be larger than the box. Similarly, if $L_1$ intersects $i_r$, the items $\ell_b$ and $i_r$ are placed above each other leading to the same contradiction. Obviously, the reverse holds as well. Finally, if $L_1$ and $R_1$ are in the same spot the items $\ell_b$ and $r_b$ must be placed next to each other. Again, neither of these items can have a height greater than $\nicefrac{1}{2}\height(B)$, and they extend to the bottom of the items $i_\ell$ and $i_r$ respectively. When there is no item with height greater than $\nicefrac{1}{2}\height(B)$ touching the bottom of the box, we shift all items between $L_1$ and $R_1$ touching $\height(B)-\nicefrac{1}{4}H$ with their top such that they touch $\nicefrac{1}{2}\height(B)$ with their lower borders. The pseudo items touching $\height(B)-\nicefrac{1}{4}H$ with their lower borders in this area are placed such that their upper border touches $\nicefrac{1}{2}\height(B)$, just as we did with the items above $\ell_b$. Now, there is no item intersecting the horizontal line $\nicefrac{1}{2}\height(B)$. Hence, we can sort the items above $\nicefrac{1}{2}\height(B)$ between $i_\ell$ and $i_r$ by their heights in descending order from left to right. We can do the same for items that are below $\nicefrac{1}{2}\height(B)$. After this step, no further reordering is required.\newline\indent
         \textit{Areas $B_{\ell,4}$ and $B_5$:} Next, we consider the case that there is an item with height taller than $\nicefrac{1}{2}\height(B)$ touching the bottom of the box between $i_\ell$ and $i_r$. This case requires further reordering. The aim is to reorder the (pseudo) items with height $\nicefrac{1}{2}\height(B)+\nicefrac{1}{4}H$ touching the top of the box in a way that they build two blocks next to $i_\ell$ and $i_r$ respectively. These blocks will be the areas $B_{\ell,4}$ and $B_{r,4}$, see \cref{fig:sub:AreaB4andB5}. To facilitate this reordering, we have to define a border between $i_\ell$ and $i_r$ such that all of the items with height $\nicefrac{1}{2}\height(B)+\nicefrac{1}{4}H$ are shifted to $i_\ell$ on the left and $i_r$ on the right. As we know one such item must exist, let $i$ be an item of height larger than $\nicefrac{1}{2}\height(B)$ touching the bottom of the packing. Clearly, no items of height $\nicefrac{1}{2}\height(B)+\nicefrac{1}{4}H$ can be above $i$, even in the box we extended by $\nicefrac{1}{4}H$. This item defines the border between $i_\ell$ and $i_r$.\newline\indent
         Consider items with height $\nicefrac{1}{2}\height(B)+\nicefrac{1}{4}H$ touching the top of the box between $i_\ell$ and $i$. We shift those items left of $i$ to the left until they touch $i_\ell$. We shift the items between $i$ and $i_r$ analogously, but to the right. All other items with parts above $\nicefrac{1}{2}\height(B)$ are shifted to the left and right accordingly to avoid overlap. This must be possible because these items could not be placed below the items of height $\nicefrac{1}{2}\height(B)+\nicefrac{1}{4}H$ originally, and the total available width left and right of $i$ has not changed. We sort the items with height $\nicefrac{1}{2}\height(B)+\nicefrac{1}{4}H$ such that the pseudo items that include tall items of equal height are placed adjacently. The area including these items to the left of $i$, including $i_\ell$, is called $B_{\ell,4}$.\newline\indent
         While doing the shift discussed above, we shift all items between $i_\ell$ and $i_r$ with height $\height(B)-\nicefrac{1}{4}H$ touching the bottom of the box such that they are next to $i$ and shift the other items to the left or right accordingly. This is feasible for the same reasons as above. This comprises the area $B_5$, which can also be seen in \cref{fig:sub:AreaB4andB5}.\newline\indent
         However, as an interplay between the areas $B_{\ell,4}$ and $B_5$, it may occur that items touching $\height(B)-\nicefrac{1}{4}H$ are overlapping with items that touch the bottom of the box. This overlap gets handled later on, in the areas $B_{\ell,9}$ and $B_{r,9}$ respectively.\newline\indent

         \textit{Area $B_{\ell,6}$:} Note that items in the set $I_{\ell,\nicefrac{1}{2}\height(B)}$ are now placed next to each other. Before the last step it was possible that items with $\nicefrac{1}{2}\height(B)+\nicefrac{1}{4}H$ were positioned between them. Additionally, there is no item part touching the top of the box above an item that touches $\height(B)-\nicefrac{1}{4}H$ with its bottom border. Furthermore, the total width of items with their bottom border between $\nicefrac{1}{4}H$ and $\nicefrac{1}{2}\height(B)$, that are placed entirely between $L_1$ and the right border of $i$ did not change.\newline\indent
         If $\ell_m$ exists, we draw the vertical line $L_2$ at the right border of $\ell_m$ and a vertical line $L_3$ at the left border of $\ell_m$. Define $\ell_{t,r}$ and $\ell_{t,\ell}$ to be the tall items touching the top of the border intersected by these lines if there are any. We look at the left of $L_3$ and right of $B_{\ell,4}$, which is bounded at the top by the top of the box, while the bottom border is at $\nicefrac{1}{2}\height(B)$. We call this area $B_{\ell,6}$. In this area each item touches the bottom or the top and there is at most one item $\ell_{t,\ell}$ intersecting the border. We use the reordering from \cref{lem:ReorderTallVertBox12} to reorder the items in $B_{\ell,6}$. See this area in \cref{fig:sub:AreaB6}.\newline\indent
         \textit{Area $B_{\ell,7}$:} The area above the item $\ell_m$ is called $B_{\ell,7}$, find an illustration of it in \cref{fig:sub:AreaB7}. In this area, all items are touching either the top of the box or $\ell_m$. All items that touch only $\ell_m$ with their bottom are pseudo items. We order the items touching the top of the box in ascending order regarding their heights and move those pseudo items below with them. Next, we look at the overlapping items $\ell_{t,r}$ and $\ell_{t,\ell}$. We move items with their heights next to the items themselves, i.e.\ items with height $\height(\ell_{t,r})$ next to $\ell_{t,r}$ and the same for $\ell_{t,\ell}$. This results in three areas to place pseudo items in. The first two areas are below the respective overlapping items $\ell_{t,r}$, $\ell_{t,\ell}$ and their accompanying items with identical height, while the last is between these two areas. For all those areas, we sort the pseudo items in them by order of descending height.\newline\indent
         The areas~$B_{\ell,6}$ and $B_{\ell,7}$ only exist if the item $\ell_m$ exists. If this is not the case, we introduce the vertical line~$L_2$ at the left border of the area $B_{\ell,4}$. The same holds for the areas~$B_{r,6}$, $B_{r,7}$ and the vertical line~$R_2$, which will be placed at the right border of $B_{r,4}$.\newline\indent
         \textit{Area $B_{8}$:} Next, we inspect the area above $\height(B)-\nicefrac{1}{4}H$ and between the lines~$L_2$ and $R_2$. This area is called $B_8$, see \cref{fig:sub:AreaB8}. There are at most two immovable items overlapping this area. One item~$\ell_{t,r}$ on the left touching the top of the box, and a similar item~$r_{t,\ell}$ on the right, also touching the top of the box. Since $B_8$ does not contain any item of height $\nicefrac{1}{2}\height(B)+\nicefrac{1}{4}H$ or items from the sets~$I_{\ell,\nicefrac{1}{2}\height(B)}$ or $I_{r,\nicefrac{1}{2}\height(B)}$, each item either touches the top or bottom of this area. Furthermore, all items touching the bottom are pseudo items. Therefore, we can sort this area in the same manner we did $B_{\ell,7}$.\newline\indent
         \textit{Area $B_{\ell,9}$:} Finally, we have to look at the items on the bottom between $L_1$ and $R_1$, as well as the items touching $\height(B)-\nicefrac{1}{4}H$ with their top between $L_2$ and $R_2$. We consider the items touching the bottom between $L_1$ and the left border of $B_5$ and the items touching $\height(B)-\nicefrac{1}{4}H$ with their top between $L_2$ and the left border of $B_5$. The area containing these items is called $B_{\ell,9}$. In this area, we sort all items touching $\height(B)-\nicefrac{1}{4}H$ in ascending order of their heights and the items at the bottom in descending order of their heights, from outside inward. We do the same for the right side in the area $B_{r,9}$. This procedure is shown in \cref{fig:sub:AreaB9}.
         \begin{claim}
             After this step, there is no item that overlaps another item in the area $B_{\ell,9}$.
         \end{claim}
         \begin{proof}
             First of all, let us note that all items that have their bottom border on the bottom of the box have a height of less than $\nicefrac{1}{2}\height(B)$. Otherwise, they would be contained in the area $B_5$. Thus, they cannot overlap with items that have their upper border on the top of the box, as those have a maximum height of $\nicefrac{1}{2}\height(B)+\nicefrac{1}{4}H$.\newline\indent
             Let us now assume the claim is false, i.e.\ there is an item $b$ from the bottom of the box intersecting an item $t$ from the top of the box. Let this intersection be at a point $(x,y)$ in the area $B_{\ell,9}$. Due to the manner in which we sorted the items touching either box border, all items touching $\height(B)-\nicefrac{1}{4}H$ left of $(x,y)$ and inside $B_{\ell,9}$ must also overlap the horizontal line at height $y$. Analogously, all items to the right of $(x,y)$ and the left of $L_1$ must also overlap this horizontal line. Note that the total width of items with lower border below $y$ and above $\nicefrac{1}{4}H$ between $L_1$ and the left border of $i$ has not changed after the shifting of items with height $\nicefrac{1}{2}\height(B)+\nicefrac{1}{4}H$ on the top of the box. This is because items left of $\ell_m$ have their lower border at $\nicefrac{1}{2}\height(B)$. Additionally, the total width of items touching the bottom of the box with upper border above $y$ in this area has not changed either. Therefore, the total width of items overlapping the horizontal line at $y$ in this area is larger than the width of this area. Thus, the items must have had some overlap before the first horizontal shift. This is a contradiction because we assumed a feasible optimal packing. Therefore, there cannot be an item overlapping another item inside $B_{\ell,9}$.
         \end{proof}
     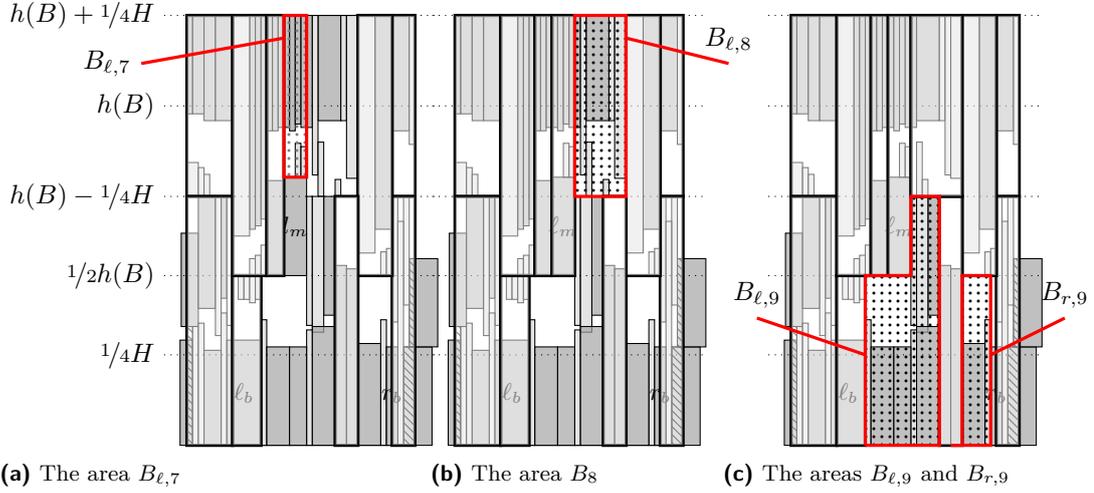
\begin{figure*}[ht]
     	\centering
     	\begin{subfigure}[t]{0.37\textwidth}
     		\centering
     		\begin{tikzpicture}
     		\pgfmathsetmacro{\w}{3}
     		\pgfmathsetmacro{\h}{4.8}
     		\pgfmathsetmacro{\hprime}{4.5}
     		
     		\draw (0*\w,0) rectangle (1*\w,\hprime +\h/4);
     		
     		\foreach \x/\y/\xx/\yy in {
     			-0.03 /0.00/0.025/0.31,
     			-0.025/0.35/0.05/0.625,
     			1.075 /0.00/0.95/0.29,
     			1.1   /0.29/0.975/0.55
     		}
     		{
     			\drawTallItem{\x*\w}{\y*\hprime}{\xx*\w}{\yy*\hprime};
     		}
     		
     		\foreach \x/\y/\xx/\yy in {
     			0    /0   /0.025/0.35,
     			0.950/0   /1    /0.29,
     			0.975/0.29/1    /0.55
     		}{
     			\draw[pattern = north west lines] (\x*\w,\y*\hprime) rectangle (\xx*\w,\yy*\hprime);
     		}
     		
     		\foreach \x/\y/\xx/\yy in {
     			0.05 /0.67/0.125/1,
     			0.15 /0.70/0.175/1,
     			0.6 /0.65/0.65  /1
     		}
     		{
     			\drawTallItem{\x*\w}{\y*\hprime-\h/4}{\xx*\w}{\yy*\hprime-\h/4};
     		}
     		
     		\foreach \x/\y/\xx/\yy/\z in {
     			0.35 /1/0.425  /1.28/,
     			0.425/1/0.525  /1.29/$l_m$
     		}
     		{
     			\drawTallItem[\small \z]{\x*\w}{\y*\hprime-0.5*\hprime}{\xx*\w}{\yy*\hprime-0.5* \hprime};
     		}
     		\foreach \x/\y/\xx/\yy in {
     			0.0   /0.71 /0.025/1,
     			0.025 /0.71 /0.075/1,
     			0.075 /0.69 /0.125/1,
     			0.125 /0.69 /0.2  /1,
     			0.275 /0.37 /0.3  /1,
     			0.3   /0.38 /0.325/1,
     			0.325 /0.40 /0.35 /1,
     			0.35  /0.66 /0.375/1,
     			0.375 /0.66 /0.4  /1,
     			0.4   /0.67 /0.45 /1,
     			0.45  /0.66 /0.475/1,
     			0.475 /0.68 /0.5  /1,
     			0.5   /0.67 /0.55 /1,
     			0.575 /0.69 /0.675/1,
     			0.825 /0.37 /0.9  /1,
     			0.9   /0.66 /0.975/1
     		}
     		{
     			\drawTallItem{\x*\w}{\y*\hprime+\h/4}{\xx*\w}{\yy*\hprime+\h/4};
     		}
     		
     		\foreach \x/\y/\xx/\yy in {
     			0.2  /0.35/0.275/1,
     			0.55 /0.53/0.575/1,
     			0.675/0.69/0.7  /1,
     			0.7  /0.52/0.75 /1,
     			0.75 /0.30/0.825/1,
     			0.975/0.62/1    /1
     		}
     		{
     			\drawVerticalItem{\x*\w}{\y*\hprime+\h/4}{\xx*\w}{\yy*\hprime+\h/4};
     		}
     		
     		\foreach \x/\y/\xx/\yy in {
     			0.000/1/0.05/1.10,
     			0.05 /1/0.075/1.085,
     			0.075/1/0.1 /1.065,
     			0.575/1/0.6  /1.16,
     			0.65 /1/0.675/1.05
     		}
     		{
     			\drawVerticalItem{\x*\w}{\y*\hprime-\h/4}{\xx*\w}{\yy*\hprime-\h/4};
     		}
     		
     		\foreach \x/\y/\xx/\yy in {
     			0.225/0.93/0.25 /1,
     			0.25 /0.93/0.275/1,
     			0.275/0.92/0.3  /1,
     			0.3  /0.93/0.325/1,
     			0.475 /1.29/0.5/1.39,
     			0.5/1.29/0.525  /1.38
     		}
     		{
     			\drawVerticalItem{\x*\w}{\y*\hprime-0.5*\hprime}{\xx*\w}{\yy*\hprime-0.5*\hprime};
     		}

     		\foreach \x/\y/\xx/\yy in {
     			0.125/0.59/0.15 /1,
     			0.175/0.62/0.2  /1,
     			0.525/0.62/0.55 /1,
     			0.55 /0.60/0.6 /1,
     			0.9  /0.63/0.925/1,
     			0.95 /0.68/0.975/1
     		}
     		{
     			\drawVerticalItem{\x*\w}{\y*\hprime-\h/4}{\xx*\w}{\yy*\hprime-\h/4};
     		}
     		
     		\foreach \x/\y/\xx/\yy in {
     			0.275/1/0.300/1.05,
     			0.300/1/0.325/1.06,
     			0.325/1/0.350/1.09,
     			0.85 /1/0.875/1.07,
     			0.875/1/0.9  /1.08
     		}
     		{
     			\drawVerticalItem{\x*\w}{\y*\hprime-\hprime/2}{\xx*\w}{\yy*\hprime-\hprime/2};
     		}
     		
     		\foreach \x/\y/\xx/\yy in {
     			0.025/0.0 /0.05 /0.35,
     			0.05 /0.0 /0.075/0.36,
     			0.15 /0.0 /0.175/0.39,
     			0.325/0.0 /0.35 /0.37,
     			0.525/0.0 /0.55  /0.34,
     			0.85 /0.0 /0.875/0.37,
     			0.925/0.0 /0.95 /0.66
     		}
     		{
     			\drawVerticalItem{\x*\w}{\y*\hprime}{\xx*\w}{\yy*\hprime};
     		}

     		\foreach \x/\y/\xx/\yy/\z in {
     			0.075 /0.0 /0.15 /0.28/,
     			0.175 /0.0 /0.325/0.31/$\ell_b$,
     			0.35  /0.0 /0.45 /0.29/,
     			0.45  /0.0 /0.525/0.29/,
     			0.55  /0.0 /0.65  /0.35/,
     			0.65  /0.0 /0.7  /0.53/,
     			0.7   /0.0 /0.75 /0.52/,
     			0.75  /0.0 /0.85 /0.30/,
     			0.875 /0.0 /0.925/0.29/$r_b$
     		}
     		{
     			\drawTallItem[\small \z]{\x*\w}{\y*\hprime}{\xx*\w}{\yy*\hprime};
     		}

     		\foreach \y/\z in {
     			0.5*\hprime 		/ $\nicefrac{1}{2}\height(B)$,
     			0.25*\h     		/ $\nicefrac{1}{4}H$,
     			\hprime - 0.25*\h	/ $\height(B) - \nicefrac{1}{4}H$,
     			\hprime				/ $\height(B)$,
     			\hprime + 0.25*\h	/ $\height(B) + \nicefrac{1}{4}H$
     		}{
     			\draw[dotted] (-0.1*\w,\y)node[left]{\z} -- (1.1*\w,\y) ;
     		}

     		\foreach \x/\y/\xx/\yy in {
     			0.0*\w/\hprime-0.25*\h/0.2*\w/\hprime + 0.25*\h,  
     			0.9*\w/\hprime-0.25*\h/1.0*\w/\hprime +0.25*\h,   
     			0.0*\w/\hprime-0.25*\h/0.2*\w/0,                  
     			0.9*\w/\hprime-0.25*\h/1.0*\w/0,                  
     			0.2*\w/0/0.325*\w/0.5*\hprime,                    
     			0.2*\w /0.5*\hprime/0.35*\w/\hprime + 0.25*\h,    
     			0.75*\w/0.5*\hprime/0.9*\w/\hprime + 0.25*\h,     
     			0.65*\w/0.0*\hprime/0.75 *\w/\hprime -0.25*\h,    
     			0.35*\w/0.5*\hprime/0.425*\w/\hprime + 0.25*\h 
     		}{
     			\draw[very thick] (\x,\y) rectangle (\xx,\yy);
     			\draw[fill = white, opacity = 0.5] (\x,\y) rectangle (\xx,\yy);
     			
     		}

     		\draw[very thick,red] (0.425*\w,9*\h/8) -- (-0.2*\w,9*\hprime/8) node[left, black]{$B_{\ell,7}$};
     		
     		\draw[very thick, red, pattern = dots, pattern color = gray] (0.425*\w,1.29*\hprime-0.5*\hprime) rectangle (0.525*\w,\hprime +\h/4);
     		
     		\end{tikzpicture}
     		\caption{The area $B_{\ell,7}$}
     		\label{fig:sub:AreaB7}
     	\end{subfigure}
     	\hfill
     	\begin{subfigure}[t]{0.24\textwidth}
     		\centering
     		\begin{tikzpicture}
     		\pgfmathsetmacro{\w}{3}
     		\pgfmathsetmacro{\h}{4.8}
     		\pgfmathsetmacro{\hprime}{4.5}
     		\draw (0*\w,0) rectangle (1*\w,\hprime +\h/4);
     		
     		\foreach \x/\y/\xx/\yy in {
     			-0.03 /0.00/0.025/0.31,
     			-0.025/0.35/0.05/0.625,
     			1.075 /0.00/0.95/0.29,
     			1.1   /0.29/0.975/0.55
     		}
     		{
     			\drawTallItem{\x*\w}{\y*\hprime}{\xx*\w}{\yy*\hprime};
     		}
     		
     		\foreach \x/\y/\xx/\yy in {
     			0    /0   /0.025/0.35,
     			0.950/0   /1    /0.29,
     			0.975/0.29/1    /0.55
     		}{
     			\draw[pattern = north west lines] (\x*\w,\y*\hprime) rectangle (\xx*\w,\yy*\hprime);
     		}
     		
     		\foreach \x/\y/\xx/\yy in {
     			0.05 /0.67/0.125/1,
     			0.15 /0.70/0.175/1,
     			0.6 /0.65/0.65  /1
     		}
     		{
     			\drawTallItem{\x*\w}{\y*\hprime-\h/4}{\xx*\w}{\yy*\hprime-\h/4};
     		}
     		
     		\foreach \x/\y/\xx/\yy/\z in {
     			0.35 /1/0.425  /1.28/,
     			0.425/1/0.525  /1.29/$\ell_m$
     		}
     		{
     			\drawTallItem[\small \z]{\x*\w}{\y*\hprime-0.5*\hprime}{\xx*\w}{\yy*\hprime - 0.5* \hprime};
     		}
     		\foreach \x/\y/\xx/\yy in {
     			0.0   /0.71 /0.025/1,
     			0.025 /0.71 /0.075/1,
     			0.075 /0.69 /0.125/1,
     			0.125 /0.69 /0.2  /1,
     			0.275 /0.37 /0.3  /1,
     			0.3   /0.38 /0.325/1,
     			0.325 /0.40 /0.35 /1,
     			0.35  /0.66 /0.375/1,
     			0.375 /0.66 /0.4  /1,
     			0.4   /0.67 /0.45 /1,
     			0.45  /0.66 /0.475/1,
     			0.475 /0.68 /0.5  /1,
     			0.5   /0.67 /0.55 /1,
     			0.575 /0.69 /0.675/1,
     			0.825 /0.37 /0.9  /1,
     			0.9   /0.66 /0.975/1
     		}
     		{
     			\drawTallItem{\x*\w}{\y*\hprime+\h/4}{\xx*\w}{\yy*\hprime+\h/4};
     		}
     		
     		\foreach \x/\y/\xx/\yy in {
     			0.2  /0.35/0.275/1,
     			0.55 /0.53/0.575/1,
     			0.675/0.69/0.7  /1,
     			0.7  /0.52/0.75 /1,
     			0.75 /0.30/0.825/1,
     			0.975/0.62/1    /1
     		}
     		{
     			\drawVerticalItem{\x*\w}{\y*\hprime+\h/4}{\xx*\w}{\yy*\hprime+\h/4};
     		}
     		
     		\foreach \x/\y/\xx/\yy in {
     			0.000/1/0.05/1.10,
     			0.05 /1/0.075/1.085,
     			0.075/1/0.1 /1.065,
     			0.575/1/0.6  /1.16,
     			0.65 /1/0.675/1.05
     		}
     		{
     			\drawVerticalItem{\x*\w}{\y*\hprime-\h/4}{\xx*\w}{\yy*\hprime-\h/4};
     		}
     		
     		\foreach \x/\y/\xx/\yy in {
     			0.225/0.93/0.25 /1,
     			0.25 /0.93/0.275/1,
     			0.275/0.92/0.3  /1,
     			0.3  /0.93/0.325/1,
     			0.475 /1.29/0.5/1.39,
     			0.5/1.29/0.525  /1.38
     		}
     		{
     			\drawVerticalItem{\x*\w}{\y*\hprime-0.5*\hprime}{\xx*\w}{\yy*\hprime-0.5*\hprime};
     		}

     		\foreach \x/\y/\xx/\yy in {
     			0.125/0.59/0.15 /1,
     			0.175/0.62/0.2  /1,
     			0.525/0.62/0.55 /1,
     			0.55 /0.60/0.6 /1,
     			0.9  /0.63/0.925/1,
     			0.95 /0.68/0.975/1
     		}
     		{
     			\drawVerticalItem{\x*\w}{\y*\hprime-\h/4}{\xx*\w}{\yy*\hprime-\h/4};
     		}
     		
     		\foreach \x/\y/\xx/\yy in {
     			0.275/1/0.300/1.05,
     			0.300/1/0.325/1.06,
     			0.325/1/0.350/1.09,
     			0.85 /1/0.875/1.07,
     			0.875/1/0.9  /1.08
     		}
     		{
     			\drawVerticalItem{\x*\w}{\y*\hprime-\hprime/2}{\xx*\w}{\yy*\hprime-\hprime/2};
     		}
     		
     		\foreach \x/\y/\xx/\yy in {
     			0.025/0.0 /0.05 /0.35,
     			0.05 /0.0 /0.075/0.36,
     			0.15 /0.0 /0.175/0.39,
     			0.325/0.0 /0.35 /0.37,
     			0.525/0.0 /0.55  /0.34,
     			0.85 /0.0 /0.875/0.37,
     			0.925/0.0 /0.95 /0.66
     		}
     		{
     			\drawVerticalItem{\x*\w}{\y*\hprime}{\xx*\w}{\yy*\hprime};
     		}

     		\foreach \x/\y/\xx/\yy/\z in {
     			0.075 /0.0 /0.15 /0.28/,
     			0.175 /0.0 /0.325/0.31/$\ell_b$,
     			0.35  /0.0 /0.45 /0.29/,
     			0.45  /0.0 /0.525/0.29/,
     			0.55  /0.0 /0.65  /0.35/,
     			0.65  /0.0 /0.7  /0.53/,
     			0.7   /0.0 /0.75 /0.52/,
     			0.75  /0.0 /0.85 /0.30/,
     			0.875 /0.0 /0.925/0.29/$r_b$
     		}
     		{
     			\drawTallItem[\small \z]{\x*\w}{\y*\hprime}{\xx*\w}{\yy*\hprime};
     		}

     		\foreach \y/\z in {
     			0.5*\hprime 		/ $\nicefrac{1}{2}\height(B)$,
     			0.25*\h     		/ $\nicefrac{1}{4}H$,
     			\hprime - 0.25*\h	/ $\height(B) - \nicefrac{1}{4}H$,
     			\hprime				/ $\height(B)$,
     			\hprime + 0.25*\h	/ $\height(B) + \nicefrac{1}{4}H$
     		}{
     			\draw[dotted] (-0.1*\w,\y) -- (1.1*\w,\y) ;
     		}

     		\foreach \x/\y/\xx/\yy in {
     			0.0*\w/\hprime-0.25*\h/0.2*\w/\hprime + 0.25*\h,  
     			0.9*\w/\hprime-0.25*\h/1.0*\w/\hprime +0.25*\h,   
     			0.0*\w/\hprime-0.25*\h/0.2*\w/0,                  
     			0.9*\w/\hprime-0.25*\h/1.0*\w/0,                  
     			0.2*\w/0/0.325*\w/0.5*\hprime,                    
     			0.2*\w /0.5*\hprime/0.35*\w/\hprime + 0.25*\h,    
     			0.75*\w/0.5*\hprime/0.9*\w/\hprime + 0.25*\h,     
     			0.65*\w/0.0*\hprime/0.75 *\w/\hprime -0.25*\h,    
     			0.35*\w/0.5*\hprime/0.425*\w/\hprime + 0.25*\h,    
     			0.425*\w/0.5*\hprime/0.525*\w/\hprime +0.25*\h
     		}{
     			\draw[very thick] (\x,\y) rectangle (\xx,\yy);
     			\draw[fill = white, opacity = 0.5] (\x,\y) rectangle (\xx,\yy);
     			
     		}

     		
     		\draw[very thick, red, pattern = dots] (0.525*\w,\hprime -\h/4) rectangle (0.75*\w,\hprime +\h/4);
     		
     		\draw[very thick,red] (0.75*\w,9*\h/8) -- (1.2*\w,9*\hprime/8) node[above, black]{$B_{\ell,8}$};
     		\end{tikzpicture}
     		\caption{The area $B_{8}$}
     		\label{fig:sub:AreaB8}
     	\end{subfigure}
     	\hfill
     	\begin{subfigure}[t]{0.32\textwidth}
     		\centering
     		\begin{tikzpicture}
     		\pgfmathsetmacro{\w}{3}
     		\pgfmathsetmacro{\h}{4.8}
     		\pgfmathsetmacro{\hprime}{4.5}
     		
     		\draw (0*\w,0) rectangle (1*\w,\hprime +\h/4);
     		
     		\draw (0*\w,0) rectangle (1*\w,\hprime +\h/4);
     		
     		\foreach \x/\y/\xx/\yy in {
     			-0.03 /0.00/0.025/0.31,
     			-0.025/0.35/0.05/0.625,
     			1.075 /0.00/0.95/0.29,
     			1.1   /0.29/0.975/0.55
     		}
     		{
     			\drawTallItem{\x*\w}{\y*\hprime}{\xx*\w}{\yy*\hprime};
     		}
     		
     		\foreach \x/\y/\xx/\yy in {
     			0    /0   /0.025/0.35,
     			0.950/0   /1    /0.29,
     			0.975/0.29/1    /0.55
     		}{
     			\draw[pattern = north west lines] (\x*\w,\y*\hprime) rectangle (\xx*\w,\yy*\hprime);
     		}
     		
     		\foreach \x/\y/\xx/\yy in {
     			0.05 /0.67/0.125/1,
     			0.15 /0.70/0.175/1,
     			0.6 /0.65/0.65  /1
     		}
     		{
     			\drawTallItem{\x*\w}{\y*\hprime-\h/4}{\xx*\w}{\yy*\hprime-\h/4};
     		}
     		
     		\foreach \x/\y/\xx/\yy/\z in {
     			0.35 /1/0.425  /1.28/,
     			0.425/1/0.525  /1.29/$\ell_m$
     		}
     		{
     			\drawTallItem[\small \z]{\x * \w}{\y * \hprime - 0.5 * \hprime}{\xx * \w}{\yy * \hprime - 0.5 * \hprime};
     		}
     		\foreach \x/\y/\xx/\yy in {
     			0.0   /0.71 /0.025/1,
     			0.025 /0.71 /0.075/1,
     			0.075 /0.69 /0.125/1,
     			0.125 /0.69 /0.2  /1,
     			0.275 /0.37 /0.3  /1,
     			0.3   /0.38 /0.325/1,
     			0.325 /0.40 /0.35 /1,
     			0.35  /0.66 /0.375/1,
     			0.375 /0.66 /0.4  /1,
     			0.4   /0.67 /0.45 /1,
     			0.45  /0.66 /0.475/1,
     			0.475 /0.68 /0.5  /1,
     			0.5   /0.67 /0.55 /1,
     			0.55  /0.69 /0.65 /1,
     			0.825 /0.37 /0.9  /1,
     			0.9   /0.66 /0.975/1
     		}
     		{
     			\drawTallItem{\x*\w}{\y*\hprime+\h/4}{\xx*\w}{\yy*\hprime+\h/4};
     		}
     		
     		\foreach \x/\y/\xx/\yy in {
     			0.2  /0.35/0.275/1,
     			0.65 /0.69/0.675/1,
     			0.675/0.53/0.7/1,
     			0.7  /0.52/0.75 /1,
     			0.75 /0.30/0.825/1,
     			0.975/0.62/1    /1
     		}
     		{
     			\drawVerticalItem{\x*\w}{\y*\hprime+\h/4}{\xx*\w}{\yy*\hprime+\h/4};
     		}
     		
     		\foreach \x/\y/\xx/\yy in {
     			0.000/1/0.05/1.10,
     			0.05 /1/0.075/1.085,
     			0.075/1/0.1 /1.065,
     			0.55 /1/0.575  /1.16,
     			0.575/1/0.6    /1.05
     		}
     		{
     			\drawVerticalItem{\x*\w}{\y*\hprime-\h/4}{\xx*\w}{\yy*\hprime-\h/4};
     		}
     		
     		\foreach \x/\y/\xx/\yy in {
     			0.225/0.93/0.25 /1,
     			0.25 /0.93/0.275/1,
     			0.275/0.92/0.3  /1,
     			0.3  /0.93/0.325/1,
     			0.475 /1.29/0.5/1.39,
     			0.5/1.29/0.525  /1.38
     		}
     		{
     			\drawVerticalItem{\x*\w}{\y*\hprime-0.5*\hprime}{\xx*\w}{\yy*\hprime-0.5*\hprime};
     		}

     		\foreach \x/\y/\xx/\yy in {
     			0.125/0.59/0.15 /1,
     			0.175/0.62/0.2  /1,
     			0.525/0.62/0.55 /1,
     			0.55 /0.60/0.6 /1,
     			0.9  /0.63/0.925/1,
     			0.95 /0.68/0.975/1
     		}
     		{
     			\drawVerticalItem{\x*\w}{\y*\hprime-\h/4}{\xx*\w}{\yy*\hprime-\h/4};
     		}
     		
     		\foreach \x/\y/\xx/\yy in {
     			0.275/1/0.300/1.05,
     			0.300/1/0.325/1.06,
     			0.325/1/0.350/1.09,
     			0.85 /1/0.875/1.07,
     			0.875/1/0.9  /1.08
     		}
     		{
     			\drawVerticalItem{\x*\w}{\y*\hprime-\hprime/2}{\xx*\w}{\yy*\hprime-\hprime/2};
     		}
     		
     		\foreach \x/\y/\xx/\yy in {
     			0.025/0.0 /0.05 /0.35,
     			0.05 /0.0 /0.075/0.36,
     			0.15 /0.0 /0.175/0.39,
     			0.325/0.0 /0.35 /0.37,
     			0.525/0.0 /0.55  /0.34,
     			0.85 /0.0 /0.875/0.37,
     			0.925/0.0 /0.95 /0.66
     		}
     		{
     			\drawVerticalItem{\x*\w}{\y*\hprime}{\xx*\w}{\yy*\hprime};
     		}

     		\foreach \x/\y/\xx/\yy/\z in {
     			0.075 /0.0 /0.15 /0.28/,
     			0.175 /0.0 /0.325/0.31/$\ell_b$,
     			0.35  /0.0 /0.45 /0.29/,
     			0.45  /0.0 /0.525/0.29/,
     			0.55  /0.0 /0.65  /0.35/,
     			0.65  /0.0 /0.7  /0.53/,
     			0.7   /0.0 /0.75 /0.52/,
     			0.75  /0.0 /0.85 /0.30/,
     			0.875 /0.0 /0.925/0.29/$r_b$
     		}
     		{
     			\drawTallItem[\small \z]{\x*\w}{\y*\hprime}{\xx*\w}{\yy*\hprime};
     		}

     		\foreach \y/\z in {
     			0.5*\hprime 		/ $\nicefrac{1}{2}\height(B)$,
     			0.25*\h     		/ $\nicefrac{1}{4}H$,
     			\hprime - 0.25*\h	/ $\height(B) - \nicefrac{1}{4}H$,
     			\hprime				/ $\height(B)$,
     			\hprime + 0.25*\h	/ $\height(B) + \nicefrac{1}{4}H$
     		}{
     			\draw[dotted] (-0.1*\w,\y) -- (1.1*\w,\y) ;
     		}

     		\foreach \x/\y/\xx/\yy in {
     			0.0*\w/\hprime-0.25*\h/0.2*\w/\hprime + 0.25*\h,  
     			0.9*\w/\hprime-0.25*\h/1.0*\w/\hprime +0.25*\h,   
     			0.0*\w/\hprime-0.25*\h/0.2*\w/0,                  
     			0.9*\w/\hprime-0.25*\h/1.0*\w/0,                  
     			0.2*\w/0/0.325*\w/0.5*\hprime,                    
     			0.2*\w /0.5*\hprime/0.35*\w/\hprime + 0.25*\h,    
     			0.75*\w/0.5*\hprime/0.9*\w/\hprime + 0.25*\h,     
     			0.65*\w/0.0*\hprime/0.75 *\w/\hprime -0.25*\h,    
     			0.35*\w/0.5*\hprime/0.425*\w/\hprime + 0.25*\h,   
     			0.425*\w/0.5*\hprime/0.525*\w/\hprime +0.25*\h,   
     			0.525*\w/\hprime -0.25*\h/0.75*\w/\hprime +0.25*\h     
     		}{
     			\draw[very thick] (\x,\y) rectangle (\xx,\yy);
     			\draw[fill = white, opacity = 0.5] (\x,\y) rectangle (\xx,\yy);
     			
     		}
     		
     		\draw[very thick,red,pattern=dots] (0.325*\w,0*\hprime) -- (0.325*\w,\hprime/2) -- (0.525*\w,\hprime/2) -- (0.525*\w,\hprime -\h/4) -- (0.65*\w,\hprime -\h/4) -- (0.65*\w,0*\hprime) -- (0.325*\w,0*\hprime);
     		
     		\draw[very thick,red,pattern=dots] (0.75*\w,0*\hprime) -- (0.75*\w,0.58*\hprime -\h/4) -- (0.75*\w,0.58*\hprime -\h/4)--(0.75*\w,\hprime/2) -- (0.875*\w,\hprime /2) -- (0.875*\w,0) --(0.7*\w,0);
     		
     		\draw[very thick,red] (0.325*\w,\h/4) -- (-0.15*\w,3*\hprime/8) node[above, black]{$B_{\ell,9}$};
     		\draw[very thick,red] (0.875*\w,\h/4) -- (1.2*\w,3*\hprime/8) node[above, black]{$B_{r,9}$};
     		
     		\end{tikzpicture}
     		\caption{The areas $B_{\ell,9}$ and $B_{r,9}$}
     		\label{fig:sub:AreaB9}
     	\end{subfigure}
     	\caption{The definition of the final areas, and the reordering of items inside them.}
     	\label{fig:GenralReordering4}
     \end{figure*}
         \textit{Items with height $\height(B)$:} Finally, we consider the case that there are (pseudo) items with height $\height(B)$ inside the box. This can be handled in a very simple manner. We select one of these items, and shift it all the way to right of the box. Next, we shift all other items with height $\height(B)$ right next to this item. Since these items completely overlapped the box vertically due to their height, shifting them to one border of the packing results in the remainder still being feasible, after shifting them over correspondingly. Afterwards, we can continue with the procedure described above, effectively treating the box as being smaller by the width of the items with height $\height(B)$.\newline\indent
         \textit{Analyzing the number of constructed boxes.} In the worst case, we have (pseudo) items with height $\height(B)$ and both $\ell$ and $r$ exist. Furthermore, the left border of $\ell_b$ should be right of the left border of $\ell$, as well as the left border of $r_b$ being left of $r$. For repacking purposes, we did not require the assumption we got from our rounding, that tall items are placed on an arithmetic grid. To analyze the the number of boxes, however, it is convenient to use this assumption. To begin, we analyze the number of boxes for tall items we generate.
         \begin{claim}
             The number of boxes for tall items is bounded by $2N^2+(14N)/4+8$, where $H/N$ is the distance between the grid lines.
         \end{claim}
         \begin{proof}
             We show that the above is true by iteratively inspecting the generated areas and counting the number of generated boxes inside each of them. In the following, whenever we name a area, we only refer to the left area, but the arguments hold for both the left and right areas, as they are analogous.\newline\indent
             \textit{Area $B_{\ell,1}$: $N/4$ boxes.} In these areas, there are tall items with heights between $\nicefrac{1}{4}H$ and $1/2H$ at the top of the box. For each of these sizes, we generate at most one box in each area. Therefore, both contain at most $N/4$ boxes for tall items.\newline\indent
              \textit{Area $B_{\ell,2}$: $N^2/2+N/4+3$ boxes.} These areas contain at most one box for each item height larger than $\nicefrac{1}{2}\height(B)$ and smaller than $1/3\height(B)$. These are at most $N/4$ sizes. For the remaining sizes, we know by \cref{lem:ReorderTallVertBox12}, that we create at most $4\countTall\countPseudoTall+3$ boxes in total, since there are at most three immovable items overlapping this area. We know that $\countTall\leq N/4$ since the tall items have heights between $\nicefrac{1}{4}H$ and $\nicefrac{1}{2}\height(B)$, $\countPseudo\leq N/2$ since they have heights smaller than $\nicefrac{1}{2}\height(B)$. Finally, because both the tall and pseudo items in this area have a height of at most $\nicefrac{1}{2}\height(B),$ we have $\countPseudoTall\leq N/2$ as well. This leaves us with at most $4N/4 \cdot N/2+N/4+3=N^2/2+N/4+3$ boxes for the tall items in these areas.\newline\indent
               \textit{Area $B_{\ell,3}$: $0$ boxes.} The only tall items inside these areas are $\ell_b$ and $r_b$ respectively. As they overlap the areas $B_{\ell,2}$, they were counted already. \newline\indent
                \textit{Area $B_{\ell,4}$: $N/4$ boxes.} These areas contain only tall items with heights between $\nicefrac{1}{2}\height(B)$ and $\nicefrac{1}{2}\height(B)+\nicefrac{1}{4}H$. We simply create one box for each of these sizes, leaving us with $\nicefrac{1}{4}N$ boxes. \newline\indent
                 \textit{Area $B_{\ell,5}$: $N/4$ boxes.} Similarly to the area $B_{\ell,4}$, these areas only contain tall items with heights between $\nicefrac{1}{2}\height(B)$ and $\nicefrac{1}{2}\height(B)+\nicefrac{1}{4}H$, again leaving us with $\nicefrac{1}{4}N$ boxes.\newline\indent
                  \textit{Area $B_{\ell,6}$: $N^2/2+1$ boxes.} In these areas, there are only tall and pseudo items with height less than $\nicefrac{1}{2}\height(B).$ As we solve this area with \cref{lem:ReorderTallVertBox12}, we create at most $N^2/2$ boxes for tall items in each of the areas. Further, there might be an overlapping tall item, requiring another single box, resulting in at most $N^2/2+1$ total boxes.\newline\indent
                \textit{Area $B_{\ell,7}$: $N/4$ boxes.} These areas contain $\ell_m$ and $r_m$ respectively. Above these items we create at most $N/4$ boxes for tall items each, since the tall items have heights between $\nicefrac{1}{4}H$ and $\nicefrac{1}{2}\height(B).$ The box for the item overlapping $L_3$ is already counted.\newline\indent
                 \textit{Area $B_{8}$: $N/4$ boxes.} In this area the tall items have a height of at most $\nicefrac{1}{2}\height(B)$ and we create one box per item size. Thus, this leaves us with a total of $N/4$ boxes for $B_8$. Any items overlapping the lines $L_2$ or $R_2$ are already counted in the area $B_{\ell,7}.$\newline\indent
                \textit{Area $B_{\ell,9}$: $2N/4$ boxes.} Next, we consider these areas. Here, all items have a height of at most $\nicefrac{1}{2}\height(B)$ and we create a box for each size at both the bottom and top of the areas. This leaves us with $2N/4$ boxes for each of the areas.\newline\indent
                 \textit{Area $B_{10}$: $N/4$ boxes.} This area contains pseudo items with height $\height(B)$. Here, we create at most one box for each item with height larger than $\height(B)-\nicefrac{1}{4}H$ resulting in at most $N/4$ boxes.\newline\indent
                 After doubling the count of boxes for all duplicate areas, this leaves us with at most
                 \begin{align*}
                 	&2(N/4+N^2/2+N/4+3+N/4+N^2/2+1+N/4+2N/4)\\
                 	&+2N/4\\
                 	&=2N^2+14N/4+8
                 \end{align*}  boxes, proving the claim.
         \end{proof}
         Next we consider the boxes for vertical items.
         \begin{claim}
             The number of boxes for vertical items is at most $4N^2+41N/4+5$.
         \end{claim}
         \begin{proof}
             Similarly to the last claim, we prove this by considering the areas iteratively. In the following, whenever we name a area, we only refer to the left area, but the arguments hold for both the left and right areas, as they are analogous.\newline\indent
             \textit{Area $B_{\ell,1}$: $N/2$ boxes.} In these areas, items touching the bottom have a height of at most $\nicefrac{1}{4}H$. The same holds true for items touching the top, as they are vertical. 
             This yields at most $2N/4=N/2$ boxes in each of these areas.\newline\indent
             \textit{Area $B_{\ell,2}$: $N^2+2$ boxes.} In these areas, the pseudo items touching the bottom have sizes between $\nicefrac{1}{4}H$ and $\nicefrac{1}{2}\height(B)$ and the items touching the top have sizes up to $\nicefrac{1}{2}\height(B)$. Using \cref{lem:ReorderTallVertBox12}, we generate at most $4\countPseudo\countPseudoTall\leq 4N/2\cdot N/2 =N^2$ boxes plus the two boxes for extending the immovable items in each area. Thus, we create at most $N^2+2$ boxes in each area.\newline\indent
             \textit{Area $B_{\ell,3}$: $N/4$ boxes.} These areas contain pseudo items with heights up to $\nicefrac{1}{4}H$. For each of these sizes, we generate at most one box, resulting in $N/4$ boxes for each of these areas.\newline\indent
             \textit{Area $B_{\ell,4}$: $N/2$ boxes.} Here, below the items with height larger than $\nicefrac{1}{2}\height(B),$ we packed items with height at most $\nicefrac{1}{4}H$. We have two blocks of these items, one each at $\ell$ and $r$. Thus, we create $N/4$ boxes in these areas, resulting in a total of $N/2$ boxes for these areas.\newline\indent
             \textit{Area $B_{5}$: $N/4$ boxes.} This area does not contain pseudo items above its tall items with height between $\nicefrac{1}{2}\height(B)$ and $\nicefrac{1}{2}\height(B)-\nicefrac{1}{4}H$, as they were shifted up so their lower border is at $\height(B)-\nicefrac{1}{4}H$. Therefore, this area only contains pseudo items with height between $\nicefrac{1}{2}\height(B)$ and $\nicefrac{1}{2}\height(B)$, and we create at most one box for each size, yielding $N/4$ boxes.\newline\indent
             \textit{Area $B_{\ell,6}$: $N^2$ boxes.} These areas contain pseudo items with heights between $\nicefrac{1}{4}H$ and $\nicefrac{1}{2}\height(B)$ at the top and items with heights up to $\nicefrac{1}{2}\height(B)$ at the bottom. Therefore, by \cref{lem:ReorderTallVertBox12}, we generate at most $N^2$ boxes, analogously to the area $B_{\ell,2}$. We do not create the additional pseudo item here, as $\ell_{t,\ell}$ already touches the top of the area. Thus, the number of boxes is at most $N^2$.\newline\indent
             \textit{Area $B_{\ell,7}$: $N$ boxes.} In these areas, we have at most three areas for pseudo items touching $\ell_m$ with their lower border. These items have a height of at most $\nicefrac{1}{4}H$. Therefore, we create at most $3\cdot \nicefrac{1}{4}N$ boxes for these items. Finally, the items touching the $\height(B)+\nicefrac{1}{4}H$ with their top have a height between $\nicefrac{1}{4}\height(B)$ and $\nicefrac{1}{2}\height(B)$. Again, we generate one box for each height, resulting in at most $\nicefrac{1}{4}N$ boxes. This leaves us with at most $N$ boxes in each area.\newline\indent
             \textit{Area $B_{8}$: $N/2$ boxes.} In this area, we have (pseudo) items with height up to $\nicefrac{1}{4}H$ touching the bottom and pseudo items with sizes between $\nicefrac{1}{4}H$ and $\nicefrac{1}{2}\height(B)$ touching the top. For each of these sizes, we generate at most one box, yielding $N/2$ boxes.\newline\indent
             \textit{Area $B_{\ell,9}$: $N$ boxes.} In these areas, the pseudo items have height between $\nicefrac{1}{4}H$ and $\nicefrac{3}{4}\cdot \height(B)$. We create at most one box for each of these sizes, yielding $N/2$ boxes. At the top of the area, we have items with height at most $\nicefrac{1}{2}\height(B)$, again generating one box per size. This results in another $N/2$ boxes, totalling at $N$ boxes for the area.\newline\indent
             \textit{Area $B_{10}$: $N/4+1$ boxes.} Finally, we consider the items with height larger than $3\height(B)/4$ in the area $B_{10}$. There can be pseudo items with height at most $\nicefrac{1}{4}H$ above these items. Since we create at most one box per size, we create at most $N/4$ boxes. Further, there may be exactly one box for pseudo items with height exactly $\height(B)$.\newline\indent
             Summing the generated boxes up and considering duplicated areas, we create at most
             \begin{align*}
             	& 2(\nicefrac{N}{2}+N^2+2+\nicefrac{N}{4}+N/2+N^2+N+\nicefrac{N}{2}+N+\nicefrac{N}{4})\\
             	&+N/4+1\\
             	&=4N^2+41N/4+5
             \end{align*}  boxes for vertical items.\newline\indent         
         \end{proof}
         This concludes the proof of \cref{lem:ReorderTallVertBox34}. For an illustration of the resulting box, see \cref{fig:GenralReordering5}.
         \end{proof}
     	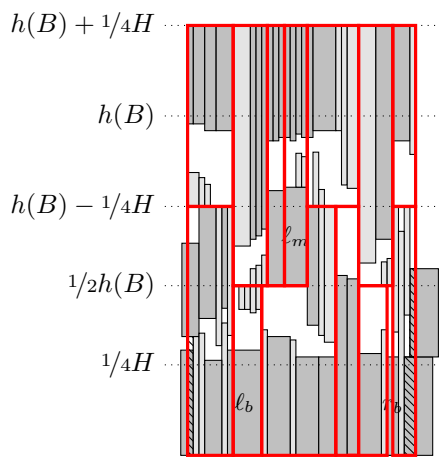
\begin{figure*}[ht]
     	\centering
     	\begin{tikzpicture}
     	\pgfmathsetmacro{\w}{3}
     	\pgfmathsetmacro{\h}{4.8}
     	\pgfmathsetmacro{\hprime}{4.5}
     	
     	\draw (0*\w,0) rectangle (1*\w,\hprime +\h/4);
     	
     	\foreach \x/\y/\xx/\yy in {
     		-0.03 /0.00/0.025/0.31,
     		-0.025/0.35/0.05/0.625,
     		1.075 /0.00/0.95/0.29,
     		1.1   /0.29/0.975/0.55
     	}
     	{
     		\drawTallItem{\x*\w}{\y*\hprime}{\xx*\w}{\yy*\hprime};
     	}
     	
     	\foreach \x/\y/\xx/\yy in {
     		0    /0   /0.025/0.35,
     		0.950/0   /1    /0.29,
     		0.975/0.29/1    /0.55
     	}{
     		\draw[pattern = north west lines] (\x*\w,\y*\hprime) rectangle (\xx*\w,\yy*\hprime);
     	}
     	
     	\foreach \x/\y/\xx/\yy in {
     		0.05 /0.67/0.125/1,
     		0.15 /0.70/0.175/1,
     		0.525/0.65/0.575/1
     	}
     	{
     		\drawTallItem{\x*\w}{\y*\hprime-\h/4}{\xx*\w}{\yy*\hprime-\h/4};
     	}
     	
     	\foreach \x/\y/\xx/\yy/\z in {
     		0.35 /1/0.425  /1.28/,
     		0.425/1/0.525  /1.29/$\ell_m$
     	}
     	{
     		\drawTallItem[\small \z]{\x*\w}{\y*\hprime-0.5*\hprime} {\xx*\w}{\yy*\hprime-0.5*\hprime};
     	}
     	\foreach \x/\y/\xx/\yy in {
     		0.0   /0.71 /0.025/1,
     		0.025 /0.71 /0.075/1,
     		0.075 /0.69 /0.125/1,
     		0.125 /0.69 /0.2  /1,
     		0.275 /0.37 /0.3  /1,
     		0.3   /0.38 /0.325/1,
     		0.325 /0.40 /0.35 /1,
     		0.35  /0.66 /0.375/1,
     		0.375 /0.66 /0.4  /1,
     		0.4   /0.67 /0.45 /1,
     		0.45  /0.66 /0.475/1,
     		0.475 /0.68 /0.5  /1,
     		0.5   /0.67 /0.55 /1,
     		0.55  /0.69 /0.65 /1,
     		0.825 /0.37 /0.9  /1,
     		0.9   /0.66 /0.975/1
     	}
     	{
     		\drawTallItem{\x*\w}{\y*\hprime+\h/4}{\xx*\w}{\yy*\hprime+\h/4};
     	}
     	
     	\foreach \x/\y/\xx/\yy in {
     		0.2  /0.35/0.275/1,
     		0.65 /0.69/0.675/1,
     		0.675/0.53/0.7/1,
     		0.7  /0.52/0.75 /1,
     		0.75 /0.30/0.825/1,
     		0.975/0.62/1    /1
     	}
     	{
     		\drawVerticalItem{\x*\w}{\y*\hprime+\h/4}{\xx*\w}{\yy*\hprime+\h/4};
     	}
     	
     	\foreach \x/\y/\xx/\yy in {
     		0.000/1/0.05/1.10,
     		0.05 /1/0.075/1.085,
     		0.075/1/0.1 /1.065,
     		0.55 /1/0.575  /1.16,
     		0.575/1/0.6    /1.05
     	}
     	{
     		\drawVerticalItem{\x*\w}{\y*\hprime-\h/4}{\xx*\w}{\yy*\hprime-\h/4};
     	}
     	
     	\foreach \x/\y/\xx/\yy in {
     		0.225/0.93/0.25 /1,
     		0.25 /0.93/0.275/1,
     		0.275/0.92/0.3  /1,
     		0.3  /0.93/0.325/1,
     		0.475 /1.29/0.5/1.39,
     		0.5/1.29/0.525  /1.38
     	}
     	{
     		\drawVerticalItem{\x*\w}{\y*\hprime-0.5*\hprime}{\xx*\w}{\yy*\hprime-0.5*\hprime};
     	}

     	\foreach \x/\y/\xx/\yy in {
     		0.125/0.59/0.15 /1,
     		0.175/0.62/0.2  /1,
     		0.575/0.62/0.6  /1,
     		0.6  /0.60/0.65 /1,
     		0.9  /0.63/0.925/1,
     		0.95 /0.68/0.975/1
     	}
     	{
     		\drawVerticalItem{\x*\w}{\y*\hprime-\h/4}{\xx*\w}{\yy*\hprime-\h/4};
     	}
     	
     	\foreach \x/\y/\xx/\yy in {
     		0.275/1/0.300/1.05,
     		0.300/1/0.325/1.06,
     		0.325/1/0.350/1.09,
     		0.85 /1/0.875/1.07,
     		0.875/1/0.9  /1.08
     	}
     	{
     		\drawVerticalItem{\x*\w}{\y*\hprime-\hprime/2}{\xx*\w}{\yy*\hprime-\hprime/2};
     	}
     	
     	\foreach \x/\y/\xx/\yy in {
     		0.025/0.0 /0.05 /0.35,
     		0.05 /0.0 /0.075/0.36,
     		0.15 /0.0 /0.175/0.39,
     		0.325/0.0 /0.35 /0.37,
     		0.45 /0.0 /0.475/0.34,
     		0.85 /0.0 /0.875/0.37,
     		0.925/0.0 /0.95 /0.66
     	}
     	{
     		\drawVerticalItem{\x*\w}{\y*\hprime}{\xx*\w}{\yy*\hprime};
     	}

     	\foreach \x/\y/\xx/\yy/\z in {
     		0.075 /0.0 /0.15 /0.28/,
     		0.175 /0.0 /0.325/0.31/$\ell_b$,
     		0.35  /0.0 /0.45 /0.35/,
     		0.475 /0.0 /0.575/0.29/,
     		0.575 /0.0 /0.65 /0.29/,
     		0.65  /0.0 /0.7  /0.53/,
     		0.7   /0.0 /0.75 /0.52/,
     		0.75  /0.0 /0.85 /0.30/,
     		0.875 /0.0 /0.925/0.29/$r_b$
     	}
     	{
     		\drawTallItem[\small \z]{\x*\w}{\y*\hprime}{\xx*\w}{\yy*\hprime};
     	}

     	\foreach \y/\z in {
     		0.5*\hprime 		/ $\nicefrac{1}{2}\height(B)$,
     		0.25*\h     		/ $\nicefrac{1}{4}H$,
     		\hprime - 0.25*\h	/ $\height(B) - \nicefrac{1}{4}H$,
     		\hprime				/ $\height(B)$,
     		\hprime + 0.25*\h	/ $\height(B) + \nicefrac{1}{4}H$
     	}{
     		\draw[dotted] (-0.1*\w,\y)node[left]{\z} -- (1.1*\w,\y) ;
     	}
     	
     	\foreach \x/\y/\xx/\yy in {
     		0.0*\w/\hprime-0.25*\h/0.2*\w/\hprime + 0.25*\h,  
     		0.9*\w/\hprime-0.25*\h/1.0*\w/\hprime +0.25*\h,   
     		0.0*\w/\hprime-0.25*\h/0.2*\w/0,                  
     		0.9*\w/\hprime-0.25*\h/1.0*\w/0,                  
     		0.2*\w/0/0.325*\w/0.5*\hprime,                    
     		0.2*\w /0.5*\hprime/0.35*\w/\hprime + 0.25*\h,    
     		0.75*\w/0.5*\hprime/0.9*\w/\hprime + 0.25*\h,     
     		0.65*\w/0.0*\hprime/0.75 *\w/\hprime -0.25*\h,    
     		0.35*\w/0.5*\hprime/0.425*\w/\hprime + 0.25*\h,   
     		0.425*\w/0.5*\hprime/0.525*\w/\hprime +0.25*\h,   
     		0.525*\w/\hprime -0.25*\h/0.75*\w/\hprime +0.25*\h     
     	}{
     		\draw[very thick, red] (\x,\y) rectangle (\xx,\yy);
     	}
     	
     	\draw[very thick,red] (0.325*\w,0*\hprime) -- (0.325*\w,\hprime/2) -- (0.525*\w,\hprime/2) -- (0.525*\w,\hprime -\h/4) -- (0.65*\w,\hprime -\h/4) -- (0.65*\w,0*\hprime) -- (0.325*\w,0*\hprime);
     	
     	\draw[very thick,red] (0.75*\w,0*\hprime) -- (0.75*\w,0.58*\hprime -\h/4) -- (0.75*\w,0.58*\hprime -\h/4)--(0.75*\w,\hprime/2) -- (0.875*\w,\hprime /2) -- (0.875*\w,0) --(0.7*\w,0);
     	
     	\end{tikzpicture}
     	\caption{The packing that results from the procedure given in \cref{lem:ReorderTallVertBox34}. All sub-areas are highlighted individually and sorted as discussed above.}
     	\label{fig:GenralReordering5}
     \end{figure*}
\paragraph*{Step 5} 
Having shown that we can separate the optimal packing into $\Oh_{\eps}(1)$ many boxes that each contain only a single item or item type, we now show that we can fill these boxes with their respective items.

The boxes for horizontal and vertical items are both filled using configuration~IPs. Vertical items might have been treated as pseudo items and have been separated. During their placement, we ensure that they are placed integrally again. The configuration~IPs for both types function in a similar manner and ensure that all items are placed integrally while not exceeding the border of the strips.
First, consider the placement of vertical items.
\begin{lem}
	Let $\heightVertical$ be the set of different heights of vertical items and $\mu\stripWidth$ the maximal width of a vertical item. Furthermore, let $\boxesPseudo$ be the set of boxes containing all separated vertical items and only them.
	There exists a non separated placement of vertical items into the boxes~$\boxesPseudo$ and at most $7(|\heightVertical|+|\boxesPseudo|)$ additional boxes $\boxesPseudo'$, each of height at most $\nicefrac{1}{4}\stripHeight$ and width $\mu\stripWidth,$, such that the boxes $\boxesPseudo\cup \boxesPseudo'$ are partitioned into at most $\bigO((|\heightVertical|+|\boxesPseudo|)/\delta)$ sub-boxes $\boxesVert,$ containing only vertical items of the same height and at most $\bigO(|\heightVertical|+|\boxesPseudo|)$ empty boxes $\boxesSmallVert$ with total area $a(\boxesSmallVert)\geq a(\boxesPseudo)-a(\itemsVert)$.
	\label{lem:PlacementVertSep}
\end{lem}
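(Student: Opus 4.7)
The plan is to employ a configuration LP in the spirit of classical strip‐packing arguments. For every box $B \in \boxesPseudo$ let $\mathcal{C}_B$ be the set of configurations, where a configuration is a multiset of heights drawn from $\heightVertical$ whose total does not exceed $\height(B)$; intuitively, a configuration describes how an infinitesimal vertical slice of $B$ could be filled by a stack of vertical items. I introduce a variable $x_{B,C} \geq 0$ for every pair $(B,C)$ encoding the total width inside $B$ devoted to configuration $C$. The LP enforces that for every $h \in \heightVertical$ the total width contributed by configurations containing~$h$ equals the total width of vertical items of height~$h$, and that for every $B$ the widths $x_{B,C}$ sum to at most $\width(B)$. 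The LP is feasible because the optimal packing of the separated vertical items inside $\boxesPseudo$ (prior to any reordering) already induces such a fractional solution, after viewing each maximal vertical strip of constant profile as a separate configuration.

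The next step is to pass to a basic feasible solution, which has at most $|\heightVertical| + |\boxesPseudo|$ strictly positive coordinates. For every positive $x_{B,C}$ I slice $B$ horizontally along the layers of~$C$ and produce one sub-box per distinct height appearing in $C$; items of a given height are then placed integrally side by side inside the corresponding sub-box. Since each vertical item has height at least $\delta\stripHeight$, every configuration has at most $1/\delta$ layers, so the total number of sub-boxes in $\boxesVert$ is $\bigO((|\heightVertical|+|\boxesPseudo|)/\delta)$, matching the required bound.

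The main obstacle is integrality of the widths: in general $x_{B,C}$ is not an integer multiple of the widths of the items that ought to fill the corresponding sub-box, so a fractional residue of items is left at the right end of each fractional configuration, and a residue could even exceed $\mu\stripWidth$ after merging with the neighbouring fractional configuration. The plan is to absorb this by, for each of the at most $|\heightVertical|+|\boxesPseudo|$ fractionally used configurations, extracting a bounded number of items---at most one per layer of the configuration, together with a constant correction accounting for the boundary effects between adjacent configurations---and reassigning them to fresh dedicated boxes of width $\mu\stripWidth$ and height at most $\nicefrac{1}{4}\stripHeight$, each holding a single separated vertical item integrally. A careful case analysis bounds the extraction by seven items per fractional configuration, yielding $|\boxesPseudo'| \leq 7(|\heightVertical|+|\boxesPseudo|)$. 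Finally, the empty portions of $\boxesPseudo \cup \boxesPseudo'$ not occupied by any vertical item form $\boxesSmallVert$; since their total area equals the total area of all boxes minus $a(\itemsVert)$, we obtain $a(\boxesSmallVert) \geq a(\boxesPseudo) - a(\itemsVert)$, completing the proof.
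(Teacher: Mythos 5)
Your proposal follows the same route as the paper: a configuration LP whose variables encode the width inside each $B\in\boxesPseudo$ devoted to each height-multiset, feasibility from the existing fractional packing, a basic feasible solution with at most $|\heightVertical|+|\boxesPseudo|$ nonzero coordinates, at most $1/\delta$ layers per configuration, and absorption of the fractional residue into extra $\mu\stripWidth$-wide boxes. The structure and the sub-box count in $\boxesVert$ are handled essentially identically.

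There is, however, a genuine gap where you wave at the bound $|\boxesPseudo'|\le 7(|\heightVertical|+|\boxesPseudo|)$. You write that you extract ``at most one item per layer'' and reassign them to fresh boxes ``each holding a single separated vertical item integrally,'' and then assert that ``a careful case analysis bounds the extraction by seven items per fractional configuration.'' These two statements are in tension: one overflow item per layer yields up to $\height(B)/(\delta\opt)=\Theta(1/\delta)$ items per configuration, not seven, so if each extra box holds exactly one item you would create $\Theta(1/\delta)$ extra boxes per configuration — far more than $7$ and enough to break the subsequent area accounting in Lemma~\ref{lem:structure}, where the number of $\boxesVertPlace$ boxes must stay $\Oh(1/(\delta^2\eps^5))$. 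The count of $7$ per configuration is not a bound on the number of extracted items but on the number of \emph{boxes needed to hold them}: the up-to-$1/\delta$ overflow items (one per layer of the configuration) have total height at most $\stripHeight$, so they can be stacked greedily into four boxes of height $\nicefrac{1}{4}\stripHeight$, and at most three of them straddle a box boundary and are moved into singleton boxes, giving $4+3=7$. Your proposal never produces this argument, so the stated bound on $|\boxesPseudo'|$ is unsupported. A second, smaller omission: you claim $a(\boxesSmallVert)\ge a(\boxesPseudo)-a(\itemsVert)$ by a generic area argument but never argue that the leftover space decomposes into only $\bigO(|\heightVertical|+|\boxesPseudo|)$ rectangular empty boxes; this requires observing that the free space is exactly one rectangle above each used configuration plus one empty configuration per box, i.e.\ a constant number per nonzero LP variable plus one per box.
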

\begin{proof}
	This procedure is an evolution of the ideas in \cite{StripPacking54}.
	We prove this lemma via a configuration LP. We define a configuration as a multiset of items that can be placed atop each other without exceeding the boundaries of a given box $B$. We then place these configurations of items inside their assigned boxes. Therefore, let $C=\{a_\height:\height|\height\in \heightVertical\}$. The height of a configurations is the sum of all item-heights inside the configuration, thus $\height(C):=\sum_{\height\in \heightVertical}\height\cdot a_\height$. Let $\mathcal{C}_B$ be the set of configurations with height of at most $\height(B)$, i.e.\ configurations that fit inside this box. We use $X_{C,B}$ to indicate how many configurations $C$ are assigned to a box $B$ and define $w(C)$ as the width of a configuration. Finally, we define $w_\height$ to be the total width of all vertical items with height $\height$ for all $\height\in \heightVertical.$\newline\indent
	The configuration LP is defined as follows:
	\begin{align*}
	\sum_{C\in \mathcal{C}_B} X_{C,B} w(C) &= w(B)   &\forall B\in \boxesPseudo \\
	\sum_{B\in \boxesVert}\sum_{C\in \mathcal{C}_B} X_{C,B}\cdot a_{\height,C} &= w_\height &\forall \height\in \heightVertical \\
	X_{C,B}&\geq 0 &\forall B\in \boxesVert, C\in \mathcal{C}_B
	\end{align*}
	The first equation ensures that all items inside a configuration fit inside their assigned boxes regarding their width. We later fill these configurations greedily and handle the resulting overlap through the additional boxes. The second equation ensures that we place all items of a given height across all our configurations. Doing this for every height ensures that all items are placed after solving the configuration LP. Finally, the last equation states that all configurations can only contain positive entries to ensure correctness, as there cannot be negative items that free up space.\newline\indent
	As the configuration LP is a linear program in itself, it has a basic solution with at most $|\heightVertical|+|\boxesPseudo|$ non-zero components. This is because it has that many conditions.\newline\indent
	Given such a basic solution, we place the selected configurations inside their corresponding boxes. Afterwards, we place the items into the configurations, such that the last item overlaps the configuration border. Each configuration has a height of at most $H$ since the boxes~$\boxesPseudo$ have at most that height.\newline\indent
 	We partition the set of overlapping items in each configuration into seven boxes with height $\nicefrac{1}{4}H$ and width $\mu\stripWidth$ in the following way: First, we stack the items in four boxes one by one atop each other such that the last item overlaps the box on top. Since the total height of the items is at most $H$, there are at most three overlapping items. Each item is placed into its own box. These boxes are called $\boxesPseudo'$. In total, we generate at most $7(|\heightVertical|+|\boxesPseudo|)$ boxes of width $\mu\stripWidth$. The items can be placed wholly inside these boxes as they each have a width of at most $\mu\stripWidth$, since they are vertical items.\newline\indent
	It is worth noting that the configuration width as defined by the basic solution of the configuration LP might not be integral. Since we removed all overlapping items, however, we only need an integral width, and can therefore reduce the configuration width to the next smaller integer. This may result in empty configurations inside the strip. These empty configuration have  at least a width of the sum of all non integral fractions we removed from the configurations inside their box. Since their boxes had an integral width and all other configurations have an integral width as well, these empty configurations have integral widths themselves.\newline\indent
	Since configurations have a height of at most $\stripHeight$ and each item has a height of at least $\delta\opt$, each configuration contains at most $\stripHeight/(\delta\opt)\in \bigO(1/\delta)$ items. Therefore, the set of boxes $\boxesPseudo\cup\boxesPseudo'$ is divided into at most $2(|\heightVertical|+|\boxesPseudo|)H/(\delta\opt)\in\bigO((|\heightVertical|+|\boxesPseudo|)/\delta)$ sub-boxes containing only vertical items of the same height.\newline\indent
	Consider a configuration $C\in \mathcal{C}_B$ which has a non-zero entry $X_{C,B}$ in the considered solution. Above this configuration, there is a free area of height $\height(B)-\height(C)$ and width $X_{C,B}$ inside the box $B$, see \cref{fig:PlaceVertItems}. Furthermore, in each box there might be a new empty configuration which generates an empty box as well. Let $\boxesSmallVert$ be the set of these boxes. There are at most $\bigO(|\heightVertical|+|\boxesPseudo|)$, one for each configuration and one extra for each box. Since we ensured that the configurations use exactly the width of the vertical items, the total area of these empty boxes must be at least $a(\boxesSmallVert)\geq a(\boxesPseudo)-a(\itemsVert)$.                 
\end{proof}
	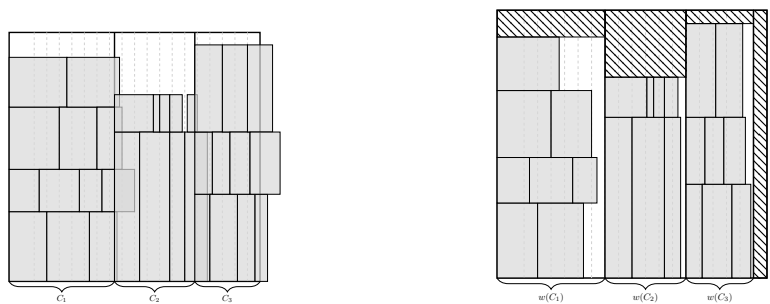
\begin{figure*}[t]
		\centering
		\begin{subfigure}{0.45\textwidth}
			\centering
			\resizebox{0.58\textwidth}{!}{
				\begin{tikzpicture}
				\pgfmathsetmacro{\w}{10}
				\pgfmathsetmacro{\h}{10}
				\pgfmathsetmacro{\cTwo}{0.42}
				\pgfmathsetmacro{\cThree}{0.74}
				
				\draw [ultra thick, color=black] (0,0) --(\w,0) -- (\w,\h) -- (0,\h) -- cycle;
				\draw [dashed, color= tbBgOdd] (0.1*\w,0) -- (0.1*\w,\h);
				\draw [dashed, color= tbBgOdd] (0.2*\w,0) -- (0.2*\w,\h);
				\draw [dashed, color= tbBgOdd] (0.3*\w,0) -- (0.3*\w,\h);
				\draw [dashed, color= tbBgOdd] (0.4*\w,0) -- (0.4*\w,\h);
				\draw [dashed, color= tbBgOdd] (0.5*\w,0) -- (0.5*\w,\h);
				\draw [dashed, color= tbBgOdd] (0.6*\w,0) -- (0.6*\w,\h);
				\draw [dashed, color= tbBgOdd] (0.7*\w,0) -- (0.7*\w,\h);
				\draw [dashed, color= tbBgOdd] (0.8*\w,0) -- (0.8*\w,\h);
				\draw [dashed, color= tbBgOdd] (0.9*\w,0) -- (0.9*\w,\h);
				
				\draw [dashed, color= tbBgOdd] (0.15*\w,0) -- (0.15*\w,\h);
				\draw [dashed, color= tbBgOdd] (0.25*\w,0) -- (0.25*\w,\h);
				\draw [dashed, color= tbBgOdd] (0.35*\w,0) -- (0.35*\w,\h);
				\draw [dashed, color= tbBgOdd] (0.45*\w,0) -- (0.45*\w,\h);
				\draw [dashed, color= tbBgOdd] (0.55*\w,0) -- (0.55*\w,\h);
				\draw [dashed, color= tbBgOdd] (0.65*\w,0) -- (0.65*\w,\h);
				\draw [dashed, color= tbBgOdd] (0.75*\w,0) -- (0.75*\w,\h);
				\draw [dashed, color= tbBgOdd] (0.85*\w,0) -- (0.85*\w,\h);
				\draw [dashed, color= tbBgOdd] (0.95*\w,0) -- (0.95*\w,\h);
				
				\draw [ultra thick, color=black] (\cTwo*\w,0) --(\cTwo*\w,\h); 
				\draw [ultra thick, color=black] (\cThree*\w,0) --(\cThree*\w,\h);
				\foreach \x/\y/\xx/\yy/\z in {
					0.0 /0.0 /0.15 /0.28/,
					0.0 /0.28 /0.12/0.45/,
					0.0 /0.45 /0.2 /0.7/,
					0.0 /0.7 /0.23 /0.9/,
					0.15 /0.0 /0.32 /0.28/,
					0.12 /0.28 /0.28/0.45/,
					0.2 /0.45 /0.35 /0.7/,
					0.23 /0.7 /0.44 /0.9/,
					0.32 /0.0 /0.43 /0.28/,
					0.28 /0.28 /0.37/0.45/,
					0.35 /0.45 /0.45 /0.7/,
					0.37 /0.28 /0.5/0.45/,
					\cTwo  / 0.0 /0.52 /0.6/,
					\cTwo / 0.6 /0.575/0.75/,
					0.52  / 0.0 /0.64 /0.6/,
					0.575 / 0.6 /0.6/0.75/,
					0.64  / 0.0 /0.7 /0.6/,
					0.6 / 0.6 /0.64/0.75/,		
					0.7  / 0.0 /0.79 /0.6/,
					0.64 / 0.6 /0.69/0.75/,	
					0.71 / 0.6 /0.75/0.75/,
					\cThree  /0.0 /0.8  /0.35/,
					\cThree   /0.35 /0.81 /0.6/,
					\cThree  /0.6 /0.85 /0.95/,
					0.8  /0.0 /0.91  /0.35/,
					0.81   /0.35 /0.88 /0.6/,
					0.85  /0.6 /0.95 /0.95/,
					0.91  /0.0 /0.98  /0.35/,
					0.88   /0.35 /0.96 /0.6/,
					0.95  /0.6 /1.05 /0.95/,
					0.98  /0.0 /1.03  /0.35/,
					0.96   /0.35 /1.08 /0.6/			
				}
				{
					\drawVerticalItem[\small \z]{\x*\w}{\y*\h}{\xx*\w}{\yy*\h};
				}
				\draw[thick,decorate,decoration={brace,amplitude=12pt,mirror}] (0,0) -- (\cTwo*\w,0)node[midway, below=12 pt] {$C_1$};
				\draw[thick,decorate,decoration={brace,amplitude=12pt,mirror}] (\cTwo*\w,0) -- (\cThree*\w,0)node[midway, below=12 pt] {$C_2$};
				\draw[thick,decorate,decoration={brace,amplitude=12pt,mirror}] (\cThree*\w,0) -- (\w,0)node[midway, below=12 pt] {$C_3$};
				\end{tikzpicture}
			}
		\end{subfigure}
		\begin{subfigure}{0.45\textwidth}
			\centering
		\resizebox{0.58\textwidth}{!}{
			\begin{tikzpicture}
			\pgfmathsetmacro{\w}{10}
			\pgfmathsetmacro{\h}{10}
			\pgfmathsetmacro{\cTwo}{0.42}
			\pgfmathsetmacro{\cTwoPrime}{0.4}
			\pgfmathsetmacro{\cThree}{0.74}
			\pgfmathsetmacro{\cThreePrime}{0.7}
			
			\draw [ultra thick, color=black] (0,0) --(\w,0) -- (\w,\h) -- (0,\h) -- cycle;
			\draw [dashed, color= tbBgOdd] (0.1*\w,0) -- (0.1*\w,\h);
			\draw [dashed, color= tbBgOdd] (0.2*\w,0) -- (0.2*\w,\h);
			\draw [dashed, color= tbBgOdd] (0.3*\w,0) -- (0.3*\w,\h);
			\draw [dashed, color= tbBgOdd] (0.4*\w,0) -- (0.4*\w,\h);
			\draw [dashed, color= tbBgOdd] (0.5*\w,0) -- (0.5*\w,\h);
			\draw [dashed, color= tbBgOdd] (0.6*\w,0) -- (0.6*\w,\h);
			\draw [dashed, color= tbBgOdd] (0.7*\w,0) -- (0.7*\w,\h);
			\draw [dashed, color= tbBgOdd] (0.8*\w,0) -- (0.8*\w,\h);
			\draw [dashed, color= tbBgOdd] (0.9*\w,0) -- (0.9*\w,\h);
			
			\draw [dashed, color= tbBgOdd] (0.15*\w,0) -- (0.15*\w,\h);
			\draw [dashed, color= tbBgOdd] (0.25*\w,0) -- (0.25*\w,\h);
			\draw [dashed, color= tbBgOdd] (0.35*\w,0) -- (0.35*\w,\h);
			\draw [dashed, color= tbBgOdd] (0.45*\w,0) -- (0.45*\w,\h);
			\draw [dashed, color= tbBgOdd] (0.55*\w,0) -- (0.55*\w,\h);
			\draw [dashed, color= tbBgOdd] (0.65*\w,0) -- (0.65*\w,\h);
			\draw [dashed, color= tbBgOdd] (0.75*\w,0) -- (0.75*\w,\h);
			\draw [dashed, color= tbBgOdd] (0.85*\w,0) -- (0.85*\w,\h);
			\draw [dashed, color= tbBgOdd] (0.95*\w,0) -- (0.95*\w,\h);
			
			\draw [ultra thick, color=black] (\cTwoPrime*\w,0) --(\cTwoPrime*\w,\h); 
			\draw [ultra thick, color=black] (\cThreePrime*\w,0) --(\cThreePrime*\w,\h);
			\draw [ultra thick, color=black] (0.95*\w,0) --(0.95*\w,\h);
			\foreach \x/\y/\xx/\yy/\z in {
				0.0 /0.0 /0.15 /0.28/,
				0.0 /0.28 /0.12/0.45/,
				0.0 /0.45 /0.2 /0.7/,
				0.0 /0.7 /0.23 /0.9/,
				0.15 /0.0 /0.32 /0.28/,
				0.12 /0.28 /0.28/0.45/,
				0.2 /0.45 /0.35 /0.7/,
				0.28 /0.28 /0.37/0.45/,
				\cTwoPrime  / 0.0 /0.50 /0.6/,
				\cTwoPrime / 0.6 /0.555/0.75/,
				0.5  / 0.0 /0.62 /0.6/,
				0.555 / 0.6 /0.58/0.75/,
				0.62  / 0.0 /0.68 /0.6/,
				0.58 / 0.6 /0.62/0.75/,		
				0.62 / 0.6 /0.67/0.75/,	
				\cThreePrime  /0.0 /0.76  /0.35/,
				\cThreePrime   /0.35 /0.77 /0.6/,
				\cThreePrime  /0.6 /0.81 /0.95/,
				0.76  /0.0 /0.87  /0.35/,
				0.77   /0.35 /0.84 /0.6/,
				0.81  /0.6 /0.91 /0.95/,
				0.87  /0.0 /0.94  /0.35/,
				0.84   /0.35 /0.92 /0.6/
			}
			{
				\drawVerticalItem[\small \z]{\x*\w}{\y*\h}{\xx*\w}{\yy*\h};
			}
			\draw[thick,decorate,decoration={brace,amplitude=12pt,mirror}] (0,0) -- (\cTwoPrime*\w,0)node[midway, below=12 pt] {$w(C_1)$};
			\draw[thick,decorate,decoration={brace,amplitude=12pt,mirror}] (\cTwoPrime*\w,0) -- (\cThreePrime*\w,0)node[midway, below=12 pt] {$w(C_2)$};
			\draw[thick,decorate,decoration={brace,amplitude=12pt,mirror}] (\cThreePrime*\w,0) -- (0.95*\w,0)node[midway, below=12 pt] {$w(C_3)$};
			
			\draw[pattern=north west lines] (0,0.9*\h) rectangle (\cTwoPrime*\w,\h);
			\draw[pattern=north west lines] (\cTwoPrime*\w,0.75*\h) rectangle (\cThreePrime*\w,\h);
			\draw[pattern=north west lines] (\cThreePrime*\w,0.95*\h) rectangle (0.95*\w,\h);
			\draw[pattern=north west lines] (0.95*\w,0) rectangle (\w,\h);
			\end{tikzpicture}
		}
	\end{subfigure}
\caption{An illustration of a set of configurations assigned to a box~$B$. The width of a configuration is shown by the thick black lines. The configurations are filled greedily until each height in it exceeds the width of the configuration. After removing these excess items, we can reduce the width of configurations to the next integral value. The area not used by items is shown as hatched and can later be used to pack small items.}
\label{fig:PlaceVertItems}
	\end{figure*}
We place the horizontal items using a similar configuration~IP.
\begin{lem}
	There is an algorithm with running~time \\$(\log(1/\delta)/\eps)^{\bigO(1/\eps\delta^3)}$ that places the horizontal items into the boxes $\boxesHor$ and an extra box $B_H$ of height at most $\eps^9\opt$ and width $\stripWidth$.
	Furthermore, the algorithm creates at most $\bigO(1/\eps\delta^2)$ empty boxes $\boxesHorSmall$ with total area $a(\boxesHorSmall)=a(\boxesHor)-a(\Tilde{H})$, where $\Tilde{H}$ is the set of horizontal items that overlap the top or bottom box borders. 
	\label{lem:PlacementHorItems}
\end{lem}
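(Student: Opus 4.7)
The plan is to mirror the configuration IP argument of \cref{lem:PlacementVertSep}, now adapted to horizontal items, which are wide and short and therefore naturally arrange as horizontal shelves within the thin boxes in $\boxesHor$. First I would apply a geometric rounding to the heights of horizontal items to obtain $\Oh(\log(1/\delta)/\eps)$ distinct height classes, and round widths to multiples of a fine grid of granularity roughly $\delta W/\eps$ yielding $\Oh(1/\delta)$ distinct widths in $\widthsHor$. By \cref{lem:HorStartPoint,lem:boxPartition}, $\boxesHor$ already contains only $\Oh(1/\eps\delta^2)$ boxes, each with a constant number of distinct starting positions, so the combinatorial structure is small.

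A configuration $C$ will be a multiset $\{a_w : w \in \widthsHor\}$ of rounded widths that all share a single rounded height $h(C)$ and whose widths sum to at most $w(B)$, representing a shelf of horizontal items placed side by side. With $X_{C,B}$ denoting the total height that $B$ devotes to copies of shelf $C$, I set up the LP
\begin{align*}
\sum_{C\in\mathcal{C}_B} X_{C,B} &\leq h(B) & \forall B\in\boxesHor,\\
\sum_{B\in\boxesHor}\sum_{C\in\mathcal{C}_B} X_{C,B}\cdot a_{w,C}\cdot w &= W_w & \forall w\in\widthsHor,\\
X_{C,B} &\geq 0,
\end{align*}
where $W_w$ is the total width of items of rounded width $w$. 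This LP has $\Oh(1/\eps\delta^2)$ constraints, so a basic feasible solution has at most that many non-zero entries. The number of valid shelves is bounded by $(1/\delta)^{\Oh(1/\delta)}\cdot\Oh(\log(1/\delta)/\eps)$, and enumerating combinations of $\Oh(1/\eps\delta^2)$ non-zero variables yields the claimed running time of $(\log(1/\delta)/\eps)^{\Oh(1/\eps\delta^3)}$.

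From a basic solution I fill each selected shelf greedily with items of each rounded width. At most one item per non-zero configuration will overflow its shelf width due to integrality effects; these overflow items are collected into the extra box $B_H$ of width $W$. The unused vertical slack within non-overflowing configurations, together with any width left over from rounding shelf widths down to integers, forms the family $\boxesHorSmall$; there are $\Oh(|\boxesHor| + |\widthsHor|) \in \Oh(1/\eps\delta^2)$ such empty pieces. Because the second LP constraint accounts exactly for the total width of each size class, $a(\boxesHorSmall) = a(\boxesHor) - a(\Tilde{H})$, where $\Tilde H$ collects those horizontal items that overlap box borders and are therefore already pinned down as immovable pseudo items.

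The main obstacle will be bounding the height of $B_H$ by $\eps^9\opt$. Each overflow item has height at most $\mu\opt$ and the number of overflow items is at most $\Oh(|\widthsHor|\cdot|\boxesHor|)\in\Oh(1/\eps\delta^3)$. Using $\mu\leq \delta^2\eps^{13}/k$ from \cref{lem:RoundingValues}, the total height needed in $B_H$ is at most $\Oh(1/\eps\delta^3)\cdot(\delta^2\eps^{13}/k)\opt \leq \eps^9\opt$, provided the constant $k$ is chosen sufficiently large; this is precisely the constant left open after \cref{lem:RoundingValues}, and this step is what pins it down. Everything else is bookkeeping in the style of \cref{lem:PlacementVertSep}.
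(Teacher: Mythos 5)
Your plan correctly identifies that the right tool is a configuration LP in the spirit of \cref{lem:PlacementVertSep}, but several of the specific choices you make would break the argument.

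First, the LP you write mixes two incompatible bookkeeping conventions. You declare $X_{C,B}$ to be the total \emph{height} of box $B$ devoted to shelf $C$, yet the second constraint tracks total \emph{width} $W_w$ and multiplies by $w$. If $X_{C,B}$ is a height, the natural and correct conservation quantity for horizontal items is the total height $h(w)$ of all items of rounded width $w$, so the constraint should read $\sum_{B}\sum_{C}X_{C,B}\,a_{w,C}=h(w)$, with no extra factor of $w$ and no shelf height appearing. The paper avoids shelves entirely precisely because horizontal items of the same rounded width are interchangeable under this height-accounting: a configuration only records the multiset of widths appearing at a given $y$-coordinate, so items of different heights coexist in a single configuration with no wasted vertical slack. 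Your shelf-based configurations, on the other hand, tie a single rounded height $h(C)$ to every item in the shelf, yet your LP indexes its conservation constraints only by $w$, not by $(w,h)$-pairs. Either the LP cannot see which shelf an item belongs in, or you must enlarge the constraint set to pairs $(w,h)$, blowing up the number of non-zero basic variables well beyond $\Oh(1/\eps\delta^2)$ and invalidating the box-count claim.

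Second, and fatally for the $B_H$ bound: you estimate the number of overflow items as $\Oh(|\widthsHor|\cdot|\boxesHor|)\in\Oh(1/\eps\delta^3)$. That is a product, but the right bound is a sum. A basic solution of the LP has at most $|\widthsHor|+|\boxesHor|\in\Oh(1/\eps\delta^2)$ non-zero entries, and greedy filling lets at most one item overflow per non-zero configuration variable, so the overflow count is $\Oh(1/\eps\delta^2)$. With your count of $\Oh(1/\eps\delta^3)$, the height of $B_H$ becomes $\Oh\!\left(\tfrac{\eps^{12}}{k\delta}\right)\opt$, and forcing this below $\eps^{9}\opt$ would need $k\geq\eps^{3}/\delta$; since $\delta$ can be doubly exponentially small in $1/\eps$ and moreover depends on $k$ through \cref{lem:RoundingValues}, this is both unbounded and circular. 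The paper's count of $\Oh(1/\eps\delta^2)$ gives $\Oh(\eps^{10}/k)\opt$, which genuinely is absorbed by a constant $k$.

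Finally, rounding widths to a fixed grid of granularity $\delta W/\eps$ does not by itself guarantee that the rounded items fit fractionally into $\boxesHor$. What the paper actually does is linear grouping in the style of Fernandez de la Vega and Lueker: stack horizontal items by decreasing width, cut the stack into $\Oh(1/\eps\delta^2)$ groups of equal height $\eps\delta\opt$, and round every item in a group up to that group's maximum width. Then group $i$'s rounded items fit into the region originally occupied by group $i-1$, and only the widest group has no predecessor and must be placed in the extra strip of height $\eps\delta\opt$. Your grid rounding discards exactly this feasibility argument, and also makes your claim of $\Oh(1/\delta)$ distinct widths unsupported; the geometric height rounding you add is unnecessary for horizontal items (whose heights are already at most $\mu\opt$) and the $\Oh(\log(1/\delta)/\eps)$ figure is not justified.
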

\begin{proof}
	This algorithm is an evolution of similar ideas in \cite{StripPacking54}.
	We begin by rounding the input, in this case the horizontal items. To achieve this, we stack horizontal items on top of one another ordered by descending width, i.e.\ widest at the bottom. This stack has a height of at most $\opt/\delta$, because each horizontal item has a width of at least $\delta\stripWidth$ and the total packing area has size $\opt\cdot\stripWidth$. We group the items in the stack to at most $1/\eps\delta^2$ groups, each of height $\eps\delta^2\opt/\delta=\eps\delta\opt$ and round the items in the groups to the widest width occurring in this group. This step reduces the number of different sizes to at most $1/\delta\eps^2$, i.e.\ one per generated group. The rounded horizontal items can be placed fractionally into the non-rounded items of the group containing the next larger items. The group containing the widest rounded items has to be placed atop the final packing. As this is just one group, this extra box has a height of at most $\eps\delta\opt$. We define an extra box of width $\stripWidth$ and that height to pack these items into. To keep notations simple we assume that $\boxesHor$ includes this box as well. \newline\indent
	Similar to our procedure to place the separated vertical items in \cref{lem:PlacementVertSep}, we place the rounded horizontal items into their boxes through the use of a configuration LP. Here, a configuration is a set of items that fit next to each other inside the boxes. Thus, a configuration $C$ is a multiset of the form $\{a_w:w|w\in \widthsHor\}$ where  $\widthsHor$ is the set of all widths of horizontal items. The width of such a configuration is consequently defined as $w(C):=\sum_{w\in \widthsHor}a_ww$, i.e.\ simply lining all items inside a configuration up beside one another. Additionally, $\mathcal{C}_w$ denotes the set of configurations with width at most $w$. We use $X_{C,B}$ to indicate the number of times configuration $C$ is assigned to box $B$. The height of a configuration is denoted by $h(C)$. Finally, $\height(w)$ is defined as the total height of all items with rounded width~$w$.\newline\indent
	The set of configurations $\mathcal{C}_\stripWidth$ is bounded by $\bigO((\log(1/\delta)/\eps)^{1/\delta})$ because the items have a width of at least $\delta\stripWidth$. Therefore, there can be at most $1/\delta$ items inside each configuration. The following configuration LP is solvable since the rounded horizontal items fit fractionally into the boxes $\boxesHor$.
	\begin{align*}
	\sum_{C\in \mathcal{C}_w(B)} X_{C,B} h(C) &= h(B) &\forall B\in \boxesHor\\
	\sum_{B\in \boxesHor} \sum_{C\in \mathcal{C}_{w(B)}} X_{C,B} a_{w,C}&=h(w) &\forall w\in \widthsHor\\
	X_{C,B} &\geq 0 &\forall B\in \boxesHor, C \in \mathcal{C}_{w(B)}
	\end{align*}
	The first constraint ensures that all boxes are filled by configurations to exactly their height. We fill these configurations greedily in the following. The second constraint ensures that all items of every height are placed inside some configuration, i.e.\ that we successfully place all items. Finally, the third constraint ensures the feasibility of the packing by disallowing negative configurations to compensate for space, as such configurations are impossible.\newline\indent
	We can solve this linear program by guessing the at most $|\widthsHor|+|\boxesHor|=\bigO(1/(\eps\delta^2))$ non-zero entries of the basic solution and solve the resulting equality system using the Gauß-Jordan-Elimination. We use the first found solution where no variables are negative. Such a solution can be found in at most $\bigO(|C_\stripWidth|^{|\widthsHor|+|\boxesHor|}\cdot(|\widthsHor|+|\boxesHor|)^3\leq (\log(1/\delta)/\eps)^{\bigO(1/\eps\delta^3)}$ operations since the configuration LP has to be solvable for the correct partition.\newline\indent
	We place each configuration into the corresponding box and place the original horizontal items inside these configurations greedily. We allow the last item to overlap the configuration border. One by one, we place the original items inside an area reserved by the configurations for their rounded equivalents until an item overlaps this area on top. We repeat this process for the next area. Since the total width of these parts is exactly as large as the total width of items with that rounded width, there are enough parts to place all of them. Recall that we ensured this with the second constraint.\newline\indent
	Next, we handle the items that overlap the box borders. We place them on top of the box. Each of these removed items is horizontal, and as such has a height of at most $\mu\opt$. We add at most $\eps^{10}\opt$ to the packing height by shifting the overlapping items to the top of the packing. This is due to a basic solution having at most $\bigO(1/(\eps\delta^2))$ configurations. All items inside a given configuration can be placed next to each other per definition, meaning we can add at most one overlapping item for every configuration. Finally, we selected $\mu$ such that $\mu\leq \delta^2\eps^{11}/k$ holds for a suitably large constant $k\in \mathbb{N}$. Taking these items together with the extra box we require due to rounding the items, the total height is bounded by $\eps^{10}\opt+\delta\eps\opt\leq\eps^9\opt.$\newline\indent
	Similar to the steps taken in \cref{lem:PlacementVertSep} we can reduce the height of each configuration to the next smaller integer since all horizontal items have integral height. This introduces at most one new configuration per box, i.e.\ the empty one. Inside each box~$B$, to the right of every used configuration $C$, there might be some unused space of width~$w(B)-w(C)$ and height~$x_{C,B}$. This area defines one of the empty boxes~$\boxesHorSmall$. We have at most $|\widthsHor|+|\boxesHor|$ configurations and at most $|\boxesHor|$ boxes for horizontal items. We introduce at most one box for every configuration in any box. Therefore, we introduce at most $\bigO(1/\eps\delta^2)$ empty boxes $\boxesHorSmall$. Their total area has to be at least as large as the empty space left inside the boxes, i.e. $a(\boxesHorSmall)=a(\boxesHor)-a(\Tilde{H})$, since the configurations contain exactly the total area of the rounded horizontal items.\newline\indent
	Finally, let us consider the sub-boxes for horizontal items we generate in this step for every box $B$ in $\boxesHor$. Each configuration contains at most $\bigO(1/\delta)$ positions for items, as each horizontal item has a width of at least $\delta\stripWidth$. We generate one sub-box that has rounded width of the item assigned to this position by the configuration. The height of this sub-box is the sum of all heights of the items with that width positioned inside this box. We can combine the height of these sub-boxes because there might be several configurations assigned to $B$ that all contain an item of the same rounded width at the same position.  Finally, we create an additional box for each shifted item. Thus, we introduce at most $\bigO(1/(\eps\delta^3))$ boxes for horizontal items. Each of these boxes only contain items of the same rounded width.
\end{proof}
 
Next, we show that any optimal packing can be rearranged and partitioned into $\Oh_{\eps}(1)$ many boxes that are structured in some way. 
We prove this using the partition given in \cref{lem:boxPartition} into sets of boxes $\boxesLarge,\boxesHor$, and $\boxesTallVert$. 
We then utilize the reordering techniques discussed in step~4 to further partition all boxes $B\in \boxesTallVert$ into boxes for tall items $\boxesTall$ and boxes for vertical items $\boxesVert$. 
We show that we provide an adequate amount of space to pack all vertical items that were separated during the repacking of $\boxesTallVert$. 
\begin{lem}
	By extending the packing height to $(\nicefrac{5}{4}+5\eps)\opt$ each rounded optimal packing can be rearranged and partitioned into $\bigO(1/(\delta^3\eps^5))$ boxes with the following properties:
	\begin{itemize}
		\item there are $|\itemsLarge|+|\itemsMedVert|=\bigO(1/(\delta^2\eps))$ boxes $\boxesLarge$ each containing exactly one item from the set $\itemsLarge\cup \itemsMedVert$ and all items from this set are contained in these boxes,
		\item there are at most $\bigO(1/\delta^2\eps)$ boxes $\boxesHor$ containing all horizontal items with $\boxesHor \cap \boxesLarge =\emptyset$. The horizontal items can overlap the top and bottom border of the boxes, but never the left or right border,
		\item there are at most $\bigO(1/\delta^2\eps^5))$ boxes $\boxesTall$ containing tall items, such that each tall item $t$ is placed inside a box with rounded height $\height(t)$,
		\item there are at most $\bigO(1/\delta^3\eps^5))$ boxes $\boxesVert$ containing vertical items, such that each vertical item $v$ is placed inside a box with rounded height $\height(v)$,
		\item there are at most $\bigO(1/\delta^2\eps^5))$ boxes $\boxesSmall$ for small items, such that the total area of these boxes combined with the total free area inside the horizontal boxes and vertical boxes is at least as large as the total area of small items,
		\item the upper and lower border of each box is placed at a multiple of $\eps\delta\opt.$
	\end{itemize}
	\label{lem:structure}
\end{lem}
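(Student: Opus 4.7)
The plan is to chain together the structural lemmas already established and track the height losses and box counts introduced at each step. First I would start from the rounded optimal packing (already stretched by $(1+2\eps)$ via \cref{lem:Rounding}), apply \cref{lem:HorStartPoint} to restrict horizontal items to $\Oh(1/\eps\delta)$ starting points at a loss of $\Oh(\eps\opt)$, and then invoke \cref{lem:boxPartition} to obtain the coarse partition into $\boxesLarge$, $\boxesHor$, and $\boxesTallVert$, whose cardinalities are already $\Oh_\eps(1)$ and whose borders lie on the $\eps\delta\opt$ grid. The $\boxesLarge$ piece of the statement is then immediate from \cref{lem:boxPartition} because each such box already holds exactly one item from $\itemsLarge \cup \itemsMedVert$ and $|\itemsLarge|+|\itemsMedVert|=\Oh(1/(\delta^2\eps))$ by Observation~2.

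Next I would process each box $B\in \boxesTallVert$ according to its height and apply the appropriate sub-routine: \cref{lem:ReorderTallVertBox14} for $\height(B)\le \nicefrac{1}{2}\stripHeight'$, \cref{lem:ReorderTallVertBox12} for $\height(B)\in(\nicefrac{1}{2}\stripHeight',\nicefrac{3}{4}\stripHeight']$, and the involved \cref{lem:ReorderTallVertBox34} (which first uses \cref{lem:Assignment}) for $\height(B)\in(\nicefrac{3}{4}\stripHeight',\stripHeight']$. Only the third type costs height, adding $\nicefrac{1}{4}\stripHeight'$ locally; since all three sub-routines keep the relevant horizontal borders at multiples of $\eps\delta\opt$ and produce $\Oh_\eps(1)$ sub-boxes per input box, summing over the $\Oh_\eps(1)$ boxes of $\boxesTallVert$ and plugging in the $N=\Oh(1/\eps)$ grid granularity from \cref{lem:Rounding} gives the claimed $\Oh(1/(\delta^2\eps^5))$ boxes $\boxesTall$ for tall items and $\Oh(1/(\delta^2\eps^5))$ preliminary boxes for (pseudo-)vertical items.

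After the repacking, vertical items that were packed as parts of pseudo items may have been separated horizontally. To restore contiguity I would feed the pseudo-vertical boxes into \cref{lem:PlacementVertSep}, which absorbs the separation by adding $\Oh_\eps(1)$ further boxes of height $\le \nicefrac14\stripHeight$ and width $\mu\stripWidth$ and produces the final $\boxesVert$ of size $\Oh(1/(\delta^3\eps^5))$; the same lemma simultaneously yields the empty sub-boxes $\boxesSmallVert$ whose total free area is at least $a(\boxesPseudo)-a(\itemsVert)$. For the horizontal side I would invoke \cref{lem:PlacementHorItems} to round widths, place items via a configuration~LP, and shift at most $\Oh_\eps(1)$ overlapping items into one extra strip of height $\eps^9\opt$; this produces $\Oh(1/(\eps\delta^3))$ boxes for horizontal items and $\Oh(1/(\eps\delta^2))$ empty sub-boxes whose total area equals $a(\boxesHor)-a(\widetilde H)$. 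Together with $\boxesSmallVert$ these form $\boxesSmall$, and by the area balance plus the fact that small items had been temporarily discarded, their cumulative free area is at least $a(\itemsSmall)$, giving the last bullet. Finally, medium and medium-vertical items are added on top via Steinberg's algorithm using \cref{lem:RoundingValues} at height cost $\Oh(\eps\opt)$.

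Accounting for all height losses -- the $(1+2\eps)$ stretch from \cref{lem:Rounding}, the $\Oh(\eps\stripHeight)$ additions from \cref{lem:HorStartPoint} and from placing medium items, the $\eps^9\opt$ extra strip from horizontal placement, and the single $\nicefrac{1}{4}\stripHeight$ extension inherited from \cref{lem:ReorderTallVertBox34} -- the total height is at most $(\nicefrac{5}{4}+5\eps)\opt$, as claimed. The main obstacle I expect is verifying that the $\nicefrac{1}{4}\stripHeight$ extension charged by \cref{lem:ReorderTallVertBox34} is paid \emph{once globally} rather than once per large box: this requires noting that the three box-size regimes partition $\boxesTallVert$ vertically inside a single strip of height $\stripHeight'$, so enlarging the strip by $\nicefrac14\stripHeight$ simultaneously accommodates every application of the lemma. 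The remaining work is a careful but routine accounting of sub-box counts across all sub-routines and a check that every produced border sits on the $\eps\delta\opt$ grid, which is inherited from \cref{lem:Rounding,lem:boxPartition} and preserved by each restructuring step.
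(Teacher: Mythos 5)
Your plan chains the structural lemmas in the right order and correctly identifies that the $\nicefrac{1}{4}\stripHeight$ extension from \cref{lem:ReorderTallVertBox34} is charged once globally because the boxes with height $>\nicefrac{3}{4}\stripHeight'$ live side-by-side in the original strip. But there is a genuine gap that your sketch glosses over: you write that \cref{lem:PlacementVertSep} ``absorbs the separation by adding $\Oh_\eps(1)$ further boxes of height $\le \nicefrac14\stripHeight$ and width $\mu\stripWidth$,'' yet you never say \emph{where} these extra boxes $\boxesVertPlace$ are physically placed. This is precisely the crux of the lemma's proof. These boxes cannot simply be stacked on top of the packing, because doing so would add another $\nicefrac{1}{4}\stripHeight$ to the already-extended height and push the ratio to $\nicefrac{3}{2}$, not $\nicefrac{5}{4}$. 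The paper handles this by a three-way case analysis based on the widths $\widthStripsRem$, $\widthStripsBoxes$, $\widthStripsTall$ of the vertical strips containing, respectively, no $\nicefrac{3}{4}$-boxes and no wide mid-crossing tall items, boxes of height $> \nicefrac{3}{4}\stripHeight$, and the $1/(\delta^2\eps)$ widest tall items that cross $\nicefrac{1}{2}\stripHeight$; in each case it locates contiguous free area of height at least $\nicefrac{1}{4}\stripHeight$ \emph{inside} the current packing, and this is exactly what fixes the constant $k$ in the choice of $\mu$ from \cref{lem:RoundingValues}. Without that argument, the height bound is not established.

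A second, smaller omission: both \cref{lem:ReorderTallVertBox12} and \cref{lem:ReorderTallVertBox34} assume their boxes begin and end at grid lines, but \cref{lem:boxPartition} does not guarantee this. You need the intermediate step that clips vertical items of height $<\eps\opt$ near each box border, collects them into extra boxes, and stacks those into a single wide strip of height $\Oh(\eps\opt)$ placed atop the packing — otherwise the reordering subroutines are not applicable, and the $\Oh(\eps\opt)$ term in the final height bound is unaccounted for. Relatedly, you also skip the preliminary step of fixing the $1/(\delta^2\eps)$ widest tall items crossing $\nicefrac{1}{2}\stripHeight$ as immovable; that step is what keeps the number of boxes in $\boxesTallVert$ at $\Oh(1/(\delta^2\eps))$ and is a precondition for the later width accounting. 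Finally, your stated grid granularity $N = \Oh(1/\eps)$ is off: tall items have their borders rounded to multiples of $\eps^2\opt$, so $N = \Oh(1/\eps^2)$ and $N^2 = \Oh(1/\eps^4)$, which is what yields $\Oh(1/(\delta^2\eps^5))$ boxes after multiplying by the $\Oh(1/(\delta^2\eps))$ boxes in $\boxesTallVert$; with $N = \Oh(1/\eps)$ the arithmetic would not reproduce the claimed exponent.
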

\begin{proof}
	This procedure is an adaptation of similar procedures in \cite{StripPacking54}.
	Before we get to the proof itself, we give a short overview of the process to come. \newline\indent
	Recall that the process begins with the partition given in \cref{lem:boxPartition} into the sets of boxes $\boxesLarge,\boxesHor$ and $\boxesTallVert$. We define the height of the strip $\stripHeight:=(1+2\eps)\opt$. Therefore, each tall item has a height larger than $H/4$. This allows us to use our reordering techniques explained in the previous section to separate $\boxesTallVert$ into $\boxesTall$ and $\boxesVert$. We have seen in \cref{lem:PlacementVertSep} that we require some extra boxes to place the vertical items. Thus, the main thing to be proven here is that there is adequate space left in the packing to place these boxes. \newline\indent
	We consider three options to place them. The first choice depends on the widest tall items that intersects the line at $\nicefrac{1}{2}\stripHeight$. We fix their position and attempt to place these extra boxes atop these tall items if their width is large enough. If this is not possible, we know that the tall items intersecting $\nicefrac{1}{2}\stripHeight$ are not very wide. This allows us to attempt to place the extra boxes inside the boxes $B$ of height $\height(B)>\nicefrac{3}{4}\stripHeight$. Again, the feasibility of this placement depends on the total width of these boxes. Finally, if all else fails, we place them on top of boxes with height between $\nicefrac{1}{2}\stripHeight$ and $\nicefrac{3}{4}\stripHeight$.\newline\indent
	The second part of this proof is to show that tall and medium boxes have to start and end at grid lines. Since we already know from our rounding, \cref{lem:Rounding}, that tall items start and end at multiples of $\eps^2\opt$, we can choose these lines as the grid. Setting the start- and endpoints of the tall and medium boxes to these lines leads to a negligible loss in approximation ratio as we show in the following.\newline\indent
	After placing the boxes on the grid as such, we have fulfilled the requirements to utilize the reordering techniques from \cref{lem:ReorderTallVertBox14}, \cref{lem:ReorderTallVertBox12} and \cref{lem:ReorderTallVertBox34} to reorder items inside the boxes $\boxesTallVert$. After this reordering, we analyze the number of containers constructed for vertical items and find a place for the additional containers we need to generate through \cref{lem:PlacementVertSep}. Finally, we show how to place the boxes for horizontal and small items.\newline\indent
	\emph{Step 1: Fixing the position of the widest tall items intersecting $\stripHeight/2$.} We begin by inspecting the $1/(\delta^2\eps)$ widest tall items crossing the horizontal line at $\nicefrac{1}{2}\stripHeight$. We refer to this set of items as $\tallItemsHalf$. Each item in this set defines a new immovable item. Any box that contains this item gets split into three parts. The part to the left, the part to the right and the part containing the item. The parts to the left and right are reordered as any other box. The part that contains it is reordered differently. Since we have a tall item intersecting the horizontal line at $\nicefrac{1}{2}\stripHeight$, the parts above and below this item can have a height of at most $\nicefrac{1}{2}\stripHeight$. We define both of these parts as new boxes. Thus, we can reorder both of these boxes using \cref{lem:ReorderTallVertBox14}. These reorderings both generate at most $\bigO(N)$ sub-boxes for tall and vertical items. This less than the number of boxes generated for a box of height greater than $\nicefrac{3}{4}\stripHeight$. Therefore, we can count this as a single box of height greater than $\nicefrac{3}{4}\stripHeight$ when calculating the number of added boxes and assume it to be at most $2/(\delta^2\eps)$. After this step, the total number of boxes containing both tall and vertical items is bounded by $\bigO(1/(\delta^2\eps))$. Furthermore, the number of vertical lines at box borders through the strip is bounded by the same amount, namely $\bigO(1/(\delta^2\eps))$.\newline\indent
	\emph{Ensuring the alignment of tall and medium boxes to the grid-lines.} In both \cref{lem:ReorderTallVertBox12} and \cref{lem:ReorderTallVertBox34} we assume that each box with height larger than $\nicefrac{1}{2}\stripHeight$ starts and ends at grid points. We generate this property here. We define grid lines as integral multiples of $\eps^2\opt$. Let $B$ be a box with $\height(B)>\nicefrac{1}{2}\stripHeight$. Inspect the horizontal line $\ell$ at the smallest multiple of $\eps^2\opt$ in the box $B$. The distance between $\ell$ and the bottom border clearly has to be smaller than $\eps^2\opt$. Now, we remove all vertical items below $\ell$ and each item cut by $\ell$ and place them inside an extra box. We know that each item with height larger than $\eps\opt$ already starts and ends at a multiple of $\eps^2\opt$ due to our rounding, \cref{lem:Rounding}, all items we placed inside this extra box have a height less than $\eps\opt$. We generate another box in the same manner at the top of $B$, and repeat this procedure for every other box. This leaves us with two boxes of height less than $\eps\opt$ for each box of height greater than $\nicefrac{1}{2}\stripHeight$. Since no two boxes of height greater than $\nicefrac{1}{2}\stripHeight$ can overlap, the total width of these newly generated boxes is at most $2\stripWidth$. Therefore, we can place them by stacking two boxes atop another every time. This generates a new box of height at most $2(\eps+\eps^2)\opt$ and width $\stripWidth$. We place this wide box atop the final packing, at height $\nicefrac{5}{4}\stripHeight+\eps\opt$. Clearly, this does not result in any overlapping.\newline\indent
	\emph{Step 3: Reordering tall and vertical items inside the boxes.} After the necessary condition is fulfilled, we can begin reordering the boxes in $\boxesTallVert$. We do this by applying the either \cref{lem:ReorderTallVertBox14}, \cref{lem:ReorderTallVertBox12} or \cref{lem:ReorderTallVertBox34}, depending on which of those is applicable to the box we are currently reordering. To accurately estimate the number of generated sub-boxes we have to assume that each box in $\boxesTallVert$ is one that generates the largest number of sub-boxes. In our case, these are boxes with height greater than $\nicefrac{3}{4}\stripHeight$, i.e.\ boxes solved with \cref{lem:ReorderTallVertBox34}. In every such box~$B$, we draw a vertical line at the left border of each contained sub-box. If a sub-box for vertical items is intersected by such a line we split the sub-box there. Each of these lines intersects at most three boxes for vertical items, because there can be at most four boxes atop another at any point inside $B$. Therefore, we introduce at most three new boxes per vertical line. Since there are at most $\bigO(1/(\delta^2\eps^5))$ sub-boxes for tall and vertical items, we generate at most $\bigO(1/(\delta^2\eps^5))$ vertical lines as well. 
	Since each of these lines splits a constant number of boxes, the number of boxes for vertical items after splitting is still bounded by $\bigO(1/(\delta^2\eps^5))$.\newline\indent
	The area between two consecutive lines defines a strip where the height of all intersected boxes does not change. Consequently, we have at most $\bigO(1/(\delta^2\eps^5))$ of these strips in total. We define $\countStrips$ as this number of strips.\newline\indent
	\emph{Step 4: Placing the extra boxes for vertical items.} By \cref{lem:PlacementVertSep}, we need at most $\bigO(|\heightVertical|+|\boxesPseudo|)$ additional boxes with height $\nicefrac{1}{4}\stripHeight$ and width $\mu\stripWidth$ to place the vertical items continuously inside the boxes, where $\boxesPseudo$ are the boxes for vertical items created to this point. We call this set of additional boxes $\boxesVertPlace$. We can bound the relevant variables as follows: There are at most $|\boxesPseudo|\in \bigO(1/(\delta^2\eps^5))$ boxes for vertical items and at most $|\heightVertical|\leq 1/\delta\eps$ different heights of items. Therefore, we need at most $\countExtraBoxes \in  \bigO(1/(\delta^2\eps^5))$ extra boxes $\boxesVertPlace$.\newline\indent
	We have to place these additional boxes inside the packing area $\stripWidth\times \nicefrac{5}{4}\stripHeight$. We show that this is possible by considering the three possibilities we outlined above. Recall the vertical lines at the borders of boxes in $\boxesTallVert$. These $\countVertLines$ many lines generate at most $\countVertLines+1$ strips. Let $\widthStripsTall$ be the total width of the strips containing items from $\tallItemsHalf$, $\widthStripsBoxes$ the total width of strips containing boxes of height greater than $\nicefrac{3}{4}\stripHeight$ and $\widthStripsRem$ be the total width of the remaining strips. In total, we must have $\widthStripsTall+\widthStripsBoxes+\widthStripsRem=\stripWidth$. Let $|\boxesTallVert| = \countBoxesTallVert$. We can assume $\countBoxesTallVert\leq c_B/(\delta^2\eps), \countStrips\leq c_S/(\delta^2\eps^5), \countExtraBoxes\leq c_F/(\delta^2\eps^5)$ and $\countVertLines \leq c_L/(\delta^2\eps)$ for some constants $c_B,c_S,c_F,c_L\in \mathbb{N}$. At this point, it is necessary to define the function $f$ from \cref{lem:RoundingValues} to find the values $\delta$ and $\mu$ more precisely. We specify $f(\eps)$ by choosing $k\leq (4(c_B+c_F+c_L)c_S).$ Therefore, $\mu\leq \delta^2\eps^{11}/(4(c_B+c_F+c_L)c_S)$ holds.\newline\indent
	Consider the strips that make up the value $\widthStripsRem$, i.e.\ strips without boxes of height greater than $\nicefrac{3}{4}\stripHeight$ or items in $\tallItemsHalf$. These strips can contain boxes with height larger than $\nicefrac{1}{2}\stripHeight$. Therefore, we have a free area with total width of at least $\eps^2\widthStripsRem$ in these strips.\newline\indent
	\begin{claim}
		If $\widthStripsRem\geq \eps^4\stripWidth$, we can place the $\countExtraBoxes$ boxes $\boxesVertPlace$ into the free areas in strips that comprise $\widthStripsRem$.
	\end{claim}
	\begin{proof}
		The considered strips might contain boxes with height larger than $\nicefrac{1}{2}\stripHeight$ but less than $\nicefrac{3}{4}\stripHeight$. Otherwise, they would have been counted in $\widthStripsBoxes$. Therefore, the free area in these strips will be partially used by extra boxes for pseudo items from these boxes. Nevertheless, we know from \cref{lem:ReorderTallVertBox12} that these strips contain free area of width at least $\eps^2\widthStripsRem$ that we can use to place the boxes $\boxesVertPlace$. In each of these at most $\countVertLines+1$ strips the free area is contiguous. However, we have to calculate a small error that might occur: Each of the boxes in $\boxesVertPlace$ has a width of $\mu\stripWidth$ and therefore, there is a residual width of up to $\mu\stripWidth$ where we cannot place a box from $\boxesVertPlace$ in each strip. See \cref{fig:wastedArea} for an intuitive illustration of this issue.\newline\indent
		On the other hand, we can use an area with total width of at least $\eps^2\widthStripsRem-(\countVertLines+1)\mu\stripWidth$ to place the boxes in $\boxesVertPlace$, since there are at most $\countVertLines+1$ strips. Therefore, if $\eps^2\widthStripsRem-(\countVertLines+1)\mu\stripWidth\geq \countExtraBoxes\mu\stripWidth$, we can place all the boxes. Using $\mu:= \delta^2\eps^{11}/(4(c_B+c_F+c_L)c_S)$ it holds that 
		\begin{equation*}
		\countExtraBoxes\mu\stripWidth +(\countVertLines+1)\mu\stripWidth= \mu\stripWidth(c_F/\delta^2\eps^5 +c_L/\delta^2\eps +1)\leq \eps^{11}\stripWidth/\eps^5\leq \eps^6\stripWidth.
		\end{equation*}
		Therefore, if $\widthStripsRem\geq \eps^4\stripWidth$, it holds that $\eps^2\widthStripsRem-(\countVertLines+1)\mu\stripWidth\geq \countExtraBoxes\mu\stripWidth$ and we can place all the boxes $\boxesVertPlace$, proving the claim.
	\end{proof}
\begin{figure}[t]
	\centering
	\resizebox{0.28\textwidth}{!}{
			\begin{tikzpicture}
		\pgfmathsetmacro{\w}{10}
		\pgfmathsetmacro{\h}{10}
		\pgfmathsetmacro{\cTwo}{0.42}
		\pgfmathsetmacro{\cTwoPrime}{0.4}
		\pgfmathsetmacro{\cThree}{0.74}
		\pgfmathsetmacro{\cThreePrime}{0.7}
		
		\draw [ultra thick, color=black] (0,0) --(\w,0) -- (\w,\h) -- (0,\h) -- cycle;
		\draw [dashed, color= tbBgOdd] (0.1*\w,0) -- (0.1*\w,\h);
		\draw [dashed, color= tbBgOdd] (0.2*\w,0) -- (0.2*\w,\h);
		\draw [dashed, color= tbBgOdd] (0.3*\w,0) -- (0.3*\w,\h);
		\draw [dashed, color= tbBgOdd] (0.4*\w,0) -- (0.4*\w,\h);
		\draw [dashed, color= tbBgOdd] (0.5*\w,0) -- (0.5*\w,\h);
		\draw [dashed, color= tbBgOdd] (0.6*\w,0) -- (0.6*\w,\h);
		\draw [dashed, color= tbBgOdd] (0.7*\w,0) -- (0.7*\w,\h);
		\draw [dashed, color= tbBgOdd] (0.8*\w,0) -- (0.8*\w,\h);
		\draw [dashed, color= tbBgOdd] (0.9*\w,0) -- (0.9*\w,\h);
		
		\draw [dashed, color= tbBgOdd] (0.15*\w,0) -- (0.15*\w,\h);
		\draw [dashed, color= tbBgOdd] (0.25*\w,0) -- (0.25*\w,\h);
		\draw [dashed, color= tbBgOdd] (0.35*\w,0) -- (0.35*\w,\h);
		\draw [dashed, color= tbBgOdd] (0.45*\w,0) -- (0.45*\w,\h);
		\draw [dashed, color= tbBgOdd] (0.55*\w,0) -- (0.55*\w,\h);
		\draw [dashed, color= tbBgOdd] (0.65*\w,0) -- (0.65*\w,\h);
		\draw [dashed, color= tbBgOdd] (0.75*\w,0) -- (0.75*\w,\h);
		\draw [dashed, color= tbBgOdd] (0.85*\w,0) -- (0.85*\w,\h);
		\draw [dashed, color= tbBgOdd] (0.95*\w,0) -- (0.95*\w,\h);
		
		\draw [ultra thick, color=black] (\cTwoPrime*\w,0) --(\cTwoPrime*\w,\h); 
		\draw [ultra thick, color=black] (\cThreePrime*\w,0) --(\cThreePrime*\w,\h);
		\draw [ultra thick, color=black] (0.95*\w,0) --(0.95*\w,\h);
		\foreach \x/\y/\xx/\yy/\z in {
			0.0 /0.0 /0.15 /0.28/,
			0.0 /0.28 /0.12/0.45/,
			0.0 /0.45 /0.2 /0.7/,
			0.0 /0.7 /0.23 /0.9/,
			0.15 /0.0 /0.32 /0.28/,
			0.12 /0.28 /0.28/0.45/,
			0.2 /0.45 /0.35 /0.7/,
			0.28 /0.28 /0.37/0.45/,
			\cTwoPrime  / 0.0 /0.50 /0.6/,
			\cTwoPrime / 0.6 /0.555/0.75/,
			0.5  / 0.0 /0.62 /0.6/,
			0.555 / 0.6 /0.58/0.75/,
			0.62  / 0.0 /0.68 /0.6/,
			0.58 / 0.6 /0.62/0.75/,		
			0.62 / 0.6 /0.67/0.75/,	
			\cThreePrime  /0.0 /0.76  /0.35/,
			\cThreePrime   /0.35 /0.77 /0.6/,
			\cThreePrime  /0.6 /0.81 /0.95/,
			0.76  /0.0 /0.87  /0.35/,
			0.77   /0.35 /0.84 /0.6/,
			0.81  /0.6 /0.91 /0.95/,
			0.87  /0.0 /0.94  /0.35/,
			0.84   /0.35 /0.92 /0.6/
		}
		{
			\drawVerticalItem[\small \z]{\x*\w}{\y*\h}{\xx*\w}{\yy*\h};
		}
		\draw[thick,decorate,decoration={brace,amplitude=12pt,mirror}] (0,0) -- (\cTwoPrime*\w,0)node[midway, below=12 pt] {$C_1$};
		\draw[thick,decorate,decoration={brace,amplitude=12pt,mirror}] (\cTwoPrime*\w,0) -- (\cThreePrime*\w,0)node[midway, below=12 pt] {$C_2$};
		\draw[thick,decorate,decoration={brace,amplitude=12pt,mirror}] (\cThreePrime*\w,0) -- (0.95*\w,0)node[midway, below=12 pt] {$C_3$};
		
		\draw[pattern=north west lines] (0,0.9*\h) rectangle (\cTwoPrime*\w,\h);
		\draw[pattern=north west lines] (\cTwoPrime*\w,0.75*\h) rectangle (\cThreePrime*\w,\h);
		\draw[pattern=north west lines] (\cThreePrime*\w,0.95*\h) rectangle (0.95*\w,\h);
		\draw[pattern=north west lines] (0.95*\w,0) rectangle (\w,\h);
		\end{tikzpicture}
	}
\caption{An illustration of the placement of the strips containing the additional strips of vertical items generated by \cref{lem:PlacementVertSep}. As you can see, we may fail to use a width of up to $\mu\stripWidth$ to place these items for each area we find to place these items in.}
\label{fig:wastedArea}
\end{figure}
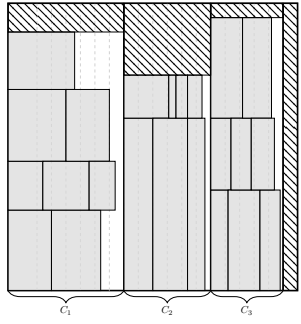
	With this option proven to be feasible, let us now focus our attention on another possibility to place the boxes in $\boxesVertPlace$, depending on the total width of strips containing tall items overlapping the horizontal line at $\nicefrac{1}{2}\stripHeight$. Since we have already shown how to finish the placement of the new boxes if there is ample space in $\widthStripsRem$, we know that this is not the case.
	\begin{claim}
		If $\widthStripsBoxes<\eps^6\stripWidth/(4c_S)$ and $\widthStripsRem< \eps^4\stripWidth$, we can place all the boxes for vertical items inside the boxes of height at least $\nicefrac{3}{4}\stripHeight$.
	\end{claim}
	\begin{proof}
		Due to the definition of $\widthStripsTall$, we know that $\widthStripsBoxes=\stripWidth-(\widthStripsTall+\widthStripsRem)>(1-2\eps^4)\stripWidth\geq \eps\stripWidth$. Additionally, each tall item not in $\tallItemsHalf$ crossing $\nicefrac{1}{2}\stripHeight$ has a width of at most $\widthStripsTall\cdot\delta^2\eps<\eps^7\delta^2\stripWidth/4c_S:=w_{max}$. This is because there would not be enough space left inside $\widthStripsBoxes$ if these items were any wider. After reordering the inside of the boxes, there are $\countStrips$ strips total inside the boxes. We want to determine the total height of free area inside a strip. This area may not be contiguous, as there can be an item in the middle of any of these strips, with free area above and below it. However, we can address this issue trivially by slicing this item downwards, leaving only one contiguous area of free space. In the shifting step, we have added a total area of $\widthStripsBoxes(\nicefrac{1}{4}\stripHeight+\eps\opt)$ to all of these strips. Let $\Breve{\widthStripsBoxes}$ be the total width of the strips containing free area with total height less than $\nicefrac{1}{4}\stripHeight$. Since the boxes in $\boxesVertPlace$ have a height of up to $\nicefrac{1}{4}\stripHeight$, we want to use areas with at least that height to place them. Let $\hat{\widthStripsBoxes}$ be the total width of these strips. Clearly, we have $\Breve{\widthStripsBoxes}+hat{\widthStripsBoxes}=\widthStripsBoxes$. We have to show the following to be able to place the extra boxes as desired:\newline\indent
		\emph{It holds that $\hat{\widthStripsBoxes}\geq \eps \widthStripsBoxes$.}
		\newline\indent
		The total free area can have a height of at most $\nicefrac{3}{4}\eps\opt$ inside each strip, because there either are boxes at the top and bottom of height at least $\nicefrac{1}{4}\stripHeight$, or one box of height at least $\nicefrac{3}{4}\stripHeight$ at the bottom. Therefore, we have\begin{equation*}
		\nicefrac{1}{4}\cdot\Breve{\widthStripsBoxes}+(\nicefrac{3}{4}\stripHeight+\eps\opt)\cdot\hat{\widthStripsBoxes}\geq \widthStripsBoxes(\nicefrac{1}{4}\stripHeight+\eps\opt)
		\end{equation*}
		since the free area in $\Breve{\widthStripsBoxes}$ has a total height of at most $\nicefrac{1}{4}\stripHeight$, the free area in $\hat{\widthStripsBoxes}$ has a height of at most $(\nicefrac{3}{4}\stripHeight+\eps\opt)$ and the total free area is bounded by $\widthStripsBoxes(\nicefrac{1}{4}\stripHeight+\eps\opt)$.\newline\indent
		As a consequence, we can prove that $\hat{\widthStripsBoxes}$ has a sufficient minimum size. The following holds:
		\begin{align*}
		&\widthStripsBoxes(\nicefrac{1}{4}\stripHeight+\eps\opt)\leq \nicefrac{1}{4}\stripHeight\cdot\Breve{\widthStripsBoxes}+(\nicefrac{3}{4}\stripHeight+\eps\opt)\cdot\hat{\widthStripsBoxes} \\
		&=\nicefrac{1}{4}\stripHeight\cdot\widthStripsBoxes+(\nicefrac{1}{2}\stripHeight+\eps\opt)\hat{\widthStripsBoxes}\\
		&=\nicefrac{1}{4}\stripHeight\cdot\widthStripsBoxes+((1+2\eps)\opt/2+\eps\opt)\cdot\hat{\widthStripsBoxes} \\
		&= \nicefrac{1}{4}\stripHeight\cdot\widthStripsBoxes+((1+4\eps)\opt/2)\cdot\hat{\widthStripsBoxes}.
		\end{align*}
		This allows us to deduce that 
		\begin{equation*}
		\eps\widthStripsBoxes\leq ((1+4\eps)/2)\cdot\hat{\widthStripsBoxes}.
		\end{equation*}
		Therefore,
		\begin{align*}
		\hat{\widthStripsBoxes}&\geq2\eps\widthStripsBoxes/(1+4\eps)\geq\eps\widthStripsBoxes, &\forall 0< \eps\leq \nicefrac{1}{4},
		\end{align*}
		which proves the claim.\newline\indent
		As a direct consequence of this claim, strips with total width of at least $\eps\widthStripsBoxes$ contain free area with total height of at least $\nicefrac{1}{4}\stripHeight$. We have already alluded to the fact that, even when the area is split by an item in the middle, we can fuse these areas together trivially.\newline\indent
		In each strip, there is an area with width of at most $\mu\stripWidth$ that we cannot use to place the boxes. Therefore, we can place all boxes for previously separated vertical items if $\eps\widthStripsBoxes-2w_{max}\countStrips-\mu\stripWidth\countStrips\geq \mu \stripWidth\countExtraBoxes$. It holds that 
		\begin{align*}
		&2w_{max} \countStrips+\mu\stripWidth\countStrips+\mu \stripWidth\countExtraBoxes\\
		&\leq (\eps^7\delta^2\stripWidth/(2c_S))(c_S/\delta^2\eps^5)+\mu\stripWidth(c_S/\delta^2\eps^5+c_F/\delta^2\eps^5)\\
		&\leq \eps^2\stripWidth/2+\eps^6\stripWidth\leq \eps^2\stripWidth.
		\end{align*}
		Thus, if $\widthStripsBoxes\geq\eps\stripWidth$, we know that $\eps\widthStripsBoxes-2w_{max}\countStrips-\mu\stripWidth\countStrips\geq\mu\stripWidth\countExtraBoxes$ and we can place all boxes in the designated areas for this case.
	\end{proof}
	In this step, we create at most $2\countStrips\in\bigO(1/(\eps^5\delta^2))$ new boxes for tall items and no new box for vertical items. The boxes for tall items already contain only tall items of the same height. Therefore, we introduce at most $\bigO(1/(\eps^5\delta^2))$ boxes for vertical items $\boxesVert$ such that each box $B\in \boxesVert$ contains only items with height $\height(B)$.\newline\indent
	\emph{The boxes for small items.} The free area inside the boxes from the partition in \cref{lem:boxPartition} for horizontal, tall and vertical items is at least as large as the total area for the small items, because the small items where contained in the optimal packing.\newline\indent
	\emph{Bounding the packing height.} To bound the total height of the packing we summarize which heights we added during the process so far. In the beginning, we had the optimal packing of height $\opt$. Through our rounding with \cref{lem:Rounding}, we generated a packing of height $(1+2\eps)\opt$. When we extended boxes containing tall and vertical items of height at least $\nicefrac{3}{4}\stripHeight$, we added $\nicefrac{1}{4}\stripHeight+\eps\opt\leq \nicefrac{1}{4}(1+2\eps)\opt+\eps\opt$ to the packing height. Earlier in this proof, to ensure that all boxes with height greater than $\nicefrac{1}{2}\stripHeight$ start and end at multiples of $\eps^2\opt$ we shifted some vertical items. This procedure added a further $2(\eps+\eps^2)\opt$ to the packing height. Thus, in total, we have added at most $\nicefrac{1}{4}(1+2\eps)\opt+2(\eps+\eps^2)\opt\leq (\nicefrac{1}{4}+3\eps)\opt$ to the packing height. This assumes sufficiently small, i.e.\ smaller than $\nicefrac{1}{2}$ values for $\eps$. As such, the structured packing has a height of at most $(\nicefrac{5}{4}+5\eps)\opt$.
\end{proof}

We employ a dynamic program to iterate through all possible configurations of these boxes until we find a feasible packing of the items into the boxes.
\paragraph*{Step 6} 
We utilize some techniques developed in~\cite{StripPacking54} to place the small and medium items without exceeding the desired height of the packing.
\begin{lem}
	It is possible to place the small items inside the boxes $\boxesSmallVert$ and $\boxesHorSmall$ and one extra box with width $\stripWidth$ and height at most $2\eps\delta^6\opt.$ 
	\label{lem:PlaceSmallItems}
\end{lem}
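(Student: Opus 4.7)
The plan is to greedily pack small items into the available space using a shelf-based algorithm (such as NFDH), exploiting that every small item has width at most $\mu\stripWidth$ and height at most $\mu\stripHeight'$, while both $\boxesSmallVert$ and $\boxesHorSmall$ were constructed in \cref{lem:PlacementVertSep} and \cref{lem:PlacementHorItems} with total free area at least $a(\itemsSmall)$. Because the small items are simultaneously small in both dimensions, they slot into the leftover strips inside each empty box efficiently, and any residual overflow can be collected into a single thin extra box at the top of the packing.

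First, I would split the packing task by box type. For each box $B\in \boxesSmallVert$ (of height at most $\nicefrac{1}{4}\stripHeight$ and width $\mu\stripWidth$) and for each box $B\in \boxesHorSmall$ (generated alongside horizontal configurations and therefore having width bounded below by $\delta\stripWidth$), I would apply NFDH: stack small items on shelves whose heights are determined by the first item placed on each shelf, and open a new shelf inside the box whenever the next item no longer fits. Since every small item has height at most $\mu\stripHeight'$ and width at most $\mu\stripWidth$, each shelf within a box of height $h(B)$ wastes only an $O(\mu)$ fraction of its area, and the last shelf of each box can overflow by at most a strip of height $\mu\stripHeight'$. Using the standard NFDH bound, the area of small items packable into any box of width $w(B)$ and height $h(B)$ is at least $w(B)h(B) - O(\mu)(w(B)+h(B))\stripHeight'$.

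Next, I would bound the total unplaced small-item area. Summing the per-box waste over the $\bigO(1/(\eps\delta^2))$ horizontal-remainder boxes and the $\bigO((|\heightVertical|+|\boxesPseudo|))=\bigO(1/(\delta^2\eps^5))$ vertical-remainder boxes, and using $\mu\leq \delta^2\eps^{11}/k$ from \cref{lem:RoundingValues}, the aggregate wasted area is at most a polynomial in $1/\eps,1/\delta$ times $\mu\stripHeight'\stripWidth$, which can be bounded by $2\eps\delta^6\opt\cdot\stripWidth$ for our choice of $\mu$. Thus, the small items that do not fit in $\boxesSmallVert \cup \boxesHorSmall$ have total area at most $2\eps\delta^6\opt\cdot \stripWidth$, and they can themselves be packed with NFDH into a single extra strip of width $\stripWidth$ and height at most $2\eps\delta^6\opt$ placed on top of the current packing; the NFDH wastage in that extra box is absorbed by a constant factor in the height bound.

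The main obstacle will be the bookkeeping that turns the abstract ``total area is large enough'' guarantee into a concrete NFDH bound, because the free area inside a box of $\boxesSmallVert$ or $\boxesHorSmall$ is not necessarily a rectangle of convenient dimensions: inside a horizontal box $B$, the leftover to the right of a configuration can be a narrow vertical strip of width $w(B)-w(C)$, and inside a vertical box the leftover sits above a configuration. I would handle this by viewing each leftover region as a rectangle itself (as it was defined in \cref{lem:PlacementVertSep} and \cref{lem:PlacementHorItems}), so that NFDH applies directly, and by accounting waste against the boundary of each such rectangle. The factor $\delta^6$ in the extra-box height then comes from carefully combining the $O(\mu)$ per-box waste with the $O(1/\eps\delta^3)$ total number of leftover rectangles, ensuring that the height bound $2\eps\delta^6\opt$ is met with room to spare given $\mu\leq \delta^2\eps^{11}/k$.
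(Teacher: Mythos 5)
Your approach is essentially the same as the paper's: both use NFDH to pack small items into the free boxes $\boxesSmallVert$ and $\boxesHorSmall$, bound the per-box waste by exploiting the fact that every small item has both dimensions at most $\mu$ times the ambient dimension, sum the waste over the $\bigO(1/(\delta^2\eps^5))$ boxes, and sweep the overflow into one thin extra box of width $\stripWidth$.

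Two small differences. First, the paper explicitly discards boxes of height less than $\mu\opt$ or width less than $\mu\stripWidth$ up front, accounting the discarded area (at most $\mu\stripWidth\opt$) against the final residue; you leave that cleanup implicit, but it is needed to make the "per-box waste is $\Oh(\mu\stripWidth\opt)$" claim valid for every box you keep, since a degenerate box could have free area entirely below the NFDH threshold. Second, for the residual items the paper uses Steinberg's algorithm rather than another NFDH pass; either works here, since the residual set has small total area and every item in it has height at most $\mu\opt$.

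The place where your proposal is a little too glib is the final arithmetic. You assert the aggregate waste "can be bounded by $2\eps\delta^6\opt\cdot\stripWidth$ for our choice of $\mu$" without carrying out the computation. Plugging in $\mu\leq\delta^2\eps^{11}/k$ and multiplying the per-box waste $\Oh(\mu\stripWidth\opt)$ by the box count $\Oh(1/(\delta^2\eps^5))$ gives a residual area of order $\eps^6\stripWidth\opt/k$, hence an extra-box height of order $\eps^6\opt$, not $\eps\delta^6\opt$. Since $\delta\leq\eps$, the bound $\eps\delta^6\opt$ is strictly stronger, and neither your sketch nor the paper's own proof (which concludes with height $2\eps^6\opt$) actually delivers it; this mismatch appears to be an inconsistency inherited from the paper's lemma statement rather than a flaw introduced by you, but you should not present an unverified constant as though it follows from the computation.
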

\begin{proof}
	This is an adaptation of a similar procedure in \cite{StripPacking54}, considering the slicing of items.
	Recall how we started partitioning the original optimal packing into the boxes in \cref{lem:boxPartition}. Since this optimal packing contained the small items as well as the items we chose as the basis of our partitioning, the total remaining area inside those boxes must be enough to place these small items. In \cref{lem:PlacementVertSep,lem:PlacementHorItems} we generate at most $\bigO(1/(\eps^5\delta^2))$ empty boxes which we can use to place the small items. We have shown that the total area of those empty boxes is at least as large as the empty space left in the original boxes $\boxesHor$ and $\boxesTallVert$ in \cref{lem:boxPartition}. As $\boxesLarge$ only ever contain a single item, there is no empty space in these boxes. Therefore, the total area in those empty boxes is at least as large as the total area of small items.\newline\indent
	Let $\boxesSmall$ be the set of boxes for small items and $|\boxesSmall|=c/\delta^2\eps^5$ for some constant $c\in \mathbb{N}$. We show that we only need one small extra box to place all the small items into $\boxesSmall$ using the Next Fit Decreasing Height~(NFDH) algorithm\cite{CoffmanGJT80}.\newline\indent
	First, we discard any box that is too small in either dimension, i.e.\ has a height of less than $\mu\opt$ or a width of $\mu\stripWidth$. Clearly, the total area of discarded boxes is at most $\mu\stripWidth\opt$. Consider a box $B$ with height and width greater than $\mu\opt$ and $\mu\stripWidth$ respectively. The last item we want to place on each shelf during the NFDH-Algorithm might not fit from its width. As such, we may have an empty space of at most $\mu\stripWidth$ on each shelf, as all small items have at most that width. Furthermore, the final shelf has a distance of at most $\mu\opt$ to the top border of $B$. Again, this area might remain free if no suitable items are left. Finally, the free area between two shelves used in the algorithm has a size of at most $\mu\opt\cdot w(B)$, again because all small items have at most that size. As a result, the total free area inside the box $B$ is at most $\mu\stripWidth\cdot\height(B)+2\mu\opt\cdot\width(B)\leq 3\mu\stripWidth\opt$. We have this free area for every box $B\in\boxesSmall$. Therefore, the total area of items that could not be placed inside boxes is at most $3\mu\stripWidth\opt\cdot c/\delta^2\eps^5.$ Since we have defined $\mu\leq\eps^{11}\delta^2/k$ for some suitable constant $k$ through \cref{lem:RoundingValues} it holds that $3\mu\stripWidth\opt\cdot c/\delta^2\eps^5\leq \eps^6\stripWidth\opt$ when choosing $k\geq c$. \newline\indent
	These remaining items can then be placed into a box of width $\stripWidth$ and height $2\eps^6\opt$ using Steinbergs algorithm \cite{Steinberg97}. The relevant conditions to apply this algorithm are met because each item has a height of at most $\mu\opt$.
\end{proof}
Having placed the small items, there is only one set of items left to place before we have successfully shown that our partition into boxes allows us to generate a structured packing while only adding a height of $(\nicefrac{1}{4})\opt+ \bigO(\eps\opt)$ to the top of the packing. Thus, we now show that the medium items, of which there are few, can be placed in a structured manner.

\begin{lem}\cite{StripPacking54}
	It is possible to place the medium items $M$ into a box of width $\stripWidth$ and height at most $2\eps\opt$.
	\label{lem:placeMediumItems}
\end{lem}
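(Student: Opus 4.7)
The plan is to apply Steinberg's algorithm~\cite{Steinberg97} as a black box, exploiting the fact that the set $\itemsMed$ was chosen through \cref{lem:RoundingValues} precisely so that its total area is tiny relative to $\stripWidth \cdot \opt$. Steinberg's algorithm packs any set of items into a rectangle of width $w$ and height $h$ provided that the largest item width is at most $w$, the largest item height is at most $h$, and an area condition of the form $2\,\mathrm{area}(\itemsMed) \leq w h - (2w_{\max} - w)_{+}(2h_{\max} - h)_{+}$ holds; moreover it relies only on these size conditions, so sliceability is not even needed (which is convenient, since we may want to place the medium items integrally to keep later steps simple).

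The verification proceeds in three short checks with $w = \stripWidth$ and $h = 2\eps\opt$. First, by the definition of $\itemsMed$ every medium item has width at most $\delta\stripWidth < \stripWidth$, so the width constraint is trivially satisfied. Second, every medium item has height at most $\delta\opt$; since $\delta$ was chosen of the form $\eps^x$ for an integer $x \geq 2$ (well below $\eps$), we have $\delta\opt \leq 2\eps\opt$, giving the height constraint. Third, for the area constraint, because $2w_{\max} \leq 2\delta\stripWidth \leq \stripWidth$ the correction term $(2w_{\max} - w)_{+}$ is zero, so we only need $2\,\mathrm{area}(\itemsMed) \leq 2\eps\stripWidth\opt$. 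By \cref{lem:RoundingValues} (with the choice of $f(\eps) = \eps^{13}/k$ fixed earlier) we have $\mathrm{area}(\itemsMed) \leq (\eps^{13}/k)\stripWidth\opt$, and since $k \geq 1$ this is clearly bounded by $\eps\stripWidth\opt$.

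Once these three conditions are verified, Steinberg's algorithm produces the desired placement into a box of width $\stripWidth$ and height at most $2\eps\opt$, which can be stacked on top of the packing constructed so far. The only mild subtlety — and the main thing one needs to be careful about — is bookkeeping: we must be sure that the set we are packing here is indeed the full $\itemsMed$ (including both the short-wide and the tall-narrow subfamilies defined in Step~3), and that the choice of $k$ from \cref{lem:RoundingValues} was made large enough that the area bound absorbs the constant $2$ from Steinberg's condition. Both are already guaranteed by the setup of Step~3, so no new combinatorial argument is required and the lemma reduces to this direct application.
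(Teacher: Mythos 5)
Your proposal is correct in conclusion but reaches it by a different route than the paper: you invoke Steinberg's theorem as a black box, whereas the paper applies NFDH and uses the standard shelf bound $2\cdot\mathrm{area}(\itemsMed)/\stripWidth + h_{\max}$. Both are legitimate; Steinberg is cleaner to cite but slightly heavier machinery, while NFDH gives the same bound with an elementary shelf argument and matches the proof already used for small items (\cref{lem:PlaceSmallItems}).

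One factual slip worth correcting: you assert that every medium item has height at most $\delta\opt$ and width at most $\delta\stripWidth$. This is not what Step~3 defines. The set $\itemsMed$ is a union of two families, and the ``short-wide'' family $\{i : \height(i) < \eps\stripHeight',\ \mu\stripWidth < \width(i) \leq \delta\stripWidth\}$ contains items with height up to just below $\eps\stripHeight' = \eps\opt$, which is much larger than $\delta\opt$ since $\delta = \eps^x$ with $x$ large. Moreover the ``tall-narrow'' family $\{i : \mu\stripHeight' < \height(i) \leq \delta\stripHeight'\}$ carries no explicit width bound in the formal definition, so your claim that $2w_{\max}\leq\stripWidth$ is not justified by the setup. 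Fortunately neither slip is fatal: with the correct bound $h_{\max} < \eps\opt$ you still have $2h_{\max} < 2\eps\opt = h$, so the factor $(2h_{\max}-h)_+$ is zero and the Steinberg correction term vanishes regardless of $w_{\max}$; and since trivially $w_{\max}\leq\stripWidth$, the side-length preconditions hold. The area condition $2\,\mathrm{area}(\itemsMed)\leq 2\eps\stripWidth\opt$ is comfortably satisfied by \cref{lem:RoundingValues} as you say. So the lemma goes through, but state $h_{\max}<\eps\opt$ rather than $\delta\opt$, and drop the appeal to $w_{\max}\leq\delta\stripWidth$; it is the height factor, not the width factor, that kills the correction term.
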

\begin{proof}
	We know, from \cref{lem:RoundingValues}, that the total area of medium items is small. Thus, we use a well defined algorithm to place these items at an approximately greater loss. As we did with the small items, we use the NFDH algorithm for this purpose \cite{CoffmanGJT80}. Before we apply the algorithm, we sort all items by their width, in descending manner. We know that $a(M)$ is bounded by $\eps^{13}\stripWidth\opt$ and the widest item in $M$ has a width of at most $\eps\opt$. Therefore, we can place these items in a box of width $\stripWidth$ and height of at most $2\eps^{13}\opt+\eps\opt\leq 2\eps\opt$.
\end{proof}
\paragraph*{Step 7} 
The algorithm iterates over the previous steps until the minimal value for $\stripHeight'$ has been found and returns the associated packing. 
Because we have shown that any optimal packing of height $\opt$ can be transformed into a structured packing of height $(\nicefrac{5}{4}+\eps)\opt$, we know that $\stripHeight'\leq \opt$ must hold. 
Therefore, our algorithm generates a solution with height at most the desired approximation ratio.
\paragraph*{Running time:} 
Scaling the instance can be done in $\Oh(n\log(n))$. 
The binary search in step $2$ runs logarithmically in its step-length. 
Due to our rounding, this length is $n/\eps$. 
Therefore, the search can be computed in $\mathcal{O}(\log(n/\eps))$. 
For every value of the search, we have to guess the starting points for horizontal items. 
As there are $\Oh_{\eps}(1)$ of these boxes, we can guess these in $\stripWidth^{\Oh_{\eps}(1)}$ time. 
Then, we guess all possible configurations of boxes inside the packing. 
Again, there are $\Oh_{\eps}(1)$ many of those, yielding a running time of $\stripWidth^{\Oh_{\eps}(1)}$ for this step. 
As a result, the running time can be expressed as $\Oh(n\log(n))+\mathcal{O}(\log(n/\eps))\cdot\stripWidth^{\Oh_{\eps}(1)}= \Oh(n\log(n))\cdot\stripWidth^{\Oh_{\eps}(1)}.$
This completes the proof of \cref{thm:BigTheorem}.

\section{Conclusion}
We showed that the \SSP{} problem remains hard to solve with an approximation ratio lower than $(\nicefrac{5}{4})\cdot \opt$, even 
in pseudo-polynomial time. 
In the process of this proof, we developed a transformation algorithm showing that we can treat \ari{DSP} and \ari{PTS} as somewhat dual problems of one another. 
We provided a framework using known results to solve either of these problems optimally when admitting some resource augmentation.
These results are valuable 
in applications where 
penalty costs for exceeding makespan or power demand are large. 

Finally, we have developed an approximation algorithm for DSP that runs in pseudo-polynomial time. 
The adaptations made bridge the known integrality gap between SP and DSP. 
Furthermore, its approximation ratio of $(\nicefrac{5}{4}+\eps)\cdot \opt$ almost matches the proven hardness except for a 
negligibly 
small $\eps$. 
As a contrast, in polynomial time, the gap between proven hardness  $(\nicefrac{3}{2})$~\cite{yaw2014peak} and best-known approximation $(\nicefrac{5}{3}+\eps)$~\cite{DeppertJ0RT21,GalvezGJK2021} remains quite large.

Similar results for the SP algorithm suggest that the presented algorithm can be adapted to admit 90\textdegree{} item rotations or moldable jobs. 
Both of these settings are interesting in the context of smart grids. 
Rotations can model different modes of power consumption, i.e.\ fast versus slow charging. 
Moldable jobs can model the parallel operation of multiple machines for the same task. 
Each variation on the job/item represents a set number of machines assigned to that task.

Further open questions extend to the relation of DSP to other packing problems.
A family of instances exists where the height of optimal solutions for SP and DSP differs by 5/4~\cite{BladekDGS15}. 
Furthermore, due to Steinbergs' algorithm, this difference is bounded by 2. 
Closing this gap might imply better bounds on optimal solutions for either problem. 

On the other hand, the discovered link between DSP and PTS may suggest similar links between DSP and other resource allocation problems. 
Amongst these, the well-known \textsc{Storage Allocation Problem} (SAP)~\cite{MomkeW20} and \textsc{Unsplittable Flow on a Path} (UFP)~\cite{GrandoniMW22} problems are promising candidates. 
So far, a PTAS is known only for the latter~\cite{0001MW22}. 
Techniques used in the presented algorithm may be helpful in adapting existing algorithms for SAP to UFP, or vice versa. 
Furthermore, similar resource augmentation arguments as we made for PTS and DSP may be made for either problem should a link between DSP and UFP or SAP exist. 
Finally, techniques used in the presented approximation scheme may be useful when developing an approximation scheme for SAP. 
    \bibliography{FullPaper}
    \end{document}